\newtheorem{teorema}{Theorem}[section]
\newtheorem{definicion}[teorema]{Definition}
\newtheorem{proposicion}[teorema]{Proposition}
\newtheorem{lema}[teorema]{Lemma}
\newtheorem{corolario}[teorema]{Corollary}
\newcommand{\Diff}[0]{{\textup{Diff}}}
\newtheorem{comentario}[teorema]{Remark}
\newtheorem{ejemplo}[teorema]{Example}
\numberwithin{equation}{section}
\begin{document}
\begin{title}[Emergent Quantum Mechanics and emergent gravity]
{Foundations of a theory of emergent quantum mechanics and emergent classical gravity}
\end{title}
\maketitle
\begin{center}
\author{Ricardo Gallego Torrom\'e}
\end{center}
\bigskip
\bigskip
\thispagestyle{empty}
\newpage

\thispagestyle{empty}

\begin{center}
\Large\emph{Este trabajo está dedicado a mis padres y a mi familia.}
\end{center}
\thispagestyle{empty}

\newpage

 \section*{Preface}\pagenumbering{roman}
One of the most attractive and relevant  problems in the foundation of physics is to find a complete theory consistently embracing, maybe in the adequate limits, quantum theory and gravity. This is what experts call, in an ample sense, a {\it theory of quantum gravity}, a name that, however,  could indicate a misleading meaning. It is difficult to categorize to which class of research is quantum gravity. From one side, the problem raises interpretational issues. But from another side, the standard approaches to quantum gravity are what we could call {\it explicit quantum theories of gravitation}, where quantum mechanics is applied to models of spacetime as it is applied to gauge theories, as if gravitation was another interaction as the others. According to this view of quantum gravity, nothing prevents to speak of {\it spacetimes superpositions} or the quantum particle associated with gravitation.

However, it is unclear if such an attitude is the correct one to follow. This is because the conceptual problems that it faces, among them the problem of time in quantum gravity and the problem of general covariance for some of the approaches. The problem of time is a reminiscence that quantum dynamics requires of a notion of time to be formulated, while general relativity promotes the block universe picture. The problem of compatibility of quantum theory with the diffeomorphism invariance of general relativistic theories of gravity reflects on the problem between the notion of superposition of spacetimes and how spacetime is seen in a general relativistic way, where the concept of identification of points between different manifolds is incompatible with general covariance \cite{Einstein et al., Penrose 2005}.
Moreover, the phenomenological  quantum mechanical models and quantum theory of fields are constructed making use of an underlying spacetime arena, flat Lorentzian spacetimes of diverse dimensions, usually two dimensional models, three dimensional models or the usual four dimensional Minkowski spacetime. However, the use of a fixed spacetime background is in conflict with the dynamical picture of spacetime provided by general relativity.

The foundations of quantum theory are a topic of intense debate. The foundations or interpretational issues arise mainly because quantum theory does not provide a spacetime picture of the quantum phenomena, specially for the phenomena involving quantum non-locality. Even if we accept that such phenomena are not reproducible by a classical, local dynamics (according to usual interpretations of Bell's theorems and other fundamental results), quantum mechanics does not offer a mechanism to make them compatible with the spirit of relativity in a spacetime description, although it offers a phenomenological description of physical processes.

Partially motivated by understanding the quantum phenomena, the author has been developing a new research program whose goal is to supersede the current quantum description of nature in a way that quantum correlations find a geometric interpretation and mechanism for quantum non-locality, quantum interference and for entanglement. This new point of view point on quantum mechanics embraces a novel view on the nature of gravity as an emergent phenomenon, which it turns out to be present only in the {\it classical regime} of the new fundamental dynamics.  The interpretation of gravity as a phenomenological dynamical regime of the hypothesized fundamental dynamics is achieved as an unavoidable consequence of the mathematical consistency of the theory. It is in this sense that our proposal is potentially an unifying framework for quantum mechanics and gravity, but in a way that gravity is not quantized.

The theory in this work has evolved from a first intuition in the summer of 2001 and fall 2002, after reading Chern's brief account of Finsler geometry \cite{Chern}. The idea was that natural information loss dynamics could be the origin of the probabilistic phenomenlogy of quantum system, in a similar way as, if one averages a Finsler metric to obtain a Riemannian metrics, there is a loss of information. Therefore, the use of Finslerian type methods were among the first methods in the author's attempt to formulate a deterministic theory beneath quantum mechanics \cite{Ricardo05b,Ricardo06}. Since that first intuition and attempt, many changes, corrections, implications and development of methods have occurred. This has led to the application other mathematical theories and methods besides Finslerian methods and also the recognition that more general schemes (fiber integration rather than average of Finsler structures and recently, a categorical view) as the natural arena for the theory of emergent quantum mechanics.

There are several mathematical theories involved in the present formulation of our theory of emergent quantum mechanics. Elements of the general theory of dynamical systems, statistical mechanics, differential geometry and measure theory are used in the construction of the theory or in the development of arguments. The result of its applications is a program which contains the germs for a new theory of physics based on deterministic, local dynamical systems at a fundamental scale, and where the principles of a classical theory of gravity and the principles of quantum mechanics are recovered in the form of an harmonic unification.

 The fundamental assumption is the existence of a dynamics at a fundamental scale which is local and deterministic in a particular mathematical framework. This seems contradictory with the well-known consequences of the experimental violation of Bell's inequalities. However, the circumvention of this problem relies on the fact that one needs to go beyond usual spacetime formulation of dynamics, to a picture where quantum systems are compose and complex, with a dynamics not directly living in the spacetime, but in a configuration space that, although related with spacetime, it differs ostensively from it. Furthermore, an essential ingredient of the theory is a two-time dynamics. It is the un-recognition of the two-time description, what makes the effective description of quantum correlation by quantum mechanics particularly counterintuitive. The quantum description of quantum systems and the gravitational interaction emerge from such a fundamental dynamics. While quantum mechanical description of physical systems is associated with an effective coarse grained description of the dynamics, gravity and {\it quantum state reduction} appear as two complementary aspects of the theory, consequences of the application of concentration of measure to the fundamental dynamical systems.

The theory developed is also related with the foundations of the High Arithmetics. The theory not only reveals a simple but fundamental relation between the classical theory of congruences and the proposed fundamental dynamics, but it also reveals formal properties that could turn to be of importance to understand the distribution of prime numbers through a resolution of the Hilbert-Polya approach to Riemann's conjecture.
\\
\bigskip
\bigskip

Ricardo Gallego Torrom\'e$\,\,\,\,\,\quad\quad\quad \quad\quad \quad\quad\quad\quad\quad\quad\quad $ Trieste, \today.
 \newpage
 {\small \tableofcontents{}}
 \thispagestyle{empty}

\newpage
\pagenumbering{arabic}
\section{\LARGE{Introduction}}\label{Introduction}
\bigskip
\bigskip
\subsection{Motivations for a new approach to the foundations of quantum mechanics}
  Since the advent of general relativity and its direct geometric generalizations, spacetime is described by a Lorentzian structure subjected to the Einstein's field equations, characterized by being affected and affecting the dynamics of matter and fields defined geometrically on such spacetime arena. This is in sharp contrast with the usual description of physical phenomena offered by quantum mechanics, where the back-ground spacetime structure is fixed. Being general relativity and quantum mechanics theories aimed to be of universal validity and due to such differences in the way they describe physical processes, one should expect to find predictions associated with general relativity in direct conflict with predictions of quantum mechanical models and viceversa. Such predictions appear where the gravitational effects are strong enough or in situations where induced effects of quantum mechanics on gravitational systems are detectable.

Therefore, the conceptual and technical confrontation between quantum mechanics and general relativity must be superseded by a new conceptually and mathematically consistent unified theory. It is generally believe that such a theory must be a quantum mechanical theory of the gravitational interaction, or that gravity must be described by a quantum model. There are several research programs aimed to solve the incompatibility problem between general relativity and quantum theory. String theory, loop quantum gravity are being developed for a long time. However, despite these programs are based upon very attractive and deep insights, they still do not provide a full consistent picture of physical reality.

On the other hand, finding a consistent spacetime representation for the quantum measurement processes, the non-local quantum correlation phenomena and a geometric understanding of quantum entanglement is a source for mystery and thought for a century. They are remarkably difficult problems and it is not exaggerated to state that the {\it standard approaches} do not provide satisfactory solutions. Indeed, the quantum description of measurement processes poses an intrinsic problem, which is extended in the collapse models found in the literature. For instance, the {\it collapse postulate} of quantum mechanics involves the instantaneous reduction of the wave packet in each quantum measurement. Such instantaneous character of a fundamental process is also present in collapse models. The existence of instantaneous collapse processes is clearly against the spirit of the theories of special \cite{EinsteinPodolskiRosen} and general relativity \cite{Penrose 1996, Penrose 2005, Penrose 2014a, Ricardo 2022}, but also in opposition to the methodology of explaining physical phenomena by means of spacetime representations of physical processes without the intervention of spooky actions.

The quantum mechanical formalism provides itself an approach to attack some of the fundamental questions. Indeed, {\it quantum decoherence} has been proposed as a natural solution of the measurement problem \cite{Zurek2002}.  However, there are also doubts that the problem is settled totally by decoherence, since at the end of the day, one needs to use the collapse postulate to understand the actual realization of one of the possible alternatives. Detailed criticisms to decoherence along these lines can be found in \cite{Adler2003} and in \cite{Penrose 2005}, chapter 29. Furthermore, the theory of decoherence does not provide an spacetime or spacetime representation of the physical process.

Our point of view is that, physical processes are essentially {\it dynamical processes}. {\it Change}, either with respect to a time or with respect to the internal configuration of the system, is the driving ingredient of physical processes; a {\it model} is a construction that tries to be as faithful as possible of the physical reality and change process. In order to represent change, it is necessary to have a label, a language with what represent the system. We can call such a language a {\it geometric interpretation}, where  points are elements of the configuration space. The objective of a fundamental theory is to find the most appropriate category of configuration spaces and the allowed dynamics laws to describe changes at the supposed fundamental level.

The nature of the quantum correlations poses an additional difficulty to the above scheme if by a geometric representaion one means a spacetime representation, since the notion of quantum correlation is deeply related to the notion of non-locality, which is at the end of the day, a notion related to spacetime geometry. Non-locality is notorious in quantum correlation phenomenology, which makes rather difficult to marry the principles of relativity with the fundamental principles of relativistic physics, especially if we are asking for an interpretation of quantum correlations as a spacetime process.

In view of the problematic presented above, the formal unification of quantum mechanics and special relativity achieved by the relativistic quantum field theory excludes in the current formulation the description of quantum measurement processes and it is difficult to see how it can accommodate a spacetime or geometric description of quantum correlations and entanglement. Moreover, fundamental principles of general relativity are not implemented in current formulations of quantum field theory. This is the case of the principle of general covariance. General covariance protects the Einstein's view of classical physics that all physical measurements amount to the determination of spacetimes coincidences. This view is intimately related to the physical construction of the theory of relativity, both the special theory \cite{Einstein 1905a,Einstein et al.}, but more dramatically in the construction of general relativity  \cite{Einstein 1916, Einstein et al.,Pauli 1958}. However, the principle of uncertainty in quantum mechanics, as a mathematical consequence of the quantum formalism, rules out the possibility of a coincident measurement of momentum and position, since according to such principle, it is impossible to measure with arbitrary precision position and momentum at the same location. Therefore, in the setting of quantum theory, the spacetime coincidence principle enters in conflict with the hypothetical assignment of a coincidence for a momentum measurement. This conceptual conflict is reflected in the difficulty of accommodating the respective mathematical formalisms. Moreover, since general relativity supersedes special relativity as a description of the spacetime arena, the theory to be unified with quantum mechanics should be general relativity or an appropriate modification of it, a task deemed unnatural due to the conflict of the uncertainty principle and the coincidence principle.

Several ways of addressing this conundrum have been proposed. Theories of quantum gravity are some of them. Such attempts include loop quantum gravity models and non-commutative spacetimes. Quantum gravity models reject or limit the validity of Einstein's {\it coincidence postulate} for the case of measurements related to quantum phenomena. When the coincidence principle is fully rejected, then quantum gravity theories explore how general covariance can be restored without such a principle. However, the principle of coincidence could very well be a valid postulate from an epistemological point of view, if one accepts its applicability within the classical realm. Thus the renounce of the coincidence principle in other domains does not affect its applicability the classical domain of physics. In other words, there is no advantage rejecting the principle out of its domain of applicability, except if an alternative principle is formulated and found equally useful in an ampler domain.

The problems of the spacetime or geometric representations of quantum entanglement, quantum correlations and quantum measurement processes are related with the logical and mathematical structure of the quantum theory, with the realistic interpretation of the quantum physical processes  and with the origin of spacetime and gravity. It is difficult that an approach where gravity is quantized could address these questions. On the other hand, the $EP=EPR$ conjecture is a relevant example of approach aimed to understand quantum entanglement via an spacetime picture \cite{MaldacenaSusskind,Susskind 2014}. It represents entanglement between black holes as Einstein-Rosen bridges and speculates that this type of mechanism is the mechanism for generic quantum entanglement. However, although the approach is very suggestive, because it is based upon several conjectures on quantum gravity, $EP=EPR$ remains logically incomplete.

In order to obtain a spacetime or a geometric representation of the fundamental physical dynamics and fundamental systems, we adopt a different perspective beyond the quantum mechanical description and beyond the general relativistic description of physical phenomena. Attempts on this line are hidden variable models. But hidden variable models are constrained by experiments and by theoretical results, among the most serious one the Kochen-Specken theorem \cite{Kochen-Specken1967,Redhead,Isham 1995}. Moreover, the experimental violation of Bell type inequalities poses strong constraints on the type of hidden variable models. In particular, the violation of the inequalities implies that hidden variables either need to violate standard spacetime locality or need to violate statistical independence or both assumptions. Any of these modifications of the standard view constitute a radical conceptual change. Inevitably, one needs to accept such constraints.

The developments considered in this work turns around the nature of quantum correlations as the physical key phenomena to understand. We have applied techniques and theories from physics and mathematics with the aim to construct a geometric description, in the above sente, of quantum correlations as dynamical processes happening in a generalized form of geometric arena. Quantum correlations are at the heart of quantum mechanics. To understand quantum correlations and quantum entanglement geometrically has led to the author to consider and develop a dynamical theory where time is two-dimensional, although the two-dimensionality is not in a geometric spacetime sense. The notion of two-dimensional time appears more related with the models used in classical dynamics \cite{Arnold} than the geometric notions appearing in string theory \cite{Bars2001}. Two-dimensional time is part of the key concept to understand quantum non-locality as caused by a incomplete description of the dynamics. On the other hand, how two-dimensional time appears from the dynamical elements of the theory is a result of our theory. Our concept of two-dimensional time is a genuine new concept in physics with fundamental consequences for physics, philosophy and mathematics. Since space and time are not independent notions, a change in the notion of time must be accompanied by a change in the geometric arena of physical processes \cite{Ricardo06,Ricardo2017b}. This is also investigated in this work.

The attempt to understands the above questions has led to a theory where quantum theory is emergent from a deterministic, local fundamental dynamical systems, based upon the assumption of a deeper level of description of physical reality than the current quantum description. It is the case that, besides containing a resolution of the measurement problem in terms of a mechanism of natural spontaneous collapse and a geometric interpretation of quantum entanglement and non-locality, the theory contains the germs for a formal understanding of gravity as an emergent phenomenon. These two phenomena, natural spontaneous collapse and gravity, appear as suplementary aspects of the same dynamical regime of the fundamental dynamics: both phenomena are in some sense simultaneous and appear under the same circumstances. They are not the cause of each other, but two different aspects of a deeper process, based on the general geometric phenomena of {\it concentration of measure}.

\subsection{Hamilton-Randers Theory}
 A new avenue to address fundamental questions of interpretation of quantum mechanics is suggested under the common name of {\it emergent quantum mechanics}. These research programs share the point of view that there must exist an underlying more radical level of physical description from where quantum mechanics is obtained as an {\it effective description} \cite{Adler,Hooft2016,Ricardo05b,Ricardo06,Blasone2,Elze2003,Elze,Elze2009,Elze2009,Singh}.
Characteristics of several emergent quantum mechanics is that the degrees of freedom at the fundamental level evolve by a deterministic and local law \cite{Ricardo06, Groessing2013, Hooft, Smolin2012} and in some cases, that the emergent paradigms are build on a relation between this new and deeper level of description related with the quantum mechanical description  in an analogous way as thermodynamics is related with classical statistical mechanics \cite{FernandezIsidroPerea2013, FernandezIsidroVazquez2016}.

A fundamental difficulty in some of such deterministic approaches is that the associated Hamiltonian operators, being linear in the momentum operators, are not bounded from below. Therefore, in order to ensure the existence of stable minimal energy states, a natural requirement for the construction of viable quantum models of matter, a mechanism to stabilize the vacuum is necessary. One of the mechanism proposed in the literature involves a dissipative dynamics at the fundamental Planck scale \cite{Hooft}. It was suggested that the gravitational interaction plays an essential role as the origin of the information loss in dynamics. Therefore, gravity must be present at the level of the fundamental  scale. However, gravitational interaction could be a classical, emergent phenomenon too, absent at the fundamental scale  where it is assumed that the dynamics of the microscopic fundamental degrees of freedom takes place \cite{JacobsonI, Verlinde2011,Ricardo2015b,Padmanabhan2012a,Padmanabhan2012b,Padmanabhan2015,Ricardo2016,Ricardo 2024,SharmaSingh,SharmaSingh,DeSinghVarma2019,Singh}. If this is the case, it is unnatural to appeal from the beginning to gravity as the origin of the dissipation of information at the fundamental level of physical description. Instead, in the theory to be developed in this work, gravity  has an emergent nature, while the dissipation of information is substituted by the view that quantum mechanics is an incomplete description of the dynamics rather than caused by an objective dissipation mechanism.

One fundamental problem in the construction of deterministic quantum models for physics at the fundamental scale is to establish an unambiguous  relation between the degrees of freedom  at the fundamental scale and the degrees of freedom at the quantum scales. By quantum scales we mean not only the degrees of freedom of the standard model of particles or at unification quantum field gauge theories scales, but also atomic and hadronic scales, atomic scales and molecular scales, although it can be a hierarchy in several of such a scales. Although they should be seen as composed systems, the new concepts should be applicable to such {\it non-standard model scales} too, since the exhibit quantum phenomenology. Examples of  quantum models that can be described as deterministic models have been discussed in the literature. In particular, it has been shown that  massless non-interacting $4$-dimensional first quantized Dirac neutrino particle can be identified with a deterministic system as well as the free $(1+1)$-bosonic quantum field models and certain string models  \cite{Hooft2016}. Such examples illustrate that the description of  the dynamics of non-trivial quantum systems as deterministic dynamical systems is feasible. Based on the insights extracted from the examples cited above, attempts to construct a theory of emergent quantum mechanics are, for instance, the work of Hooft \cite{Hooft2016}.

The present work exposes the construction of a novel program on the foundations of physics. According to this program, present quantum mechanics is an effective emergent scheme of physical description. The program has evolved from first ideas exposed in deterministic systems as models for quantum mechanics \cite{Ricardo05b, Ricardo06}, but supersedes these previous investigations in several aspects. In the previous works \cite{Ricardo05b,Ricardo06} it was shown that any first order ordinary dynamical system with maximal speed and maximal proper acceleration can be described using Hilbert space theory as application of Koopman-von Neumann theory of dynamical systems \cite{Koopman1931,von Neumann, Koopman-von Neumann1932} to certain dynamical models called Hamilton-Randers models defined in co-tangent spaces of large dimensional configuration spaces. In this way, we observe that Hilbert space formalism can be applied quite universally to relevant physical dynamics.
In the present work it is shown how the fundamental elements of the quantum mechanical formalism can be derived from the underlying formalism of an specific type of deterministic dynamical models. It was also a surprise to discover that the mathematical theory developed to approach quantum mechanics as an effective, emergent scheme from an deeper underlying deterministic and local (in the corresponding configuration space) dynamics contains a dynamical regimen with strong formal similarities with classical gravity. In this way, the theory also proposes an emergent character for gravity \cite{Ricardo2015b,Ricardo2016,Ricardo 2024}, which appears as a consequence of the application of the mathematical theory of {\it concentration of measure} as it appears in probability theory and metric geometry \cite{Gromov,MilmanSchechtman2001,Talagrand} to high dimensional Hamilton-Randers dynamical systems. It turns out that under the assumptions of the theory, a concentration of measure induces a {\it natural spontaneous collapse} of the state to a classical state. Differently from spontaneous collapse models \cite{GhirardiGrassiRimini,GhirardiRiminiWeber} and gravity induced collapse models \cite{Diosi, Penrose 1996, Penrose 2005}, in our proposal the natural spontaneous collapse of the quantum state is not necessarily induced by the measurement device and happens spontaneously, it is associated with one regime of the fundamental dynamics.

In this context, the perceived reality by a conscious, macroscopic observer consists of a process of large number of successive classical emergence from the underlying fundamental dynamics. This emergent origin of the {\it macroscopic observables} provides a resolution of the measurement problem: at any instant of the time parameter used in the description of quantum or classical evolution, the value of any observable associated with a quantum system is well-defined. These succession of processes of emergence is such that they also describe successfully the characteristic discontinuous {\it quantum jumps}. Regarding the question about the domain of applicability of the Einstenian principle of spacetime coincidence, Hamilton-Randers theory suggests that it is valid in the concentration domain where values of observables are defined and hence, can be attached to the system by means of a measurement. There is no need to extend or modify the coincidence principle.

 The way how our theory avoids the direct application of the standard Bell-type inequalities depends on some of the assumptions made on the fundamental dynamics and also on the particular use of Koopman-von Neumann theory. The dynamical theory exposed in this work is build upon a new notion of time. Time appears as a two-dimensional parameter. The associated dynamics is a two-time dynamics, as we will discuss later. Two-time dynamical systems are not unusual in physics. They appear, for instance, in the investigation of classical dynamical systems \cite{Arnold}. However, in the case we consider, the two-dimensional nature of time is of fundamental relevance and will be essential in our interpretation of fundamental notions of quantum mechanical description as an effective descriptions, when one of the time dynamics projects out. This includes a new theory of quantum correlations and entanglement. In particular, the notion of two-time dynamics provides a natural way to speak of a local dynamics in the configuration space of a Hamilton-Randers dynamical system, that when projected in a spacetime description, is interpreted as having the non-local character of quantum mechanical description of the systme, since the fundamental dynamics happens along this additional, novel {\it internal time}.

The present state of our work is far from being in complete form. Several fundamental elements of the theory are not fully developed and others, that perhaps in an optimal presentation should appear as consequences, remain still on the sphere of  assumptions. Indeed, we have identified a large number of assumptions in the construction of the theory, indication that the current state  is still far from being put in the most natural terms.  Among the elements and problems that deserve further attention, let us mention the construction of models for the fundamental dynamics, the development of a model for the quantum measurement, a precise theory of emergent gravity in the form of a field theory and a development of an incipient relation between the fundamental dynamics and elements of number theory. Still, the development of the theory will bring fundamental tests of the assumptions and its consequences, and predictions different from the ones from the standard paradigm.
\subsection{Structure of the work}
The work is organized in four parts. In the first part, which is developed in {\it chapters} \ref{chapter on categorical approach} and \ref{chapter on Assumptions and General Theory}, an introduction to the fundamental assumptions of the theory is elaborated. In chapter \ref{chapter on categorical approach}, a categorical approach to the fundamental dynamics is discussed.

Then in chapter \ref{chapter on Assumptions and General Theory}, a general theory of dynamics, including a notion of non-reversible dynamics, is developed. The non-reversibility of the dynamical evolution is justified by showing that non-reversible dynamics is the generic situation. The introduction of the notion of quasi-metric as a mathematical implementation of non-reversibility of the dynamical laws at the fundamental scale is discussed. Then a recollection of the assumptions on the dynamics and nature  for the {\it fundamental degrees of freedom} of our models is presented. In particular, the properties associated with causality, determinism an local character  of the dynamical systems that will be investigated in this work are introduced. It is argued heuristically  that under  such assumptions, there must exist a maximal proper acceleration. The existence of a maximal proper acceleration is related with a modification of the Lorentzian geometry of the spacetime \cite{Ricardo2014}. The collection of assumptions in which our theory is based should not be taken as the basis for an axiomatic approach to our theory, but rather as a way to better implement changes and modifications in the future, by modification of the assumptions towards a more complete and consistent picture.

The second part of the work begins at {\it chapter} \ref{chapter on classical dynamics Hamilton Randers} and expands until {\it chapter} \ref{chapter of the Hilbert space structure}. It describes the fundamental mathematical structure  of Hamilton-Randers dynamical systems and the reconstruction of several fundamental notions of the mathematical formalism of quantum mechanics from Hamilton-Randers systems. {\it Chapter} \ref{chapter on classical dynamics Hamilton Randers}  provides a  complete exposition of generalized Hamilton spaces of Randers type. After a first step towards symmetrization of the dynamics, we  show the relation of  Hamilton geometry of Randers type  with a specific type of dynamical systems described by systems of first order ordinary differential equations. The dynamical models developed in chapter \ref{chapter on classical dynamics Hamilton Randers} are what we have called {\it Hamilton-Randers dynamical systems}. They are our models for the dynamics of the fundamental degrees of freedom at the Planck scale. It is also shown that such models are general covariant, that is, are geometric models.

 Furthermore, in {\it chapter} \ref{chapter on classical dynamics Hamilton Randers} the existence of maximal acceleration and maximal speed in Hamilton-Randers systems is proved. It is also discussed  the  notion of emergence of the macroscopic and quantum $\tau$-time parameters, the need of external time diffeomorphism invariance of the theory and the relation of the fundamental cycles with  prime numbers. As a consequence, a natural interpretation of the quantum mechanical energy-time quantum relation emerges.

 In {\it chapter} \ref{chapter on Koopman-von Neumann formulation}, the  formulation of Hamilton-Randers system by means of  Hilbert space  theory is described. This approach is an application of Koopman-von Neumann theory to certain class of dynamical systems defined in a tangent space of a high dimensional manifold as configuration space. This formalism allows to describe classical dynamical systems using quantum mechanical techniques. It is shown that the Koopman-von Neuwman formulation in {\it chapter} \ref{chapter on Koopman-von Neumann formulation} of Hamilton-Randers systems is general covariant.

  {\it Chapter} \ref{chapter of the Hilbert space structure} is devoted to the theoretical construction of generic quantum states from the original degrees of freedom of the underlying  Hamilton-Randers systems.  This is achieved as consequence of the application of Koopman-von Neumann for Hamilton-Randers systems discussed in {\it chapter} \ref{chapter on Koopman-von Neumann formulation}. The transition from the description in terms of  discrete degrees of freedom associated with Hamilton-Randers systems to continuous degrees of freedom associated with quantum wave functions is natural, if the difference between the fundamental scale and the quantum scale is large.  As a consequence of the theory, a constructive approach to the quantum wave function is introduced. The construction  admits a natural probabilistic interpretation and the associated quantum Hilbert space from the underlying Hamilton-Randers systems is discussed. In particular we show how a preliminary Born rule emerges. It is also shown how the quantum $\tau$-time diffeomorphism invariance emerges from the fundamental $t$-time diffeomorphism invariance discussed in {\it chapter} \ref{chapter on classical dynamics Hamilton Randers}.
  It is also shown the emergent character of the Heisenberg dynamical equations of quantum mechanics from the more fundamental Koopman-von Neumann formulation of the fundamental dynamics. Moreover, the relation between conserved quantities of the fundamental dynamics and conserved quantities of quantum mechanics are related. In a nutshell, the last ones are averaged version of the fundamental preserved quantities. All this suggest a functorial relation between the fundamental description and the quantum description, where the functor is based upon the averaging operations constructed in the theory.

The third  part of the work deals with the application of certain notions of metric geometry to the dynamics of Hamilton-Randers systems and its consequences in the form of a theory of measurement in quantum systems, a theory of emergent gravity and a theory of non-local quantum correlations. In {\it chapter} \ref{chapter on concentration of measure} we apply the theory of concentration of measure to introduce the notion of {\it natural spontaneous collapse} of the wave function. This theoretical process is similar to the spontaneous collapse of the quantum states that happen in collapse models \cite{GhirardiGrassiRimini,GhirardiRiminiWeber}. However, our notion of spontaneous collapse, its origin, mechanism , mathematical description and formulation are rather different from such  models. The main difference is that in our theory the {\it collapse of the state} happens spontaneously, independently of a possible interaction  with a measurement device and not induced by such physical processes. This interpretation is of fundamental importance for our picture, since it defines the emergence of many observable magnitudes.

In {\it chapter} \ref{Chapter on emergence of gravity} it is shown how the Hamiltonian of a Hamilton-Randers system can be decomposed in a $1$-Lipschitz continuous piece plus an additional non-Lipschitz piece. It is argued that the non-Lipschitz term should be associated with the matter Hamiltonian. This decomposition property is used to discuss a natural mechanism to bound from below the quantum Hamiltonian operator for matter in Hamilton-Randers systems. After this we discuss how a property that can be identified with the weak equivalence principle emerges in Hamilton-Randers theory by application of concentration of measure to a natural notion of {\it free falling system}. Since in {\it chapter} \ref{chapter on classical dynamics Hamilton Randers} and \ref{chapter on Koopman-von Neumann formulation} we showed the general covariance of the theory, we have a dynamical regime compatible with two fundamental assumptions of the relativistic theories of gravitation \cite{Ehlers Pirani Schild}.

In {\it chapter} \ref{chapter on properties of quantum mechanics} several fundamental issues  of quantum mechanics are discussed.  It is first considered the  quantum interference phenomena in the form of the {\it ideal quantum double slit experiment} \cite{Feynman}, first without considering the gravitational field and then followed by a discussion of the gravitationally induced quantum interference experiments \cite{ColellaOverhauser1974,ColellaOverhauserWerner1975,Sakurai}. Based upon this discussion and the form of how quantum jumps are described in chapter \ref{chapter on concentration of measure}, we show that classical emergent gravity must be essentially fluctuating.

Then we show how non-locality emerges in Hamilton-Randers theory as a consequence of the projection mechanism from the $2$-time description to the $1$-time description in the mathematical formalism. The interpretation of entangled states and a mechanism for the violation of Bell inequalities is discussed. Even if the complete description of the fundamental double dynamics is local in the sense of Hamilton-Randers dynamical systems, the formal projection 2-time $\to$ 1-time description can account of the non-local spacetime character of quantum entanglement and its consequences for EPR settings.

The fourth part of the work pursues additional applications and consequences of Hamilton-Randers theory
{\it chapter} \ref{Chapter On the notion of time} compiles part of our reflections on the notion of time according to Hamilton-Randers theory. It is argued the emergent character of external time parameters and how such a emergence defines an universal arrow of time.

{\it Chapter} \ref{chapter on Hamilton-Randers theory and number theory} explores the surprising relations between Hamilton-Randers theory and the Riemann hypothesis through the quantum approach to the Hilbert-Polya conjecture pioneered by Berry, Connes, Keating and others. We work on the direction that the conjectured fundamental dynamics beneath quantum mechanical systems is indeed deeply related with the Hilbert-Polya conjecture. It is also explore the relation between modular arithmetics and the dynamics of Hamilton-Randers systems.

In {\it chapter} \ref{Chapter on Discussion}, a comparison of Hamilton-Randers theory with other emergent quantum mechanical frameworks is succinctly discussed. Specifically, we discuss its relation with Bohmian mechanics and G.'t Hooft's theory cellular automaton interpretation of quantum mechanics. We briefly discuss the notion of emergent gravity in Hamilton-Randers theory with the proposal of E. Verlinde's theory of {\it entropic gravity}. We show how our theory is immune to several relevant criticisms of original Verlinde's theory \cite{Kobakhidze2011}, since in our theory, although emergent and classical,  gravity is not an {\it entropic force}.
We also discuss possible falsifiable tests of Hamilton-Randers theory.

Finally, several open problems of our approach are briefly discussed. The most pressing issues are to find concrete realizations of fundamental $U_t$ flow and of the relation between the non-Lipschitz piece of the Hamiltonian and conventional matter Hamiltonian. This is also related with the effective construction of quantum models as effective description of quantum mechanics.

\newpage

\section{\Large{Categorical theory of fundamental dynamical systems}}\label{chapter on categorical approach}
\bigskip
\bigskip
\subsection{Introduction}
The main assumption of our approach to emergence of quantum mechanics \cite{Ricardo05b, Ricardo06, Ricardo2015b, Ricardo2016} is the one on the existence of a deeper level in the structure of nature than the one assumed in the standard quantum mechanical level description. Such assumption poses an intrinsic limit in the direct knowledge acquirable by direct experimental methods. Because the deeper level of description conjectured by the theory the theory, a macroscopic observer does not have any physical system, quantum or classical, that can be used to probe the fundamental scale physics.

Aware of this difficulty, a methodology is need to tailor the research in emergent quantum mechanics. One aspect of such methodology consists on avoiding the concepts and elements directly motivated or learn from the investigation of large physical scales that do not have the highest degree of generality to be applied in the fundamental scale. This  { method} alienees  along Einstein's methodology of suppressing particular points of view in the formulation of general principles in physics and is also a strong support for the principle of general covariance as an heuristic principle for the formulation of physical laws \cite{Pauli 1958}.

We are not suggesting the substitution of the experimental or the empirical method by pure theoretical constructions. Indeed, the structures that one can developed have direct phenomenological or testable consequences. It is a similar situation to the confinement of quarks, that although they are real particles, the theory prevents them from being detected in free motion individually. However, the situation is novel for us, since the limits in knowledge posed by the emergent structure of a sub-quantum level of physics. Hence the unavoidable use of purely theoretical methods, that we should constrain to the maximal from a logical a nd methodological point of view.

Prior to proposing models for the dynamical systems at the fundamental scale and investigate then thoroughly, we develop a general theory of dynamical systems. The intention is not to build a comprehensive theory applicable to any dynamical system, but to underline the general structure of a meta-theory with views in its application as a guide in the construction of dynamical models for the elusive, hypothetical dynamical systems at the fundamental scale of nature. We believe that an adequate framework to develop the general theory can be found in the setting of category theory interpreted within set theory \cite{Mac Lane}.

\subsection{Categorical approach to the fundamental dynamics}
Fundamental dynamical systems refers to the hypothesised  physical systems underlying quantum mechanical systems.
 Our categorical approach to a meta-theory of fundamental dynamical systems is build on the following notions. A fundamental theory is a methodology to construct models for the fundamental dynamics beneath the quantum mechanical scale systems. A candidate to fundamental theory is identified with a category $Cat_{Fun}$. A fundamental dynamical model is partially determined by an object $\mathcal{O}$ of $Cat_{Fun}$. The morphisms between different objects of $Cat_{Fun}$ satisfy the property that allow for compositions: given $f_{ij}:\mathcal{O}_i\to \mathcal{O}_j$ and $f_{kj}:\mathcal{O}_j\to\mathcal{O}_k$, the composition of  morphisms $f_{kj}\circ f_{ji}:\mathcal{O}_i\to\mathcal{O}_k$ is defined when the codomain of $f_{ij}$ is a subset of the domain of $f_{kj}$. Furthermore,
 each object $\mathcal{O}$ of $Cat_{Fun}$ has associated a non-empty set of endomorphisms ${End}(\mathcal{O})$. The composition $f_{j}\circ \tilde{f}_{j}:\mathcal{O}\to \mathcal{O}$ of endomorphisms $f_{j}:\mathcal{O}\to \mathcal{O},\,\tilde{f}_{j}:\mathcal{O}\to \mathcal{O}$ is always well defined. Also, $End(\mathcal{O})$ contains the identity morphism $Id_j:\mathcal{O}_j\to \mathcal{O}_j,\,a\mapsto a$. Therefore, $End(\mathcal{O})$ with the composition of morphisms is a monoid for each object $\mathcal{O}$.

The composition law of morphisms is assumed to be associative in the sense that the relations
 \begin{align}
 f_{ij}\circ ( f_{jk}\circ f_{kl} )=\,(f_{ij}\circ f_{jk})\circ f_{kl},
 \label{associativity of morphisms}
 \end{align}
hold good whenever the compositions are defined. In particular, the composition of endomorphisms
\begin{align}
f_j\circ (f_k \circ f_l)=(f_j \circ f_k)\circ f_l
\label{associativity of endomorphisms}
\end{align}
holds always good, since the composition of endomorphisms is always defined.

The theory under consideration applies to dynamical systems interpreted within the framework of set theory.
We restrict our theories to the case when the morphims are functions between sets. We still denote them as morphisms of the fundamental category. The restriction to functions has its role to implement determinism, although it is not logically necessary. On the other hand, one of the reasons to implement deeterminism in a mathematical description of physical systems is that it is the simplest alternative of generic type.

Within set theory and when the composition of the morphisms are defined, associativity holds. By assuming the associativity property for the composition of morphsims, $(End(\mathcal{O}),\circ )$ is a monoid for each object $\mathcal{O}$. Associativity ensures that, if there is an identity morphism $\mathcal{I}d:\mathcal{O}\to \mathcal{O}$ such that the diagrams
 \begin{align}
\xymatrix{\mathcal{O}\ar[rd]^{g}\ar[d]^{Id}\\
\mathcal{O} \ar[r]^{g} & \mathcal{O}}
\label{identities}
\end{align}
for each $g$, then it is unique. This is a nice property for our theory, since having two different neutral elements leads to difficulties related with cyclic dynamics, as a direct argument shows. Furthermore, associativity will be very useful for the construction of multiple composition of morphism.

It is not required that the composition of endomorphisms to be commutative. Neither it is required that the endomorphisms have inverse.

\begin{ejemplo}
Let us consider theories such that the spacetime is the dynamical object, as it is the case of the $3+1$ description of physical systems in general relativity. Such theories can be formalized by considering the category $Cat_{Fun}$ as being the category ${\bf Smo}_4$ of smooth manifolds of dimension $4$ and then considering foliations consistent with diffeomorphism invariance. However, when adopting ${\bf Smo}_4$ as the category for the description of fundamental processes, further assumptions of mathematical and physical nature are necessary:
\begin{enumerate}
\item Einstein equivalence principle, implying the existence of a spacetime Lorentzian metric on each object of ${\bf Smo}_4$,

\item General covariance principle, excluding the existence of other fundamental metric structures,

\item General covariant field equations,

\item A dynamical law for point particles generalizing the inertial law compatible with the above principles,

\item Phenomenological constraints limiting the possibilities for the field equations. These constraints regulate the specification of the subcategory of ${\bf Smo}_4$ compatible with the principles.
\end{enumerate}
 Relaxing the amount of assumptions, implies a landscape theory.
\end{ejemplo}

In the construction of a given dynamical theory, it is necessary to introduce a parameter labeling the evolution. Hence we assume the existence of a {\it parameter space} or {\it time parameter}. The parameter variable is used to label the elements of sequences of morphisms. Such parameter spaces can be represented by subsets of the integer numbers or by subsets of the real numbers, for instance. However, other classes of parameter spaces can be used for this goal. As such, they lay first outside the category $Cat_{Fun}$, but it must be construable from set theory and $Cat_{Fun}$.

\subsubsection{Construction of the natural numbers and zero from set theory}Given a category, it is possible to construct the ordered set of natural numbers and the zero number $\mathbb{N}\cup\{0\}$ within the framework of set theory. We start considering a category and an object $\mathcal{O}$ of the category. Then we construct the following sets
\begin{align*}
\emptyset,\, \{\emptyset\},\,\{\emptyset,\{\emptyset\}\},\,\{\emptyset,\{\emptyset,\{\emptyset\}\}\},\,...\,.
\end{align*}
That, is the sequence of empty set, the subset of all the subsets of $\mathcal{O}$ consisting of the empty sets. These two mathematical objects are represented by symbols
\begin{align*}
& \emptyset \mapsto 0,\\
& \{\emptyset\} \mapsto 1 .
\end{align*}
Then $0<1$ and $1$ is the successor of $0$. The next is to assign couples $C_1$ formed by the empty subset of $\mathcal{O}$ and the subset of the power set of $\mathcal{O}$ composed by one element, the empty set. Then one defines the above sequence recursively and uses the natural sequences to denote the corresponding elements of the sequence as follows,
\begin{align*}
& \emptyset \mapsto 0,\\
& \{\emptyset\} \mapsto 1,\\
& \{\emptyset,\{\emptyset\}\}\mapsto 2,\\
& \{\emptyset,\{\emptyset,\{\emptyset\}\}\}\mapsto 3,\\
& ...\, .
\end{align*}
Thus the sequence $\{\emptyset,\, \{\emptyset\},\,\{\emptyset,\{\emptyset\}\},\,\{\emptyset,\{\emptyset,\{\emptyset\}\}\},\,...\}$ is identified with the natural numbers union with the zero number. Thus we call this sequence $\mathbb{N}\cup\{0\}$.

Being recursive, the above sequence is ordered. Thus, inducing the order $0<1<2<3<...$.

It is possible to define a sum operation in $\mathbb{N}\cup \{0\}$, just because it is recursive, if $C_n$, $C_m$ are given, then $C_n + C_M := C_{n+m}$.
\begin{proposicion}
$(\mathbb{N}\cup\{0\},+)$ is an ordered monoid.
\end{proposicion}

The requirement that the time parameter spaces should be endowed with an order relation is intrinsic in our theory, if not necessary. This requirement is a local condition on the parameter space. The assumption of partial order for the time parameter spaces serves to define the notion of {\it limits}. The above construction shows that category theory offers at least one natural parameter space, $\mathbb{N}\cup\{0\}$.

From $\mathcal{I}_N$, one can construct other parameter spaces as subsets of the extensions of the natural numbers. These parameter spaces are abstract, not linked to any specific physical system. There are at disposal as mathematical constructions for any theory based in $Cat_{Fun}$.
\subsubsection{Notion of fundamental dynamics}
\begin{definicion}
Let us consider an object $\mathcal{O}$ of a fundamental category $Cat_{Fun}$ and a collection $\mathcal{S}\subset End(\mathcal{O})$ of endomorphisms of $\mathcal{O}$. Let us also consider an ordered parameter space. A sequence $\hat{\mathcal{S}}$ is a map that associates to each element of $\mathcal{S}$ a value of the time parameter space. A {\it pre-dynamics} defined on the  model $\mathcal{O}$ with initial morphism $f_0:\mathcal{O}\to \mathcal{O}$ is a couple of sequences $(Dyn^-(f_0),Dyn^+(f_0))$ of endomorphisms of $\mathcal{O}$ with initial element $f_0$.
\label{predynamics}
\end{definicion}

The {\it time parameter} serves to label the elements of sequences. The time parameter space can be realized as subsets of the integer $\mathbb{Z}$ or as subsets of the reals $\mathbb{R}$. In the theory, there is not an absolute rule for choosing the time parameter space, although the requirement that time parameter space being endowed with an order relation is adopted. Note that this is a local condition, not a global one, useful when constructing limits.

A typical sequence of endomorphisms is of the form $\hat{\mathcal{S}}\equiv\,\{ f_{\alpha_0}, f_{\alpha_1}, f_{\alpha_2}, f_{\alpha_3},...\},$
where $\alpha_0 < \alpha_1<\alpha_2<...$ are elements of the parameter space. Note that the association from morphisms to parameter values is not necessarily injective, neither is necessary to be surjective. If the sequence is injective, then the pre-dynamics is {\it deterministic}; otherwise, the pre-dynamics is {\it non-deterministic}. Associativity and the assumption that the theory is based upon a fundamental category  within the set theory framework are fundamentally based to look for consistency of the models with determinism. Moreover, both sequences $(Dyn^-(f_0),Dyn^+(f_0))$  must be well-ordered.

In order that the elements of the sequences $(Dyn^-(f_0),Dyn^+(f_0))$ are ordered or partially ordered in our construction, the parameter space must be endowed with an order relation or a partial order relation. The elements of $\mathcal{S}$ acquire the order relation from the map $\hat{\mathcal{S}}:\mathcal{S}\to \mathcal{I}$ and the order in the parameter space $\mathcal{I}$. The parameter labeling the succession of endomorphisms is related with the notion of time parameters used in the dynamics. However, these two concepts do not fully coincide logically: the time parameter just needs to be {\it finer} than the parameter labelling the successions of $Dyn(f_0)$.

A {\it local cyclic pre-dynamics} on an object $\mathcal{O}$ is a pre-dynamics consisting of two morphisms sequences such that the composition of all the morphims of the second sequence $\beta$ after the composition of all morphisms of thirst sequence $\lambda$ is the identity, $\beta\circ \lambda =Id$. The associativity property of the composition of endomorphisms allows the possibility of well defined deterministic cyclic pre-dynamics. On the other hand, let us assume that for a given object $\mathcal{O}$ the composition of endomorphisms was not associative. Then the neutral element for the composition $Id:\mathcal{O}\to \mathcal{O}$ does not need to be unique, which implies that a possible inverse element of $\lambda$ does not need to be unique neither. Thus the conditions of cycles will be of the form $\beta_1 \circ \lambda=Id_1$, $\beta_2\circ \lambda =\,I_2$ and the second part of the local cycle dynamics is not well defined without further information, since there are two possibilities, $\beta_1$ and $\beta_2$ to complete the local cycle.

Let us consider two pre-dynamics,
\begin{align*}
\begin{cases}
& Dyn(f_0)=\,(Dyn^-(f_0),Dyn^+(f_0)),\\
& Dyn(f_{\alpha})=(Dyn^-(f_\alpha ), Dyn^+(f_\alpha ) )
\end{cases}
\end{align*}
 with initial endomorphisms $f_0$ and $f_\alpha\in\,Dyn(f_0)$ respectively and such that $Dyn^-(f_0)\cup\, Dyn^+(f_0)=\,Dyn^-(f_\alpha )\cup Dyn^+(f_\alpha ) )$. Then $Dyn(f_0)$ and $Dyn(f_\alpha)$ are equivalent. Thus, for example, if
\begin{align*}
Dyn^+(f_0)=\{f_0, f_{1}, f_{2} , f_{3},...\},\quad Dyn^-(f_0)=\{f_{0}, f_{-1}, f_{-2}, f_{-3},...\}
\end{align*}
 and
\begin{align*}
\widetilde{Dyn}^+(f_k)=\{f_k, f_{k+1}, f_{k+2} , f_{k+3},\cdot \cdot\cdot\},\quad \widetilde{Dyn}^-(f_k)=\{f_{k}, f_{k-1}, f_{k-2}, f_{k-3},...\},
\end{align*}
but the two sequences coincide as ordered sets, the pre-dynamics $(Dyn^-(f_0),Dyn^+(f_0))$ and $(\widetilde{Dyn}^-(f_k),\widetilde{Dyn}^+(f_k))$ are equivalent.

There is also a consistent condition for the definition of the morphisms. It can happen that a given endomorphism $f_j:\mathcal{O}\to \mathcal{O}$ of $Dyn^+(f_0)$ can be re-casted as a composition of two non-trivial morphisms $f_{j_2}\circ f_{j_1}=f_{j}$ with $codom (f_{j_2}\circ f_{j_1})=\,codom (f_j)$ and such that $f_{j_1},f_{j_2}\neq Id$. In this case, if otherwise the same, $Dyn^+(f_0)=\{f_0, f_{1}, f_{2} , f_{3},...,f_{j_1},f_{j_2},...,\}$ {\it is finer} than $\widetilde{Dyn}^+(f_0)=\{f_0, f_{1}, f_{2}, f_{3},...,f_{j},...,\}$. Analogous considerations apply to the composition of elements $Dyn^-(f_0)$. This construction suggests a natural extension of the equivalence relation discussed above. The equivalence class containing $Dyn(f_0) $ is denoted by $[Dyn(f_0) ]$.
\begin{definicion}
If given a morphism $f_\alpha\in \, Dyn(f_0)$, the condition $f_\alpha=\,f_\beta \circ g$ holds good for certain  $f_\beta,\,g\in Dyn(f_0)$  only if $g=Id$, then $Dyn(f_0)$ is the {\it finest pre-dynamics} containing $f_0$.
\end{definicion}

After the above preliminary considerations, we introduce our notion of {\it dynamics},
\begin{definicion}
 Given a fundamental category $Cat_{Fun}$, a {\it dynamics} $Dyn(\mathcal{O})$ on the object $\mathcal{O}$ is an equivalence class of pre-dynamics containing a finest pre-dynamics. A dynamics $Dyn(Cat_{Fun})$ in the category $Cat_{Fun}$ is a collection consisting of one dynamics for each object of $Cat_{Fun}$.
\label{definicion de evolucion}
 \end{definicion}
We have the following result:
 \begin{proposicion}
 Given an initial endomorphism $f_0$ of an object $\mathcal{O}$ of $Cat_{Fun}$ for a pre-dynamics $(Dyn^-(f_0),Dyn^+(f_0))$, there is at most an unique dynamics $Dyn(\mathcal{O})$ with initial morphism $f_0$.
 \end{proposicion}
 \begin{proof}
 Let us consider two pre-dynamics $Dyn({f_0})$ and $\widetilde{Dyn}(f_0)$ on $\mathcal{O}$ corresponding to the same dynamics $Dyn(\mathcal{O})$ with the same initial morphism $f_0$, but differing on the components $Dyn^+({f_0})$ and $\widetilde{Dyn}^+(f_0)$ such that $f_\alpha\in Dyn^+({f_0})$ and $f_\beta\in \widetilde{Dyn}^+(f_0)$, $f_\alpha\neq f_\beta$. This implies either that $f_\alpha =\,f_\beta \circ g$ or that $f_\beta =\,g'\circ f_\alpha$ for certain morphims $g$ and $g'$. This corresponds to the combined action of morphisms $f_\beta , g$ or $g', f_\alpha$. Thus one can construct a new pre-dynamics either by extending the original dynamics either by containing $g$ or $g'$. However, this is impossible because being the pre-dynamics initially associated with the finest one, the original pre-dynamics  $Dyn({f_0})$ and $\widetilde{Dyn}(f_0)$ are the finest sequences, except if $g=Id$.
 \end{proof}
 Note that the morphisms $g, g' :\mathcal{O}\to\mathcal{O}$ are not generally determined by the constructions.

It is not obvious that, given a pre-dynamics there is a dynamics as defined above.

The relevance of the notion of dynamics with respect to the pre-dynamics consists of its implication of determinism. Therefore, this is the property to be required, either by proving the existence of a dynamics or assuming it.

Note that the notion of dynamics can be applied to both, the case when the parameters hold the property of the intermediate point, as the rational $\mathbb{Q}$ or real $\mathbb{R}$ fields, or for parameter spaces where it is not possible to define such intermediate.

\subsection{Natural monoids of cummulant endomorphisms}
 Let $\mathcal{I}_N$ be the parameter space as defined above and $\mathcal{J}\subset \,\mathcal{I}_N$ an arbitrary subset. Let us consider the {\it cummulants} of endomorphisms defined from pre-dynamics on an object $\mathcal{O}_j$, that is, endomorphisms of the form
 \begin{align*}
 \begin{cases}
 & \,^+C(\mathcal{O}_j, \mathcal{J}):=\,\left\{\prod_{\alpha \in \mathcal{J} } f^j_\alpha ,\,f^j_\alpha\in \,Dyn^+(\mathcal{O}_j),\,\mathcal{J}\subset\mathcal{I}_N\right\},\\
 & \,^-C(\mathcal{O}_j, \mathcal{J}):=\,\left\{\prod_{\alpha \in \mathcal{J} } f^j_\alpha ,\,f^j_\alpha\in \,Dyn^-(\mathcal{O}_j),\,\mathcal{J}\subset\mathcal{I}_N\right\} .
 \end{cases}
 \end{align*}
 We can define a {\it convolution operation of cummulatives},
 \begin{align*}
 \star : ^+C(\mathcal{O}_j, \mathcal{I}_N)\times\, ^+C(\mathcal{O}_j, \mathcal{I}_N)\to \,^+C(\mathcal{O}_j, \mathcal{I}_N),\,\, (\,^+C(\mathcal{O}_j, \mathcal{I}_1),^+C(\mathcal{O}_j, \mathcal{I}_2))\mapsto \, ^+C(\mathcal{O}_j, \mathcal{I}_1\cup\,\mathcal{I}_2).
 \end{align*}
 and similarly for $^-C(\mathcal{O}_j, \mathcal{I})$.

Then it is direct the following result,
 \begin{proposicion}
 $(\,^+C(\mathcal{O}_j, \mathcal{I}_N),\star )$ and $(F\,^-C(\mathcal{O}_j, \mathcal{I}_N)),\star)$ are abelian monoids endowed with an order relation.
 \label{monoides de acumulacion}
 \end{proposicion}
 \begin{proof}
 The composition law is associative and it has a neutral element, when $\mathcal{J}=\,\emptyset$. By the definition, the composition law of cummulants is commutative.

 Since $\mathcal{I}_N$ is ordered, we can state that $\mathcal{I}_1 < \,\mathcal{I}_2 $ if the biggest element of $\mathcal{I}_1$ is smaller than the last element of $\mathcal{I}_2$. If they are equal, we check the previous one, etc... Then one has a criteria to decide when $\mathcal{I}_1 <\, \mathcal{I}_2$, $\,^+C(\mathcal{O}_j, \mathcal{I}_1) <\, ^+C(\mathcal{O}_j, \mathcal{I}_2)$.
 \end{proof}
 The monoids $(\,^+C(\mathcal{O}_j, \mathcal{I}_N),\star )$ and $(\,^-C(\mathcal{O}_j, \mathcal{I}_N)),\star)$ are almost totally recursive constructions, since only depend upon the elements of the category and the parameter space $\mathcal{I}_N$.

\subsection{Notion of asymmetric dynamics}
\begin{definicion}
Given a dynamics determined by the finest pre-dynamics $(Dyn^-(f_0),Dyn^+(f_0))$, {\it the inverted dynamics} is the equivalence class of pre-dynamics equivalent to $(Dyn^+(f_0),Dyn^-(f_0))$.
\label{inverted dynamics}
\end{definicion}
Let us consider the finest pre-dynamics $Dyn (f_0)=(Dyn^+(f_0),Dyn^+(f_0))$ such that for $Dyn^+(f_0)=\{f_0, f_1, f_2,...\}$ it holds that $Dyn^-(f_0)=Dyn^+(f_0)$. This is a symmetric dynamics. That a dynamics could be symmetric depends on the specific initial morphism $f_0$. In contrast, the following notion of asymmetric dynamics does not depend on the election of the initial morphism,
\begin{definicion}
Given a dynamics determined by the finest pre-dynamics $(Dyn^-(f_0),Dyn^+(f_0))$, a dynamics is {\it asymmetric} if there is an initial morphism $f_0$ such that $Dyn^+(f_0)\neq Dyn^-(f_0)$; it is {\it strong asymmetric} if it is asymmetric for any possible initial morphism $f_0$.
\label{asymmetric dynamics}
\end{definicion}
Given a pre-dynamics $(Dyn^-(f_0),Dyn^+(f_0))$ constituted by a large number of morphisms, it must be generically asymmetric. Thus, for each symmetric dynamics there are many more asymmetric dynamics with the same collection of morphisms, but rearranged in different way than the symmetric one.

\subsection{Recursive principle}
  A fundamental theory should be complete in the sense of capable to describe the elements of the theory with elements of the theory. As a way towards the formalization of this idea, we introduce the following:
\bigskip
\\
{\bf Recursive principle}. {\it The elements of a dynamical theory of fundamental processes must be defined recursively in terms relative to the dynamical changes associated with the dynamical systems described by the theory.}
\bigskip
\\
This principle should apply to all the elements of a dynamical theory: objects, dynamical law and parameter space. It must also apply to the construction of the time parameters.

From the categorical point of view, a dynamics is described by collections of endomorphisms $\{f_{j}:\mathcal{O}_j\to \mathcal{O}_j\}$ for each object $\mathcal{O}_j$ of the fundamental category $Cat_{Fun}$. Therefore, the principle is re-casted in the following form,
\bigskip
\\
 {\bf Recursive principle in categorical form}. {\it The elements of a dynamical theory of fundamental processes must be defined from the objects and the morphisms of the fundamental category $Cat_{Fun}$}.
\bigskip
\\
A category is determined by two type of fundamental elements: the objects of the category and the morphisms of the category. Therefore, according to this principle, the dynamical degrees of freedom, the dynamical law and the time parameter should be defined in terms of the objects and the morphisms of the category, within the framework of set theory.

Prior to develop the principle and show how it can be implemented, we discuss the notion of parameterized dynamics in the context of the category framework.
\subsection{Parameterized dynamics from the categorical point of view}
The indexes denoting the morphisms of the dynamics can be realized by subsets of the natural numbers  $\mathbb{N}=\,\{1,2,3,...\}$ together with zero. This parameter can be understood either as the domain set of labels for ordering the morphisms of the dynamics or as elements of an algebraic structure. There is a direct generalization of the construction, when we consider the indices as valuated in an ordered monoid $(mon,+,0)$, since the associativity and the existence of a neutral element properties hold for composition of morphisms of the objects of $Cat_{Fun}$ and the same must be required for the parameter space. The recursivity principle imposes that the monoids should be constructed from the elements of the category, modulo re-parameterizations.

\begin{definicion}
Let  $(mon,+,0)$ be an ordered monoid. A {\it parameterized dynamics} with starting point the endomorphism $f:\mathcal{O}\to \mathcal{O}$ is a functor
\begin{align}
\tilde{\mathcal{P}}_{mon}:Cat_{Fun}\times mon \to Cat_{Fun}
\label{dynamics on a monoid mon}
\end{align}
such that
\begin{enumerate}
\item The action on an object is trivial, $(\mathcal{O},t)\mapsto \mathcal{O}$.

\item The action on morphism is such that:
\begin{align}
\tilde{\mathcal{P}}_{mon}(f,\hat\lambda)\circ \tilde{\mathcal{P}}_{mon}(f,\hat{\beta})=\,\tilde{\mathcal{P}}_{mon}(f,\hat\lambda')\circ \tilde{\mathcal{P}}_{mon}(f,\hat{\beta}'),\quad \forall\quad \lambda+\beta=\lambda'+\beta',
\label{preservación de asociatividad}
\end{align}
for the specific morphisms
$\hat{\lambda}:mon\to mon,\,t\mapsto t+\lambda$, etc...
\end{enumerate}
The ordered monoid $(mon,+,0)$ is the {\it parameter space}.
\label{Definicion de dinamica parametrizada}
\end{definicion}

The significance of the first condition is that measuring time does not affect the dynamics. This is a definition of {\it ideal clock}.

The second condition states the compatibility of the functor $\tilde{\mathcal{P}}$ with the associativity law of composition of morphisms in $Cat_{Fun}$.

Let us consider another category $\mathcal{S}_{\mathcal{P}}$ that it will be identified with the category of time parameter spaces. From the previous discussion, $\mathcal{S}_\mathcal{P}$ is in first instance identified with the category of (partial) ordered monoids ${\bf OMon}$.
In this setting, the recursive principle is partially implemented by the application of {\it Proposition} \ref{monoides de acumulacion}, that defines a functor
\begin{align*}
\tilde{\mathcal{P}}:Cat_{Fun}\to \mathcal{S}_{\mathcal{P}}, \quad \mathcal{O}_j\mapsto\, ^+C(\mathcal{O}_j,\mathcal{I})
\end{align*}
where $\mathcal{S}_\mathcal{P}$ is a sub-category of ${\bf OMon}$.
and where the morphism $\tilde{\mathcal{P}}$ maps the morphism $f_{ji}:\mathcal{O}_i\to \mathcal{O}_j$ to the morphism $\tilde{\mathcal{P}} (f_{ji}):\,^+C(\mathcal{O}_i,\mathcal{I})\to \,^+C(\mathcal{O}_j,\mathcal{I})$ defined by the map on cummulants,
\begin{align*}
\prod_{\alpha\in J} f^{\alpha}_i\mapsto \prod_{\alpha\in J}f^\alpha_j.
\end{align*}
The morphism $\tilde{\mathcal{P}} (f_{ji})$ does not depend upon the specific characteristics of $f_{ji}$, but only on the origin object and target object. In this sense, $\tilde{\mathcal{P}} (f_{ji})$ is a {\it constant morphism}.

 The parameter space $\mathcal{I}$ is leaved un-specified, but it is required to be constructed from the elements of the category $Cat_{Fun}$, in particular, it must be constructible using the elements of each object $\mathcal{O}$. Moreover, it is assumed the same for the model based upon the object $\mathcal{O}_i$ and for the model bades upon $\mathcal{O}_j$.

 The functor $\tilde{\mathcal{P}}$ assigns to each object $\mathcal{O}$ of $Cat_{Fun}$ an object $\mathcal{I}$ of $\mathcal{S}_{\mathcal{P}}$ and to each morphism $f_{ij}:\mathcal{O}_i\to \mathcal{O}_j$ a morphism $\tilde{\mathcal{P}} (f_{ji}): \tilde{\mathcal{P}}(\mathcal{O}_i)\to\tilde{\mathcal{P}}(\mathcal{O}_j)$ of the parameter category $\mathcal{S}_{\mathcal{P}}$.
$\tilde{\mathcal{P}}$ is not surjective in the sense that time parameters that could be used in the mathematical description of a given dynamics are not in the codomain of $\tilde{\mathcal{P}}$, neither it is a {\it full functor}, since morphisms of an object $mon$ in $\mathcal{S}_\mathcal{P}$ will not be taken into account by the morphisms of the given objects of $Cat_{Fun}$. In order to construct a full surjective parametrization functor we consider the category $\mathcal{S}_{\mathcal{P}}/Aut$ obtained by quotient each object $\mathcal{I}$ of $\mathcal{S}_{\mathcal{P}}$ by the set of order preserving automorphisms $Aut(\mathcal{I})$. The corresponding morphisms of  $\mathcal{S}_{\mathcal{P}}/Aut$ are the  morphisms between quotients $\mathcal{I}/Aut(\mathcal{I})$, that is, the quotient morphisms. From the construction of the category $\mathcal{S}_{\mathcal{P}}/Aut$ in this way we have the following result,
\begin{proposicion}
Let us consider a sub-category $\mathcal{S}_\mathcal{P}$ of ${\bf OMon}$ and the functor $\tilde{\mathcal{P}}:Cat_{Fun}\to \mathcal{S}_{\mathcal{P}}$.
Then the quotient functor
 \begin{align}
{\mathcal{P}}:Cat_{Fun}\to \mathcal{S}_{\mathcal{P}}/Aut,
\end{align}
is full.
\label{parameterization full functor 2}
\end{proposicion}
Proposition \ref{parameterization full functor 2} is a manifestation of the recursive principle, which is further fulfilled by assuming that $\mathcal{P}$ is surjective and by the characterization of evolution as given by  sequences of endomorphisms of objects. According to this point of view, not only one can associate a clock to each possible fundamental system represented by objects $\mathcal{O}$ in $Cat_{Fun}$, as it is read from the assumption on the existence of the functor $\tilde{\mathcal{P}}$, but also any imaginable time parameter is realized in this way, up to re-parametrization, by means of the functor $\mathcal{P}$.

  The characterization of the category of time parameters $\mathcal{S}_\mathcal{P}$ is intimately related with the assumptions concerning the dynamical laws and $Cat_{Fun}$, because the use of time parameterizations should not impose extra conditions on the dynamics. From the properties already discussed, it follows that each of the objects of $\mathcal{S}_\mathcal{P}$ that serves as time parameter space must be endowed at least with one algebraic binary relation, that furnishes each of them with the structure of monoid. Such binary relations constitute an integral part in the definition of dynamics, while the associative law in the objects of $\mathcal{S}_\mathcal{P}$ serves to be consistent with the associativity conditions for the morphisms,
  \begin{align*}
  \mathcal{I}_j\circ (\mathcal{I}_k \circ \mathcal{I}_l)=(\mathcal{I}_j \circ \mathcal{I}_k)\circ \mathcal{I}_l
    \end{align*}
  without imposing conditions on the original category $Cat_{Fun}$.

In order to implement further the recursive principle, the parameters used to label the sequences of the dynamics need to be defined in terms of the elements of the category $Cat_{Fun}$ or in terms of elements of categories constructed from $Cat_{Fun}$. One procedure to achieve this goal is the following. One can consider the convolution operators on morphisms. These operations define a category $\mathcal{F}$ whose objects are the monoids $(F^+(\mathcal{O}_j),\star )$ consisting on convolutions of endomorphisms given by the sequence of the dynamics and the morphisms are the induced morphisms between $(F^+(\mathcal{O}_j),\star )$ and $(F^+(\mathcal{O}_k),\star )$ induced from the morphims $f_{jk}:\mathcal{O}_j\to\mathcal{O}_k$ of $Cat_{Fun}$, modulo automorphisms of each monoid $(F^+(\mathcal{O}_j),\star )$.
Since the use of re-parameterizations can be seen as a form of abstraction in the time labeling of the dynamics, the formal category of parameters is the quotient category $\mathcal{F}/Aut $, the quotient with respect to the automorphisms of each monoid $(F^+_j,\star)$.

The identification of the notion of time in the categorical theory of dynamics involves two steps. First, $\mathcal{S}_\mathcal{P}$  is identified with $\mathcal{F}$. Second, time parameters are restricted to be modelled as codomains of the functor ${\mathcal{P}}:Cat_{Fun}\to \mathcal{F}/Aut$. In this way, the categorical recursive principle is full filled.
\subsection{Evolution of the category}
The notion of parameterized dynamics can be extended in natural way to the category itself.
\begin{definicion}
Given a functor $\Phi:Cat_{Fun}\to Cat_{Fun}$ and an partial ordered monoid $mon$, a {\it generalized parameterized dynamics} along $\Phi$ parameterized on $mon$ is a functor
\begin{align}
\tilde{\mathcal{P}}_{mon}:Cat_{Fun}\times mon \to Cat_{Fun}
\label{generalized dynamics along F on a monoid mon}
\end{align}
such that:
\begin{itemize}

\item The action on objects of $Cat_{Fun}$ is covariant in the sense of $\Phi$, $\tilde{\mathcal{P}}_{mon}(\mathcal{O},\lambda)=\, \Phi (\mathcal{O})$,

\item  The action on morphism is such that:
\begin{align}
\tilde{\mathcal{P}}_{mon}(f,\hat\lambda)\circ \tilde{\mathcal{P}}_{mon}(\Phi(f),\hat{\beta})=\,\tilde{\mathcal{P}}_{mon}(f,\hat\lambda')\circ \tilde{\mathcal{P}}_{mon}(\Phi(f),\hat{\beta}'),\quad \forall\quad \lambda+\beta=\lambda'+\beta'
\label{preservación de asociatividad}
\end{align}
holds good.
\end{itemize}
\label{generalized dynamics}
\end{definicion}

The notion of generalized parameterized dynamics accommodates better to the view of general covariance, since also the law of evolution is exposed to evolution. The categorical point of view becomes in this context a natural framework, since the mere concept of dynamical model, identified with the notion of dynamics on a object of the category, needs to be transcended to consider the full changes in the category due to the evolution and measured by compatibility with $\Phi$.

We observe in {\it Definition} \eqref{generalized dynamics} the germ of the notion of {\it evolution of the dynamical law}. Such evolution of the dynamical law is driven by the functor $\Phi:Cat_{Fun}\to Cat_{Fun}$.

\subsection{Binary composition operation for categories}
Given a category $Cat_{Fun}$, a category of parameters $\mathcal{S}_\mathcal{P}$ and a parameterized dynamics ${\mathcal{P}}_{mon}$ on each of the objects $mon$ of $\mathcal{S}_{\mathcal{P}}$, there is a binary operation $*:Cat_{Fun}\times Cat_{Fun}\to Cat_{Fun}$ such that the following diagram commutes,
 \begin{align}
\xymatrix{Cat_{Fun}\times Cat_{Fun}\ar[d]^{Id \times \mathcal{P}}\ar[rd]^{*}\\
Cat_{Fun} \times \mathcal{S}_{\mathcal{P}}/Aut\, \ar[r]^{\hat{\Phi}} & Cat_{Fun}}
\label{Composition law in Cat}
\end{align}
The functor $\hat{\Phi}$ is defined in the following way: for each  $\mathcal{O}$ of $Cat_{Fun}$ and for each time parameter space $I$, an object of $\mathcal{S}_{\mathcal{P}}$, one has that $\hat{\Phi}(\mathcal{O}, I)=\mathcal{O}$, a definition that it is keep consistent under reparameterizations.

The action on morphisms of the product $Cat_{Fun}\times \mathcal{S}_{\mathcal{P}}$ is constructed in the following way. Let us consider $\tilde{\mathcal{P}}(O_i)$, $\tilde{\mathcal{P}}(O_j)$ as objects in $\mathcal{S}_{\mathcal{P}}$. Therefore, we need to consider two generic objects $I_\alpha, I_\beta$  of $\mathcal{S}_{\mathcal{P}}$ and $\lambda_{\alpha\beta}:I_\alpha \to I_\beta$ a morphism. Then the action of the functor  $\hat{\Phi}(f_{ij},\lambda_{\alpha\beta})$ on product morphisms is
\begin{align*}
\tilde{P}_{\tilde{\mathcal{P}}(O_i)}(f_0,t)\to \tilde{P}_{\tilde{\mathcal{P}}(O_j)}(f_0,\lambda_{\alpha\beta }(t)) .
\end{align*}
By the relation \eqref{preservación de asociatividad} and since $\lambda_{\alpha\beta}$ is a morphism, $ \hat{\Phi}(f_{ij},\lambda_{\alpha\beta}):\mathcal{O}_i\to\mathcal{O}_j$ is a morphism and it is independent of the election of $t\in I_\alpha$.

An analogous construction follows from the notion of generalized parameterized dynamics.

\subsection{Conclusion}
A categorical framework for fundamental dynamics has been developed. In the theory, the nature and role of the time parameters has been discussed. There are two type of time parameters. The first are the once used to choose the labels of the sequences of morphisms defining the dynamics. Such parameters are abstract mathematical parameters, not linked to any physical phenomena and in the model, build from the underlying set theory. The second type of parameters are linked with specific physical systems. It is discussed how time parameter must be naturally emerge from the principles of the theory and the elements of the category. This is in concordance with our recurrence principle.

An initial parameter space with an inherited order relation is need, in order to label the elements of the dynamics. Such parameter are defined by the first type discussed above. Besides such a parameter, the recurrence principle is totally fulfilled in the theory, since all the objects of the dynamics are re-written in terms of the elements of the dynamics.

The theory developed has a general character. It implies that also the laws of physics must evolve with time. Thus the categorical approach justifies a change dynamical law of physics. Second, the development of the theory implies deterministic laws for the fundamental dynamics.
\newpage

\section{\LARGE{General theory of fundamental dynamical systems}}\label{chapter on Assumptions and General Theory}
\bigskip
\bigskip
\subsection{Introduction}
In this {\it chapter} the elements and fundamental assumptions and principles that determine the hypothesized fundamental dynamics beneath quantum mechanics are discussed. Such assumptions and principles will guide the search for the dynamical systems that we conjecture describe physical systems at a deeper level than the quantum mechanical description. Our exposition is formal, but does not pretend to be axiomatic. We do not fully discuss the logical dependence between the different assumptions and structures. However, we clarify several relations among them and their  mathematical formalization. In this way, the theory is open to further formal treatment and to accommodate additional or alternative developments to the ones expressed currently in the present form.

The theory of dynamical systems that we are going to developed (Hamilton-Randers Theory) is partially characterized by the following facts. First, a relevant feature of the theory of Hamilton-Randers models is the emergent interpretation of many fundamental notions in physics, among them the notion of time parameter as they appear in the dynamical description of quantum systems and classical dynamical systems, the notion of spacetime and the emergence of gravity from of a class of sub-quantum deterministic and the notion of inertial mass, as a measure of the complexity of the fundamental dynamics. Indeed, in our theory, the quantum mechanical description of physical systems and processes appear as a consequence of a coarse grained, effective mathematical description of the fundamental dynamical systems. Physical systems describing an electron, a photon and any other quantum system, will be described in Hamilton-Randers theory as complex dynamical systems with many degrees of freedom. In such a setting, it is impossible for a physical macroscopic observer and by the use of available experimental or phenomenological methods, to follow the details of the fundamental dynamics at the fundamental scale of physics. The limitations are even stronger than in the analogous case of thermodynamical systems, where it is impossible to follow the fine details of the atomic or molecular dynamics too due to the complexity of the systems. This is because in the case on hands, for the task of obtaining detailed information of a sub-quantum level description, an observer is compelled to use macroscopic or quantum process and the corresponding dynamical systems, which are too gross to probe the detailed content of the fundamental scale. Hence the impossibility. Moreover, our scheme of three levels of description of reality (macroscopic, quantum and sub-quantum levels) and our position in the extreme macroscopic case implies that no further deeper description levels with the possibility of being probed experimentally are possible. Therefore, one very important epistemological consequence of our theory is the limits on the achievable scientific knowledge. Beyond direct experimental falsifiability, only formal, mathematical further insights could be achievable. The novelty of this consequence comes from the point that is a total general result of the structure of the theory.

That there is a common description of quantum and sub-quantum dynamical systems is reinforced by the considerations of Koopman-von Neumann theory of dynamical systems \cite{Koopman1931,von Neumann,ReedSimonI} as a common language for both types of dynamical systems. Our claim on the naturalness description of the quantum theory relies partially on the relevance of Koopman-von Neumann theory as a common description of certain classical systems and quantum systems. The technic of applying Hilbert spacetime techniques appears in the work of G. 't Hooft \cite{Hooft2016} in the context of quantum mechanics and quantum field theory, but it was initially developed in the 30's of last century in the context of the theory of ergodic systems. Indeed, ergodicity is a property that, from a formal point of view, is also present at the  sub-quantum level of description.

Another fundamental ingredient in the derivation of quantum mechanic as the coarse grained description of a fundamental dynamics is the concept of two-time dimensional dynamics. The notion of two-dimensional time as conceived in this work is a novel concept. In our view, time has {\it an emergent component}. Such emergent component of time is identified with the type of standard time parameters used in the description of quantum dynamics or macroscopic dynamical systems. The emergence of one of the components of time is an aspect of the emergent nature of the phenomenological description of reality by means of quantum models. Note that this could suggest a contradiction with the timeless description usual in relativistic theories. But let us remark that in essence, the dynamical systems that we will consider are re-parametrization invariant.

The second component of time corresponds to  generic time parameter required to the description of the fundamental dynamics. This type of time parameters and the previous one, do not coincide. They are independent time parameters from each other.

A third ingredient in our theory  is the essential irreversibility of the dynamics at the fundamental level. Therefore, there is the need to reconcile such non-reversibility with the notion of double dynamics and re-parametrization invariance, but it will be shown how the emergence of a reversible dynamics associated with quantum time evolution is originated from the non-reversibility of the fundamental dynamics by means of a symmetrization of the dynamics, a first step towards an averaging of the dynamics.

In this chapter, we will develop a general theory of dynamics where the above ingredients are realized.
\subsection{On time, reversibility and non-reversibility in a theory of fundamental dynamics}
If a theory of dynamics is associated with time evolution, then it is natural to fix first the notion of time. The nature of time is probably one of the most deep mysteries in physics and philosophy with many related questions concerning the nature of time. In contemporary physics, the problem of the problem on the nature of time appears when one tries to embrace quantum mechanics and general relativity in an unifying framework. Very often in such a context, the question on the ontological nature of time is answered in a negative way, under the grounds that spacetime diffeomorphism invariance symmetry of general relativistic theories of gravitation leads to a theory where time is dismissed of any physical significance, according to the  {\it bulk spacetime picture of physics}. But this line of thought is probably answering the question from the beginning, at least since the geometric work of Minkowski \cite{Minkowski 1908}, such diffeomorphism invariance is settle in a framework where the spacetime description is viewed as the fundamental geometric framework for the description of physical phenomena. However, It could well happen that at the levels of the fundamental scale, the notions on which this orthodox point of view is maintained are not adequate. Therefore, we should consider the problem of time from a more general perspective than the bulk spacetime picture offers, suggesting the need for an ampler frame for dynamical systems. At the end, taken in a more weak perspective, the theory of relativity could perfectly accommodate the picture of the non-existence of a privileged time parameter, instead of no existence of a physical time in the sense of physical evolution, in the sense of evolution as a creative process.

Closely related to these questions is the reversible/non-reversible character of the dynamics at the fundamental scale. Although the standard model of elementary particles slightly violates {\it time reversal invariance}, the dynamics of the model is reversible. The electromagnetic sector and the strong sectors are $\mathcal{T}$-inversion invariant\footnote{Despite the $H$-theorem discussed in Scattering Theory. See for instance \cite{Weinberg1995}, section 3.6 or our discussion below.}, while the weak sector has a $CP$-violation piece, that by the CPT-theorem, implies $\mathcal{T}$-violation. However, we understand non-reversibility as the quality by which new structures are either created or annihilated  during the dynamical process without a symmetric time inversion counterpart. This kind of non-reversibility is difficult to achieve in an unitary quantum dynamical model, since in general, the probabilities for {\it inverse processes} are different from zero, allowing the formation of structures.
Consistent with this reversible point of view is the fact that general relativity provides a timeless description of physical processes. According to the standard point of view on general relativity, there is no fundamental notion of physical time, events are represented by points of a four spacetime manifold ${M}_4$ and the physical observables are insensitive to active time diffeomorphism transformations of the manifold. Thus the question of reversibility/non-reversibility dilutes in such a picture of Nature.

The need to address these questions forces us to have convenient notions for  {\it dynamical evolution}, {\it fundamental scale} and then for {\it reversible}/{\it non-reversible dynamics}. Indeed, the above arguments support the view that the fundamental laws of physics should be reversible and if one assumes the contrary hypothesis, then there is the need to explain from non-reversible models at the fundamental scale the absence of a {\it fundamental  arrow of time} in such highly successful reversible dynamical models and theories. This implies to provide a mechanism for the emergence of reversibility of the standard model dynamics. This problem is the converse to the  problem of deriving the second law of thermodynamics or a physical thermodynamical arrow of time from reversible microscopic dynamical laws.

Despite the above arguments in favour of a fundamental reversibility as a property of a fundamental theory, we think that  the reversibility/non-reversibility question of the dynamical laws at fundamental scales is still open. In order to motivate our point of view, let us remark that if the fundamental dynamical systems are constructed from fundamental assumptions, then the reversibility/non-reversibility properties of the evolution must be considered first, since depending on the assumptions that one adopts, the mathematical structure of the theory shapes its physical content.

Therefore, there is a need of a theory of dynamics for fundamental systems where the above question can be settle down and, if the fundamental dynamics is non-reversible, then it is necessary a mechanism that recovers the reversible character that dynamical theories have at the quantum and classical scale  of description, namely, at the level where unitary quantum mechanics and general relativity theories applied. The method that we will propose is the {\it time symmetrization of the dynamics}.

\subsubsection{Time parameter in a general theory of dynamics}

Let us consider the question of the reversibility or non-reversibility of the dynamical laws of fundamental systems in the categorical framework developed in {\it chapter} \ref{chapter on categorical approach}. It is natural to assume the existence of a {\it conjugate dynamics}. A conjugate dynamics can be constructed if for each  value $t$ of the time parameter, there is an opposite $-t$ such that $t+(-t)=0$. Given a parameterized dynamics, the conjugate dynamics refers to the evolution in the reverse direction of application of the automorphisms, in a way that corresponds to the opposite values of the corresponding time parameters. In terms of pre-dynamics sequence pairs $(Dyn^-(f_0),Dyn^+(f_0))$, the first pair $Dyn^-(f_0)$ is labeled by opposite values of the time parameter with respect to the secod pair $Dyn^+(f_0)$.

In order to full-fill these formal requirements, we assume that the time parameters are objects of the category of groups ${\bf Grp}$ or a reasonable sub-category of it.

 Furthermore, the definition of incremental quotient, a very suitable tool to define the notion of {\it change of rate}, can only be implemented if the time parameter space is endowed with an independent, second operation with the possibility of defining the inverses of certain elements. Therefore, the time parameters should take values in a number field $\mathbb{K}$. Therefore, the number field  $(\mathbb{K},+,\cdot)$ is endowed with an additive group operation
 $+:\mathbb{K}\times \mathbb{K}\to \mathbb{K}$ on $\mathbb{K}$ and a multiplicative operation
  $\cdot:\mathbb{K}\times\mathbb{K}\to \mathbb{K}$ that determines a multiplicative group on $\mathbb{K}^*=\,\mathbb{K}\setminus\{0\}$.

Motivated by this reasoning,  the time  parameters that we will consider are subsets $J\subset\,\mathbb{K}$ subjected to the following additional restrictions:
 \begin{enumerate}

 \item For the domain of definition of $J$, the binary operation of addition $+:\mathbb{K}\times \mathbb{K}\to \mathbb{K}$ is required to be inherited by $J$ in the form of the operation $+:J\times J\to \mathbb{K}$. This requirement is necessary if the dynamics is associated with a flow composition law.

 \item The product of elements in $J\subset\,\mathbb{K}$ must be well defined. This condition is necessary to be able to consider non-linear expressions of quantities depending on the time parameter.

 \item  Existence of the inverse elements $(t_2-t_1)^{-1}\in\,\mathbb{K}$ for elements $t_2,t_1\in \,J$ close enough in the sense of a quasi-metric function is required, except when $t_2=t_1$. This condition is necessary in order to define a notion of incremental quotients or derivative operation limits.

 \end{enumerate}

 It is the last requirement that implies the need that $\mathbb{K}$ to be a number field instead than an algebraic ring. Therefore, the time parameters are objects in the category of fields ${\bf Fld}$ and the parameter changes are associated with morphisms of ${\bf Fld}$; the change in parameters, are morphisms of ${\bf Fld}$.

 Furthermore, the categorical formulation of the dynamics furthermore implies to assume the existence of a parametrization functor on ${\bf Fld}$.

 \subsubsection{Further conditions on the time parameters}

In order to accommodate incremental quotients, further  properties need to be incorporated. In particular, it is necessary to endow $J\subset \,\mathbb{K}$ with a notion of {\it sufficiently close elements}. This requirement can be satisfied in at least two ways. The first way is by considering that $\mathbb{K}$ is endowed with a pre-order compatible with the product (in reality, with the inverse of the product) law. In this case, it is possible to define the notion of sequences of quotients with decreasing but positive denominators of the form $(t_2-t_1)^{-1}\in\,\mathbb{K}$. Quotient limits are defined by considering all the well-defined sequences
\begin{align*}
\frac{1}{(\Delta t)_n}\,\left(\Psi(t+(\Delta t)_n)-\Psi(t)\right)
\end{align*}
 of the form $(\Delta t)^{-1}_n > (\Delta t)^{-1}_{n+1} >...(\Delta t)^{-1}_{n + a} > 0$ with increasing $a\in\,\mathbb{N}$.

The second method proposed is to endow $\mathbb{K}$ with  a {\it quasi-metric structure}, that is, a function satisfying the metric axioms  except the symmetric axiom \cite{Wilson1931, Javaloyes et al.}. For instance, a non-reversible Finsler metric structure determines a quasi-metric function \cite{BaoChernShen}.
 In our general setting, a quasi-metric function is a function of the form
\begin{align}
d_{\mathbb{K}}:\mathbb{K}\times \mathbb{K}\to \mathbb{K}'
\label{distance function in the field body}
\end{align}
where, in order to be able to formulate the triangle inequality for $d_{\mathbb{K}}$, the number field  $\mathbb{K}'$ must be endowed with an order relation, but it is not necessarily an ordered field\footnote{The relevance of these two concepts of ordered field and field endowed with an order relation has been highlighted in \cite{Ricardo quotient rings}, considering a relevant example of field for us.}. Then one defines the notion of incremental quotient as a metric limit of quotients.

Let us remark that in this setting, It is not required for the field $\mathbb{K}$ to be endowed with an order relation and it is not required that $\mathbb{K}'$ to be the real number field $\mathbb{R}$, as is usually taken.

Every time interval $J\subset\,\mathbb{K}$ inherits a distance function $d:J\times J\to \mathbb{K}'$.

We may also consider number fields endowed with metric functions \cite{Ricardo general theory of dynamics, Ricardo quotient rings}. Indeed, given a quasi-metric function one can symmetrized to obtain a metric function and with it to define a metric topology in $\mathbb{K}$ and other relevant spaces for the dynamics.

Given a quasi-metric structure, there is a natural pre-order determined by the condition $x\leq y$ iff $d_{\mathbb{K}}(0,x)\leq \, d_{\mathbb{K}}(0,y)$, providing the resources to formulate the first definition of incremental quotient discussed above. One can economize the construction by considering $\mathbb{K} \cong \mathbb{K}'$, since the first one can be embedded of a pre-order relation induced from the distance function, which is enough to construct a sensible notion of quotient limit. In this way, the possibility to define incremental quotients by means of quasi-metric functions \eqref{distance function in the field body} appears as a more general procedure in two different directions than the method of defining time with a pre-order.

Given a quasi-metric structure endowed in $\mathbb{K}$, there are two possibilities to define incremental quotients,
\begin{itemize}

\item There is a non-zero minimal distance $d_{\mathbb{K}min}>0$ such that if $t_1\, \neq t_2$, then $d_{\mathbb{K}}(t_1,t_2)\geq d_{\mathbb{K}min}>\,0$. In this case, the incremental quotient of a function $\psi:J\to \mathcal{E}$ is defined by
\begin{align*}
\frac{d\psi}{dt}:=\,\frac{1}{t_2-t_1}\,\left(\psi(t_2)-\,\psi(t_1)\right)
\end{align*}
such that $d_{\mathbb{K}}(t_2,t_1)=\,d_{\mathbb{K}min}$, when the expression is well defined (unique).

\item In the case when $d_{\mathbb{K}min}=0$, as for instance it happens in the fields $\mathbb{Q}$, $\mathbb{R}$ or $\mathbb{C}$, the incremental quotient of a map $\psi:J\to \mathcal{E}$ is defined by the expression
\begin{align*}
\frac{d\psi}{dt}=\lim_{d_{\mathbb{K}}(t_2,t_1)\to\, 0}\,\frac{1}{t_2 -\,t_1}\,\left(\psi(t_2)-\psi(t_1)\right),
\end{align*}
when the limit is well defined.

\end{itemize}
Both possibilities can be unified in the form of a single notion, by defining the incremental quotient by the expression
\begin{align}
\frac{d\psi}{dt}:=\,\lim_{d_{\mathbb{K}}(t_2,t_1)\to \,d_{\mathbb{K}min}}\,\frac{1}{t_2 -\,t_1}\,\left(\psi(t_2)-\psi(t_1)\right),
\label{incremental quotient}
\end{align}
where $d_{\mathbb{K}min}=0$ for continuous number fields. It is assumed that the space where $\psi(t)$ is defined allows to take combinations of the form  $\psi(t_2)-\psi(t_1)$ and to take the incremental limits.

Let $\mathbb{K}$ be a number field endowed with a distance function $d_\mathbb{K}$. If the minimal distance $d_{\mathbb{K}min}\neq 0$, then $t_2-t_1$ is considered to be  {\it small} if $d_{\mathbb{K}}(t_2,t_1)=\,d_{\mathbb{K}min}$. If $d_{\mathbb{K}min}=0$, then $t_2-t_1$ is small if for all practical purposes when evaluation limits, the difference $(t_2-t_1)$ can be approximated by $0$.
\begin{comentario}
Let us remark at this point that we are considering the formal relation $(t_2-t_1)\equiv 0$ in the sense that this relation is used when evaluating certain limits, for instance, when evaluating incremental quotients. Another option to interpret such relation could be within the framework of fields containing infinitesimal elements, for instance in non-linear analysis \cite{Robinson}. For the moment, we will adopt the first interpretation.

Also, the above discussion makes natural to assume that the number field $\mathbb{K}$ must be endowed with a distance function $d_{\mathbb{K}}$ and that such distance function must be continuous in the topology of $\mathbb{K}$. Otherwise, the increasing quotients \eqref{incremental quotient} could depend upon the sequence of elements $t_2$ in the neighborhood of $t_1$ in an un-natural way, that is, not because the object $\psi(t)$ is discontinuous, but because of the particularities of the distance function $d_\mathbb{K}$.
\label{cometarion on standard and non-standard analysis}
\end{comentario}

Let us remark the following points:
\begin{itemize}
\item  The field $\mathbb{K}$ does not need to be complete. This is because to define incremental quotients one does not substitute the value of $\Delta t$ by the corresponding limit. It is however necessary to speak of limits, as we have seen above. Therefore, we assume the case that certain particular limits exist.

\item The number field $\mathbb{K}$ does not need to be commutative. However, we will consider $\mathbb{K}$ to be commutative, in order to eliminate ambiguities in the definition of products as the one involved in the definitions of incremental quotients.

\end{itemize}

The above considerations suggest that a general notion of time parameter suitable for our purposes is provided by the following
\begin{definicion}
Let $(\mathbb{K},+,\cdot,d_\mathbb{K})$ be a number field equipped with a distance function $d_\mathbb{K}:\mathbb{K}\times \mathbb{K}\to \mathbb{K}'$, where $\mathbb{K}'$ is a field equipped with a pre-order relation. A time parameter is a subset $J\subset \,\mathbb{K}$ such that
\begin{enumerate}
\item $(J,+)$ is a proper sub-set of $(\mathbb{K},+)$,

\item For any $t_1\in J$, there are elements $t_2\in\,J$ such that  $(t_2-t_1)\in\,\mathbb{K}$ is small.
\end{enumerate}
If $(J,+)$ is a sub-group of $\mathbb{K}$ we say that the time parameter with values in $J$ is {\it algebraically close}.
\label{definicion of time parameter}
 \end{definicion}

Consider a field automorphism $\theta:\mathbb{K}\to \mathbb{K}$. A change of parameter is a restriction $\theta|_J:J\to \mathbb{K}$. It is direct that the image $\theta(J)\subset \mathbb{K}$ must be consistent with the definition of group law, the possibility to combine the new parameter with other variables to provide non-linear expressions and allow for a definition of incremental quotient. Also, in the case that the time parameter is complete, the addition operation in $J$ and in $\theta(J)$ must be defined for all the elements of $J$. In the complete case, $(J,+)$ must be a sub-group of $(\mathbb{K},+)$.

This notion of change of parameter is intimately related with the notion of parameterized dynamics.

\begin{ejemplo}
Examples of time parameters arise when $\mathbb{K}$ is the field of real numbers $\mathbb{R}$. But the field number $\mathbb{K}$ could be a discrete field such as the field of rational numbers $\mathbb{Q}$ or an algebraic extension of $\mathbb{Q}$. In both cases, there is a well defined notion of incremental quotient.
\end{ejemplo}

\begin{ejemplo}
An example of finite number field that can be used to define time parameters for dynamics is the prime field,
\begin{align*}
\mathbb{Z}/p\mathbb{Z}:=\{[0],[1],[2],[3],...,[p-1]\}
\end{align*}
of classes $[k]$ module $p$ with $p$ a prime number. In $\mathbb{Z}/p\mathbb{Z}$ there is a natural distance function $d_{\mathbb{Z}/p\mathbb{Z}}:\mathbb{Z}/p\mathbb{Z}\times \mathbb{Z}/p\mathbb{Z}\to \,\mathbb{R}$ defined by the expression
\begin{align}
d_p([n],[m]):=\,|n_0-m_0|,\,n_0\in\,[n],\,m_0\in\,[m],\,0\leq n_0,m_0\leq p-1.
\label{distance function in Zp}
\end{align}
The prime field $\mathbb{Z}/p\mathbb{Z}$ endowed with this and other distances \cite{Ricardo quotient rings}.

 The following properties are easily proven:
 \begin{enumerate}

  \item The induced topology in $\mathbb{Z}/p\mathbb{Z}$ from the distance function $d_p$ and the discrete topology of $\mathbb{Z}/p\mathbb{Z}$ coincide.

\item The minimal distance function for $d_p$ is $d_{\mathbb{Z}/p\mathbb{Z} min}=1$.

\item By an induction argument that the possible subsects $J\subset \mathbb{Z}/p\mathbb{Z}$ that can serve as time parameters according to definition \ref{definicion of time parameter} coincide with $\mathbb{Z}/p\mathbb{Z}$ itself. Effectively, if $[t_1]\in\,J\subset\,\mathbb{Z}/p\mathbb{Z}$, then it must be (by point 2. in definition \ref{definicion of time parameter}) another $[t_2]\in J$ such that $[t_2]=\,[t_1]+\,[1]=\,[t_2+1]$. Since $card (\mathbb{Z}/p\mathbb{Z} )=\,p$  is finite, then it follows the result by repeating the argument.

\item $\mathbb{Z}/p\mathbb{Z}$ with the distance topology induced from $d_p$ is Haussdorff separable. For this, given two points $[k_1]\neq \,[k_2]\in\,\mathbb{Z}/p\mathbb{Z}$ it is enough to consider the balls
    \begin{align*}
    B([k_i],1/4):=\,\{[k]\in\,\mathbb{Z}/p\mathbb{Z}\,s.t.\,d_p([k],[k_i])<1/4\},\,i=1,2
    \end{align*}
    of radii $1/4$.
    Then $B([k_1],1/4)=\,\{[k_1]\}$, $B([k_2],1/4)=\{[k_2]\}$  and are such that $B([k_1],1/4)\cap B([k_2],1/4)=\emptyset$.

\item $\mathbb{Z}/p\mathbb{Z}$ can be endowed with an order relation,
  \begin{align*}
 [n] <\,[m]\,\textrm{iff}\,\,\, n_0\,<m_0,\,\,\textrm{with}\, \,n_0\in\,[n],\,m_0\in\,[m],\,0\leq n_0,m_0\leq p-1.
 \end{align*}

\end{enumerate}
From these properties it follows that $\mathbb{Z}/p\mathbb{Z}$ can serve as the set where a time parameter can take values.
\end{ejemplo}

\subsection{Configuration space and associated dynamical objects}
The second ingredient in the specification of a dynamics is the {\it mathematical objects} that  change with time due to the dynamics. In order to specify this concept, we start introducing a general notion of {\it configuration space} suitable for our purposes.

It is required that the configuration space $\mathcal{M}$ to be equipped at least with a topological structure. The existence of a notion of continuity in $\mathcal{M}$ allows to consider continuous dynamical laws. Continuity of the dynamical law is an useful condition to establish cause-effect relations between different points of the same orbit of the evolution. Let us assume that a given quantity $E(t)$ does not evolve continuously on time. In situations where it is difficult to follow the details of the dynamical evolution, one could instead consider differences on measurements of a given quantity $E$, namely, quantities of the form $E_2-\,E_1$, where the time dependence has been erased. Then a large change of the form $E_2-E_1$ can be associated either to a short time evolution $t_2/t_1 \approx 1$ through a local causal explanation, or to different values at different times $t_2/t_1>>1$  through a global causal explanation. Therefore, with a non-continuous law, that does not restrict the amount of change in evolution of quantities due to small changes of the time parameter $t$, it is more difficult to discriminate a local causation from a global causation of a large change $E_2-E_1$.
The absence of continuity in the dynamical law does not lead to a contradiction, but the identification of a global cause with $t_2>>t_1$ of a large change $E_1\to E_2$ with $E_2>> E_1$ is a much more difficult problem than with a continuous law.

We will consider a general theory of dynamics in the category of topological spaces and topological maps and  where the dynamical laws are continuous maps.

\begin{definicion}
The {\it configuration space} $\mathcal{M}$  of a dynamical system is a topological space whose elements describe the state of the system.
\label{configuration space}
\end{definicion}

The configuration space $\mathcal{M}$ can be either a discrete set or a continuous set; the time parameters can be discrete or continuous. We will consider the discrete/continuous character of $\mathcal{M}$ later when we discuss a specific class of dynamical models for our theory.

Our notion of configuration space applies to classical dynamical systems, where $\mathcal{M}$ is a classical configuration space. It can also be applied to quantum dynamical systems. For a pure quantum system, $\mathcal{M}$ is a projective Hilbert space $\mathcal{H}$ and the state of the system is described by elements of  $\mathcal{H}$.

Every dynamical system has associated a sub-domain of elements of $\mathcal{M}$ at each time of the dynamical evolution. Such elements are called points of the configuration space. The orbit of an evolution is the aggregate of all points of a given evolution.

\subsubsection{Sheave theory point of view} There are other physical properties associated with the system, that can be described in terms of fields defined over the configuration space $\mathcal{M}$. It is necessary to formalize the type of mathematical structures over $\mathcal{M}$ that lead to define dynamical fields such that continuous laws for the time evolution can be formulated. This is achieved by considering sheaves over $\mathcal{M}$. The notion of sheaf offers the natural setting to speak of fields over $\mathcal{M}$ as sections \cite{Godement, Hirzebruch1978}. Sheaf theory has multiple encounters in gauge theory and theoretical physics. Hence it is not a surprise to find useful in the formulation of a general theory of dynamics.

 Let us consider a $\mathcal{A}$-sheaf $(\mathcal{E},\pi_\mathcal{E},\mathcal{M})$, where $\pi_\mathcal{E}:\mathcal{E}\to \mathcal{M}$ is continuous. The composition operations of the algebraic structures of the stalks $\mathcal{A}_u =\,\pi^{-1}(u)$  are continuous.
 Typical algebraic structures for the stalks $\mathcal{A}_u$ to be considered are $\mathbb{K}$-vector fields of finite dimension and algebraic geometric constructions. A section of a sheaf is a continuous map $E:\mathcal{M}\to \,\mathcal{E}$ such that $\pi_\mathcal{E}\circ E =\,Id_\mathcal{M}$. Then we propose the following notion,
 \begin{definicion}
 A field $E$ is a section of a sheaf $\pi_\mathcal{E}:\mathcal{E}\to \mathcal{M}$.
 \end{definicion}
 For example, for an abelian sheaf, where the stalks are abelian groups, the zero section $x\mapsto 0_u\in \,\mathcal{A}_u$ is a continuous section. Any small continuous deformation of the zero section is also a continuous section.

 The set of sections of a sheaf $\mathcal{E}$ is denoted by $\Gamma\,\mathcal{E}$. The algebraic operations on the stalks induce operations on sections, defined pointwise in a natural way.

 In particular, there is a well defined notion of incremental quotient for sections of a sheaf.

\subsection{Notion of dynamics} The third ingredient that we need in the formulation of a general theory of dynamics is a notion of dynamical law compatible with the above notions of time parameters and dynamical objects. We suggest the following definition of dynamics,
\begin{definicion}
Given a configuration space $\mathcal{M}$, a number field  $(\mathbb{K},+,\cdot)$ and a time parameter $J\subset \mathbb{K}$,
a {\it local dynamics} or {\it flow} is a map
\begin{align*}
\Phi:J\times \,\mathcal{M}\to \mathcal{M},\quad (t,u)\mapsto \Phi_t(u)
\end{align*}
continuous in the product topology such that
\begin{itemize}
\item The following group composition condition holds:
\begin{align}
\Phi(t_1,u)\circ \Phi(t_2,u)=\,\Phi(t_1+t_2,u),\quad t_1,\,t_2, \,t_1+t_2\,\in \,J
\label{composition law for Phi}
\end{align}
whenever both sides are defined.
\item
The condition
\begin{align}
\Phi(0,u)=\,u
\end{align}
 holds for every $u\in\,\mathcal{M}$, where $0$ is the neutral element of the sum, $+:\mathbb{K}\times\mathbb{K}\to \mathbb{K}$.

\end{itemize}
\label{definitionofdynamics}
 \end{definicion}
 The relation \eqref{composition law for Phi} is very often  re-written in the theory of dynamical systems in the form
  $\Phi_{t_1+t_2}(x)=\,\Phi_{t_1}\circ \Phi_{t_2}$, where $\Phi_{t}=\Phi(t,\cdot)$.
 One can compare this notion of dynamics with standard notions of dynamical systems, for instance as in  \cite{ArnoldAvez} or as in \cite{Sternberg}, chapter 12.

 The term local dynamics refers to the fact that the outcome of the evolution depends pointwise on $\mathcal{M}$.

 {\it Definition} \ref{definitionofdynamics} deviates from usual notions of dynamical system \cite{ArnoldAvez, Sternberg, Chicone}. We do not require conditions on the existence of a measure on the configuration space $\mathcal{M}$. In contrast, we make emphasis on the character and properties of the time parameter $t\in \,J$ and its reflections on the dynamics.

 In definition \ref{definitionofdynamics} there is no apparent need for the time parameter $J$ being a subset of a number field $\mathbb{K}$, being sufficient that the addition operation $+:J\times J\to J$ is well-defined. But in order to consider incremental quotients as given by the expressions of the form \eqref{incremental quotient}  of fields defined over $\mathbb{M}$, it is required that $J$ is a subset of a number field $\mathbb{K}$ to secure that the expressions $ (\delta t)^{-1}=(t_2-t_1)^{-1}$ are defined for $t_2-t_1$ for sufficiently small but non-zero elements. Incremental quotients are useful as a measure of change. Indeed, let us consider a quantity $E$ such that it changes {\it very little} with a small amount of time parameter $\delta t$. The exact definition of {\it very little} here is not of relevance, because for any meaningful definition, the change produced in the incremental quotient could be significatively large, hence, easier to measure or detect. Thus incremental quotients has the potential advantage of magnifier changes in situations where they are small.

 If $\mathbb{K}$ is endowed with an order relation, then there is an induced order relation in $J$. In this case, time ordered sequences from $A\in \mathcal{M}$ to $B\in\mathcal{M}$ along the dynamics $\Phi$ can be defined and a chronological order can be attached to the evolution from $A$ to $B$ by means of $\Phi$. This is the typical situation in general relativistic spacetimes without closed causal curves, but even for classical spacetimes  that do not have a global time ordering as in G\"odel-type solutions, there is a time directionality\footnote{For the distinction among these notions, see for instance \cite{Earman1994}.}.

  If the number field $\mathbb{K}$ is not ordered, then there is no notion of local time ordering for the dynamics $\Phi$. In such a case, there is no notion of global time ordering as it appears in relativistic spacetime models. From the general point of view discussed until now, the number field $\mathbb{K}$ that appears in the notion of dynamics does not need to be ordered, but the number field $\mathbb{K}'$ where the distance function $d_\mathbb{K}:\mathbb{K}\times\,\mathbb{K}\to \,\mathbb{K}'$ takes values, must be ordered.

\begin{definicion}
Standard fundamental notions of dynamical systems apply.
The local dynamics $\Phi:J\times \mathcal{M}\to \mathcal{M}$ is complete if the time parameter $J$ is a sub-field (not necessarily proper) of $\mathbb{K}$.; it is transitive if for every $(x,y)\in \,\mathcal{M}\times \,\mathcal{M}$ there is a $t\in\,\mathbb{K}$ such that $\Phi(t,x)=y$. The dynamics is filling if for every $(x,y)\in\,\mathcal{M}\times\,\mathcal{M}$ there is a $t\in\,J$ such that $\Phi (t,x)=y$.
\label{complete and transitive dynamics}
\end{definicion}
The following result is direct,
 \begin{proposicion}
 Let $\Phi$ be a complete local dynamics.
The transformations $\{\Phi_t\}_{t\in\,J}$ according to {\it definition} \ref{definitionofdynamics} determines a group of transformations of $\mathcal{M}$.
\label{group of transformations}
\end{proposicion}

 The extension of the notion of local dynamics to the evolution of mathematical objects defined over the configuration space $\mathcal{M}$  can be achieved in the following way. Let us consider a $\mathbb{K}$-module sheaf $\pi_{\mathcal{E}}:\mathcal{E}\to \mathcal{M}$ and $\Phi:J\times \mathcal{M}\to \mathcal{M}$ a dynamics. Let $\varphi:\mathbb{K}\times \mathbb{K}\to \mathbb{K}$ be a  $\mathbb{K}$-isomorphism,
 \begin{align*}
 &\varphi(t_1+t_2)=\,\varphi(t_1)+\,\varphi(t_2),\\
 & \varphi(t_1 \cdot \,t_2)=\,\varphi(t_1)\cdot \,\varphi(t_2).
 \end{align*}
 The simplest case is to consider the identity map $\varphi= Id_\mathbb{K}$.
The continuous map $\Phi_{\mathcal{E}}:\mathbb{K}\times  \mathcal{E}\to \mathcal{E}$ is such that the diagram
 \begin{align}
\xymatrix{\mathbb{K}\times \mathcal{M} \ar[r]^{\Phi}   & \mathcal{M}\\
\mathbb{K}\times \mathcal{E} \ar[r]^{\Phi_{\mathcal{E}}} \ar[u]^{\varphi\times \pi_{\mathcal{E}}}  & \mathcal{E} \ar[u]^{\pi_{\mathcal{E}}} }
\label{commutativity of the diagram on E}
\end{align}
commutes. The map $\Phi_\mathcal{E}$ such  that the diagram \eqref{commutativity of the diagram on E} commutes is the induced dynamics in the sheaf $\mathcal{E}$.
 \begin{proposicion} If $\Phi$ is a local dynamics on $\mathcal{M}$ and $\Phi_\mathcal{E}$ is the induced dynamics on the sheaf $\mathcal{E}$, then for any section $E\in\,\Gamma \mathcal{E}$ there is an open  neighborhood $N\subset \mathcal{E}$ such that
\begin{align}
\Phi_\mathcal{E}(t_1+t_2,\cdot)=\,\Phi_\mathcal{E}(\varphi(t_2),\Phi_\mathcal{E}(\varphi(t_1),\cdot))
\label{composition Phi for E}
\end{align}
holds good on $N$, whenever $t_1,t_2,t_1+t_2 \in J\subset \mathbb{K}$.
\label{proposicion on law group for the dynamics}
\end{proposicion}
\begin{proof}
   The commutativity of the diagram \eqref{commutativity of the diagram on E} and the homomorphism law of the dynamics for $\Phi$, eq. \eqref{composition law for Phi}, implies
\begin{align*}
\pi_\mathcal{E}\circ \Phi_\mathcal{E}(t_1+t_2,e_x)& =\,\Phi(\varphi(t_1+t_2),u)\\
& =\,\Phi(\varphi(t_1)+\varphi(t_2),u)\\
& = \Phi(\varphi(t_1),\Phi(\varphi(t_2),u))\\
& =\,\Phi(\varphi(t_1),\Phi(\varphi(t_2),\pi_\mathcal{E}(e_u)))\\
& =\,\Phi(\varphi(t_1),\pi_\mathcal{E}\circ \Phi_\mathcal{E}(t_2,e_u))\\
& =\,\pi_\mathcal{E}\circ \Phi_\mathcal{E}(t_1,\Phi_\mathcal{E}(t_2,e_u)),
\end{align*}
for every element $e_u$ of the stalk $E_u$.
Since the restriction of  $\pi_\mathcal{E}$ in some open neighborhood $N\in \mathcal{E}$ is an homeomorphism, then it follows the relation \eqref{composition Phi for E}.
\end{proof}

Given a sheaf $\pi_\mathcal{E}:\mathcal{E}\to \mathcal{M}$, the dynamics of sections $E\in \,\Gamma \mathcal{E}\mapsto \,\Phi_\mathcal{E}(t,E)$ is determined by the section $\widetilde{E}\in\,\Gamma \mathcal{E}$ such that the   diagram
 \begin{align}
\xymatrix{\mathbb{K}\times \mathcal{M} \ar[r]^{\Phi}   & \mathcal{M}\\
\mathbb{K}\times \mathcal{E} \ar[r]^{\Phi_{\mathcal{E}}} \ar[u]^{\varphi{-1}\times E}  & \mathcal{E} \ar[u]^{\widetilde{E}}}
\label{commutativity of the diagram on E 2}
\end{align}
is commutative and where $\Phi_\mathcal{E}(t,E):=\,\Phi_\mathcal{E}(t,E(u))$ at $E(u)\in A_u\,\pi^{-1}_\mathcal{E}(u)$.

Given the sheaf $\pi_\mathcal{E}:\mathcal{E}\to \mathcal{M}$, the dynamics $\Phi_\mathcal{E}$ is not necessarily a morphism. When $\pi_\mathcal{E}:\mathcal{E}\to \mathcal{M}$ is a $\mathbb{K}$-vector sheaf, $\Phi_\mathcal{E}$ does not need to be linear.

\subsection{Notion of two-dimensional time dynamics}
 For the dynamical systems that we will consider in this work, the following notion is of special relevance,
 \begin{definicion}Let $\mathbb{K}_i,\,i=1,2$ be number fields and $J_i\subset \,\mathbb{K}_i$.
  A $2$-time local dynamics is continuous a map
   \begin{align*}
\Phi:J_1\times J_2\,\times \mathcal{M}\to \mathcal{M},\quad (t, \tau, u)\mapsto \Phi_{(t,\tau)}(u)
\end{align*}
in the product topology such that
\begin{itemize}
\item
The morphism condition
\begin{align*}
\Phi_{(t_1,\tau_1)}\circ \Phi_{(t_2,\tau_2)}=\,\Phi_{(t_1+\,t_2,\tau_1+\,\tau_2)}
\end{align*}
holds good,
where $t_1,t_2\in\,J_1$, $\tau_1,\tau_2\in\,J_2$,  $J_i$ are sub-sets time parameters of the number fields $\mathbb{K}_i$ and $u\in \mathcal{M}$.
\item
The condition $\Phi_0(x)=\,x$ holds for every $x\in\,\mathcal{M}$, where the zero element is $0=(0_1,0_2)\in \,\mathbb{K}_1\times \mathbb{K}_2$ is the product of  neutral elements of the sum operation for $\mathbb{K}_1$ and $\mathbb{K}_2$.
\end{itemize}
\label{two time dynamics}
\end{definicion}
Analogous results to {\it Propositions} \ref{group of transformations} and {\it Proposition} \ref{proposicion on law group for the dynamics} hold for two-dimensional time parameter dynamics.
Most of the notions of dynamics discussed above for a $1$-time dynamics can be generalized in the context of dynamics with a $2$-time parameter. The existence of a continuous distance function can be taken over the product $J_1\times J_2$ with the product topology. However, the notion of ordered number field cannot take over the product $J_1\times J_2$.

Note that in {\it Definition} \ref{two time dynamics}, the number fields $\mathbb{K}_1$ and $\mathbb{K}_2$ are not related to each other. However,
an alternative definition of two-dimensional dynamics is a dynamics based upon the concept of a $2$-dimensional time parameter. For example, when $J$ is a $2$-dimensional subset of the field of complex numbers $\mathbb{C}$.  In this case, the field $\mathbb{K}$ and the subfield $J$ must be two dimensional. Most of the above notions of dynamics can also be constructed using a two-dimensional time parameter. again, a meaningful notion of time ordering is certainly non-trivial to be implemented.
\subsection{On time re-parametrization covariance}
Our notion of dynamics assumes the existence of time parameters, which are subsets $J$ of a number field $\mathbb{K}$. Given that these time parameters can be defined arbitrarily and given that they lack of observational or phenomenological interpretation and that there is no macroscopic observer attached to such parameters, it is natural to expect that in a consistent description of the theory, physical quantities must be independent of the choice of the time parameter for description of the fundamental dynamics. By this we mean that physical quantities are equivalent classes of mathematical objects $[\Upsilon_0]$ which are covariantly defined in the following sense: for every time re-parametrization $\theta:J\to \mathbb{K}$ there is at least two representatives $\Upsilon,\widetilde{\Upsilon}\in\,[\psi_0]$ such that $\widetilde{\Upsilon}(\theta (t))=\,\Upsilon(t)$ for $t\in\,J$. Then it is said that $\Upsilon$ and $\widetilde{\Upsilon}$ are equivalent. Therefore, although for the description of the dynamics the introduction of time parameters is necessary, the theory of fundamental dynamics must be constrained by time re-parametrization invariance: the dynamics must be covariant with respect to time parameter diffeomorphisms. Indeed, the dynamics of Hamilton-Randers models that we will developed is invariant under spacetime diffeomorphisms, that imply time reparametrization invariance.
\subsection{Notions of local reversible and local non-reversible dynamics}
\begin{definicion}
 The {\it time conjugated dynamics} associated with the dynamics $\Phi:J\times \,\mathcal{M}\to \mathcal{M}$ with $J\subset \mathbb{K}$ is a map
 \begin{align*}
 \Phi^c:J\times \,\mathcal{M}\to \mathcal{M}
 \end{align*}
 continuous in the product topology
  such that if $\Phi(t,A)=B$, then it must hold that $\Phi^c(t,B)=A$,
 for $(A,B)\in\, \mathcal{M}\times \,\mathcal{M}$.
 \label{conjugate dynamics}
\end{definicion}
Since $t_1+t_2=\,t_2+t_1$ for any pair of elements $t_1,t_2\in\mathbb{K}$, it follows that if $\Phi$ is a dynamics, then $\Phi^c $ is also a dynamics and both are defined using the same time parameter field $\mathbb{K}$.

Note that the idempotent property
\begin{align}
\left(\Phi^c\right)^c=\,\Phi
\label{idempotent property of the conjugate}
\end{align}
holds good.

An analogous construction can be applied to the associated dynamics $\Phi_\mathcal{E}$ acting on sections of the sheaf $\pi_\mathcal{E}:\mathcal{E}\to \mathcal{M}$.

\begin{definicion}
Let  $\Phi_{\mathcal{E}}:J\times \Gamma\mathcal{E}\to\,\Gamma\mathcal{E}$  be a dynamics, where $J\subset\,\mathbb{K}$ and $\mathcal{E}$ is a $\mathbb{K}$-vector space sheaf $\pi_{\mathcal{E}}:\mathcal{E}\to \,\mathcal{M}$. The conjugate dynamics is a map $\Phi^c_{\mathcal{E}}:J\times \Gamma\mathcal{E}\to\,\Gamma\mathcal{E}$ such that if $\Phi_\mathcal{E}(t,E_1)=E_2$, then $\Phi^c_{\mathcal{E}}(t,E_2)=\,E_1$ for every $E_1,E_2$ elements of $\mathcal{E}$.
\end{definicion}

For the conjugate dynamics $\Phi^c_\mathcal{E}$, the idempotent property
\begin{align}
\left(\Phi^c_\mathcal{E}\right)^c=\,\Phi_\mathcal{E}
\label{idempotent property of the conjugate 2}
\end{align}
holds good.

 In the category of topological spaces and continuous functions, a first candidate for the notion of reversible dynamics $\Phi$ could be to satisfy the condition
\begin{align*}
\lim_{t\to 0}\Phi(t,E)=\,\lim_{t\to 0} \Phi^c(t,E)
\end{align*}
for all $E\in\,\mathcal{E}$.
But this condition is always satisfied in the category of topological spaces and topological maps, if $\Phi$ (and hence $\Phi^c$) are dynamics, since
\begin{align*}
\lim_{t\to 0}\Phi(t,E)=\,\lim_{t\to 0} \Phi^c(t,E)=E.
\end{align*}
If one takes the difference between the values of the dynamics and conjugate of the dynamics as the fundamental criteria for non-reversibility, we found the condition
 \begin{align*}
\Delta:\Gamma\mathcal{E}\to \mathbb{K},\,E\mapsto &\lim_{t\to 0}\,\left(\Omega\circ \Phi_\mathcal{E}(t,E)-\,\Omega\circ \Phi^c_\mathcal{E}(t,E)\right)\\
& =\, \lim_{t\to 0}\,\left(\Omega\circ\Phi_\mathcal{E}(0,E)-\,\Omega\circ \Phi^c_\mathcal{E}(0,E)\right)\\
& =\, \Omega(E)-\,\Omega(E)=\,0,
 \end{align*}
since $\Phi_\mathcal{E}(0,E)=\,\Phi^c_\mathcal{E}(0,E)=\,E$. This result holds for any continuous induced dynamics $\Phi_\mathcal{E}$. But the category of topological spaces with continuous functions as a maps is the natural category where to formulate our mathematical models. By the arguments discussed above, continuity is an essential ingredient for determinism in the context that we are condisdering, and for the construction of deterministic models. Hence considering $\Delta$ only  does not allow to define a notion of local non-reversible law in the category of topological spaces.

Instead, we propose a notion of local irreversibility based upon the following construction. If $t_1, t_2\in J$ and are such that $d_{\mathbb{K}min}=\,d_\mathbb{K}(t_1,t_2)=\,\|t_1-t_2\|$.
 \begin{definicion}
 Let ${\Phi}:J\times \mathcal{M}\to \,\mathcal{M}$ be a dynamics over $\mathcal{M}$. The dynamics $\Phi$ is non-reversible if there is a sheaf $\pi_\mathcal{E}:\mathcal{E}\to \mathcal{M}$ and a function
 $\Omega:\,\mathcal{\mathcal{E}}\to \mathbb{K}$  such that for the induced dynamics $\Phi_{\mathcal{E}}:J\times\Gamma \mathcal{E}\to \Gamma\mathcal{E}$, the relation
 \begin{align}
 \Xi_\Omega:\Gamma\mathcal{E}\to \mathbb{K},\,E\mapsto\lim_{\|t\|\to d_{\mathbb{K}min}}\,\frac{1}{t}\,\left(\Omega\circ\Phi_\mathcal{E}(t,E)-\,\Omega\circ \Phi^c_\mathcal{E}(t,E)\right)\neq 0
 \label{definitionofnonreversibilityfunction}
 \end{align}
 holds good for all $E \in\, \Gamma\,\mathcal{E}$.

 A dynamics which is not non-reversible in the above sense will be called reversible dynamics;
a function $\Xi$ for which the condition \eqref{definitionofnonreversibilityfunction} holds will be called a {\it non-reversibility function}.
 \label{nonreversibledynamics}
 \end{definicion}
When $\mathcal{E}$ is equipped with a measure, the non-reversibility condition  \ref{definitionofnonreversibilityfunction} can be formulated for almost all $E\in\,\mathcal{E}$, that is, for all subsets in $E$ except possible subsets of measure zero.

For the above notion of non-reversibility to be well defined, it is a requirement that the number field $\mathbb{K}$ and the configuration space $\mathcal{M}$ must allow to define a notion limit $t\to\,d_{\mathbb{K}min}$ as it appears in the expression \eqref{definitionofnonreversibilityfunction} and also as it appears in the notion of incremental quotient limit, given by the expression \eqref{incremental quotient}. For example, the field of real numbers $\mathbb{R}$ and the discrete field of rational numbers $\mathbb{Q}$ have well defined notions of  the limit $t\to\,0$. Other examples is the case of the field of complex numbers $\mathbb{C}$. An example where these limits fail to be defined are the prime field $\mathbb {Z}/p\mathbb{Z}$ for $p$. In this case, the notion of limit when $t\to d_{\mathbb{K}min}=1$ must be well defined.

 If a dynamics $\Phi$ is non-reversible, Then there is a sheaf $\mathcal{E}$ over the configuration space $\mathcal{M}$ where a function $\Omega$ implies that the non-reversibility function $\Xi_\Omega$ is different from zero. The converse of these of these conditions characterize a reversible dynamics such that for any sheaf $\mathcal{E}$ and any function $\Omega$ as above, the non-reversible function $\Xi_\Omega$ is identically zero. Furthermore, for the definition of non-reversible dynamics given above, it is  theoretically easier to check if a given dynamics is non-reversible than to check if it is reversible, because in order to prove the first possibility first, it is enough to find a function  $\Omega:\mathcal{\mathcal{E}}\to \mathbb{K}$ such that the relation \eqref{definitionofnonreversibilityfunction} is satisfied, while for the case of a reversible dynamics, one needs to check that for all such functions $\Omega$, the non-reversibility $\Xi_\Omega$ is identically zero.

Let us consider the conditions by which $\Xi_\Omega\equiv 0$. Assume that the functions  $\Omega:\Gamma\mathcal{E}\to \mathbb{K}$ are in some sense determined by the given models, smooth in both entries. Such smoothness condition can be stated as the formal Taylor expressions
\begin{align*}
&\Omega\circ\Phi_\mathcal{E}(t,E)=\,\Omega \circ \Phi_\mathcal{E}(0,E)+\,t\,\Omega'\star d\Phi_\mathcal{E}(0,E)+\,\mathcal{O}(t^2),\\
&\Omega\circ \Phi^c_\mathcal{E}(t,E)=\,\Omega\circ\Phi^c_\mathcal{E}(0,E))+\,t\,\Omega'\star d\Phi^c_\mathcal{E}(0,E)+\,\mathcal{O}(t^2),\\
\end{align*}
where
\begin{align*}
\Omega'\star d\Phi_\mathcal{E}(0,E) & :=\,\frac{d{\Omega}(u)}{d u}\large|_{u=\Phi_\mathcal{E}(0,E)}\star \,\frac{d\Phi_{\mathcal{E}}(t,E)}{dt}\large|_{t=d_{\mathbb{K}min}}\\
& =\,\frac{d{\Omega}(u)}{d u}\large|_{u=E}\star \,\frac{d\Phi_{\mathcal{E}}(t,E)}{dt}\large|_{t=d_{\mathbb{K}min}},
\end{align*}
and
\begin{align*}
\Omega'\star d\Phi^c_\mathcal{E}(0,E) & :=\,\frac{d{\Omega}(u)}{d u}\large|_{u=\Phi^c_\mathcal{E}(0,E)}\star \,\frac{d\Phi^c_{\mathcal{E}}(t,E)}{dt}\large|_{t=d_{\mathbb{K}min}}\\
& =\,\frac{d{\Omega}(u)}{d u}\large|_{u=E}\star \,\frac{d\Phi^c_{\mathcal{E}}(t,E)}{dt}\large|_{t=d_{\mathbb{K}min}}.
\end{align*}
The $\star$-pairing is the natural pairing induced from $\Omega$ operating on $\Phi_\mathcal{E}(t,E)$ by the natural composition $\Omega\circ \,\Phi_\mathcal{E}$.
Therefore,  $\Xi_\Omega$ can be re-written formally as
\begin{align*}
\Xi_\Omega(E)& =\,\lim_{\|t\|\to d_{\mathbb{K}min}}\,\frac{1}{t}\,\left(t\,\Omega'\star \,d\Phi_\mathcal{E}(0,E)-\,t\,\Omega'\star \,d\Phi_\mathcal{E}(0,E))\right)\\
& =\,\Omega'\star\,d\Phi_\mathcal{E}(0,E)-\,\Omega'\star\,d\Phi^c_{\mathcal{E}}(0,E).
\end{align*}
Taking into account the above expressions, we have
\begin{align*}
\Xi_\Omega(E) =\frac{d{\Omega}(u)}{d u}\large|_{u=E}\star \,\frac{d\Phi_{\mathcal{E}}(t,E)}{dt}\large|_{t=d_{\mathbb{K}min}}
-\,\frac{d{\Omega}(u)}{d u}\large|_{u=E}\star \,\frac{d\Phi^c_{\mathcal{E}}(t,E)}{dt}\large|_{t=d_{\mathbb{K}min}},
\end{align*}
which is in principle, different from zero. This shows that the criteria to decide when a dynamics is non-reversible is well-defined in the category of topological spaces. One only needs to consider a map $\Omega$ which is in certain sense differentiable and calculate the above expression.

 Let us consider a non-reversible dynamics $\Phi:J\times \mathcal{M}\to \mathcal{M}$ such that for an associated dynamics $\Phi_\mathcal{E}$, there is a non-zero reversibility function $\Xi_\Omega\neq 0$. Because of the algebraic structures of the stalk $E_u$, it is possible to define the map
 \begin{align}
 Sym \Phi_\mathcal{E}:J\times \Gamma\mathcal{E}\to \,\Gamma \mathcal{E},\quad (t,E)\mapsto \,\frac{1}{2}\,\left(\Phi_\mathcal{E}(t,E)+\,\Phi^c_\mathcal{E}(t,E)\right).
 \label{symmetrization of the dynamics}
 \end{align}
For $Sym \Phi_\mathcal{E}$ the property \eqref{composition Phi for E} holds good. Furthermore, $\Xi_\Omega =0$ for $Sym \Phi_\mathcal{E}$, indicating that the operation of symmetrization in \eqref{symmetrization of the dynamics} is a form of reducing non-reversible dynamics $\Phi_\mathcal{E}$ to reversible dynamics $Sym \Phi_\mathcal{E}$. Repeating this procedure for any induced dynamics $\Phi_\mathcal{E}$ on each sheaf $\pi_\mathcal{E}:\mathcal{E}\to \mathcal{M}$, we can assume the existence of an induced dynamics $Sym\Phi$ that by construction is reversible. When such dynamics $Sym \Phi:J\times \mathcal{M}\to \mathcal{M}$ exists, it will be a reversible dynamics, that we call the time symmetrized dynamics.
\\
{\bf General form of the reversibility condition}.  If the $\star$-pairing is linear, then the non-reversibility function $\Xi$ is identically zero if and only if
\begin{align}
\Omega'\,\star \left(\,d\Phi_\mathcal{E}(0,E)\,- d\Phi^c_\mathcal{E}(0,E)\right) =0
\label{characterization of reversible dynamics}
\end{align}
holds good.  If the $\star$-pairing is in appropriate sense invertible, the relation \eqref{characterization of reversible dynamics} can be re-written formally as a necessary condition that depends only on the dynamical law,
\begin{align}
\frac{d\Phi_\mathcal{E}(t,E)}{dt}\Big|_{t=d_{\mathbb{K}min}}-\, \frac{d\Phi^c_\mathcal{E}(t,E)}{dt}\Big|_{t=d_{\mathbb{K}min}}\equiv \,0.
\label{characterization of reversible dynamics 2}
\end{align}
Either the condition \eqref{characterization of reversible dynamics} or the condition \eqref{characterization of reversible dynamics 2} can be taken as the necessary and sufficient condition for reversibility of a local dynamics.
\bigskip
\\
{\bf Notion of non-reversible dynamics in configuration spaces endowed with a measure}. For spaces endowed with a measure, the function $\Xi_\Omega:\Gamma\mathcal{E}\to \mathbb{K}$ can be non-zero except in a sub-set of measure zero. The relevant point is that,  for a given measure on $\mathcal{E}$, $\Xi_\Omega$ is non-zero almost everywhere during the evolution. If there is no such a function $\Omega:\Gamma\mathcal{E}\to \mathbb{K}$ for a dense subset in an open domain $\mathcal{U}_A\subset \mathcal{M}$ containing $A$, then one can say that the dynamics is reversible locally.

However,  if one speaks of strict non-reversible laws or strict reversible laws, namely, dynamical laws which are non-reversible (resp. reversible) in the whole configuration space $\mathcal{E}$, one can avoid the introduction of a measure in $\mathcal{E}$ as we did in our definition \ref{nonreversibledynamics}. This is the simplest way of introducing our notion of non-reversible/reversible local dynamics and we attach our treatment to such notion.

\subsection{Non-reversible dynamics and the second principle of thermodynamics}
In order to introduce the notion of thermodynamical system in our theory, we consider the following cartesian product of topological spaces,
\begin{align}
\widetilde{\mathcal{M}}=\,\prod^N_{k=1}\,\mathcal{M}_k,
\label{thermodynamical product space}
\end{align}
where each of the spaces $\mathcal{M}_k$ is by assumption the configuration space of a given dynamical system and $N$ is a large natural number. By large integer we mean that the following asymptotic characterization holds good: for any map $P:\widetilde{\mathcal{M}}\to \mathbb{K}$ that depends upon $N$,  then it must hold that
\begin{align}
 P[N]=\,P[N-1]+{o}(N^\delta),
 \label{asymptotic property}
 \end{align}
  with $\delta >0$. That is, we assume the condition
\begin{align*}
\lim_{N\to \,+\infty} \frac{P[N]-P[N-1]}{N^\delta}\to 0.
\end{align*}
A topological space $\widetilde{\mathcal{M}}$ with this asymptotic property is what we can appropriately call a {\it thermodynamical space}, since it embraces the interpretation of thermodynamical systems as the ones where it is possible to define local intensive and extensive functions of the whole system where fluctuations due to the detailed structure of the system can be with great approximation disregarded (see for instance \cite{Kondepudi Prigogine 2015}, chapter 15). For thermodynamical spaces, one can speak of thermodynamical sub-system as an embeddings  $\widetilde{\mathcal{M}}'\hookrightarrow \,\widetilde{\mathcal{M}}$ for which the asymptotic conditions \eqref{asymptotic property} holds. Furthermore, since $\widetilde{\mathcal{M}}$ is by definition a large product space, there is an statistical framework for statistical interpretations of the maps $\mathcal{P}:\widetilde{\mathcal{M}}\to \mathbb{K}$.
\begin{ejemplo}
Let $\mathcal{M}=\prod^N \mathcal{M}_k$ with each configuration space of the form $N_k \cong \mathcal{M}_k$. Then each of the spaces $\mathcal{M}_k$ is not a thermodynamical space, since if $\mathcal{M}\cong \mathcal{M}_k$, then $N=1$ and the above asymptotic property does not hold. This in agreement to the idea that a system composed by a single individual system is not a thermodynamical system.
\end{ejemplo}

Let $\widetilde{\mathcal{M}}_c$ be a classical phase space and $J\subset \mathbb{R}$ open.
The entropy in the classical equilibrium thermodynamical theory is a map of the form
\begin{align}
\Lambda_c:J\times \widetilde{\mathcal{M}}_c\to \mathbb{R},
\label{entropy function}
\end{align}
 such that
 \begin{enumerate}
 \item The function $\Lambda$ is extensive: for two thermodynamically different classical thermodynamical spaces $\widetilde{\mathcal{M}}_{1c}\hookrightarrow \mathcal{M}_c$, $\widetilde{\mathcal{M}}_{2c}\hookrightarrow \,\mathcal{M}_c$ corresponding to two sub-systems of the thermodynamical system $\mathcal{M}_c$,  then it must follow that
 \begin{align*}
 \Lambda_c\left(t,(u_1,u_2)\right)\geq \,\Lambda_c\left(t,u_1\right)+\,\Lambda_c\left(t,u_2\right).
 \end{align*}

 \item For any thermodynamical system, it is non-decreasing with time,
 \begin{align*}
\frac{d}{d t} \Lambda_c\left(t,u\right)\geq 0.
 \end{align*}

 \item For any thermodynamical subsystem described by $\widetilde{\mathcal{N}}_c$ subset of $\widetilde{\mathcal{M}}_c$, it is non-decreasing with time.

  \end{enumerate}
  This properties does not fully characterize the entropy function, but will serve for our purposes and they are  consistent with the properties of Boltzmann's classical and quantum $H$-function \cite{Tolman}.

  Let us proceed to provide a generalized form of the second principle of thermodynamics for general dynamical systems.
  We first extend the above  notion of entropy function by considering entropy functions as extensive maps of the form
  \begin{align*}
  \Lambda:J\times \widetilde{\mathcal{M}}\to \mathbb{K},
  \end{align*}
 where $\mathbb{K}$ is an ordered number field and $\widetilde{\mathcal{M}}$ is the product space of the form \eqref{thermodynamical product space} and such that the above properties $(1)-(3)$ of the classical entropy hold good for the function $\Lambda$. This notion of generalized entropy  can be applied to  {\it local entropy densities} $\Lambda_i$. The notion of local internal entropy density appear in the theory of linear non-equilibrium thermodynamics and are described for instance in \cite{Kondepudi Prigogine 2015}, chapter 15.
In this general context of thermodynamical systems and generalized form of entropy functions, the second principle of thermodynamics can be stated as follows:
\bigskip
\\
{\bf Generalized Second Principle of Thermodynamics}: {\it The dynamical change in the state describing the evolution of a thermodynamical system is such that the production of internal local density entropy function $\Lambda_i(u)$ for each of the $i=1,...,r$ thermodynamical sub-systems do not decrease during the time evolution.}
\\

In order to compare the  notion of local non-reversible dynamical law as understood in definition  \ref{nonreversibledynamics} with the notion of physical evolution according to the non-decrease of generalized entropy \eqref{entropy function} the second principle as a form of non-reversibility evolution, the first step is to discuss the relations between the relevant notions, namely, the non-reversibility function \eqref{definitionofnonreversibilityfunction} and the possible specification of the metric function generalized entropy function. Let us consider the section $E\in\,\Gamma \mathcal{E}$ of the sheaf $\pi_\mathcal{E}:\mathcal{E}\to \mathbb{K}$, $\Phi_\mathcal{E}:J\times \Gamma \mathcal{E}\to \mathcal{E}$ the induced local dynamics and $\Omega:\Gamma \mathcal{E}\to \mathbb{K}$ a scalar function such that $\Xi_\Omega$ is locally positive. It is very suggestive that the correspondence between the generalized entropy and the notion of non-reversible law should be of the form
\begin{align*}
\frac{d}{dt}\large|_{t=d_{\mathbb{K}min}}\Lambda (t,u)& \equiv \,\frac{d}{dt}\large|_{t=d_{\mathbb{K}min}}\left(\Omega\,\circ \Phi_\mathcal{E}(t, E (u))\,-\Omega\,\circ \Phi^c_\mathcal{E}(t,\, E (u))\right)\\
& =\Xi_\Omega(E(u)).
\end{align*}
In principle, this relation depends on the section $E:\mathcal{M}\to \mathcal{E}$ and the scalar map $\Omega$, revealing a different entropy function for each choice of the pair $(E,\Omega)$. For instance, the notion of local thermodynamical equilibrium states, given by the extremal condition  $\frac{d}{dt}\large|_{t=d_{\mathbb{K}min}}\Lambda (t,u)=0$, for a parameter $t\in J_0$ such that the equilibrium is found for $t=d_{\mathbb{K}min}$. But such equilibrium condition is not consistent with the dependence of $\Xi_\Omega$ on the section $E$ and also on the map $\Omega$. Also, equilibrium conditions of asymptotic maximum entropy will depend upon the section $E$ and the map $\Omega$, for a given dynamics $\Phi_\mathcal{E}$. But if the section $E$ is fixed by a physical principle, the relation between the formalism depends upon the choice of the function $\Omega$ only up to  a constant. As we will see below, this will be the case of quantum dynamics. In general, additional mathematical structure in $\Gamma \mathcal{E}$ is necessary to fix the section $E$.

There are two different notions of irreversibility. The first one is related with the non-reversible dynamics according to definition \ref{nonreversibledynamics}. Such a definition will be used for the dynamics of fundamental systems that we will investigate in this work as models for quantum systems beyond quantum mechanics. This non-symmetric property of the dynamics will be implemented in the mathematical formalism that we will developed in this work. The second one, is the usual notion of irreversibility based upon the notion of thermodynamical system and generalized entropy.
The  distinction between these two notions of irreversibility is important, since the notion of non-reversible dynamics can be applied to individual systems described by points of the configuration space $\mathcal{M}$, whereas  thermodynamical relations between thermodynamic variables are not justified at such level.

\subsection{Examples of reversible and non-reversible dynamics}
 An interesting and relevant example of non-reversible dynamics is the following,
 \begin{ejemplo}
 Let $(M,F)$ be a Finsler space   \cite{BaoChernShen}, where $M$ is an $m$-dimensional manifold and $\mathbb{K}$ is the field of real numbers $\mathbb{R}$. For a point $A$ and for a variable point $X$ infinitesimally close to $A$, let us consider the limits
 \begin{align*}
  &\lim_{s\to 0}\,\frac{1}{s}\,\Omega(\Phi_s(X))=\,\frac{d}{ds}\big|_{s=0}\,\int^{X(s)}_A\,F^2(\gamma,\dot{\gamma})\,dt,\quad\\
   &\lim_{s\to 0}\,\frac{1}{s}\,\Omega(\Phi^c_s(X))=\,\frac{d}{ds}\big|_{s=0}\,\int^A_{X(s)} \,F^2(\tilde{\gamma},-\dot{\tilde{\gamma}})\,dt.
 \end{align*}
  The dynamics $\Phi$ is given by the geodesic flow of $F$. The parameterized curves $\gamma$ and $\tilde{\gamma}$ are geodesics,
 where $\gamma$ is a curve joining $A$ and $X$ and realizing the minimal length for curves joining $X$ and $A$ (and analogously for the inverted geodesic $\tilde{\gamma}$).  The local existence of such geodesics is guaranteed by Whitehead theorem \cite{Whitehead1932}. However, these pair of geodesics are not related as they are in Riemannian geometry by a relation of the form $\tilde{\gamma}(s)=\,\gamma(1-s),\,s\in [0,1]$, since in general the Finsler metric $F$ is not necessarily a {\it reversible Finsler metric}.

  Let us  consider the expression
  \begin{align*}
 \Xi_\Omega(A) & =\,\lim_{s\to 0}\,\frac{1}{s}\,\Omega(\Phi_t(X))-\,\lim_{s\to 0}\,\frac{1}{s}\,\Omega(\Phi^c_t(X)) \\
 & = \,\frac{d}{ds}\big|_{s=0}\,\left\{ \int^{X(s)}_A\,\left(F^2(\gamma,\dot{\gamma})-\, F^2(\tilde{\gamma},-\dot{\tilde{\gamma}})\right)\,dt\right\},
 \end{align*}
 It follows that
 \begin{align}
 \lim_{X\to A}\, \Omega(X)-\Omega^c(X^c)=\,F^2(A,V)-\, F^2(A,-V)
 \label{FmenosF}
 \end{align}
 along a given smooth $\mathcal{C}^1$ curve $\gamma:I\to M$ passing through $A$ and $B$ and such that
 \begin{align*}
 A=\,\gamma(0)=\,\lim_{X\to A}\,\tilde{\gamma}, \quad V=\,\dot{\gamma}(0)=\,-\lim_{X\to A}\,\dot{\tilde{\gamma}}.
 \end{align*}
 For a generic Finsler metric, the limit \eqref{FmenosF} is different from zero. This is also true for geodesics and since the geodesics are determined locally by the initial conditions, via \cite{Whitehead1932}, we have that the geodesic dynamics $\Phi$ is in this case non-reversible with
 \begin{align}
  \Xi_\Omega(A)=\,F^2(A,V)-\, F^2(A,-V),
 \end{align}
which is non-zero for almost all $(A,V)\in TM$.

This example is related with the notion of {\it reversibility function} in Finsler geometry \cite{Rademacher}, which is a measure of the non-reversibility of a Finsler metric. However, we consider that the functionals defining the dynamics are given in terms of $F^2$ instead of $F$. The associated geodesics are parameterized geodesics and the corresponding Hamiltonian is smooth on the whole tangent space $TM$.
 \label{ejemplononreversibledynamics}
 \end{ejemplo}
\begin{ejemplo}
 The construction of {\it Example} \ref{ejemplononreversibledynamics} in the case when $(M,F)$ is a Riemannian structure provides an example of reversible dynamics. In this case the dynamics is the geodesic flow and the formal limit
  \begin{align*}
  \lim_{X\to A}\frac{1}{d(A,X)}(\Omega(X)-\Omega(X))=0
    \end{align*}
    along $\Phi$ holds good.
    This is also true for the more general case of {\it reversible Finsler metrics}.
\end{ejemplo}
 \begin{ejemplo}
 Let us consider a quantum system, where pure states are described by elements $|\Psi\rangle$ of a Hilbert space $\mathcal{H}$. In the Schr\"odinger picture of dynamics\footnote{Let us note that the Schr\"odinger picture of dynamics can also applied to quantum field models, as discussed for instance by Hatfield \cite{Hatfield}.}, the evolution law is given by
 \begin{align}
 \Phi_\mathcal{H}(t,|\Psi (0)\rangle ) =\,\mathcal{U}(t)|\Psi (0)\rangle,
 \label{Dynamical law in quantum mechanics}
 \end{align}
where $\mathcal{U}(t)$ is the evolution operator \cite{Dirac1958}. The configuration space is the manifold of real numbers $\mathbb{R}$. We consider the sheaf over $\mathbb{R}$ with stalk the Hilbert space $\mathcal{H}$. Thus the sheaf we consider is of the form $\mathcal{E}\equiv\,\mathcal{H}\times \mathbb{R}$. The sections of the sheaf correspond to {\it arbitrary time evolutions} $t\mapsto |\Psi(t)\rangle$, not only Schr\"odinger's time evolutions. Let us consider first the function $\Omega$ is defined by the relation
\begin{align}
\Omega:\Gamma (\mathcal{H}\times \mathbb{R})\to \mathbb{C},\,\, |\Psi(0)\rangle\mapsto |\langle \Psi(0)|\Phi_\mathcal{H}(t,|\Psi (0)\rangle )\rangle  |^2=\,|\langle\Psi(0)|\mathcal{U}(t)|\Psi (0)\rangle |^2.
\label{Omegaquantum}
\end{align}
If the Hamiltonian $\hat{H}$ of the evolution is hermitian, then the conjugate dynamics $\Phi^c_\mathcal{H}$ is given by the expression
\begin{align*}
 \Phi^c_\mathcal{H}(t,|\Psi (0)\rangle )=\,\mathcal{U}^\dag (t)|\Psi (0)\rangle.
\end{align*}
Adopting these conventions, $\Xi_\Omega$ as defined in \eqref{definitionofnonreversibilityfunction} can be computed to be  trivially zero,
\begin{align*}
&\lim_{t\to d_{\mathbb{K}min}}\,\frac{1}{t}\,\left(|\langle \Psi(0)|\Phi_\mathcal{H}(t,|\Psi (0)\rangle )\rangle  |^2 -\,|\langle \Psi(0)|\Phi^c_\mathcal{H}(t,|\Psi (0)\rangle )\rangle  |^2\right)\\
& =\,\lim_{t\to d_{\mathbb{K}min}}\,\frac{1}{t}\,\left(|\langle\Psi(0)|\mathcal{U}(t)|\Psi (0)\rangle|^2 -\,|\langle\Psi(0)|\mathcal{U}^\dag(t)|\Psi (0)\rangle|^2\right)=0.
\end{align*}
Therefore, if we choose $\Omega$ as given by \eqref{Omegaquantum}, then unitary quantum mechanical evolution implies that the function $\Xi_\Omega= 0$. This choice for $\Omega$ is natural, since \eqref{Omegaquantum} is the probability transition for the possible evolution from the initial state $|\Psi (0)\rangle$ to itself by the $\mathcal{U}(t)$ evolution.

That $\Xi_\Omega =0$ as discussed above, does not imply that the dynamics $ \Phi_\mathcal{H}$ is reversible, since it could happen that the choice of another function $\widetilde{\Omega}$ is such that $\Xi_{\widetilde{\Omega}}\neq 0$. Indeed, let us consider instead the function
\begin{align}
\widetilde\Omega:\Gamma (\mathcal{H}\times \mathbb{R})\to \mathbb{C},\quad  |\Psi(0)\rangle\mapsto \langle \Psi(0)|\Phi_\mathcal{H}(t,|\Psi (0)\rangle )\rangle  =\,\langle\Psi(0)|\mathcal{U}(t)|\Psi (0)\rangle .
\label{Omegaquantum2}
\end{align}
Then it turns out that
\begin{align}
\Xi_{\widetilde{\Omega}}=\,2\,\imath\,\,\langle \Psi(0)|\hat{H}|\Psi(0)\rangle,
\end{align}
which is in general non-zero. Therefore, the dynamics of a quantum system is in general non-reversible.
\end{ejemplo}
\begin{ejemplo}The non-reversibility of quantum processes is usually related with the notion of entropy. There are several notions of entropy in quantum theory, but the one relevant for us here is the notion that emerges in scattering theory. Indeed,
a version of the $H$-theorem for unitary quantum dynamics can be found in the textbook from S. Weinberg, \cite{Weinberg1995} section 6.6. The entropy function is defined as
\begin{align}
S\,:=\,-\int \,d\alpha\,P_\alpha\,\ln\,(P_\alpha/c_\alpha),
\label{H-theorem}
\end{align}
where in scattering theory, $P_\alpha \,d\alpha$ is the probability to find the state in a volume $d\alpha$ along the quantum state $|\Psi_\alpha\rangle$ and $c_\alpha$ is a normalization constant.
The change of entropy can be determined  using scattering theory,
\begin{align}
\nonumber -\frac{d}{dt}\left\{\int \,d\alpha\,P_\alpha\,\ln\,(P_\alpha/c_\alpha)\right\}=\, -\int d\alpha\,&\int d\beta\,\left(1+\,\ln (P_\alpha/c_\alpha)\right)\\
&\left( P_\beta\,\frac{d\Gamma(\beta\to \alpha)}{d\alpha}-\,P_\alpha\,\frac{d\Gamma(\alpha\to\beta)}{d\beta}\right).
\label{derivative S}
\end{align}
As a consequence of the unitarity of the $S$-matrix, it can be shown that the change with time of entropy \eqref{H-theorem} is not decreasing.

 Although appealing, one can cast some doubts that $dS/dt$ can be interpreted as  a function of the form $\Xi_\Omega$. The first is on the comparison between the formal expressions \eqref{definitionofnonreversibilityfunction} for $\Xi_\Omega$ and \eqref{derivative S} for the derivative $dS/dt$. The difficulty in this formal identification relies on the factor $\left(1+\,\ln (P_\alpha/c_\alpha)\right)$, which is a short of non-symmetric.

  The second difficulty on the formal identification of \eqref{definitionofnonreversibilityfunction} with the derivative \eqref{derivative S} is at the interpretational level. This is because the derivation of the $H$-theorem presented by Weinberg has two foreign aspects to the spirit of the theory of non-reversible dynamics described above. The first is that to interpret the transitions amplitudes $\frac{d\Gamma(\alpha\to\beta)}{d\beta}$, etc... as measurable decay rates, a foreign macroscopic arrow of time must be included. That is, an oriented macroscopic time parameter is introduced in the formalism associated with the macroscopic experimental setting. Also, the interpretation of the relation \eqref{H-theorem} is for an ensemble of identical particles, indicating that beneath this {\it H-theorem} there is indeed an statistical interpretation and cannot be applied to the detailed dynamics of an individual quantum system.

Despite these dissimilarities, we can think that a reformulation of the $S$-entropy as it appears in scattering theory can be reformulated more symmetrically, such that it can be cast in the form of a suitable non-reversibility function.
\end{ejemplo}
\begin{ejemplo}
Let us consider the case of physical systems where the weak sector of the electroweak interaction of the Standard Model of particle physics is involved. The weak interaction slightly violates the $CP$-symmetry, as it is demonstrated in experiments measuring the decay rates of the $K^0$-$\bar{K^0}$ systems. Assuming that the $CPT$-theorem of relativistic quantum field theory holds \cite{Weinberg1995}, since such experiments show a violation of the $CP$-symmetry, as shown by measurements of the respective decay rates, then the $T$-symmetry invariance of the $S$-matrix must appear (slightly) violated in such experiments. The decay rates of the $K^0$-$\bar{K^0}$ systems show that the transition amplitudes for processes after the action of the $T$-inversion operation could be different than the direct amplitudes (prior to the action of the $T$-inversion operation).

 However, this non-reversibility character of the weak interaction is an indirect one and does not correspond to our notion of non-reversible dynamical law. In particular, it relies in an external element of the dynamics, namely, the existence of an external observer with a notion of macroscopic time. These elements are foreign to our notion of non-reversible dynamics, definition \ref{nonreversibledynamics}.
 \end{ejemplo}

\subsection{Time arrow associated with a non-reversible dynamics}
 The notion of reversible and non-reversible dynamical law that we are considering is attached to the possibility of defining a function $\Omega:\Gamma\mathcal{E}\to \mathbb{K}$ such that the property \eqref{definitionofnonreversibilityfunction} holds good. If one such function $\Omega$ is found, then there is a dynamical arrow of time defined by the following criteria:
 \begin{definicion}
 Given a non-reversible dynamics $\Phi_\mathcal{E}:J\times \Gamma\mathcal{E}\to \Gamma\mathcal{E}$ and a function $\Omega:\Gamma \mathcal{E}\to \mathbb{K}$ such that the corresponding non-reversibility $\Xi_\Omega$ is different from zero, a {\it global dynamical arrow of time} is a global choice of a sheaf section  $E\in\,\Gamma\mathcal{E}$ such that $\Xi_\Omega$ is non-zero.
 \end{definicion}

 Given a non-reversible local dynamics as in {\it definition} \ref{nonreversibledynamics}, the sign of the function \eqref{definitionofnonreversibilityfunction} is well defined at least on local open domains of $J\times \mathcal{E}$. Therefore, when the above definition holds at a local level, then one can speak of a {\it local dynamical arrow of time}.
 On the other hand, the existence of a dynamical arrow of time with constant sign defined in the whole configuration space $\mathcal{M}$ is a non-trivial requirement. For example, in the case of Finsler structures, it is not guaranteed that the sign in the difference \eqref{FmenosF} is keep constant on the whole tangent manifold $TM$. However, the difference \eqref{FmenosF} is a continuous function on $TM$. Hence one can reduce the domain of definition of the local dynamical time arrow to the open set where the difference is positive.

In case there is a measure defined, the relevant fact is that the non-reversibility function $\Xi_\Omega$ is different from zero for many evolutions.

 The notion of fundamental time arrow as discussed above immediately raises the problem of the coincidence or disagreement with the entropic or thermodynamical time arrow, indicated by evolutions leading to non-decrease of entropy. For a general local dynamics, a notion of time arrow based upon the non-reversible dynamics and the notion of time arrow based upon the increase of an entropy function will in general not coincide, since the former can change direction after crossing the condition $\Xi_\Omega =0$. In the case when $\dim (\mathbb{K})=1$, if $\Xi_\Omega  \neq 0$,  the {\it turning points} are the sections where $\Xi_\Omega =0$. Such domains correspond to points where the time arrow associated with a non-reversible local dynamics can change direction with respect to the time arrow based on the evolution of the internal entropy functions, which is always non-decreasing for arbitrary systems and for arbitrary thermodynamical sub-systems (see for instance \cite{Kondepudi Prigogine 2015}, section 3.4).

  Another difference between the dynamical arrow of time and the thermodynamical arrow of time appears if $\dim(\mathbb{K})> 1$. Then there is no a prescribed way to associate the arrow of time of a non-reversible dynamics with the arrow of time of an entropy function.

  Because the same arguments as above, we have the following
  \begin{proposicion}
  For any local dynamical law $\Phi_\mathcal{E}:J\times \Gamma\mathcal{E}\to \Gamma\mathcal{E}$ where $J\subset \,\mathbb{K}$, if there is an entropy function defined on $\mathcal{E}$, then the dynamical arrow of time associated with $\Phi_\mathcal{E}$ and a non-zero non-reversibility function $\Xi_\Omega$ coincides locally with the entropic arrow of time.
  \end{proposicion}

The above mentioned derivation of a version of $H$-theorem in scattering theory shows how a general unitary quantum local dynamics provides  a mathematical entropy function \eqref{H-theorem} and the corresponding arrow of time, which does not relies on Born's approximation \cite{Weinberg1995}. Despite the interpretation issues and the lack of a perfectly close interpretation of the derivative as a non-reversibility function, this argument can be extended to other deterministic dynamics via unitary Koopman-von Neumann theory \cite{Koopman1931, von Neumann}.

\subsection{Non-reversibility of the fundamental dynamics}
There are three independent arguments in favour of the non-reversibility of local dynamics as candidates for the fundamental dynamics.
\bigskip
\\
{\bf Non-reversibility is generic}. The first argument refers to the generality of non-reversible metrics respect to reversible dynamics. Given any configuration space $\mathcal{M}$, the collection of non-reversible dynamics as discussed above is larger and contains the set of reversible ones. Let us consider a reversible dynamics $\Phi_r$. Then for any function $\Omega:\Gamma\mathcal{E}\to\,\mathbb{K}$ we have
     \begin{align}
  \lim_{t\to d_{\mathbb{K}min}}\,\frac{1}{t}\,\left(\Omega(\Phi_r(t,A))-\Omega(\Phi^c_r(t,A))\right)=0.
  \label{reversibility condition}
    \end{align}
    But such condition is easy to be spoiled. Almost any deformation $\widetilde{\Phi}$ of a reversible dynamics $\Phi_r$ will admit a function $\widetilde{\Omega}$ such that
     \begin{align*}
      \lim_{t\to d_{\mathbb{K}min}}\,\frac{1}{t}\,\left(\widetilde{\Omega}(\widetilde{\Phi}_\mathcal{E}(t,A))-\widetilde{\Omega}(\widetilde{\Phi}^c_\mathcal{E}(t,A))\right)\neq 0.
     \end{align*}
This argument not only shows that it is more natural to consider non-reversible dynamics, but also that it is more stable the condition of non-reversibility than reversibility under perturbations of the dynamics.

In several situations, including the one that we shall consider in Hamilton-Randers theory in this work, given a non-reversible dynamics, an associated reversible dynamics can be constructed by a process of {\it time symmetrization} as discussed before. Hence a reversible dynamics can be seen as a class of equivalence of non-reversible dynamics. The notion of symmetrization  can be applied to both, continuous and discrete  dynamics.  Such a process is information loss, that is, for many non-reversible dynamics, there is an unique reversible dynamics. Thus the reversible dynamics obtained by symmetrization from a non-reversible dynamics hides an intrinsic non-reversibility in the same process of symmetrization.

If we further assume that any reversible dynamics can be obtained after a time symmetrization  from a (non-unique) non-reversible dynamics, then it is clear that the category of non-reversible dynamics is a natural choice to formulate a fundamental dynamics, since all the operations that one can perform in the reversible dynamics can be obtained from a slightly different operations in the non-reversible version of the dynamical system.

 We shall make these constructions explicit later, when we introduce the notion of Hamilton-Randers dynamical system and the time symmetrization operation in that context. Indeed, this view of non-reversible nature dressed by reversible laws obtained by process of symmetrization is on the core of our philosophy of emergence of physical description.
\bigskip
\\
{\bf Finsler type structures as models for non-reversible dynamical systems}. A second argument supporting the non-reversible character of a fundamental dynamics, at least if continuous models are considered, is based on Finsler geometry. One of the important characteristics of Finsler geometry is its ubiquity on the category of differentiable manifolds and differentiable maps. Finsler structures are natural objects in the sense that they can be defined on any manifold ${M}$ with enough regularity and with some few additional natural conditions on the manifold topology (Hausdorff and paracompact manifolds). This can be seen clearly for metrics with Euclidean signature, where a Finsler structure of Randers type \cite{Randers,BaoChernShen}. A Randers type metric is a small perturbation of a Riemannian structure. Any Haussdorf, paracompact manifold admits a Riemannian structure \cite{Warner}. If  additional conditions to ensure the existence of globally defined bounded and smooth vector fields are imposed on $M$, then Randers metrics can be constructed globally on the manifold. The conditions for the existence of such a vector field are rather weak.

In order to extend this argument to the case of interest for us, namely, spacetimes as described in {\it chapter 3}, one needs to have an extension of the averaging operation for Finslerian type structure of indefinite signatures. This is also an open problem in the theory of Finsler spacetimes in general\footnote{Because the structure of the spaces that we will consider as products of Lorentzian manifolds, it is only necessary to metrics with Lorentzian signature.}. The case of spacetime structures of Randers type is specially relevant for our theory of chapter 3. A Lorentzian metric is linearly perturbed to obtain a Randers type metric \cite{Randers} by the introduction  of a small vector field in the metric structure of the spacetime\footnote{This notion of Lorentzian Randers spacetime has been discussed and critizised in \cite{Ricardo Randers-Lorentz}. We still denote them as Finslerian spacetimes.} the deformation vector field is time-like, then there is defined an associated Finsler  metric by a process similar to the one that associates to a Lorentzian metric a Riemannian metric \cite{Hawking Ellis 1973}.
\subsection{Quasi-metric structures}
Following  the above arguments in favour of the non-reversibility property in fundamental dynamical models and taking into account what we have learn from the Finslerian Example \ref{ejemplononreversibledynamics}, it is natural to investigate the possibilities to endow the configuration space $\mathcal{M}$ with a {\it quasi-metric structure} associated with the dynamics. Our notion of quasi-metric structure is  a direct generalization of the analogous notion found in the literature \cite{Javaloyes et al.,Wilson1931} and is defined as follows:
\begin{definicion} Let ${\bf T}$ be a set and $\mathbb{K}$ an ordered number field.
A quasi-metric is a map $\varrho:{\bf T}\times {\bf T}\to \mathbb{K}$ such that
    \begin{enumerate}
    \item $\varrho(u,v)\geq 0$, for each $u,v\in\, {\bf T}$,
    \item $\varrho(u,v)=0$, iff $u=v\in \,{\bf T}$,
    \item $ \varrho(u,w)\leq \varrho(u,v)+\,\varrho(v,w)$, for each $u,v,w\in \,{\bf T}$.
    \end{enumerate}
\label{quasi-metric}
\end{definicion}
The main conceptual difference between a quasi-metric and a topological metric function  is that in the case of a quasi-metric, the
symmetry condition of metrics has been dropped out. Apart from this, we have adopted the more general case where the number field $\mathbb{K}$ is a general ordered number field.

The notion of quasi-metric $\varrho$ is linked with the non-reversibility of $1$-dimensional dynamical laws  through its potential relation with the two possible directions of evolution. When a dynamics $\Phi:\mathcal{M}\times \mathbb{K}\to \,\mathcal{M}$ is given and the evolution has a geometric interpretation (as in the example of Finsler geometry), then it will be defined a quasi-metric function $\varrho: \mathcal{M}\times \mathcal{M}\to \mathbb{K}$, where  $\varrho(A,B)$ will be  associated with the evolution from $A$ to $B$ following the dynamics $\Phi:\mathbb{K}\times \mathcal{M}\to \mathcal{M}$, while  $\varrho(B,A)$ will be associated with the time conjugate dynamics $\Phi^c$.

The above generic idea can be pursue in the following way. By a geometric dynamics, let us understood a filling dynamics $\Phi:J\,\times \mathcal{M}\to \mathcal{M}$.
Then for every $x,y \in \,\mathcal{M}$, the condition $\Phi(t_a,x)=y$ has at least a solution for a finite $t_a\in\,\mathbb{K}$.  Let us consider the map
\begin{align}
\varrho_\Phi:\mathcal{M}\times\,\mathcal{M}\to \,\mathbb{K},\quad (x,y)\mapsto \,\min \{t_a\in\,\mathbb{K},\,\,s.t.\,\,\varrho_\Phi (t_a,x)=\,y\}.
\label{varrho de phi}
\end{align}
\begin{proposicion} Let $\Phi:J\times \mathcal{M}\to \,\mathcal{M}$ be a mixing dynamics.
Then map $\varrho_\Phi :\mathcal{M}\times \mathcal{M}\to \,\mathbb{K}$ defined by \eqref{varrho de phi} is a quasi-metric.
\label{Proposicion varrho Phi}
\end{proposicion}
Since the dynamics $\Phi$ is mixing the map $\varrho_\phi$ is defined for each pair of points of $\mathcal{M}$.

If $\Phi (t_a,x)=y$, then it happens that $\Phi^c(t_a,y)=\,x$. But in general, $\Phi^c \neq \,\Phi$ and therefore,  $\Phi (t_a,y)\neq \,x$, that is translate to the condition of non-symmetricity of the associated metric $\varrho_\Phi$.

This notion of quasi-metric depends on the definition of the time parameter $J\subset \,\mathbb{K}$. In this sense, $\varrho_\Phi$ does not have a geometric character, except if the parameter used has a geometric character, as it could be, associated with a Finslerian norm in the case of geodesic dynamics in Finsler geometry.

From a  quasi-metric defined on ${\bf T}$ one can construct a genuine topological metric on ${\bf T}$ by a process of symmetrization,
\begin{align}
\varrho_+ :{\bf T}\times {\bf T}\to \mathbb{K}, \quad (u,v)\mapsto \frac{1}{2}\,\left(\varrho(u,v)+\,\varrho(v,u)\right).
\label{symmetrizedmetric}
\end{align}
Apart from the quasi-metric axioms $(1)$ to $(3)$ in {\it definition} \ref{quasi-metric}, $\varrho_+$ is symmetric.
There is also associated the skew-symmetric function,
\begin{align}
\varrho_- :{\bf T}\times {\bf T}\to \mathbb{K}, \quad (u,v)\mapsto \frac{1}{2}\,\left(\varrho(u,v)-\,\varrho(v,u)\right).
\label{asymmetrizedmetric}
\end{align}
$\varrho_-$ has the following immediate properties:
\begin{itemize}
\item $\varrho_- (u,v)$ can be positive (bigger than $=0$, negative (lesser than $0$) or zero.
\item $\varrho_- (u,v)= 0$ if and only if $\varrho(u,v)=\varrho (v,u)$, with $u,v\in\,{\bf T}$.
\item $\varrho_-(u,u)=0$, for all $u\in {\bf T}$.
\item It holds that $\varrho_- \,\leq \varrho_+ .$
\item The following inequality holds:
\begin{align}
\varrho_-(u,v)-\varrho_-(v,w)\,\leq \varrho_+(u,v) +\varrho_+(v,w),\quad \forall \,u,v,w\in {\bf T}.
\end{align}
\end{itemize}
Conversely, if two structures $\varrho_+$ and $\varrho_-$ are given with the above properties and relation, a quasi-metric $\rho$ is determined jointly by $\rho_+$ and $\rho_-$ by the relation
\begin{align}
\varrho =\,\varrho_+ +\,\varrho_- .
\label{relatiorhorhomasrhomenos}
\end{align}

Given  a quasi-metric structure, one can define  topologies on ${\bf T}$ by considering the topological basis composed by {\it forward or backward open balls}. In general, the two topologies do not coincide. Another topology is determined by considering the basis for the topology composed by the intersection of open and forward balls with the same center and same radius. This third topology is finer that the forward and backward topologies.

A quasi-isometry is an homeomorphism $\theta :{\bf T}\to {\bf T}$ that preserves the triangular function
\begin{align}
T:{\bf T}\times {\bf T}\times {\bf T}\to \mathbb{K},\quad (x,y,z)\mapsto \,\varrho(x,y)+\varrho(y,z)-\varrho(x,z).
\label{triangular function}
\end{align}
This is a direct generalization of the notion of almost-isometry discussed in  \cite{Javaloyes et al.}. Also, in terms of the triangular function \eqref{triangular function}, one can characterize minimal length oriented curves.

Given the topology induced by $\rho$ on ${\bf T}$, one can consider the corresponding notion of compact set.

\subsection{Causal structure}
A further element that we need to consider for the general formulation our dynamical systems is the existence of causal structures. Causal structures, in an ample sense that we will formalize here, refers to the minimal mathematical structure of causation.
\begin{definicion}
Let us consider the configuration space $\mathcal{M}$. A causal structure is a collection
\begin{align*}
\mathcal{C}_{\mathcal{M}}:=\left\{\mathcal{C}_u,u\in\tilde{\mathcal{M}}\subset\,\mathcal{M}\right\}
\end{align*}
of subsects $\mathcal{C}_u\subset \,\mathcal{M}$ such that the relation
\begin{center}
$u_1 \sim u_2$ if $u_2\in\mathcal{C}_{u_1}$,
\end{center}
meets the following properties:
\begin{itemize}
\item {\bf Symmetry}: If $u_1 \sim u_2$, then $u_2\sim u_1$.

\item {\bf Reflexive}: For every $u\in \tilde{\mathcal{M}}\subset \,\mathcal{M}$, $u\sim  u$.

\item {\bf Transitive}: If $u_1\sim u_2$ and $u_2\sim u_3$, then $u_1 \sim u_3$.

\item {\bf Diamond property}: For every $u_1,u_2\in \,\tilde{\mathcal{M}}$, $\mathcal{C}_{u_1}\cap \mathcal{C}_{u_2}$ is compact.
\end{itemize}
\label{definicion estructura causal}
\end{definicion}

\begin{definicion}
 Given $u\in \mathcal{M}$, the set $\mathcal{C}_u$ is the causal cone associated with $u$. If $v\in\,\mathcal{C}_u$, then $u$ and $v$ are causally connected.
 \end{definicion}
 It is clear that if $u$ is causally connected with $v$, then $v$ is causally connected with $u$, by the symmetric property in the relation defined by \ref{definicion estructura causal}.

Our notion is directly inspired by the concept of causal cone of Lorentzian geometry. However, {\it definition} \ref{definicion estructura causal} applies also to theories where the configuration manifold $\mathcal{M}$ is discrete.
Note that in the definition of quasi-metric, an ordered field $\mathbb{K}$ is required, while in our {\it definition} \ref{definicion estructura causal}, only a topology defining the compact sets is required. In particular, there is no notion of time or time ordering in our definition of causal structure.
\begin{ejemplo}
An example of causal structure is the case when $\mathcal{M}=\,M_4$ is a smooth manifold endowed with a Lorentzian metric and $\mathcal{C}_{\mathcal{M}}$ is the corresponding causal structure determined by the light cones. In general relativity, the term {\it light cone} have attached two meanings: one as a subset of the spacetime manifold $M_4$, the second as subset of the tangent space $TM_4$. The corresponding meaning of {\it definition} \ref{definicion estructura causal} is analogous  to the first one attached in general relativity.
\end{ejemplo}

The characteristic {\it property} in {\it definition} \ref{definicion estructura causal} is the {\it Diamond property}. The physical reason to adopt such assumption is based upon the corresponding property of global hyperbolic spacetimes \cite{Hawking Ellis 1973}. Furthermore, it suggests that the effects at a given point originated at another point, are determined by a compact set. This is a way to express finiteness in the possible actions for systems located at a given point of the configuration space.

Given a compact configuration space $\mathcal{M}$, there is always defined a causal structure if we make $\mathcal{C}_u=\,\mathcal{M}$. This structure corresponds to the situation where all the points of the configuration space are causally connected.
 In the case that  $\mathcal{C}_{u}\neq \mathcal{M}$, then the causal structure is non-trivial. Obstructions will make $\mathcal{C}_u=\,\mathcal{M}$ in general not possible, implying a non-trivial causal  structure.

The possibility of transfer information in the form of a physical signal from a point $x_1$ to a point $x_2$ of a spacetime manifold can only happen if they are causally connected.  The extension of this notion to a general configuration space $\mathcal{M}$ can be stated in the following grounds: the influence on the evolution of a point $A\in\mathcal{M}$ could depend only upon the points that are causally connected with $A$. It does not necessarily depends on all such a points.

\subsection{Assumptions for the fundamental dynamical systems}
The theory to be developed in this work defends that many of the concepts and notions usually encountered in classical field theory and quantum mechanics have an emergent nature. They can be useful to describe our experiences, but are not linked with the natural notions of would be present in a fundamental description of physical reality beyond quantum mechanics. Therefore, the problem of identifying a set of fundamental principles and notions arises. The problem is considerably and has no way to be resolved without a large dosis of speculation and luck, since the human sphere of experiences does not continues directly such fundamental sphere. Quantum level of description is in-between. Therefore, an  attempts to construct such a speculative vision pass by re-covering quantum description in a consistent way.

The assumptions discussed below will be considered and implemented in our proposed theory of fundamental dynamical systems describing the physics at the conjectured fundamental scale. Rather than a set of formal axioms, the assumptions must be considered as requirements that constrain the mathematical framework for the dynamical systems. Basically, the assumptions  constrain the dynamical systems to be causal, deterministic, local  (in certain sense that will be explained later) and non-reversible.

 The assumptions are categorized in three different groups, according to the level of description, apart from {\it Assumption 0}, regarding the existence of a deeper level of physical description than the offered by quantum mechanics. Such fundamental scale is deeper in the sense that the quantum mechanical description of physical observable processes is reconstructed as an emergent, effective description from fundamental degrees of freedom. As such, the fundamental scale assumption it does not refer to an energy scale, or time scale, which are also emergent concepts.
\begin{itemize}
\item  {\bf The fundamental scale assumption}.
\bigskip
\\
A.0. There is a fundamental scale in physical reality which deeper than any scale associated with quantum mechanical system or classical field theory system.
\end{itemize}
Here the term {\it scale} refers a limit of a new regime for physical reality. This could correspond to a fundamental scale in length or time, but not necessarily in terms of a fundamental scale in mass. The term {\it quantum mechanical system} refers to systems where the use of quantum dynamics is mandatory. For example, elementary particle systems, quantum field models, nuclear, atomic and molecular systems. The term {\it classical field theories} refers mainly, to classical theories of gravity, but also effective classical field theories.
\begin{itemize}
\item {\bf Assumptions on the metric, measure and topological structures associated with the fundamental dynamical systems.}
\bigskip
\\
A.1. There is a {\it topological space model}  $M_4$ which is the arena where macroscopic observers can locate events.
\bigskip
\\
A.2. There is a topological {\it configuration space} $\mathcal{M}$ which is endowed with a  quasi-metric structure $(\mathcal{M},\varrho)$ as in {\it definition} \eqref{quasi-metric} and also endowed with a probability measure $\mu_P$.
\bigskip
\\
\item {\bf Assumptions on the ontological structure of the fundamental dynamical models.}
The {\it physical degrees of freedom} are by definition the degrees of freedom represented in the Hamiltonian of a fundamental dynamical system. At the ontological level we assume the following hypotheses:
 \bigskip
 \\
 A.3. The physical sub-quantum degrees of freedom are identical, undistinguishable degrees of freedom composed by two {\it sub-quantum atoms}. Since the sub-quantum degrees of freedom are composed. We call them {\it sub-quantum molecules}.
\bigskip
\\
A.4. There is a natural minimal coordinate scale $L_{min}$. It is the universal  minimal coordinate difference between the coordinates associated with events when defined respect to ideal, instantaneous frames co-moving with a given sub-quantum molecule.
\end{itemize}
 Assumption A.4 relates the ontological content of the theory with the spacetime structure.
\begin{itemize}
\item {\bf Assumptions on the fundamental dynamics.} We understand by dynamics a map as given by {\it Definition} \ref{definitionofdynamics}. We consider that the following assumptions restrict very much the laws for the fundamental dynamics.
\bigskip
\\
A.5. For each fundamental system, there is a fundamental Hamiltonian function that determines the full dynamics of the fundamental dynamical systems. The fundamental dynamics is cyclical or almost cyclical and can depend explicitly of the time parameter.
\bigskip
\\
A.6. The following {\it locality condition} holds: given a system  $\mathcal{S}\subset\,\mathcal{M}$ corresponding to a collection  of sub-quantum molecules, there is a smallest neighborhood with $\mathcal{S}\subset \,\mathcal{U}\,\subset \mathcal{M}$ such that for any  $\widetilde{\mathcal{U}}\supset \,\mathcal{U}$, the dynamical effect of any action of $\mathcal{U}$ or $\widetilde{\mathcal{U}}$ on $\mathcal{S}$ are the same.
\bigskip
\\
A.7. {\it Causal structure}: in the configuration space $\mathcal{M}$, there is an unique non-trivial causal structure as defined in {\it definition} \ref{definicion estructura causal}.
\bigskip
\\
 A.8. The fundamental flow at the fundamental scale is non-reversible in the sense of {\it definition} \ref{nonreversibledynamics}.
\bigskip
\\
A.9. The fundamental dynamics is sensitive to initial conditions and with the appearance of regular almost cyclic patterns in the long time evolution.
\bigskip
\\
A.10. The fundamental dynamics is general covariant: all the structures that appear in the theory are dynamical.
\end{itemize}
The nature and structure of this fundamental Hamiltonian is the main problem in our approach to the foundations of quantum dynamics. There are several candidates, ranging from geometric flows related with geodesic motion to other types of models.

 In this work, a class of dynamical systems fulfilling the above assumptions, namely, Hamilton-Randers systems, is investigated.  We will show that Hamilton-Randers systems are  suitable candidates to describe the physical systems corresponding to current quantum dynamical systems, but from the point of view of dynamical laws setting at a more fundamental description than quantum mechanics. From the dynamics of the Hamilton-Randers systems one can recover quantum mechanical description as a coarse grained description. Also,  most of the phenomenological properties of the quantum theory can be explained in terms of emergent  notions from the properties and characteristics of the fundamental dynamics. We will also show that Hamilton-Randers systems are consistent with the constrains on realistic models for quantum mechanics.
\subsection{Remarks on the assumptions}

Assumption A.0 is a fundamental assumption  in emergent frameworks for quantum mechanics. Even if the concept that we propose is not directly related with a energy or time scale, because these concepts are also emergent in our view and hardly can serve as a reference to describe the new scale, it is useful to translate the scale in terms of known physical notions. This must be rigourously taken as a extrapolation of such notions.
 This scale is set where the dynamics is described by a fundamental, underneath dynamics.

 The idea of a fundamental scale is not exclusive for theories of emergent quantum mechanics and indeed, it is ubiquitous in many approaches to quantum gravity. Other thing is the actual value is the fundamental scale. Usually, the fundamental scale of quantum gravity is possed approximately at the Planck scale. The Planck scale corresponds to the distance where the Schwarzschild radius equals the Compton length. But then the significance of such scale is associated with the (approximately) validity of both, quantum mechanics and general relativity. Since we are adopting a skeptical point about that assumption, in principle we do not fix fundamental scale to be the Planck scale. We will left open this issue until we have developed further our ideas. We will see that indeed this fundamental scale is related (but not identified) with the emergence of gravity as a classical, macroscopic phenomena.

In Assumption A.1, by  {\it event} we mean either an observation of a phenomena or a physical state which is potentially observable by a macroscopic observer.
 Assumption A.1 is suggested by the way physical theories are constructed. It is assumption hard to be avoided. At this stage, we do not make precise if $M_4$ refers to a discrete space or to a continuous space, as in general relativistic theories, or to a discrete spacetime, as in quantum spacetime theories as for instance Snyder quantum spacetime \cite{Snyder}. Indeed, this important question depends upon very deep and technical details.

Assumption A.2 on the existence of  a quasi-metric structure defined on the configuration manifold $\mathcal{M}$ is useful for our constructions for several reasons. First, note that if a quasi-metric structure is defined on $\mathcal{M}$, then there are well defined topological structure on $\mathcal{M}$ determined by the quasi-metric function $\varrho$. Also, the notions of Cauchy sequence and completeness associated with the quasi-metric $\varrho$ are well defined, although they are not symmetric notions. A similar situation happens for the analogous constructions in Finsler geometry \cite{BaoChernShen}.
Second, if there is quasi-metric as fundamental metric structure of the configuration space $\mathcal{M}$, then it could be associated with the postulated irreversible character of a fundamental dynamics, according to  Assumption A.8 by linking the dynamics with the  metric structure of the fundamental configuration space. This can be established as discussed previously by proposition \ref{Proposicion varrho Phi}.

 A relevant  example of quasi-metric in the category of smooth manifolds are Randers spaces, a class of geometric spaces originally introduced by G. Randers in an attempt to describe the irreversibility of physical evolution as a fundamental property of the spacetime arena  \cite{Randers}. Basically, they are linear perturbations on the norm function associated with a Lorentzian metric\footnote{While the theory of Randers spaces is very well formulated in the case of metrics with Euclidean signature, there are formal difficulties in the case of metrics with indefinite signature. A particular theory of Randers spaces with Lorentzian signature appears in \cite{Ricardo2010}.}. In the theory to be developed in this pages, the theory of {\it dual Randers spaces} is introduced and applied to models describing deterministic, local, causal and non-reversible dynamics at the fundamental scale in a continuous approximated description. This motivation is similar to the original insight of G. Randers, namely, to formalize a fundamental non-reversible dynamics in a geometric way.
Note that although the mathematical formulation of the theory that we present in this work will be developed in terms of continuous models, the degrees of freedom are defined by discrete sets. Hence the probability measure $\mu_P$ is discrete and determined by operations based on {\it counting degrees of freedom in a determined way}. However, this will not forbid us to consider continuous models as approximations to more precise discrete systems. We will discuss this point below.

The measure properties and the metric properties are logically independent. The fact that the metric and measure structures are logically separated is a distinctive characteristic of the mm-spaces, a point of view that was initiated in geometry by M. Gromov \cite{Gromov, Berger2002}. Such a category is the natural framework for the formulation of the theory of Hamilton-Randers dynamical systems.

 Let us consider assumption A.3. That the degrees of freedom at the fundamental scale are deterministic and localized is an assumption in direct confrontation with the probabilistic quantum mechanical point of view on physical systems.  However, in defense of our deviated point of view, we should admit  that very few is known with certainty about the dynamics at the fundamental scale. We also need to recognise that the possibility to have a deterministic description  is at least technically appealing, because of its simpler mathematical structure.

 Let us also remark that the association of two sub-quantum atoms to each sub-quantum molecule is not due to a conjectured new fundamental interaction. It corresponds to more a radical point of view concerning the dynamical structure of the fundamental systems and its natural predisposition to be described by {\it averages}.  For each sub-quantum molecule, one of the sub-quantum atoms evolves towards {\it the future}, while the companion evolves towards {\it the past}. The justification for this approximation relies on the emergence of the quantum description of quantum systems as an emergent coarse grained description dynamics from the fundamental dynamics. The process of emergence is an average method and in such averaging process, the first step is the averaging on the time oriented evolution of associated degrees of freedom, where one degree of freedom evolves toward future and an associated degree of freedom towards past.
 To consider pair of sub-quantum atoms is then a requirement and first step to re-organize the description of the fundamental dynamics in terms of an averaged, coarse grained description and degrees of freedom in the dynamical evolution laws.

  Furthermore, assumption A.3 pre-supposes an intrinsic irreversible evolution, since the effect of future and past oriented evolutions are not equivalent at such fundamental level.

 Assumption A.4 suggests the existence of special coordinate systems associated with sub-quantum molecules, namely, local coordinate systems associated with the sub-quantum degrees of freedom. By co-moving coordinate system one means a curve $\gamma:I\to M_4$ that has locally at some instance the same second order jet at $t=0\in I$ than the world line of the molecule. This  interpretation of local co-moving system is understood once the four dimensional spacetime arena is introduced in the theory as a concept where  to locate macroscopic observations. The co-moving coordinate systems associated with sub-quantum molecules does not have attached a direct observable meaning.

 Assumption A.4 on the existence of a minimal length scale is in agreement with several current approaches to the problem of quantum gravity. Assumption A.4 can be read as stating that the fundamental scale has associated a fundamental length. That the Planck scale is the fundamental scale has been argued in several ways in the literature, although ultimately it is a conjecture based upon extrapolations of laws towards domains of validity which is in principle unclear to hold.

The interpretation given of the coordinate systems introduced in Assumption A.4 is consistent with the theory of quantum spacetime introduced by H. Snyder \cite{Snyder}. This remaind us that there are algebraic frameworks where minimal coordinate length in the sense described above and associated with particular coordinate systems can be made consistent with the constrains imposed by causality and with local diffeomorphism transformations.

We would like to remark that Assumption A.4 provides a realization of Assumption A.0: both {\it assumptions} could be merged in one. However, in terms to keep a more general framework, we keep by now both {\it assumptions} as independent, for the reasons discussed in relation  with assumption $0$.

   In addition to the geometric flow introduced in Assumption A.5, there is  a dynamics for the fundamental degrees of freedom introduced in Assumption A.3 associated with the evolution respect to an {\it external time} parameter, that we call $\tau$-time parameters. If the $\tau$-time parameter  is discrete, this fundamental dynamics is described by deterministic, finite difference equations. If the dynamics is continuous, then the associated $\tau$-time parameter will also be continuous, as we shall consider in the Assumption A.5.bis described below.

The choice between local or non-local character for the interactions of the degrees of freedom at the fundamental scale is obviously of relevance. In favour of assuming locality, we should say that this option is geometrically appealing, since it is possible to have geometric representations of local interactions in a consistent way with causality, but a non-local interaction is far from being understood in a causal picture of dynamics. We think that a theory  aimed to be fundamental must be a local dynamical theory, since in any pretended fundamental theory, there is no room left for explaining non-locality in terms of a more fundamental level.

 However, deep conceptual difficulties accompany the search for local descriptions of a sub-quantum theory, as it is implied by the experimental violation of Bell type inequalities in very general situations. A possible explanation of how the violation of Bell inequalities can happen is discussed in this work. It requires the notion of {\it emergent contextuality}, to be explained in later {\it chapters}. This notion is consistent with the hypothesis and idea of {\it superdeterminism}, a possible way out to the violation of Bell inequalities by local realistic models recently considered in some extend by G. 't Hooft. However, superdeterminism is not an element in our description on how the non-local quantum correlation happen. It appears more as a consistent consequence than as a constructive principle for falsifiable theories and predictive models of real physical systems (ideally isolated). Indeed, we see superdeterminism in direct confrontation with the coincidence of the interpretation of the wave function describing the state of an individual quantum system and at the same time, an statistical interpretation. We see superdeterminism as a purely coincidence characteristic of the mathematical structure of our theory.

The almost cyclic property of the fundamental dynamics of assumption 5 is motivated by the need to describe complex processes as emergent phenomena. If the fundamental dynamics had fixed points or attractors as generic final states, it will be difficult to reproduce non-trivial quantum phenomena. On the other hand, if the fundamental dynamics is too irregular, it will be difficult to explain the structure of quantum systems. Hence, the option of an almost cyclic dynamics, that will be explained in this work, appears as a natural requirement for the fundamental dynamics.

Furthermore,  let us note that the parameters of time used to describe the fundamental dynamical systems do not necessarily coincide with the time parameters  used to describe quantum or classical dynamical processes. Therefore, the origin of the notion of macroscopic time parameters time raises, in relation with the nature of the fundamental sub-quantum dynamical processes. Concerning this issue, we make the following conjecture, which is of fundamental importance in the elaboration of our theory:
\bigskip
\\
{\it The time parameters used in the construction of dynamics in classical and quantum dynamics corresponds to cyclic fundamental processes of sub-quantum dynamical systems.}
\bigskip
\\
Thus the classical time parameters  have an emergent origin. In general relativistic models, this will imply that the full spacetime arena has also an emergent origin. However, for this interpretation to hold in plenitude, it is necessary that fundamental dynamics to be cyclic for some systems that can serve as quantum clocks.

Assumption A.7 does not necessarily implies the existence of a Lorentzian metric. For instance, if $\mathcal{M}$ is a discrete spacetime, then the cone structure is different than the usual light cone of a Lorentzian spacetime. In addition, we assume that the limit for the speed of the sub-quantum degrees of freedom is the speed of light in vacuum, since it is natural in the approximation when the configuration space $\mathcal{M}$ is continuous that  Assumption A.7 is interpreted as that the speed of each fundamental degree of freedom is bounded by the speed of light in vacuum. Otherwise, we will have to explain why there are two fundamental causal structures, the one at the fundamental level and the macroscopic one associated with  light cones.

The values of these two fundamental scales, the fundamental length scale and speed of light in vacuum, cannot be determined currently by theoretical arguments. We will leave the value of the length scale as given, hoping that a future form of the theory can provide a better explanation for them.

The part of the  assumption A.8 is one of the most radical departures from the usual assumptions in theoretical physics. However, we have show below that a non-reversible dynamics is more natural than a reversible one. In particular, we have discuss how quantum dynamical systems comes to be essentially irreversible, in contrast with the popular tail of reversibility. This is enough to justify our Assumption A.8. Also, this assumption is fundamental in the explanation of several fundamental concepts of physics (spacetime) as of emergent character.

The motivation for Assumption A.9 is two-fold. From one side, it is the generic case for a dynamics with many degrees of freedom and with non-linear interactions. On the other hand, if the fundamental dynamics is chaotic, then one should be able to apply such property to the emergent descriptions. This is specially relevant to explain the un-predictable values of quantum mechanical observable measurements.

Assumption A.10 is helpful to reduce the number of possible models and developments. General covariance in the form of absence of absolute structures is not only appealing philosophically, but also useful to constrain candidates for the fundamental dynamics. Furthermore, the current theories of gravitation are general covariant. Hence to have this property from the beginning will avoid to explain its emergence.

Let us remark that there will be more assumptions and decisions adopted during the development of the theory. Also, it is possible that the assumptions listed and slightly discussed above will remain as assumptions in the last form of the theory. I do not foresee mechanism to reduce them from more fundamental assumptions. I also consider them necessary, if the theory that we would like to built must offer a complete explanation of all the quantum phenomenology. Maybe the exception for this comments is assumption A.4, which use in the theory is rather limited and whose consequences can be derived in another ways. On the other hand, the assumption A.4 is natural within the general description proposed by {\it assumption A.3} and the discrete character of the fundamental degrees of freedom.

\subsection{The approximation from discrete to continuous dynamics}
We anticipate that, due to the way  in which the notion of {\it external time parameter} (denoted by $\tau$-time parameter) is introduced in our theory, such parameters must have a discrete nature. The fundamental degrees of freedom are described by discrete variables and their dynamics could in principle also be discrete (the number field $\mathbb{K}$ can be discrete, as discussed in previous sections). Furthermore, a discrete dynamics is compatible with the existence of a minimal inertial coordinate difference $L_{min}$, for instance through the notion of quantum spacetime of Snyder type \cite{Snyder}.

Despite the discreteness that the assumptions require for the physical systems at the fundamental scale, a more practical approach is developed in this work, where continuous models for the dynamics are used. This is motivated by the smallness of the  fundamental scale compared with any other scales appearing in physical systems  but also by  mathematical convenience, because the logical consistency with several of the mathematical theories applied and developed along the way.

In this continuous approximation to the dynamical theory of fundamental objects, where the $\tau$-time parameter is continuous instead of discrete, several of the assumptions should be amended or modified as follows:
\begin{itemize}
\item A.1.bis. There is a smooth manifold $\mathcal{M}_4$ which is the {\it model manifold} of spacetime events.

\item A.2.bis. There is a configuration smooth manifold $\mathcal{M}$ endowed with  a quasi-metric structure \eqref{quasi-metric}, which is at least $C^2$-smooth on $\mathcal{M}$.

\item A.4.bis, which is identical to A.4.

\item A.5.bis. The dynamical law of the fundamental degrees of freedom respect to the  $\tau$-time parameter is deterministic and given by a system of first order ordinary differential equations.
\end{itemize}
We remark that $\mathcal{M}_4$, and the configuration manifold $\mathcal{M}$ cannot be arbitrary from each other. From a epistemological point of view this must be the case.

In the continuous approximation the rest of the assumptions remain formally the same than in the original formulation. However, because the different categories (smooth manifold category versus discrete topological spaces category), the implementation and the techniques that we can use are different than in the discrete case. In the continuous limit,  the assumption of the existence of minimal length must be interpreted as a theoretical constraint on the underlying geometric structure. There are also indications that in the continuous limit  such constraint is necessary.

Finally, let us mention that all our observations are linked to macroscopic or to quantum systems that can be represented consistently in a $4$-dimensional spacetime manifold. This should be motivation enough to consider a manifold structure as a convenient arena to represent quantum and classical systems.

\subsection{The need of a maximal proper acceleration}
  A direct consequence of the {\it assumptions} A.3, A.4, A.6 and A.7 is the existence of a maximal universal proper acceleration for sub-quantum atoms and sub-quantum molecules. In order to show how the maximal acceleration arises, I will follow an adaptation of an heuristic argument developed in \cite{Ricardo2014} for the existence of a maximal proper acceleration in certain physical models.

  The terms and fundamental notions of the following derivation are taken from classical point dynamics.
  Let us consider the simplified situation when the spacetime is a smooth manifold endowed with a metric of indefinite signature $\eta\equiv diag (1,-1,-1,-1)$. As a consequence of the assumption A.4,  there is a lower bound for the difference between coordinates of the fundamental degrees of freedom in any instantaneous inertial system.  For any elementary work $\delta\mathcal{W}$ the relation
\begin{align*}
\delta\,\mathcal{W}:=\vec{F}\cdot \delta \,\vec{L}=\,\delta \,L\,m\,a\,\gamma^3,
\end{align*}
must hold, where $\vec{F}$ is {\it the external force} on the sub-quantum molecule caused by the rest of the system and is defined by the quotient
 \begin{align*}
\vec{F}:=\frac{\delta\,\mathcal{W}}{\delta \,\vec{L}}.
 \end{align*}
$\delta\,\vec{L}$ is the infinitesimal displacement of the sub-quantum molecule caused by the rest of the system in the instantaneous coordinate system associated with the sub-quantum molecule at the instant just before the sub-quantum molecule suffers the interaction; $a$ is the value of the acceleration in the direction of the total exterior effort is done and the parameter $m$ is the inertial mass of the sub-quantum molecule $\mathcal{S}$; $\gamma$ is the relativistic factor. By the Assumption A.4, it holds that $\delta\,L=\,L_{min}$. Hence we have that
\begin{align*}
\delta\,\mathcal{W}= L_{min}\,\gamma^3\, m \,a.
 \end{align*}
  the infinitesimal work is given by the expression
   \begin{align*}
\delta\,\mathcal{W} =\,\frac{1}{2}\,\big( \gamma\delta {m}\,c^2+\,m\,\delta{\gamma}c^2)=L_{min}\, \gamma^3\,m \,a.
\end{align*}
Assume that there is no change in the matter content of $\mathcal{S}$. Then the relation
\begin{align*}
\delta m=0
\end{align*}
holds good. Since  the speed of any physical degree of freedom is bounded by the speed of light by the assumption of locality A.7, one has that
\begin{align*}
\delta{v}_{max}\leq\,v_{max}=c.
 \end{align*}
 Hence the maximal infinitesimal work calculated in a coordinate system  instantaneously at rest at the initial moment with the particle produced by the system on a point particle is such that
 \begin{align*}
 \mathcal{W}=\, m\,L_{min}\,a\leq\,\,m \,c^2.
  \end{align*}
 This relation implies an universal bound for the value of the proper acceleration $a$ for the sub-quantum system $\mathcal{S}$ given by
\begin{align*}
a\,\leq \,\frac{c^2}{\,L_{min}}.
\end{align*}
 Therefore, under assumptions A.3, A.4, A.6 and A.7  it is natural to require that the following additional assumption also holds,
\begin{itemize}
\item A.11. There is a maximal, universal proper acceleration for both sub-quantum atoms and di-atomic sub-quantum molecules. The value of the maximal proper acceleration is of order
    \begin{align}
    a_{max}\sim\,\frac{c^2}{\,L_{min}}.
    \label{maximalacceleration}
    \end{align}
\end{itemize}

Another derivation of a maximal acceleration does not need the assumption $\delta m=0$.
According to assumption A.6, there is a  maximal domain $\mathcal{U}$ that determines the effect of the dynamics on $\mathcal{S}$. As a consequence, we have that
  $ \delta {m}\leq \, C\,m$, with $C$ a constant of order $1$ that depends on the size of $\mathcal{U}$. Hence $C$ is a measure of the size of $\mathcal{U}$ respect to the size of the sub-quantum molecule. At this point, we make the assumption that $C$ is uniform in $\mathcal{M}$, independently that $\mathcal{M}$  being compact or non-compact. Then one obtains a similar expression than \eqref{maximalacceleration},
where the exact relation depends on the constant $C$.

In the continuous approximation, where the degrees of freedom follow a continuous dynamics instead of a discrete dynamics, the assumption A.11 cannot be derived heuristically as was done above. Hence in the continuous case, assumption A.11 is  an independent constraint imposed in the theory. That such kinematical theory exists at least in an {\it effective} sense and that it is indeed compatible with the action of the Lorentz group has been shown in \cite{Ricardo2014, Ricardo2020}, although in such theory the value of the maximal acceleration is not necessarily fixed by the expression \eqref{maximalacceleration}. The general justification for such structures comes in the case of continuous dynamics as a consequence of the violation of the clock hypothesis of relativity theories \cite{Einstein1922} in situations where back-reaction is significatively important \cite{Mashhoon1990,Ricardo2014, Ricardo2020}.

The geometric structures compatible with a maximal proper acceleration and a maximal speed were called {\it metrics of maximal acceleration} \cite{Ricardo2014,Ricardo2020}. They are not Lorentzian metrics or pseudo-Riemannian metrics, but metric structures defined in higher order jet bundles. In the case of spacetime geometry, the second jet bundle $J^2_0(\mathcal{M}_4)$ over the spacetime manifold $\mathcal{M}_4$; in the case of the dynamical systems that will be considered in this work, the second jet bundle $J^2_0(\mathcal{M})$ over the configuration space $\mathcal{M}$. However, the leading order term in a metric of maximal acceleration is a Lorentzian structure. In the present work, the leading order Lorentzian structure has been adopted as an approximation to the more accurate description of the spacetime structure. This leaves us with the need of a more precise treatment within the framework of spaces with maximal acceleration for a second stage of these investigations.

Let us remark that the concept of minimal coordinate distance is not diffeomorphism invariant. According to assumption A.4, the minimal coordinate distance happens in specific local coordinate systems of $\mathcal{M}$, which has associated a very specific class of local coordinates in the spacetime $M_4$. However, the concept of minimal distance is an useful notion to calculate the maximal proper acceleration, which is indeed an invariant concept. This construction implies that the geometry that we will need to use is based on the notion of  {\it maximal acceleration geometry} \cite{Ricardo2014, Ricardo2020}. Also, the dynamical systems that we will consider in the next chapter will be constrained to have uniform bounds for the proper acceleration and speeds.

There is a further argument in favour of the limitation of the proper acceleration in the context of the context that we are discussing. Let us consider the almost cyclic structure of the dynamics, as discussed in assumption A.5. Let us further assume that we consider parameters such that the dynamics is not only cyclic, but in some sense, it is also periodic. Then there is a {\it time scale} for the dynamics and there is also a maximal speed of propagation, which determines partially the causal structure of assumption A.7. Then there is a natural maximal acceleration. The interpretation as a proper acceleration is linked with the interpretation of the corresponding time parameters used.

The advantage of this argument respect to the previous one relies in the essential kinematical concepts used plus the assumption of periodicity of certain fundamental processes, in contrast with the classical dynamical concepts of work and force prevalent in the previous discussion.
\subsection{Which of the assumptions is rescindable?}
Although our set of assumptions should not be seen as a set of axioms, there are certain relations and hierarchies among them. Assumptions A.0 and A.1 appear as untouchable in our emergent approach to the quantum theory. We think that assumptions $A.2,\,A.3$ are also very natural and difficult to avoid in our scheme.  On the other hand, one could licitly rescind from assumption A.4 by adopting a geometry of maximal proper acceleration. Then one is forced to justify these geometries in the present context in a general way \cite{Ricardo2020}.

Regarding the assumptions on the dynamics, it is difficult to hide that they have been freely chosen. Determinism, locality and causality are generic properties for a dynamics, justified only by inherent simplicity respect to the opposite assumptions (randomness, non-locality and a-causality) and by the epistemological motivation to have a fundamental theory where the non-locality and acausal character of quantum mechanics phenomena is explained in terms of clearer notions and pictures. How these conditions can be implemented could different as it is proposed through our assumptions A.5, A.6 and A.7.

Assumption A.8, it is the core to our view of quantum description and current physical description as emergent. It is therefore, difficult to avoid in our approach.

Assumption A.9, is  natural one in the situations when system under consideration have a complex structure. On the other hand, it is consistent with the apparently unpredictable character of the phenomenology of quantum systems. Although in the future this assumption should be deduced from the details of the dynamics, it is useful to adopt it for the construction of the general frame of the dynamics.

\newpage

\section{\LARGE{Hamilton-Randers dynamical systems}}\label{chapter on classical dynamics Hamilton Randers}
\bigskip
\bigskip
In this {\it Chapter} we develop the structure of a theory of deterministic dynamical systems for the sub-quantum degrees of freedom. Such dynamical systems are constructed following as a guideline the assumptions discussed in {\it Chapter} \ref{chapter on Assumptions and General Theory} for a general, non-reversible dynamics. We will show that the dynamical systems considered here full-fill all the requirements for the fundamental dynamics discussed in  {\it Chapter} \ref{chapter on Assumptions and General Theory}.

We formulate the theory of fundamental dynamical systems in terms of differential geometry and differential equations.
The framework for the construction of the dynamical models that we adopt here is probably not the most general compatible with the ideas developed in {\it Chapter} \ref{chapter on Assumptions and General Theory}. Restrictions associated with the use of differential structures are introduced on the way. In particular, the theory of smooth real manifolds, with use of the real field $\mathbb{R}$ for the definition of the time parameters, is adopted. The adoption of differentiable models is justified in terms of the difference in scales between sub-quantum and quantum scales. This allow us to explode the power that differential models brings, in particular, on the existence and uniqueness of solutions and related geometric robust methods.

A specific characteristic of Hamilton-Randers dynamical systems is that they have a {\it two dimensional time dynamics} or {\it 2-time dynamics}, that we have denoted by $U_t$ and by $U_\tau$. This is similar to the situation with fast/slow dynamical systems in classical dynamics, but for the models that we will consider, each of the dynamics and time parameters are formally independent from each other: there is no a  bijective map between the values of the $t$-time and the values of $\tau$-time parameter as usually happens in fast/slow classical dynamical systems and the $U_\tau$ dynamics can only strictly be applied to quantum or macroscopic systems and does not apply to the more fundamental sub-quantum systems.
 The two-dimensional character of time and evolution, expressed by the need of using two time  parameters $(t,\tau)$, is of fundamental importance for Hamilton-Randers theory, since the interpretation of the quantum phenomena proposed relies on this $2$-time dynamics.
\subsection{Geometric framework}

In order to full fill {\it Assumptions 1,2,3} discussed in {\it Chapter} \ref{chapter on Assumptions and General Theory} we propose that the configuration manifolds $\mathcal{M}$ of the fundamental dynamical systems are tangent spaces $TM$ such that the manifold $M$ is diffeomorphic to a cartesian product manifold,
\begin{align}
M\cong \,\prod^N_{k=1}\,\times\,  M^k_4.
\label{manifoldM}
\end{align}
We assume that each of the manifolds $\{M^k_4,\,k=1,...,N\}$ is diffeomorphic to a given manifold $M_4$, that we call the {\it model manifold}.

 The implementation of {\it Assumption 3} does not require that the manifolds $\{M^k_4,\,k=1,...,N\}$, describing the configuration space of each degree of freedom, need to be diffeomorphic to the {\it model four manifold} $M_4$. The fundamental sub-quantum degrees of freedom are not observables from a quantum or classical scale. Therefore, the model manifold $M_4$ is generally different than the spacetime manifold $\mathcal{M}_4$. This is consistent with Einsteinis fundamental idea that spacetime and matter content are related, even in some circumstances, they determine each other. However, the matter content of a sub-quantum degree of freedom is different than the matter content of a macroscopic system, justifying the distinction between $M_4$ and $\mathcal{M}_4$.

For the dynamical systems that we shall consider, the configuration manifold $\mathcal{M}$  is the tangent space of the smooth manifold $M$ and is of the form
\begin{align}
\mathcal{M}\cong\,TM\cong \,\prod^N_{k=1}\,\times TM^k_4.
\label{structure of the configuration manifold}
\end{align}
However, since the dynamics will be described by a Hamiltonian formalism, the relevant geometric description of the dynamics is through the co-tangent bundle $\pi:T^*TM\to TM$.

The dimension of the configuration manifold $\mathcal{M}$ is
\begin{align*}
dim(\mathcal{M})=\,dim({TM})=\, 2 \,dim (M)=\,8\,N.
 \end{align*}
For the dynamical systems that we are interested, we assume that the dimension $dim(TM)=8N$ is large for all practical purposes compared with $dim(TM_4)=8$. Furthermore, choosing the configuration space $\mathcal{M}$ as the tangent space $TM$ instead than the base manifold $M$ allows to implement geometrically second order differential equations for the coordinates of $M$ as differential equations defining vector fields on $TM$.

 The canonical projections are the surjective maps
 \begin{align*}
 \pi_k:TM^k_4\to M^k_4,
 \end{align*}
 where the fiber $\pi^{-1}_k(x)$ over $x_k\in \,M^k_4$ is $\pi^{-1}_k(x_k)\,\subset TM^k_4$ is the tangent space of $M^k_4$ at $x$.

  We also introduce the co-tangent spaces $T^*TM^k_4$ and the projections
  \begin{align}\label{canonical projections}
  \begin{cases}
  & proj_k :T^*TM^k_4 \to TM^k_4,\\
    & proj :T^*TM_4 \to TM_4.
  \end{cases}
  \end{align}

It is assumed that each manifold  $M^k_4$ is diffeomorphic to the model manifold $M_4$.
This is part of the formal implementation of {\it Assumption} A.3 of {\it Chapter} \ref{chapter on Assumptions and General Theory},
 stating that each of  the $N$ fundamental degrees of freedom are indistinguishable and identical to each other. If they are identical and indistinguishable, then the mathematical description must be given in terms of equivalent mathematical structures. Since we are assuming models within the category of differentiable manifolds $M^k_4$, then the manifolds $TM_k$ must be diffeomorphic to each other. These diffeomorphisms are maps of the form
 \begin{align}
 \varphi_k: M^k_4 \to M_4,\quad k=1,...,N.
 \label{diffeomorphism conditions}
 \end{align}
 We denote by
\begin{align*}
\varphi_{k*}:T_{x_k}M^k_4\to T_{\varphi_k(x_k)} M_4
\end{align*}
 the differential map of $\varphi_k$ at $x_k$ and by
 \begin{align*}
 \varphi^*_k: T^*_{\varphi(x_k)} M^k_4\to T^*_{(x_k)}M_4
 \end{align*}
  the pull-back of $1$-forms at $\varphi_k(x)\in\,M_4$.

The notion of {\it sub-quantum molecule} was introduced in {\it chapter} \ref{chapter on Assumptions and General Theory}.
Each of the sub-quantum molecules is labeled by a natural number $k\in\, \{1,...,N\}$. At the current stage of the theory, it is taken as an assumption that the dimension of the model manifold $M_4$ is four. This construction can be justified because the standard description of physical phenomena in the framework of four dimensional manifold models is satisfactory and at some point, there must exist a connection between the spacetime arena of sub-quantum processes and quantum and macroscopic processes. Therefore, it is natural to assume that each element of the collection of manifolds $\{M^k_4,\,k=1,...,N\}$ associated with the sub-quantum molecules is also a four-manifold and that the differential geometry of four-manifolds is our basis for the formulation of the dynamical models, as a first step towards a more general formalism. In this way, the tangent manifold $TM^k_4$ is the configuration manifold for the $k$-th sub-quantum molecule.

Each point in the tangent space $TM^k_4 $ is described by four spacetime coordinates $({\xi}^1,{\xi}^2,{\xi}^3,{\xi}^4)$  of the point ${\xi}(k)\in\,M^k_4$ and four independent velocities coordinates
$(\dot{{\xi}}^1,\dot{{\xi}}^2,\dot{{\xi}}^3,\dot{{\xi}}^4)\in \,T_{\xi} M^k_4$.

This viewpoint can be weakened by demanding the existence of a bijection between detections and spacetime events,
in the form of sheaf theory and related structures. However, we adopt in the present differential manifold theory version of the theory because of the additional advantages that differential manifold theory offers. Considering continuous instead than discrete models for the configuration space allows the theory of ordinary differential equations as models, providing existence and uniqueness of the solutions. However, continuous models should be considered only as an approximation to discrete models with very large number of degrees of freedom.

Even in the framework of smooth dynamical systems theory as we are considering, it is possible to consider more general
configuration spaces for the description of the dynamics of additional degrees of freedom o associated with the sub-quantum molecules. However, we restrict our attention to {\it spacetime configuration manifolds}. This attitude is based on the following argument, that can be found, for instance in Bohm's work \cite{Bohm, Bohm1980}. Physical properties such as spin and other quantum numbers are associated with the quantum description of elementary particles and quantum systems. However, it is remarkable that in quantum mechanics, measurements of observables are ultimately reduced to local coordinate positions and time measurements. Hence we adopt the point of view that all possible measurements of observables can be reduced to the detection and analysis of spacetime events and that the dynamical description of a physical system can be developed ultimately in terms of a spacetime description. Moreover, we will illustrate this point of view by showing that quantum particles with intrinsic spin are described by Hamilton-Randers dynamical systems.

\subsubsection{Diffeomorphism invariance as a consistence requirement}
Given the model manifold $M_4$ and the collection of four-manifolds $\{M^k_4\}^N_{k=1}$, there is a collection of diffeomorphisms
\begin{align}
\Upsilon:=\{\varphi_k:M^k_4\to M_4,\, k=\,1,...,N\}.
\label{definitionofthefamilyofdiffeormosphism}
\end{align}
This collection realizes the {\it Assumption} A.3.

Similarly as it happens in classical statistical mechanics, a macroscopic observer is not able to identify the detailed evolution of the sub-quantum molecules and because the selection of the diffeomorphisms $\varphi_k$ is not fixed by the theory, one can choose any other arbitrary family of diffeomorphisms,
\begin{align*}
\tilde{\Upsilon}:=\{\tilde{\varphi}_k:\tilde{M}^k_4\to M_4,\, k=\,1,...,N\}.
\end{align*}
$\tilde{\varphi}_k$ is related with $\varphi_k$ by a diffeomorphism, defining a family of global diffeomorphisms of $M_4$,
\begin{align}
Tran_k:=\{\varphi_{ktran}\equiv \,\tilde{\varphi}_k\circ {\varphi_k}^{-1}:M_4\to M_4,\,\,k=1,...,N\}.
\label{definicion de Tran}
\end{align}
Since all these diffeomorphims between four manifolds are arbitrary, it is clear that one should identify
\begin{align}
Tran_k \cong \Diff (M_4),\quad  k=1,...,N,
\end{align}
where $\Diff(M_4)$ is the group of global diffeomorphisms of $M_4$. Therefore, we have that
\begin{proposicion}
The dynamical models describing the sub-quantum dynamics are $\Diff (M_4)$-invariant.
\end{proposicion}
Thus diffeomorphism invariance of the model manifold $M_4$ appears in the theory as a consistent condition of the mathematical structure of the fundamental dynamical systems.

On the other hand, since each of the sub-quantum molecules are described by different objects (points in different manifolds $M^k_4$), the theory must be invariant under the action of a {\it gauge diffeomorphism group},
\begin{align*}
Tran(M):=\,\prod^N_{k=1}\times\,Tran_k =\,\prod^N_{k=1}\,\times \,\Diff (M^k_4) =\,\prod^N_{k=1}\,\times\,\Diff (M_4)
\end{align*}
Therefore, we have that
\begin{align}
Tran(M)=\prod^N_{k=1} \times\,\Diff(M_4).
\label{relation diff and trans}
\end{align}
Thus the diffeomorphism group $\Diff(M_4)$ emerges as a sub-group of $Tran (M)$.

\subsubsection{Measure structures}
  One way to achieve the invariance under the exchange of the sub-quantum molecules that physical observables is to define them as {\it averaged objects} in such a way that the {\it averaging operations} are compatible with diffeomorphism invariance of $T^*TM$, the complete phase space. Such invariance on the particular choice of the diffeomorphisms $\{\varphi_k\}$ can be achieved if the probability measure $\mu_P$ used in the definition of the averages  of functions defined on $TM$ is a product measure. Therefore, we assume that the measure is of the form
\begin{align}
\mu_P=\,\prod^N_{k=1}\,\times \mu_P(k),
\label{general form of the measure}
\end{align}
where each $\mu_P(k),\,k=1,...,N$ is a $\Diff(M^k_4)$-invariant probability measure defined on $TM^k_4$ and since $\Diff(M^k_4)\cong \Diff(M_4)$, the measure is invariant under the transformations induced on $T^*TM$ by the action of the group $\Diff(M_4)$ on $M_4$.

A particular way to construct a measure like this is discussed in the following paragraphs, where the measure is constructed associated with tensorial structures defined in $T^*TM$.
\subsubsection{Metric structures}
The four-manifold $M_4$ is endowed with a Lorentzian metric $\eta_4$ of signature $(1,-1,-1,-1)$. Moreover, for each $k\in \, \{1,...,N\}$ there is a Lorentzian metric $\eta_4(k)$ on $M^k_4$ and  we assume that each of the Lorentzian structures $(M^k_4, \eta_4(k))$
is conformally equivalent to \footnote{Note that if $(M^k_4, \eta_4(k))$ and $(M_4,\eta_4)$ are conformally equivalent, then there is a diffeomorphism $\psi_k: M^k_4\to M_4$ such that $\psi^*\,\eta_4=\,\lambda_k\,\eta_4 (k)$ with $\lambda_k:M^k_4\to \mathbb{R}^+.$ It is not necessary that $\psi_k = \varphi_k$, but nothing it is lost by choosing the family $\{\varphi_k\}$ in this way.} to the Lorentzian model $(M_4,\eta_4)$. The Levi-Civita connection of $\eta_4(k)$ determines a horizontal distribution in a canonical way (see for instance \cite{KolarMichorSlovak}, {\it chapter} 2). Given such standard distribution, there is defined a pseudo-Riemannian metric $\eta^*_S(k)$ on $TM^k_4$ (the Sasaki-type metric), which is the Sasaki-type lift of the metric $\eta^k_4$ on $M^k_4$ to $TM^k_4$ by the distribution associated with the Levi-Civita connection. Then there is defined a pseudo-Riemannian metric $\eta^*_S$ on $TM$, which is given by the relation
 \begin{align}
 \eta^*_S=\,\frac{1}{N}\,\sum^N_{k=1}\oplus\,\eta^*_{S}(k),\quad k=1,...,N.
 \end{align}
 This metric defines the causal structure on the configuration manifold $\mathcal{M}=TM$.

  The dual metric of $\eta^*_S(k)$ is the dual pseudo-Riemannian metric
  \begin{align}
  \eta_S(k)=\,(\eta^*_S(k))^* .
  \end{align}
The dual Sasaki-type metrics $\{\eta_S(k)\}^N_{k=1}$ allows to define the dual pseudo-Riemannian metric\footnote{Note that the use of $*$-notation for dual metrics and norms here is partially the converse respect to the usual notation in Riemannian geometry. For instance, $\eta^*_S$ is a metric on $TM_4$, while $\eta_S$ is a metric in $T^*M_4$.}
\begin{align}
\eta=\,\frac{1}{N}\,\sum^N_{k=1}\oplus\, \eta_S(k).
\label{structureeta}
 \end{align}
which acts on the fibers of the co-tangent bundle $\pi:T^*TM \to TM$.
 \bigskip
 \\
{\bf Associated measures}.
 The construction of an invariant measure $\mu_P$ in $T^*TM$ and other associated measures is presented in the following lines.  The Lorentzian metric ${\eta}_4$ allows to define a $\Diff(M_4)$-invariant volume form $dvol_{{\eta}_4}$ on $M_4$ in a canonical way as the volume form
\begin{align*}
dvol_{{\eta}_4}=\,\sqrt{-\det \eta_4}\,dx^1\,\wedge dx^2\,\wedge dx^3 \,\wedge dx^4.
\end{align*}
This is the usual Lorentzian volume form of the spacetime $(M_4,\eta_4)$. The pull-back by $\varphi_k$ of this form is denoted by $dvol_{{\eta}^k_4}$. There is also a $\Diff(M_4)$-invariant vertical form $dvol_k(y_k)$ on each fiber $\pi^{-1}_k(x(k))$ of $TM^k_4$,
\begin{align}
d^4z_k=\,\sqrt{-\det \eta_4}\,\delta y^1_k \wedge\, \delta y^2_k \wedge\, \delta y^3_k\wedge\, \delta y^4_k,
\label{fibermeasure}
\end{align}
where $\delta y^\mu_k=\,dy^\mu_k -N^\mu_{k\rho}\,dx^\rho_k$, a covariant construction that makes use of the non-linear connection $N\mu_\rho$ as in Finsler geometry \cite{BaoChernShen}. Considering the pull
 Then we define the measure $\tilde{\mu}_P$ in $TM$ as
\begin{align}
\tilde{\mu}_P=\,\prod^N_{k=1}\,d^4z_k\wedge\,dvol_{\eta^k_4},
\label{medidamuinvariante0}
\end{align}
which is the product measure (exterior product) of $N$ number of $8$-forms $\sqrt{-\det \eta^k_4}\,d^4z_k\wedge\,dvol_{\eta^k_4}$.
The measure in $T^*TM$ is then
\begin{align}
\mu=\,\prod^N_{k=1}\,proj^*_k(\tilde{\mu})=\,\prod^N_{k=1}\,\left(d^4z_k\wedge \,dvol_{\eta_4}\right)^*,
\label{medidainvariante1}
\end{align}
where
\begin{align*}
\left(d^4z_k\wedge \,dvol_{\eta_4}\right)^*=\,proj^*_k(\tilde{\mu})
\end{align*}
 is the pull-back volume form  on $T^*TM_4$ of the $8$-form
 \begin{align*}
\tilde{\mu}=\,\sqrt{-\det \eta_4}\,dx^1\,\wedge dx^2\,\wedge dx^3 \,\wedge dx^4\wedge\left(\sqrt{-\det \eta_4}\,\delta y^1_k \wedge\, \delta y^2_k \wedge\, \delta y^3_k\wedge\, \delta y^4_k\right).
\end{align*}
The following result follows,
\begin{proposicion}
The measure \eqref{medidainvariante1} is invariant under diffeomorphisms of $T^*TM^k_4$ induced by diffeomorphisms of $M_4$.
\end{proposicion}

The measure \eqref{medidainvariante1} can be pulled-back to submanifolds of $TM$. In particular, it can be concentrated along the world line curves of the sub-quantum degrees of freedom. In this case, a product of delta functions with support on each $k$-esim world lines on $M^k_4$ is inserted in the measure. In this case, the probability measure is a product measure of the form
\begin{align}
\tilde{\mu}_P=\,\prod^N_{k=1}\,\delta(x_k-\xi_k)\,d^4z_k\wedge \,dvol_{\eta_4},
\label{medidainvariante2}
\end{align}
where $\xi_k$ is the coordinates of the $k$-essim sub-quantum  molecule.
 \subsection{Deterministic dynamics for the fundamental degrees of freedom}
 Let us consider  a local coordinate system $\{(x^\mu_k,y^\mu_k)\}^{4,N}_{\mu=1,k=1}$ on the tangent space $TM$. The dynamical systems that we shall consider for the {\it sub-quantum molecule degrees of freedom} are systems of ordinary first order differential equations of the form
 \begin{align}
 \begin{cases}
& \frac{dx^\mu_k}{dt}=\,\gamma^\mu_{kx}(x,y,t),\\
 &  \frac{dy^\mu_k}{dt}=\,\gamma^\mu_{ky}(x,y,t),\quad\quad \mu=1,2,3,4;\,\,k=1,...,N.
 \end{cases}
 \label{dynamicalsystem}
 \end{align}
 The functions $\gamma^\mu_{kx}:TM\times \,\mathbb{R}\to \mathbb{R}$ and $\gamma^\mu_{ky}:TM\times \,\mathbb{R}\to \mathbb{R}$ are assumed regular enough such that, in order to determine locally the solutions, it is necessary and sufficient to know the initial conditions $\{x^\mu_k(0),y^\mu_k(0)\}^{N, 4}_{k=1,\mu=1}$. For each value of $k=1,...,N$, these coordinates represent the collective system of one {\it sub-quantum atom} evolving on the positive direction of $t$-time and a {\it sub-quantum atom} evolving in the negative direction of $t$-time.
 Under the required assumptions on the smoothness of the functions $\gamma^\mu_{kx}$ and $\gamma^\mu_{ky}$ the system of equations \eqref{dynamicalsystem} constitute determine a local flow on $TM$, as consequence of the theorems of existence and uniqueness of ordinary differential equations \cite{Chicone}. In particular, it is required that at least $\gamma^\mu_{ky}$ are continuous and Lipshitz functions and that $\gamma^\mu_{kx}$ are $\mathcal{C}^1$.

 It is not necessary to identify $x_k$ with $x_{k'}$, even if the sub-quantum molecules are identical, according to {\it  Assumption 3}.

 The system of ordinary differential equations \eqref{dynamicalsystem} is a $8N$-dimensional coupled system of implicit ordinary differential equations whose solutions $\xi:I\to TM$ determine the fundamental dynamics of the {\it sub-quantum molecules}. However,
 the equations of motion \eqref{dynamicalsystem} are equivalent to a system of $4N$ second order differential equations of semi-spray type \cite{MironHrimiucShimadaSabau:2002}.  This is because we impose the {\it on-shell conditions}
 \begin{align}
 y^\mu_k =\,\frac{dx^\mu_k}{dt}\quad\quad \mu=1,2,3,4;\,\,k=1,...,N.
 \label{onshell condition on the dynamical system}
 \end{align}
 for each sub-quantum molecule degree of freedom, leading to the system of second order differential equations
 \begin{align}
 \frac{d^2{x}^\mu_k}{dt^2}=\,\gamma^\mu_{ky}(x,\dot{x},t),\quad\quad \mu=1,2,3,4;\,\,k=1,...,N
 \label{dynamicalsystem2}
 \end{align}
 with initial conditions $\{x^\mu_k(0),\dot{x}^\mu_k(0)\}^{N, 4}_{k=1,\mu=1}$ and with the consistence constrain of the form
 \begin{align}
 \gamma^\mu_{ky}(x,y,t)\,=\,\frac{d \gamma^\mu_{kx}}{dt},\quad\quad \mu=1,2,3,4;\,\,k=1,...,N
 \label{onshell condition on the dynamical system 2}
 \end{align}
Gradually, we will restrict this general class of systems, according to the requirements of the fundamental dynamics.
\subsection{General covariance of the fundamental dynamics}
 The equations \eqref{dynamicalsystem} subjected to the constraints \eqref{onshell condition on the dynamical system} and \eqref{onshell condition on the dynamical system 2} are written in local coordinates. In order to be consistent with local coordinates transformations of $M$ induced by changes in the local coordinates of $M_4$, the functions
 \begin{align*}
 \gamma^\mu_{kx},\,\gamma^\mu_{ky}:TM^k_4\to \mathbb{R}
 \end{align*}
  must transform in an specific way. In particular, we assume that the constraints \eqref{onshell condition on the dynamical system} and \eqref{onshell condition on the dynamical system 2} and the dynamical equations \eqref{dynamicalsystem} are formally the same in all the induced local coordinate systems on $TM$,
  \begin{align*}
\begin{cases}
 & \frac{d\tilde{x}^\mu_k}{dt}=\,\gamma^\mu_{kx}(\tilde{x},\dot{\tilde{x}},t),\quad  \frac{d\dot{\tilde{x}}^\mu_k}{dt}=\,\gamma^\mu_{ky}(\tilde{x},\dot{\tilde{x}},t),\\
 & {\tilde{y}}^\mu_k =\,\frac{d\tilde{x}^\mu_k}{dt}\quad\quad \mu=1,2,3,4;\,\,k=1,...,N.
\end{cases}
 \end{align*}
 Therefore, we have a consistency relation:
 \begin{align*}
 \tilde{\gamma}^\mu_{ky} =\,\frac{d \tilde{\gamma}^\mu_{kx}}{d t},\quad\quad \mu=1,2,3,4;\,\,k=1,...,N.
 \end{align*}
It follows from the above that the on-shell condition \eqref{onshell condition on the dynamical system 2} is general covariant.

We consider local coordinate transformations of the form
\begin{align}
(x^\mu_k,\dot{x}^\mu_k)\mapsto \left(\tilde{x}^\mu_k(x_k),\frac{\partial \tilde{x}^\mu_k}{\partial x^\rho_k}\,x^\rho_k\right),
\label{local coordinate transformation}
\end{align}
where sums over the indexes $k$ and $\rho$ are assumed. In order to that the dynamical system \eqref{dynamicalsystem} is covariant with respect to the transformations \eqref{local coordinate transformation}, it is enough that the functions $ \gamma^\mu_{x k},\,\gamma^\mu_{y k}$ transform by the rules
 \begin{align}
\gamma^\mu_{x k}\mapsto \,\tilde{\gamma}^\rho_{kx} =\,\frac{\partial \tilde{x}^\mu_{k}}{\partial x^\rho_{k}} \gamma^\rho_{kx}
\label{local change of coordinates of the dynamical system functions 1}
\end{align}
and for the vertical part,
\begin{align}
\nonumber {\gamma}^\mu_{ky}\mapsto \, \tilde{\gamma}^\mu_{ky} &=\,\frac{d\tilde{\gamma}^\mu_{kx}}{d t}=\,\frac{d}{d t}\,\left(\frac{\partial \tilde{x}^\mu_k}{\partial x^\rho_k}\,\gamma^\rho_{kx}\right)\\
\nonumber & =\,\frac{\partial \tilde{x}^\mu_k}{\partial x^\rho_k}\,\frac{d \gamma^\rho_{kx}}{d t}+\,\frac{\partial^2 \,\tilde{x}^\mu_k}{\partial x^\rho_k\,\partial x^\sigma_k}\,\frac{d x^\rho_k}{d t}\,\gamma^\rho_{kx}\\
& =\,\frac{\partial \tilde{x}^\mu_k}{\partial x^\rho_k}\,\gamma^\rho_{ky}+\,\frac{\partial^2 \,\tilde{x}^\mu_k}{\partial x^\rho_k\,\partial x^\sigma_k}\,\gamma^\sigma_{kx}\,\gamma^\rho_{kx}.
\label{local change of coordinates of the dynamical system functions 2}
\end{align}
If we assume these transformation rules \eqref{local change of coordinates of the dynamical system functions 1}-\eqref{local change of coordinates of the dynamical system functions 2},
the above considerations prove the following
\begin{proposicion}
The dynamical system of equations \eqref{dynamicalsystem} subjected to the constrains \eqref{onshell condition on the dynamical system} and is general covariant under the transformations \eqref{local change of coordinates of the dynamical system functions 1} and \eqref{local change of coordinates of the dynamical system functions 2}.
\end{proposicion}
\subsubsection{$t$-time re-parameterization invariance of the fundamental dynamics}
We pass now to analyze the conditions under which the dynamical system -\eqref{dynamicalsystem}-\eqref{dynamicalsystem2} together with the on-shell condition \eqref{onshell condition on the dynamical system} is $t$-time re-parametrization invariant. This invariance is required for a consistent theory, since the $t$-time parameters can be arbitrary. The equations of evolution are
\begin{align*}
\begin{cases}
&\frac{d x^\mu_k}{dt}=\,\gamma^{\mu}_{xk}(x,y,t)\\
& \frac{d y^\mu_k}{dt}=\,\gamma^{\mu}_{yk}(x,y,t)\\
& y^\mu_{k}=\,\frac{d x^\mu_k}{dt},\quad \mu=1,2,3,4,\,k=1,2,3,...,N .
\end{cases}
\end{align*}
This is equivalent to the system
\begin{align*}
\begin{cases}
&\frac{d x^\mu_k}{dt}=\,\gamma^{\mu}_{xk}(x,\frac{dx^\mu}{dt},t)\\
& \frac{d y^\mu_k}{dt}=\,\gamma^{\mu}_{yk}(x,\frac{d y}{dt},t),\quad \mu=1,2,3,4,\,k=1,2,3,...,N .
\end{cases}
\end{align*}
Under a monotone non-decreasing, $\mathcal{C}^1$ diffeomorphic change of the parameter $t\hat{t}$, the equations change to
 \begin{align*}
\begin{cases}
&\frac{d x^\mu_k}{d\hat{t}}\,\frac{d\hat{t}}{dt}=\,\hat{\gamma}^{\mu}_{xk}(x,y,t(\hat{t}))\\
& \frac{d y^\mu_k}{d\hat{t}}\,\frac{d\hat{t}}{dt}=\,\hat{\gamma}^{\mu}_{yk}(x,y,t(\hat{t}))\\
& y^\mu_{k}=\,\frac{d x^\mu_k}{d\hat{t}}\,\frac{d\hat{t}}{dt},\quad \mu=1,2,3,4,\,k=1,2,3,...,N .
\end{cases}
\end{align*}
Similarly, this system of ordinary differential equations is equivalent to
 \begin{align*}
\begin{cases}
&\frac{d x^\mu_k}{d\hat{t}}\,\frac{d\hat{t}}{dt}=\,\hat{\gamma}^{\mu}_{xk}(x,\frac{d x^\mu_k}{d\hat{t}}\,\frac{d\hat{t}}{dt},t(\hat{t}))\\
& \frac{d y^\mu_k}{d\hat{t}}\,\frac{d\hat{t}}{dt}=\,\hat{\gamma}^{\mu}_{yk}(x,\frac{d x^\mu_k}{d\hat{t}}\,\frac{d\hat{t}}{dt},t(\hat{t})),\quad \mu=1,2,3,4,\,k=1,2,3,...,N .
\end{cases}
\end{align*}
If the system of differential equaions needs to be invariant under $t$-reparametrizations, then it should be of the form
\begin{align*}
\begin{cases}
&\frac{d x^\mu_k}{d\hat{t}}=\,\gamma^{\mu}_{xk}(x,\frac{dx^\mu}{d\hat{t}},\hat{t})\\
& \frac{d y^\mu_k}{d\hat{t}}=\,\gamma^{\mu}_{yk}(x,\frac{d y}{d\hat{t}},\hat{t}),\quad \mu=1,2,3,4,\,k=1,2,3,...,N .
\end{cases}
\end{align*}
This is only possible if the following conditions hold good:
\begin{align}
\begin{cases}
& 1. \gamma^{\mu}_{xk}=\,\hat{\gamma}^{\mu}_{xk},\quad \gamma^{\mu}_{yk}=\,\hat{\gamma}^{\mu}_{yk},\\
& 2. \gamma^{\mu}_{xk}(x,\lambda y,t)=\,\lambda \, \gamma^{\mu}_{xk}(x,y,t),\quad \gamma^{\mu}_{yk}(x,\lambda y,t)=\,\lambda \, \gamma^{\mu}_{yk}(x,y,t)\,\quad \forall \lambda >0,\\
& 3. \gamma^{\mu}_{xk}(x,y,t)=\gamma^{\mu}_{xk}(x,y),\quad \gamma^{\mu}_{yk}(x,y,t)=\gamma^{\mu}_{yk}(x,y),
\end{cases}
\label{conditions of spray}
\end{align}
for $\mu=1,2,3,4,\,k=1,2,3,...,N .$

The differential equations \eqref{dynamicalsystem}-\eqref{dynamicalsystem2} together with the constrain \eqref{onshell condition on the dynamical system} such that the conditions \eqref{conditions of spray} holds good is a {\it spray} on $TTM$.

Because we impose the condition of $t$-time re-parameterization, from now on we will consider only sprays on $TTM$ as the models for Hamilton-Randers dynamical systems. We adopt the following notion of spray, which is an adaptation of the notion of spray \cite{MironHrimiucShimadaSabau:2002} our setting,
\begin{proposicion}
A spray $\gamma$ of $TTM$ is a vector field $\gamma\in \Gamma TTM$ such that the conditions
\begin{align*}
\begin{cases}
& 1. \gamma^{\mu}_{xk}=\,\hat{\gamma}^{\mu}_{xk},\quad \gamma^{\mu}_{yk}=\,\hat{\gamma}^{\mu}_{yk},\\
& 2. \gamma^{\mu}_{xk}(x,\lambda y,t)=\,\lambda \, \gamma^{\mu}_{xk}(x,y,t),\quad \gamma^{\mu}_{yk}(x,\lambda y,t)=\,\lambda \, \gamma^{\mu}_{yk}(x,y,t)\,\quad \forall \lambda >0,\\
& 3. \gamma^{\mu}_{xk}(x,y,t)=\gamma^{\mu}_{xk}(x,y),\quad \gamma^{\mu}_{yk}(x,y,t)=\gamma^{\mu}_{yk}(x,y),
\end{cases}
\end{align*}
holds good.
\label{proposicion sobre spray en TTM}
\end{proposicion}

\subsection{Notion of Hamilton-Randers space}
There are two mathematical formalisms that can be applied to describe dynamical systems of the type \eqref{dynamicalsystem}. The first of them is the geometric theory of Hamilton-Randers spaces (developed in this {\it chapter}). The second is the quantized theory of Hamilton-Randers spaces as an example of Koopman-von Neumann theory of dynamical systems (developed in {\it chapter} \ref{chapter on Koopman-von Neumann formulation}.
Based upon these formulations, we will develop in  {\it chapter} \ref{chapter on classical dynamics Hamilton Randers}, {\it chapter} \ref{chapter on Koopman-von Neumann formulation}, {\it chapter} \ref{chapter of the Hilbert space structure} and {\it chapter} \ref{chapter on concentration of measure} a theory from where quantum mechanics follows as an effective and emergent description of physical systems. That is, as an effective description of the dynamical and structural properties of the dynamical systems \eqref{dynamicalsystem} subjected to the constrains \eqref{onshell condition on the dynamical system} and \eqref{onshell condition on the dynamical system 2}.

The first mathematical formalism for the dynamical systems \eqref{dynamicalsystem} that we consider is a geometric formalism, based on the formal combination of two different notions extracted from differential geometry, the notion of {\it generalized Hamilton space} and the notion of {\it Randers space}. In this {\it section}, and integrate these two notions in the notion of {\it Hamilton-Randers space}, as a first step to build the notion of Hamilton-Randers dynamical system, is developed.

 The notion of {\it generalized Hamilton space} is almost literally the same than the standard notion as it appears in differential geometry \cite{MironHrimiucShimadaSabau:2002}. The notion of Randers space as it appears in standard treatments \cite{Randers,BaoChernShen} is considered here in a more formal way. The relevance of these geometric notions in the contest of deterministic dynamical systems is due to the formal properties of the corresponding kinematics, namely, the existence of a causal structure, the existence of a {\it maximal acceleration} and a formal time irreversible fundamental evolution \cite{Randers}, properties which are of fundamental relevance to prove  important consequences and properties of Hamilton-Randers dynamical systems.

\subsubsection{Notion of generalized Hamilton space}
 Let $\widetilde{M}$ be a smooth manifold, $\mathcal{D}\subset\,T^*\widetilde{M}$ a connected, fibered, open sub-manifold of the co-tangent manifold $T^*\widetilde{M}$, where each of the fibers $\pi^{-1}(u)$ is a subset of the corresponding fiber $T^*_u\widetilde{M}$. Let $\bar{\mathcal{D}}$ be the topological closure of $\mathcal{D}$ respect to the manifold topology of $T^*\widetilde{M}$. For each local coordinate system $(\widetilde{U}, u^i)$ of $\widetilde{M}$, there is  local natural induced coordinates $\left\{\left(T^*\widetilde{U},(u^i,\theta_j)\right),\,i,j=1,...,dim(\widetilde{M})\right\}$ on $T^*\widetilde{M}$. The choice of the coordinates $\theta_i$ are such that under local coordinates transformations on the manifold $\widetilde{M}$, the coordinates $\{\theta_j, \,j=1,...,\dim (\widetilde{M})\}$ transform tensorially as components of a local $1$-form $\theta\in \Gamma T^*_U\widetilde{M}$. Such coordinates are not {\it vertical coordinates}, that is, corresponding to a set of complete commuting vertical vectors. Later we shall introduce another set of holonomic and vertical, local coordinates ($p$-coordinates) on each fiber over $u\in \,\widetilde{M}$ on the relevant co-tangent spaces describing our dynamical systems that, although do not transform tensorially under local coordinate changes $\widetilde{{M}}$, they have a relevant physical interpretation.

 \begin{definicion}\label{generalized Hamilton space}
 A generalized Hamilton space is a triplet $(\widetilde{M},F,\mathcal{D})$  such that the function
 \begin{align*}
 F:\bar{\mathcal{D}}\to \mathbb{R}^+\cup\{0\}
 \end{align*}
has the following properties:
 \begin{itemize}
 \item It is positive homogeneous of degree one on each of the fiber coordinates on $\mathcal{D}\subset T^*M$,

 \item It is smooth on the open submanifold $\mathcal{D}\hookrightarrow T^*\widetilde{M}$,

 \item The partial Hessian of the function $F^2$,
\begin{align}
g^{ij}(u,\theta)=\,\frac{1}{2}\,\frac{\partial^2 F^2(u,\theta)}{\partial \theta_i\partial \theta_j}
\label{fundamentaltensor}
\end{align}
is a pointwise non-degenerate matrix
 \end{itemize}

The function $F$ is the {\it generalized Hamiltonian function}.

The {\it fundamental tensor} $g^{ij}$ is pointwise defined by the matrix \eqref{fundamentaltensor}.
 \end{definicion}
The non-degeneracy of the fundamental tensor is an analytical requirement for the construction of associated connections and also, for the existence of geodesics as local extremals of an action or energy functional \cite{Whitehead1932}.

As a consequence of the homogeneity property of $F:\bar{\mathcal{D}}\to \mathbb{R}^+$, the following properties hold good:
\begin{align*}
\begin{cases}
& F(u,\theta)=\,\sum^N_{i=1}\, \frac{\partial F}{\partial \theta_i} \,\theta_i ,\\
& F^2(u,\theta)=\,\sum^N_{i,j=1}\,g^{ij}(u,\theta)\,\theta_i\,\theta_j.
\end{cases}
\end{align*}

 The generalized Hamiltonian function $F$ must be globally defined on $\bar{\mathcal{D}}$, but not necessarily in the whole $T^*TM$. This possibility allows for the existence of kinematic singularities.

 $\bar{\mathcal{D}}$ has associated a causal structure as discussed in {\it chapter} \ref{chapter on Assumptions and General Theory}.
 It is direct that the topological closure of $\mathcal{D}$ is of the form
 \begin{align*}
 \bar{\mathcal{D}}=\{(u,\theta)\in\,T^*TM\,s.t.\,F(u,\theta)\geq \,0\}.
  \end{align*}
  and that the boundary is
  \begin{align*}
  \partial{\mathcal{D}}:=\{(u,\theta)\in\,\,T^*TM\,s.t. \,\,F(u,\theta)=0\}.
  \end{align*}
  It is also direct that $\bar{\mathcal{D}}$ and $\partial \mathcal{D}$ have fibered structures,
  \begin{align*}
  \bar{\mathcal{D}}=\,\bigsqcup_{u\in TM}\,\bar{\mathcal{D}}_u,\quad \partial{\mathcal{D}}=\,\bigsqcup_{u\in TM}\,\partial{\mathcal{D}}_u .
  \end{align*}
 The {\it causal arc-wise connection domains} are defined as
\begin{align*}
\mathcal{C}_{(u,\theta)}:=\{\xi\in\,\bar{\mathcal{D}} \,s.t.\, \exists \,\gamma:[0,1]\to \bar{\mathcal{D}},\, s.t. \gamma(0)=(u,\theta),\,\gamma(1)=\,\xi\}.
\end{align*}
Note that the curve $\gamma:[0,1]\to \bar{\mathcal{D}}$ is {\it causal}, since $F(proj(\gamma),\theta(\gamma))\geq 0$.

We can compare our notion of generalized Hamilton space with the corresponding notion developed in reference \cite{MironHrimiucShimadaSabau:2002}. In our definition it is required the homogeneity property respect to the $\theta$-coordinates and that the fundamental tensor is non-degenerate. With these requirements, the Hamilton equations associated with the function $F$ coincide with the auto-parallel curves of a non-linear connection associated with $F$, providing a natural geometric interpretation for the Hamiltonian dynamics.

The relation of the dynamical systems \eqref{dynamicalsystem}-\eqref{onshell condition on the dynamical system} with generalized Hamiltonian spaces strives on the identification of  the solutions of the system \eqref{dynamicalsystem} subjected to the on-shell conditions \eqref{onshell condition on the dynamical system}-\eqref{onshell condition on the dynamical system 2} with pairs of geodesics of a certain class of generalized Hamiltonian spaces. The interpretation is the following. For each sub-quantum molecule there is a pair of sub-quantum atoms. Each of the sub-quantum atoms evolves following a geodesic of a generalized Hamilton structure, but where one of the sub-quantum atoms evolves towards the positive direction of increasing the time parameter $t$, while the other evolves in the decreasing direction of the parameter $t$. The total Hamiltonian is the symmetrized version of the pair of Hamiltonian functions associated with sub-quantum atoms. Contrary with what happens in a time-reversible dynamics, in a non-reversible case, the symmetrized dynamics is not trivial, posing a fundamental non-reversible character of the dynamics.

\subsubsection{Notion of pseudo-Randers space} Non-reversibility can be introduced by perturbing a non-reversible dynamics. This is exactly the fundamental ingredient of the concept of Randers spaces as first discussed in Rander's work \cite{Randers}. We introduce a generalization of the notion of Randers space in the following paragraphs.

We start with the standard notion of Randers space. Let $\alpha^*(u,z)$ be the Riemannian norm of $z\in \,T_u\widetilde{M}$ determined by a Riemannian metric $\eta^*$, while $\beta^*(u,z)$ is the result of the action $1$-form $\beta^*\in \,\Gamma T^*\widetilde{M}$ on  $z$. Then
\begin{definicion}
In the category of Finsler spaces with Euclidean signature, a Randers structure defined on the manifold $\widetilde{M}$ is a Finsler structure with Finsler function is of the form
\begin{align*}
F^*:T\widetilde{M}\to \mathbb{R},\quad  (u,z)\mapsto \alpha^*(u,z)+\beta^*(u,z)
\end{align*}
and such that the condition
 \begin{align}
 \alpha^*(\beta^*,\beta^*)<1
 \label{randerscondition}
 \end{align}
is satisfied.
 \label{Randers space}
 \end{definicion}
 The condition \eqref{randerscondition} implies the non-degeneracy  and the positiveness of the associated fundamental tensor \eqref{fundamentaltensor}. The proofs of these properties are sketched for instance in \cite{BaoChernShen}.

 We now consider the analogous of a Randers structure in the category of generalized Hamiltonian spaces on a tangent space $TM$ whose fundamental tensors \eqref{fundamentaltensor} are non-degenerate and have indefinite signature. In this case the domain of definition of the Hamiltonian function $F$ should be restricted, since it is not possible to have a well defined Hamilton-Randers function on the whole cotangent space $T^*TM$. This is because $\eta$ is a pseudo-Riemannian metric and it can take negative values on certain regions of $T^*_uTM$, in which case the function $\alpha(u,\theta)$ is purely imaginary and cannot be the value of a reasonable Hamiltonian function. This argument motivates to consider the collection $\mathcal{D}_{Tu}$ of {\it time-like momenta} over $u\in TM$, which is defined by the set of co-vectors $\theta\in \,T^*_u TM$ such that
\begin{align}
\alpha(u,\theta)=\,\sum^{8N}_{i,j=1}\,\eta^{ij}(u)\,\theta_i \,\theta_j\,>0.
\label{definitionofthecone}
\end{align}

The domain of a Hamilton-Randers function is the topological closure of the open submanifold $\mathcal{D}_T$ of  time-like  momenta. This type of domain is indeed a cone: if $\theta\in \,\mathcal{D}_{Tu}$, then $\lambda\,\theta\in\mathcal{D}_{Tu}$ for $\lambda\in\,\mathbb{R}^+$. Also, $\mathcal{D}_{Tu}$  is the pre-image  of an open set $(0,+\infty)$ by the Randers type function $F(u,\theta)$, which is continuous function on the arguments. Therefore, $\mathcal{D}_{Tu}$ is an open sub-manifold of $T^*_uTM$.

The notion of pseudo-Randers space is formulated in terms of well defined geometric objects, namely, the vector field $\beta\in\,\Gamma TTM$ and the pseudo-Riemannian norm $\alpha$. Because of this reason, a metric of pseudo- Randers type is denoted by the pair $(\alpha,\beta)$.
\subsubsection{Notion of Hamilton-Randers space}
Let $\mathcal{D}\subset\,T^*TM$ be a connected, fibered, open sub-manifold of $T^*TM$, where the fibers $\pi^{-1}(u)$ are subsets of the corresponding fiber $T^*_uTM$.
Let $\beta \in \Gamma\,TTM$ be a vector field on $TM$ such that the dual condition to the Randers condition \eqref{randerscondition}, namely, the condition
\begin{align}
|\eta^*(\beta,\beta)|<\,1, \quad  \beta\in \,\Gamma TTM
\label{boundenesscondition}
\end{align}
holds good.
\begin{definicion}
A Hamilton-Randers space is a generalized Hamilton space whose  Hamiltonian function is of the form
\begin{align}
F:\mathcal{D}\to \mathbb{R}^+\cup \{0\},\quad
(u,\theta)\mapsto\, F (u,\theta)=\,\alpha(u,\theta)+\beta(u,\theta).
\label{corandersmetric}
\end{align}
with  $\alpha=\,\sqrt{\eta^{ij}(u)\theta_i\theta_j}$ real on $\mathcal{D}\,\subset T^*TM$ and where
\begin{align*}
\beta(u,\theta)=\,\sum^4_{\mu =1}\sum^{N}_{k=1}\,{\beta}^\mu_k(u) \theta_{\mu k},
\end{align*}
 such that the condition \eqref{boundenesscondition} holds good.
\label{DefinicionHR}
\end{definicion}
A Hamilton-Randers space is characterized by a triplet $(TM,F,\mathcal{D})$.
The space of Hamilton-Randers structures on a given tangent space $TM$ will be denoted by $\mathcal{F}_{HR}(TM)$. The function $F^2$ is the Hamiltonian function of a sub-quantum atom.

The motivation to introduce the notion of Hamilton-Randers space as above is based on the fact that the dynamical systems \eqref{dynamicalsystem} are directly related with a symmetrization of the Hamilton-Randers function \eqref{corandersmetric}. In fact, we have that
in Hamilton-Randers theory, the sub-quantum atoms follow auto-parallel equations of the non-linear connection, that coincide with Hamilton equations of $F$, but the sub-quantum molecules does not follow geodesics of the Hamiltonian $F$. We will introduce the Hamiltonian function for sub-quantum molecules as a result of a  {\it time symmetrization operation} acting on the Hamiltonian $F$. Indeed, the dynamics of the sub-quantum molecules heritages several  characteristics from the dynamics of sub-quantum atoms. In particular, we shall argue the existence of an uniform upper bound for the proper acceleration and the existence of an induced non-trivial causal structure for both sub-quantum atoms and molecules. Furthermore, it is also clear that in the geometric setting of Hamilton-Randers systems, the covariant manifestation of upper bound for the speed of propagation of ontological degrees of freedom is consistent with an upper bound for the proper acceleration, by the argument discussed at the end of chapter \ref{chapter on Assumptions and General Theory}. This requires the revision of the foundations of spacetime geometry \cite{Ricardo2014}.

\subsection{The geometric flow $\mathcal{U}_t$}
 Let us consider {\it Assumption A.5} of {\it chapter} \ref{chapter on Assumptions and General Theory} in the framework of Hamilton-Randers structures on $TM$, that we denote by $\mathcal{F}_{HR}(TM)$. The geometric structures already incorporated in a Hamilton-Randers model, when understood  as in definition \ref{DefinicionHR}, serve as motivation to postulate the existence of a geometric flow $\mathcal{U}_t$ in the category of Hamilton-Randers spaces
\begin{align*}
\mathcal{U}_t:\mathcal{F}_{HR}(TM)\to \mathcal{F}_{HR}(TM'),
\end{align*}
where $M$ and $M'$ does not need to coincide.
The flow $\mathcal{U}_t$ is assumed to be geometric, that is, given in terms of equations involving the geometric structures defined on tangent spaces.
 Such a flow induces another geometric flow on the topological closure $\bar{\mathcal{D}}_T$. Note that under the flow $U_t$, the manifold $\bar{\mathcal{D}}_T$ can also be transformed.

 In order to accommodate the characteristics of the hypothetical flow $\mathcal{U}_t$ to the characteristics of the dynamics of the sub-quantum degrees of freedom required by the assumptions of {\it chapter} \ref{chapter on Assumptions and General Theory}, and specifically to Hamilton-Randers dynamical spaces, it is assumed that $\mathcal{U}_t$ has the following  general properties:
\begin{enumerate}
\item The Hamilton-Randers structure $(M_t,\alpha(t),\beta(t))$ evolves continuously with the parameter $t$ under the flow $\mathcal{U}_t$.

\item Although $M$ and $M'$ can be different, we assume that $\dim (M) =\,\dim (M')$.

\item There are $t$-time parameters such that under the $\mathcal{U}_t$ evolution the left-limit conditions
\begin{align}
\lim_{t^-\to 2n T} \mathcal{U}_t(\mathcal{D})\subset \,\Sigma_{2n T},\,n\in \mathbb{Z}
\end{align}
holds good and such that the sub-manifolds $\{\Sigma_{2nT}\}$ are small in the sense that
\begin{align}
\mu_P(\mathcal{D})>>\,\mu_P(\Sigma_{2nT}).
\label{condition of contraction}
\end{align}
\end{enumerate}
The condition of {\it small manifold} \eqref{condition of contraction} is problematic when applied to non-compact manifolds, for instance in the case when $\mathcal{E}_u$ are hyperboloids. A natural solution of this issue is to use a measure applicable even for non-compact subsects, for instance, by introducing a weigh factor in the volume form.

The above properties suggest the introduction of the domain ${\bf D}_0\subset \,T^*TM$ containing the points corresponding to the evolution at the instants
$\{t=\,2 n\,T,\,n\in\mathbb{Z}\}.$
The open domain ${\bf D}_0$ will be called the {\it metastable equilibrium domain}.
When $2nT<\,t<\,2nT+\delta< (n+1)\,T$ with $\delta$ positive but sufficiently large, the contraction condition \eqref{condition of contraction} does not hold, otherwise, the $U_t$ flow will continuously contracting the allowable phase space of the system or finish the evolution in an physically un-motivated very small phase space manifold $\Sigma_{2n_0 T}$. After the system leaves the neighborhood of ${\bf D}_0$ for each $n\in \, \mathbb{Z}$ during the $\mathcal{U}_t$ flow, it is hypothesized that the $U_t$ evolution starts an expansion process in $T^*TM$ of the allowable space until the dynamical system reaches a non-expanding phase. After the systems has fill the space, then a contrative regime starts and the systems falls again the ${\bf D}_0$ domain.

In the view of these considerations the original {\it Assumption A.5} on the dynamics of the fundamental degrees of freedom can be substituted by the following one:
 \bigskip

 {\bf Fundamental assumption on the structure of the fundamental evolution}. The $\mathcal{U}_t$ flow of a Hamilton-Randers system is composed by fundamental cycles. Each cycle is composed by an {\it ergodic regime} followed by a {\it concentrating regime} followed by an {\it expanding regime}. The semi-period $T$ of each physical system depends on the particularities of the physical system. For stationary systems, there is a choice of the $t$-parameter such that the period $2T$ does not depend upon the cycle.
\bigskip

To specify the $\mathcal{U}_t$ dynamics with the required properties is an open problem in our approach. It is linked with the dynamics of the form \eqref{dynamicalsystem} compatible with the cyclic properties above. One possible flow is described by the following construction,
\begin{definicion}
The $U_t$ dynamics in the interval $[0,2T]\subset\,\mathbb{R}$ is a geometric flow of the form
\begin{align}
\begin{cases}
& \mathcal{U}_t: \mathcal{F}_{HR}(TM)  \to  \mathcal{F}_{HR}(TM),\\
& F\mapsto F_t =\sqrt{\kappa(u,\theta,t,\tau)\,\eta(\tau)+(1-\kappa(u,\theta,t,\tau))\,g(\tau)},\,
\label{evolutionofgeometricstructures}
\end{cases}
\end{align}
such that the function
$\kappa: \mathcal{F}_{HR}(TM) \times [0,2T]\to [0,1]$
 satisfies the boundary conditions
\begin{align}
\, \lim_{t\to\,0^+} \kappa(u,\theta,t,\tau)=0,\quad \lim_{t\to 2T^-} (1-\kappa(u,\theta,t,\tau))=0.
\label{limiteofk}
\end{align}
\label{definiciongeometricevolution}
\end{definicion}
We make the following further assumptions on the factor $\kappa$,
\begin{itemize}
\item We assume that the condition
\begin{align}
\kappa(u,\theta,t,\tau)=\,\kappa(t,\tau)
\label{condition on kappa}
\end{align}
holds good.

\item We assume that $\kappa$ is cyclic on the $t$-time parameter with period $2T$.

\end{itemize}

The $\mathcal{U}_t$-flow determines an homotopic transformation in the space $\mathcal{F}_{HR}(TM)$. Note that in this definition $t\in [0,2T]$, instead of  $t\in\,\mathbb{R}$, in concordance with the assumption that the  dynamics becomes cyclic or almost cyclic with semi-period $T$. However, this formulation should be considered only as an approximation to the exact dynamics, since the dynamics is almost-cyclic and not completely cyclic.

Due to the proposed structure of the $\mathcal{U}_t$ flow, there is a domain where $\mathcal{U}_t$ approximately corresponds to the identity operator. This domain is the metaestable domain ${\bf D}_0$.
\subsubsection{t-time inversion operation}
The parameter $t\in\,\mathbb{R}$ is interpreted as the time parameter for an {\it internal dynamics} of the system.
The {\it time inversion operation} $\mathcal{T}_t$ is defined in local natural coordinates on $T^*TM$ by the operator
\begin{align}
\begin{cases}
&\mathcal{T}_t:T^*TM \to T^*TM, \\
& (u,\theta)=(x,y,\theta_x,\theta_y)\mapsto (\mathcal{T}_t(u),\mathcal{T}^*_t(\theta))=(x,-y,\theta_x,-\theta_y).
 \end{cases}
 \label{timeinversionoperation}
\end{align}
This operation does not depend on the choice of the coordinate system: if the relation \eqref{timeinversionoperation} holds in a given natural coordinate system, it holds in any other natural coordinate system on $T^*TM$.
  $\mathcal{T}_t$ is an idempotent operator,
\begin{align}
\mathcal{T}\,^2_t=Id,\quad \forall \, t\in \,I\subset \,\mathbb{R},
\end{align}
where $Id$ is the identity operation on $T^*TM$, such that locally $(u,\theta)\mapsto (u,\theta)$.

It is also invariant under positive oriented re-parameterizations of the $t$-time parameter.

The induced action of $\mathcal{T}_t$ on an element $F\in\,\mathcal{F}_{HR}(TM)$ is given by the expression
\begin{align*}
\mathcal{T}_t(F)(u,\theta):=F(\mathcal{T}_t(u),\mathcal{T}_t(\theta)).
\end{align*}
Note that a Hamilton-Randers metric is non-reversible in the sense that
\begin{align*}
 F (u,\theta)\neq F (\mathcal{T}_t(u),\mathcal{T}_t(\theta)),
 \end{align*}
except for subsets of zero measure in $T^*_u TM$. The intrinsic irreversible character of the Randers geometry follows from this property. Furthermore, from the definition of the norm $\alpha$ as determined by a Sasaki type metric of the form $\eta=\sum^N_{k=1} \eta_4(k) \oplus \eta_4(k)$, we have that
 \begin{align}
 \mathcal{T}_t(\alpha)=\,\alpha.
 \end{align}
Effectively, the structure of the pseudo-Riemannian metric $\eta$ in Hamilton-Randers theory is of a sum of Sasaki types forms, $\eta = \,\sum^N_{k=1}\, \eta^k \oplus \eta^k$, where $\eta^k$ is a Lorentizian manifold on $M^k_4$ and depends only on $x$-coordinates.
On the other hand, the invariance of the dynamical system \eqref{dynamicalsystem} implies that $\beta_x$ in the form of the Randers function, it follows that
\begin{align}
 \mathcal{T}_t(\beta)=\,-\beta.
 \end{align}

We assume  that $\mathcal{T}_t$ commutes with the $\mathcal{U}_t$ dynamics in the sense that
\begin{align}
[\mathcal{U}_t,\mathcal{T}_t]=0, \quad \quad \textrm{when }  \,t=2 T n.
\label{relationUT}
\end{align}
If the relation \eqref{relationUT} holds, then we have that
   \begin{align}
   \lim_{t\to2 T}\,\mathcal{T}_t(\kappa(t,\tau))=\,\lim_{t\to 2T}\kappa(t,\tau)=1.
   \label{inversion of kappa}
   \end{align}
Assuming linearity in the action of the operator $\mathcal{T}_t$, we have that
   \begin{align*}
   \mathcal{T}_t(\eta) & =\,\mathcal{T}_t(\lim_{t\to 2Tn} \mathcal{U}_t(F))\\
    & =\,\mathcal{T}_t(\lim_{t\to 2 n \underline{}T} \kappa(t,\tau)\,\eta(u,\theta)+(1-\kappa(t,\tau))\,g(u,\theta))\\
    & =\,\lim_{t\to 2T} \mathcal{T}_t(\kappa(t,\tau)\eta(u,\theta)+(1-\kappa(t,\tau))\,g(u,\theta))\\
    & =\,\lim_{t\to 2T}\,(\mathcal{T}_t\kappa(t,\tau)\,\mathcal{T}_t(\eta)(u,\theta)+\mathcal{T}_t(1-\kappa(t,\tau))\,\mathcal{T}_t(g)(u,\theta))\\
      & =\,\lim_{t\to 2T} (\mathcal{T}_t\kappa(t,\tau)\,\eta(u,\theta)+\mathcal{T}_t(1-\kappa(t,\tau))\,g(u,\theta)).
   \end{align*}
   If the relation \eqref{relationUT} holds, the condition $\mathcal{T}_t \eta=\eta$ implies \eqref{inversion of kappa}.

\subsection{The Hamiltonian of the fundamental degrees of freedom}
Hamilton-Randers models and linear Hamiltonian functions in the momentum variables are related.
If $(TM, F, \mathcal{D})$ is a Hamilton-Randers space
that evolves towards the {\it averaged structure} $(TM,h)$ under the $U_t$ flow, for each value of $t$-parameter
there is an element $(TM,F_t)$ of $\mathcal{F}_{HR}(TM)$. In this setting, the implementation of {\it Assumption 3} and {\it Assumption 8} is achieved in the following way,

\bigskip
{\it Assumption 3+8}. On each individual sub-quantum molecule, one of the quantum atoms evolves with a Hamiltonian function $\frac{1}{2}F^2_t(u,\theta)$ and the second sub-quantum atom with a Hamiltonian $\frac{1}{2}F^2_t(\mathcal{T}_t(u),\mathcal{T}_t(\theta))$.
\bigskip

Applying the time inversion operation $\mathcal{T}_t$ to $F_t$ and taking into account that the function $\kappa(t,\tau)$ is invariant under $\mathcal{T}_t$, the corresponding Hamiltonian function of a Hamilton-Randers system at the instant $(t,\tau)$ must be of the form
\begin{align}
H_t(u,\theta)& =\,\frac{1}{4}\,\left(F^2_t (u,\theta)\,-F^2_t(\mathcal{T}_t(u),\mathcal{T}^*_t (\theta))\right).
\label{Hamiltonian of a Randers function}
\end{align}
The relative minus sign is because the Hamiltonian function is the generator function of time translations. Since the time inversion operation also changes the time lapse, the change in any quantity must be also affected for the relative change of sign in the $1$-form associated with time.

 Let us evaluate the difference
    \begin{align*}
 &\,F^2_t (u,\theta)\,-F^2_t(\mathcal{T}_t(u),\mathcal{T}^*_t (\theta)) \\
& =(1-\kappa(t,\tau))g^{ij}(u,\theta)\theta_i\theta_j+\,\kappa(t,\tau) \eta^{ij} \theta_i\theta_j
-(1-\kappa(t,\tau))g^{ij}(\mathcal{T}_t(u),\mathcal{T}^*_t(\theta))\theta_i\theta_j-\,\kappa(t,\tau)\eta^{ij} \theta_i\theta_j\\
& =\,\left[(1-\,\kappa(t,\tau))(\alpha+\,\beta)^2+\,\kappa(t,\tau)\,\eta^{ij}\theta_i \theta_j\right]
-\,\left[(1-\,\kappa(t,\tau))(\alpha-\,\beta)^2+\,\kappa(t,\tau)\,\eta^{ij}\theta_i \theta_j\right]\\
 &  =\left[(1-\,\kappa(t,\tau))(\alpha+\,\beta)^2\right]
-\,\left[(1-\,\kappa(t,\tau))(\alpha-\,\beta)^2\right]\\
& =\,4\,(1-\,\kappa(t,\tau))\alpha\,\beta.
\end{align*}

We fix the norm of the momenta by the relation
\begin{align}
\alpha (u,\theta)=\,\sqrt{\eta^{ij}\,\theta_i \theta_j}=\,1.
\label{momentum hyperboloid}
\end{align}
This condition is analogous to the condition of unit hyperboloid in general relativity.
Then we have
\begin{align*}
 \,\left(\frac{1}{4}F^2_t (u,\theta)\,-\frac{1}{4}F^2_t(\mathcal{T}_t(u),\mathcal{T}^*_t (\theta))\right) &=\,(1-\,\kappa(t,\tau))(\alpha+\beta - (\alpha -\beta))\\
& =\,4\,(1-\,\kappa(t,\tau))\,\alpha\,\beta\\
& =\,4\,(1-\,\kappa(t,\tau))\,\beta.
\end{align*}
Therefore, the Hamiltonian function associated with a HR-system  at $(t,\tau)\in\,\mathbb{R}\times \mathbb{R}$ is
\begin{align}
H_t(u,\theta)=\,(1-\kappa(t,\tau))\,\left(\,\sum^{N}_{k=1}\sum^4_{\mu=1}\,\beta^\mu_{x k}(u)\theta_{x k \mu}+\,\sum^{N}_{k=1}\sum^4_{\mu=1}\,\beta^\mu_{y k}(u)\theta_{y k \mu}\right),
\label{Classical Hamiltonian 0}
\end{align}
subjected to the constrain \eqref{momentum hyperboloid}.

We can appreciate the reason for the choice of the $t$-time operation by considering the transformation of the Hamilton equations for the local coordinate and the velocity coordinates of the sub-quantum molecule. If the $k$-essim molecule has coordiantes and velocities $(x^\mu_k,y^\mu_k)$,
the Hamilton equations are formulated in Poisson formalism as
\begin{align}
\begin{cases}
&\frac{d}{dt}x^\mu_k =\,\big\{ H_t, x^\mu_k\big\},\\
&\frac{d}{dt} y^\mu_k =\,\big\{ H_t, y^\mu_k\big\} .
\end{cases}
\label{Poison equations for x y}
\end{align}
The equivalent to the equation for the new $t$-time parameter associated with the $1$-form $-dt$, must be of the form
\begin{align*}
\begin{cases}
& -\frac{d}{dt}\mathcal{T}_t(x^\mu_k)=\,\big\{ \mathcal{T}_t(H_t), \mathcal{T}_t(x^\mu_k)\big\},\\
& -\frac{d}{dt} \mathcal{T}_t(y^\mu_k) =\,\big\{ \mathcal{T}_t(H_t), \mathcal{T}_t(y^\mu_k)\big\} .
\end{cases}
\end{align*}
According to the definition \eqref{timeinversionoperation} of the $t$-time inversion operation, the consistency of these equations with \eqref{Poison equations for x y} implies that
\begin{align}
\mathcal{T}_t (H_t)=\,-H_t.
\end{align}
This condition expresses the irreversible character of the dynamics associated with sub-quantum molecules. This irreversibility is the reason for the natural interpretation of several fundamental concepts in physics as emergent, among them, the association of irreversibility in the real world with the non-reversibility of the sub-quantum dynamics and the origin of the external time parameters as emergent.
\subsubsection{Canonically conjugate coordinates for the fibers of $T^*TM$} Until now we have use the coordinate systems of the type $\{(x,y,\theta_x,\theta_y)\}$ over $T^*TM$. In particular, it was stressed the tensorial character of the coordinates
 \begin{align*}
 \{\theta_{\mu kx},\theta_{\mu y k},\,=\mu=1,2,3,4,\,k=1,...,N\}.
  \end{align*}
  However, for the following developments, it is useful to consider, instead of these tensorial coordinates, a set of non-tensorial coordinates
 \begin{align*}
\{p_{\mu kx},p_{\mu y k},\,=\mu=1,2,3,4,\,k=1,...,N\}
\end{align*}
on each fiber over $u\in TM$. Contrary with the coordinates that we starting working, these new fiber coordinate systems are holonomic and canonical conjugate to the local vector fields $\frac{\partial}{\partial x^\mu}$. We will use these coordinates from now on, since they are useful in the proof of the consistency of the quantum conditions under changes of natural coordinate systems, as we shall discuss in the next {\it chapter}. The corresponding transformation rules for these coordinates induced by local coordinate transformations in $M$ will be specified later. Remarkably, these coordinates appear naturally in higher order geometry, for instance, in Finsler geometry constructions \cite{BaoChernShen,MironHrimiucShimadaSabau:2002}.

In order to keep a convenient notation for our purposes, the components $\tilde{\beta}^\mu_k$ of the vector field $\beta \in\,\Gamma TTM$ must also be renamed. It is convenient to make this in such a way that the Hamiltonian \eqref{Classical Hamiltonian 0}, is written in the form
\begin{align}
H_t(u,p)=\,(1-\kappa(t,\tau))\left(\,\sum^{N}_{k=1}\sum^4_{\mu=1}\,\beta^\mu_{x k}(u)p_{x k \mu}+\,\sum^{N}_{k=1}\sum^4_{\mu=1}\,\beta^\mu_{y k}(u)p_{y k \mu}\right).
\label{Classical Hamiltonian}
\end{align}
Note that in this new notation, the functions $\beta^\mu_k$ do not transform tensorially. The corresponding transformation laws will be discussed later in this {\it section}.

As  consequence of the limit conditions \eqref{limiteofk}, it holds the following
\begin{proposicion}
The Hamiltonian \eqref{Classical Hamiltonian} in the metastable domain ${\bf D_0}$ is identically zero,
\begin{align}
\lim_{t\to 2 n\,T}\,  H_t(u,p)=0.
\label{finalhamiltoniantevolution}
\end{align}
\label{proposicion of final Hamiltonian}
\end{proposicion}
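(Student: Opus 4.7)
The proof plan is to unpack the explicit formula \eqref{randershamiltoniant} for $H_t(u,p)$ and show that the prefactor $(1-\kappa(u,p,t,\tilde{\tau}))$ forces the whole expression to vanish at the metastable instants $t = (2n+1)T$. The computation just preceding the statement has already done the algebraic work of isolating $H_t$ as a product, so the remaining task is purely an analytical/periodicity argument.

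First I would treat the base case $n=0$. By the boundary condition \eqref{limiteofk}, $\lim_{t\to T^-}(\kappa(u,p,t,\tilde{\tau})-1)=0$, so $(1-\kappa)\to 0$ as $t\to T^-$. The factor $\sum_{k=1}^{8N}\beta^k(u)p_k$ is independent of $t$ and finite at each $(u,p)\in \mathcal{C}$ (it depends only on the $\beta$ field of the Hamilton-Randers structure and on the cotangent coordinate $p$, and $\beta$ satisfies the boundedness condition \eqref{boundenesscondition}). Consequently the product $(1-\kappa)\sum_k \beta^k p_k$ tends to zero, giving $\lim_{t\to T^-}H_t(u,p)=0$.

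Next I would extend to arbitrary odd multiples of $T$. Here I would invoke the near-periodicity of the $U_t$-flow: by property (\ref{property3ofUt}) of the $U_t$-dynamics, the system becomes almost periodic with period $2T$, so the function $\kappa(u,p,t,\tilde\tau)$ inherits the same periodicity modulo the metastable behaviour on ${\bf D}_0=\{t=(2n+1)T\}$. Thus $\kappa(u,p,t+2T,\tilde\tau)=\kappa(u,p,t,\tilde\tau)$ in the metastable neighbourhoods, and the same boundary behaviour $\kappa\to 1$ holds as $t\to (2n+1)T^-$ for every $n\in\mathbb{Z}$. Plugging this into \eqref{randershamiltoniant} yields the desired identity \eqref{finalhamiltoniantevolution}.

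The main obstacle, if any, is conceptual rather than computational: one must justify that $\kappa$ retains the limiting behaviour $\kappa\to 1$ at every $t=(2n+1)T$, not only at the first semi-period. This requires reading property (\ref{property3ofUt}) as asserting $2T$-periodicity of the flow on the metastable stratum, which is the standing assumption of the theory; once this is granted, the proposition follows by a one-line application of the boundary conditions \eqref{limiteofk} to the factorised expression \eqref{randershamiltoniant}. No estimate beyond the boundedness of $\beta^k(u)p_k$ on $\mathcal{C}$ is needed.
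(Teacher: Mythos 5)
Your proposal is correct and follows essentially the same route as the paper, which justifies the proposition in one line by combining the factorised form \eqref{randershamiltoniant} with the boundary condition \eqref{limiteofk}; your additional remarks on the finiteness of $\sum_k\beta^k(u)p_k$ and on extending from $t\to T^-$ to all odd multiples of $T$ via the assumed $2T$-periodicity of the cycle structure merely make explicit details the paper leaves implicit.
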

This condition formally determines the metastable points $\{t=\,2 n T,\,n\in\,\mathbb{Z}\}$. Second, it is intrinsically related with the flow of geometric structures $(TM,F)\to (TM,h)$,
showing that the $\mathcal{U}_t$ flow is dissipative in the contractive domain. Third, it is compatible with the time re-parametrization invariance property of general covariant field theories. Note that this consequence of \eqref{finalhamiltoniantevolution} is not equivalent to a Hamiltonian constraint as it appears in general relativity, but it shows the compatibility of the $\mathcal{U}_t$-flow with the Hamiltonian  constraint, at least in weak form.
\subsection{Hamilton-Randers dynamical models for composite systems}\label{Hamilton Randers models for composite systems}
The configuration space $\mathcal{M}$ of a Hamilton-Randers system is a product manifold $M\simeq \prod^N_{N=1}\,M^k_4$. The dynamical structures are defined in the co-tangent space $T^*TM$, in a form such that the metric structure $\eta$ is a direct sum, but the $1$-form $\beta$ is defined on the disjoint union of co-tangent spaces. A direct generalization of the construction is provided in the following paragraphs.
\begin{definicion}
A {\it disjoint Hamilton-Randers system} $a\sqcup \,b$ of two Hamilton-Randers systems $a$ and $b$ is a Hamilton-Randers system whose set of degrees of freedom is the disjoint union of the sets of degrees of freedom of $a$ and $b$.
\end{definicion}
If the sub-quantum molecules determining the quantum system $a$ are described by a model $(M_a, (\alpha_a,\beta_a))$ and the sub-quantum degrees of $b$ are described by a model $(M_b,(\alpha_b,\beta_b))$ respectively, then there is a natural Randers structure constructed from the Hamilton-Randers structures associated with the systems $a$ and $b$. This is the {\it direct sum of structures},
   \begin{align}
   \left(M_a\times M_b,\, \left(\eta_a\oplus \eta_b,\,\beta_a\oplus \beta_b \right)\right).
   \label{sum of systems}
   \end{align}
The direct sum structure defines a particular class of disjoint  Hamilton-Randers systems of the form $a\sqcup \,b$,
   \begin{definicion}\label{disjoint union of non-interaction}
The disjoint union of non-interacting  Hamilton-Randers systems is the disjoint union as a sets endowed with the sum structure \eqref{sum of systems}.
\end{definicion}
When a disjoint union Hamilton-Randers system has the direct sum structure, the dynamical degrees describing $a$ evolve on the $U_t$ dynamics independently of the dynamical degrees of the system $b$: the systems $a$ and $b$ decouple from the sub-quantum point of view description. That is, the system is non-interacting from the point of view of sub-quantum degrees of freedom.

\subsection{The semi-period $T$ for a particular choice of the $t$-parameter}
In Hamilton-Randers theory there is no geometric structure defined on the tangent space $TM$ that can be used to define a natural $t$-time parameter for the $\mathcal{U}_t$ dynamics. Hence the models that we need to consider must be invariant under positive oriented $t$-time re-parameterizations. However, the freedom in the choice of the $t$-parameter required for consistent with the principle of general covariance does not contradict the existence of choices for the $t$-parameter particularly enlightening, similarly as it happens in other theories and models. In particular, it can be useful to identify the semi-period $T$ as a characteristic of the system to which it is associated, since fixed a $t$-time parameter, different systems can have different semi-periods.

There must be a minimal semi-period $T_{min}$, an universal scale that corresponds to the minimal period of the fundamental cycles for any Hamilton-Randers dynamical system. This minimal period must exists, since by assumption, the sub-quantum degrees of freedom are different from the quantum degrees of freedom and since, by assumption, quantum systems contain a finite number of them, then the limit case when $N=1$ imposes a limit to the complexity of the quantum system. However, let us remark that this shows the need for the existence of a $T_{min}$ if the period is directly linked with the number of sub-quantum degrees of freedom.

The value of $T_{min}$ depends on the choice of the arbitrary $t$-parameter but when the parameter has been chosen, then $T_{min}$ achieves the same universal value for all the Hamilton-Randers systems. We will show later that $T_{min}$ corresponds to the semi-period of the $\mathcal{U}_t$ of the {\it smallest} possible Hamilton-Randers system.

Given a $t$-parameter with the above properties, for each Hamilton-Randers system we postulate the existence of a class of $t$-time parameters of the form $[0,2\,T]\subset \,\mathbb{R}$ such that
\begin{align}
 \log_2\left(\frac{T}{T_{min}}\right)=\,\frac{T_{min}\,m\,c^2}{\hbar}
\label{ETrelation}
\end{align}
As a consequence of the expression \eqref{ETrelation} the minimal value of $m$, associated with $T_{min}$ is $m=0$. However, the above expression suggests an scale parameter $\widetilde{m}=\,{\hbar}/{T_{min}\,c^2}$.

It is through the expression \eqref{ETrelation} that we introduce in Hamilton-Randers theory the constant $\hbar$ with dimensions of action, that makes the quotient on the right hand side invariant under conformal changes of units in the $t$-time parameter.

There is no reason to adopt $T_{min}$ as the Planck time. Indeed, as we shall discuss later in relation with our interpretation of quantum entanglement from the point of view of Hamilton-Randers theory, for the models that we will investigate, the parameter $T_{min}$ could be larger than any time scale associated or close scale to the time of Planck.

The parameter $T$ is a measure of the complexity of the system. Intuitively, this is clear, because for simple systems one expect $T$ small than for large systems.
\subsubsection{The parameter {\it m} as a notion of mass}
It is useful to re-cast the relation \eqref{ETrelation} as a definition of the mass parameter $m$ in terms of the semi-period $T$ of the fundamental cycle.
\begin{definicion}
The mass parameter $m$ of a Hamilton-Randers dynamical system with fundamental semi-period $T$ is postulated to be given by the relation
\begin{align}
m=\,\frac{\hbar}{T_{min}\,c^2}\,\log_2 \left( \frac{T}{T_{min}}\right).
\label{definitionofinertialmass}
\end{align}
\end{definicion}

{\bf Fundamental properties of the $m$ parameter}:
\begin{enumerate}

\item \label{positivity of m}\label{non negativity of mass m} For any Hamilton-Randers system, the mass parameter $m$ is necessarily non-negative, with the minimum value for the parameter $m$ equal to $0$ when $T=\,T_{min}$; the first non-trivial value of $m$ corresponds to $T=2\,T_{min}$ and is $\widetilde{m}=\,\frac{\hbar}{T_{min}\,c^2}$.

\item \label{conservation of m} \label{preservation of mass for a free system} As long as the period $T$ is preserved for each fundamental cycle of the $\mathcal{U}_t$ dynamics, the mass parameter $m$ is preserved and remains the same for each cycle of the $\mathcal{U}_t$ dynamics.

\item \label{atribution of m} Since the semi-period $T$ is an attribute of the physical system under consideration, the mass parameter $m$ is also an attribute that increases with the {\it complexity of the system}.

\item \label{m is a measure of complexity of the system} Since the value of the semi-period is linked with the ergodic properties of the system, $T$ is a measure of the complexity of the system. Hence $m$ is a measure of the complexity of the system as well.

\end{enumerate}
The parameter $T$ has not been specified in terms of the ontological degrees of freedom of the system and hence and it has a high degree of arbitrariness yet. But due to the properties \ref{non negativity of mass m}-\ref{preservation of mass for a free system}, it is reasonably to identify the parameter $m$ given by the relation \eqref{definitionofinertialmass} as the mass parameter of the Hamilton-Randers system.

The above interpretation of the parameter $m$ is linked to a class of $t$-parameters with the properties above mentioned, namely, that the evolution is periodic on $t$-time and that there is a minimal period $T_{min}$ for some systems. Given a dynamical system, not all the possible $t$-time parameters satisfy such properties. Hence the parameter $m$ is not invariant under general $t$-time re-parameterizations. Indeed, given any positive $t$-parameter for the $\mathcal{U}_t$ dynamics, one could  define a parameter $m$ by the relation \eqref{definitionofinertialmass}, but in general, such a parameter $m$ will not be associated with the Hamilton-Randers system, neither will satisfy the properties  \ref{non negativity of mass m}-\ref{preservation of mass for a free system} above, specially, the constancy of mass property \eqref{preservation of mass for a free system}.

Two limiting cases are of special interest. If $T=T_{min}$ holds, then the mass parameter is zero, $m= 0$. This case describes Hamilton-Randers systems corresponding to quantum massless particles. Fixed a $t$-time parameter such that the relation \eqref{ETrelation} holds good, all massless quantum systems have the same semi-period $T=\,T_{min}$. The second obvious limit is when $T\to +\infty$,  which corresponds to the limit $m\to +\infty$. This situation corresponds to a Hamilton-Randers systems describing a system with infinite mass.

\subsubsection{Period and mass parameters for composite Hamilton-Randers systems}
Let us consider two arbitrary Hamilton-Randers systems $a,b$ with semi-periods $T_a$ and $T_b$.
If the systems $a$ and $b$ do not interact, then we further assume the existence of a semi-period $T_{a \,\sqcup\, b}$ for the system $a\,\sqcup\,b$ such that the multiplicative rule
 \begin{align}
T_{a \,\sqcup\, b}:=\,{T_a}\, {T_b}.
 \label{compatibilityconditionabwhennotmeasurable}
 \end{align}
 holds good.
 This is a natural generalization law for the periods of product of independent periodic functions. As before, note that this rule is not satisfied for a generic $t$-time parameter.

 The condition \eqref{compatibilityconditionabwhennotmeasurable} can be taken as the definition of non-interacting systems, where non-interacting means non-interacting at the quantum mechanical level.

\begin{ejemplo} According to Hamilton-Randers theory, a single electron system is described as a system composed by many sub-quantum degrees of freedom evolving coherently on the $\mathcal{U}_t$ dynamics, with a semi-period $T_e$, with $T_e > 1$ ($e$ stands for electron). The semi-period of a system composed by two identical, non-interacting electrons is then described in Hamilton-Randers systems by the product of semi-periods $T^2_e$. Similarly, if the system is of the form $e\sqcup e\sqcup e...\sqcup e$ $n$ times, then the semi-period of the system is $T^n_e$.
\end{ejemplo}
The above example illustrates several general feature of the theory:
\begin{itemize}

\item The semi-period associated with a composed system should increase exponentially with the components of the system.

\item From the relation \eqref{compatibilityconditionabwhennotmeasurable} between the mass $m$ and the semi-period $T$ given by the relation \eqref{ETrelation}, it follows that
the mass parameter of a composed non-interacting system $a\sqcup \,b$ is
\begin{align}
 m_{a \,\sqcup\, b}=\,m_a+\,m_b.
 \label{additivityofmass 0}
 \end{align}
 For this relation to hold, it is necessary to define the $t$-parameters such that $T_{min}=1$.

\item The relation \eqref{additivityofmass 0} only holds if the $t$-time parameters associated with the systems $a$, $b$ and $a\sqcup \,b$ are identical. Such a situation only happens if the co-moving parameters to $a$, $b$ and $a\sqcup \, b$ coincide, that is, if the systems are in relative rest to each other. In general, this is not the case. In that more general case, the relation \eqref{additivityofmass 0} must be substituted by the phenomenological relation among the {\it corresponding energies}
 \begin{align}
 \mathcal{E}_{a \,\sqcup\, b}=\,\mathcal{E}_a+\,\mathcal{E}_b.
 \label{additivityofmass}
 \end{align}
 This relation can be interpreted as the conservation of energy for close dynamics or as the additivity of energy for non-interacting systems.
\end{itemize}

\subsection{Models for the semi-periods of the internal dynamics}\label{models of periods in terms of mass} The character of the $t$-time parameters is arbitrary. However, the interesting features of the $t$-parameters for which the relations \eqref{ETrelation}, \eqref{compatibilityconditionabwhennotmeasurable} \eqref{additivityofmass} hold good, invites to consider the construction of specific models for such special $t$-parameters.

The fundamental characteristic of a Hamilton-Randers system $a$ is the number of sub-quantum degrees of freedom $N$. According to the above premises, let us consider the model where the semi-period $T$ of a Hamilton-Randers system $a$ is defined as a function of $N$. Also, we would like to pointed out that the dynamics $\mathcal{U}_t$ preserves $N$. Thus $N(t)$ is constant for any choice of the $t$-parameter. In a process where the system can be subdivided in two parts, then the number $N$ is also subdivided: if there is a process of the form $1\to 2\sqcup 3$, then we assume the condition
\begin{align}
 N_1=\,N_2+N_3\,-N_{min}.
 \label{preservation of the number of lines}
\end{align}
This relation can be interpreted by stating that the number of degrees of freedom of a composed system is the sum of the independent degrees of freedom; then $N_{min}$ appears as the number of degrees of freedom of a common border and is subtracted in order to do not count them twice.

In the following, we consider two models relating the semi-period $T$ as function of the number of degrees of freedom $N$,
\subsubsection{Model 1} Let us consider the following model for the semi-period,
\begin{align}
\frac{T}{T_{min}}:=\,\frac{T(N)}{T_{min}} =\,N^{\beta},
\label{model 1 for semiperiod}
\end{align}
where $\beta$ is a positive constant that does not depend upon the specific Hamilton-Randers  system, neither on  the number of sub-quantum degrees of freedom $N_a$. The model \eqref{model 1 for semiperiod} for the semi-period is consistent with the relations \eqref{ETrelation}, but it is not consistent with the conditions  \eqref{compatibilityconditionabwhennotmeasurable}-\eqref{preservation of the number of lines}.

\subsubsection{Model 2}  A model that it is consistent with the condition \eqref{compatibilityconditionabwhennotmeasurable} is the exponential model
\begin{align}
T=\,T_{min}\,2^{\left(\frac{\lambda\, T_{min}\,c^2}{\hbar}\left(N-N_{min}\right)\right)},
\label{model 2 for semiperiod}
\end{align}
where $\lambda$ is a constant relating the mass parameter $m$ and the number of degrees of freedom.
The period of a non-interacting composite system is of the form
\begin{align*}
T_{2\sqcup\,3} & =\,T_{min}\, 2^{\left(\frac{\lambda \,T_{min}\,c^2}{\hbar}\left(N_2+N_3-N_{min}-N_{min}\right)\right)} \\
& =\,T_{min}\,2^{\left(\frac{\lambda\, T_{min}\,c^2}{\hbar}\left(N_2-N_{min}\right)\right)}\cdot \,2^{\left(\frac{\lambda\, T_{min}\,c^2}{\hbar}\left(N_3-N_{min}\right)\right)}\\
&=\,\frac{1}{T_{min}}\,T_2\cdot \,T_3
\end{align*}
Thus the rule \eqref{compatibilityconditionabwhennotmeasurable} holds
if and only if $T_{min}=1$.

The model given by the expression \eqref{model 2 for semiperiod} $N_{min}$ corresponds to $T_{min}=1$.

By the model of semi-period \eqref{model 2 for semiperiod} the mass parameter is
\begin{align}
m = \, \lambda\left(N-\,N_{min}\right).
\label{relacion masa con N}
\end{align}
According to this model, the mass parameter $m$ is equivalent to the number $N$ of sub-quantum degrees of freedom minus the number $N_{min}$ of numbers of the {\it border}. Therefore, $m$ is a measure of the {\it quantity of sub-quantum degrees of freedom} and also we reach the following consequence,
 Hence, we reach the following result:
\begin{proposicion}
For any physical system, the inertial mass parameter $m$ is a non-negative, real number.
\end{proposicion}

Model \eqref{model 2 for semiperiod} has further consequences that make it convenient.
From the definition \eqref{definitionofinertialmass} of the mass parameter $m$ and the relation \eqref{relacion masa con N}, preservation of the number of lines $N$ in the sense of the relation \eqref{preservation of the number of lines} implies additivity of the mass parameter $m$,
\begin{proposicion}
For the model of the $t$-time parameter given by the expression \eqref{model 2 for semiperiod}, the preservation of the number of degrees of freedom $N$ in the sense of the relation \eqref{preservation of the number of lines} implies the additivity of the mass parameter \eqref{definitionofinertialmass} and of the energy.
\label{preservation of N implies preservation of mass}
\end{proposicion}

Note that, although $T_{min}=1$, $N_{min}\gg 1$, leading to an emergent interpretation for massless quantum systems.

Moreover, since for $N_{min}\mapsto m=0$, $N=\,2\, N_{min}$, if $\lambda=\frac{\hbar}{T_{min}\, c^2}$ and since $T_{min}=1$ then there is a three-fold quantization,
\begin{align}
\begin{cases}
& \,N, \\
& \,T,\\
& \,m
\end{cases}
\quad {\bf =} \quad
\begin{cases}
& \,n\,N_{min},\, \\
&\,2^{(n-1)\,N_{min}},\\
 &\,m=\frac{\hbar}{c^2}\,N_{min}(n-1),
\end{cases}
\label{quantization of mass}
\end{align}
with $n\in \,\mathbb{N}$.

From now on in Hamilton-Randers theory we adopt model \eqref{model 2 for semiperiod} for the semi-period $T$ with $T_{min}=1$ and $\lambda =\, \frac{h}{T_{min}\,c^2}$, if anything else it is not specified.

\subsection{The law of inertia as an emergent phenomenon: First considerations}\label{law of the inertia as emergent phenomenon 1}
Within this context of associating the mass parameter $m$ with the number of degrees of freedom of the system of the system, there is a notable consequence. According with the model \eqref{model 2 for semiperiod}, if $m$ is related with the number of sub-quantum degrees of freedom, then $T$ is related with the number of subsets of the set of sub-quantum degrees of freedom of the system. At each instant of the $t$-time, the system has associated one configuration, a partition of the set of fundamental degrees of freedom in different subsets. To each configuration there is associated a weight, given by the relative amount of $t$-time that the system has associated such configuration. The semi-period $T$ is a measure of the cardinality of the number of weights; a state is a selection of the possible weights.

According to such a picture, a coherent change in the state is a change in the collection of weights. We observe that the tendency of a system to change  from an structure of weights to another structure of weights when evolving from one cycle to another cycle by random changes due to a non-interacting environment in the macroscopic sense, is more unlike to happen for a complex system than for a simpler system. That is, the complexity of the system, in this case measured by the parameter $m$, opposes to dynamical consistent changes on the system in a way that the more the complexity, the more the opposition.

This interpretation of the {\it resistance to change the state} is based upon the assumption of the existence of stable configurations from cycle to cycle, the assumption that systems are composed by large number of sub-quantum degrees of freedom $N$ and the environment is not non-interacting in the quantum sense, although at the level of sub-quantum degrees of freedom of the quantum system, there are interactions. This argument constitutes an emergent interpretation of the law of inertia, since due to the large number of degrees of freedom $N$, the change from an stable configuration to another stable configuration, decreases  exponentially with $N$. It is a probabilistic interpretation, although based upon a deterministic framework.

Two characteristics properties of this interpretation are:
\begin{itemize}
\item The opposition to change in the dynamical state is larger the larger the semi-period $T$ is, or equivalently, the larger the mass $m$ is.

\item It is the degree of complexity $T$, the exponentially dependence on $N$ of the form $2^{-N}$, which makes the spontaneous change of state exponentially suppressed by a large exponential law if there is no macroscopic change due to an interaction with the environment.

\end{itemize}
Therefore, we have the following result:
\begin{teorema}
If a Hamilton-Randers system is evolving in a non-interacting environment, then there is no significant dynamical change in the quantum state of the system.
\label{ley de la inercia}
\end{teorema}
This is the emergent interpretation of the inertia law of mechanics.

\subsection{Emergent nature of the quantum uncertainty relation}
The relation \eqref{ETrelation} is not equivalent to the quantum mechanical energy-time uncertainty relation, since the $t$-parameter is not an external time parameter, as it is the one appearing in the quantum relation. This is because the relation between $m$ and $T$ is given by $\log T$ instead than being linear with $T^{-1}$, as it should be expected for a quantum relation. This is indeed a significant difference with the quantum mechanical energy-time uncertainty relation. Furthermore, according with the relation \eqref{ETrelation}, $m$ increases monotonically with $T$.

  However, if we consider the variation of the parameter $m$ due to a variation of the period $2\,T$ in the relation \eqref{ETrelation}, we have that
 \begin{align*}
 \Delta \left(m\,c^2\right)=\,(\Delta m)\,c^2 =\,\frac{\hbar}{T_{min}}\,\frac{\Delta{T}}{T},
 \end{align*}
 since the speed of light in vacuum is universal and does not depend on the specific system with $T_{min}$. The variation in the mass can be conceived as a variation due to a continuous interaction of the system with the environment; $\Delta T/T_{min}$ is the number of fundamental dynamical cycles that contribute to the stability of the quantum system. It is a large integer number. The expression  $m\,c^2$ is the energy of a system measured by an observer instantaneously co-moving with the system \cite{Ricardo2014} when the system has zero proper acceleration.  If there is an associated local instantaneous co-moving coordinate system associated with the Hamilton-Randers system as a whole, then we can apply the relativistic expression for the energy as given in \cite{Gallego-Torrome2019}. The value of the semi-period $T$ is a characteristic of the quantum system associated in such a local coordinate system.
 From the definition of $T_{min}$ we have that $\Delta{T}/T_{min}\geq 1$. The {\it spread of energy} for a massive system is  defined as
 \begin{align*}
 \Delta E :=\,\Delta(m\,c^2)=\,\Delta(m)\,c^2.
 \end{align*}
 We obtain an {\it emergent uncertainty relation},
 \begin{align}
 \Delta E\,2\,T\geq 2\,\hbar
 \label{Energytimeuncertainty}
 \end{align}
 Therefore, the {\it uncertainty} in the energy at rest $E$ associated  with the system is proportional to the inverse of the semi-period $T$, when the semi-period of the fundamental dynamics is described using a very particular kind of time parameter, as specified above.

 Note that the derivation of the relation \eqref{Energytimeuncertainty} has been for massive $m\neq 0$ systems.
\subsubsection{Interpretations of the emergent energy-time uncertainty relations}
The macroscopic interpretation that we propose is based on the assumption that there is an identification of the $t$-time parameter describing the $\mathcal{U}_t$ evolution of the sub-quantum degrees of freedom and where the model given by the relations \eqref{ETrelation} and \eqref{model 2 for semiperiod} holds good with the macroscopic coordinate time of a co-moving system with the quantum system.
This interpretation is consistent with $T$ being a parameter associated with the $\mathcal{U}_t$ evolution and simultaneously, a parameter associated with the system as a whole. It also fixes the operational meaning of the $t$-parameters such that the relation \eqref{definitionofinertialmass} holds.

Note that the identification of a particular $t$-time parameter with the proper time parameter of the system with mass $m$  do not contradict the essential $2$-dimensional character of the time, that is, the essential independence of $U_t$ and $U_\tau$ dynamics.

One natural interpretation of the energy-time relation \eqref{Energytimeuncertainty} is to think the quantity $\Delta E$ as the minimal exchange of energy between the Hamilton-Randers system and the environment in such a way that the system is stable at least during one whole cycle of semi-period $T$.
This energy exchange is measured in an instantaneous inertial coordinate frame in co-motion with the system just before the system changes to another different state or it decays to another different class of quantum system. Assuming that the exchange time is much longer than $2\,T$ and since the proper time is in terms of number of complete cycles of the system, the quantum mechanical energy-time uncertainty relation is obtained in the co-moving frame with respect to the proper time of the system measured in terms  the number of cycles and and where the $t$-time parameter specified by the equations \eqref{ETrelation} and \eqref{model 2 for semiperiod} by
   \begin{align*}
    \Delta E\,\tau\geq\,2\,\hbar .
   \end{align*}
the factor $2$ in this expression can be eliminated by a convenient redefinition of the model $\eqref{ETrelation}$ and the related expressions for the model of $t$-time parameter. Thus our final form of uncertainty relation is
   \begin{align}
    \Delta E\,\tau\geq\,\hbar .
    \label{Energytimeuncertainty 2}
   \end{align}

 If $2\, T$ is instead interpreted as a measure of the proper mean life time of the quantum system associated with the Hamilton-Randers system, then the relation \eqref{Energytimeuncertainty} is understood as the energy-time uncertainty relation in quantum mechanics. In this case, $\Delta E$ is the width in energy of the quantum system, which is an intrinsic parameter of the system, emergent from the underlying sub-quantum dynamics. Then one finds also the relation \eqref{Energytimeuncertainty 2}.
\subsubsection{Notion of external proper time parameter}
 By assumption, the $\mathcal{U}_t$ flow is {\it almost cyclic}. By this we mean that if the Hamilton-Randers system is isolated, then the  total $\mathcal{U}_t$ evolution is composed by a series of {\it fundamental cycles}, where the $n$-cycle takes place during the  intervals of internal $t$-time of the form
\begin{align*}
t\in [(2n+1)T,(2n+3)T],\quad n\in\mathbb{Z}.
\end{align*}
According to our {\it Assumption A.5+} of {\it Chapter} \ref{chapter on Assumptions and General Theory}, each of these fundamental cycles is composed itself by an {\it ergodic type regime}, followed by a concentrating regime, followed by an expanding regime. After these three phases regime cycle happens, a new ergodic regime start, defining the next fundamental cycle. The period of each fundamental cycle, if the corresponding quantum particle content of the system  does not change, is constant and equal to $2\,T$.

The relation \eqref{ETrelation} implies that the period $2\,T$ of a fundamental system depends on the physical system as a whole. Furthermore, following  our original {\it Assumption A.5}, the ergodic-concentrating-expansion regimes are universal, that is, such sequence of regimens in each fundamental cycle happen for every quantum system. By hypothesis, this holds for each degree of freedom of the Standard Model of particle physics but also for each atomic and molecular systems described according quantum mechanics.

In order to define a physical clock is needed either a quantum system which shows a high regular dynamics or a parameter associated with a very regular quantum process in the sense that a given process is stable under physical small fluctuations and whose characteristics are known with high precision.
Following this general scheme, a proper $\tau$-time parameter associated with a physical clock is constructed in the following way. Each fundamental period $2\,T$ of a fundamental cycle is identified with a fundamental or {\it minimal lapse} of $\tau$-time, $\delta \tau$, being the minimal lapse equal to $2\, T$. If we measure the time duration in physical processes in terms of a standard associated with the fundamental cycles descriptions associated with the process, then each minimal unit of measurable $\tau$-time is associated with a large number of fundamental periods, a multiple.

Different species of elementary particles or quantum systems have different  fundamental semi-periods $T$, which correspond to different {\it quantum clocks}.

It is natural a close map, isomorphism, between quantum clocks and the classical theory of congruences as it appears in classical number theory \cite{Dirichlet, Gauss Disquisitiones Arithmeticae, HardyWright}.

In the above context, it is then natural to consider the following notion of external time parameter:
\begin{definicion}
An external proper time parameter is an integer multiple function of the number of fundamental cycles of the $U_t$ flow associated with a given Hamilton-Randers system.
\label{definicion of tau proper time parameter}
\end{definicion}
Note that the {\it Definition} \ref{definicion of tau proper time parameter} of proper time parameter assumes the repeatability and stability of the fundamental cycles of the $\mathcal{U}_t$ dynamics and the stability of quantum processes associated with semi-periods $T$. Since physical clocks  are based upon the existence of stable, periodic processes, that can be reduced to the analysis of periodic quantum processes, they must also be  periodic in the number of fundamental cycles, according to the interpretation that any quantum system is a Hamilton-Randers system.
\subsubsection{Remarks on the notion of emergence of the macroscopic proper time}
Consequences of the above theory are the following.
First, it is inherent to Hamilton-Randers theory that time is necessarily described by two independent times and that one of its dimensions $\tau$-time has an emergent nature. This remark is valid for quantum clocks based upon counting fundamental cycles or in quantum processes from the point of view of Hamilton-Randers theory and hence for any clock defined in terms of them. Therefore, the statement is universal.

Second, there is a minimal theoretical lapse of proper time  identified with a period $2T$.

In the case of massless systems, there is no finite proper time, although there is a finite minimal period $2 \,T_{min}$.

One further consequence of the theory of the proper time parameter as counting fundamental cycles is that, for the dynamical cases when classical trajectories are well defined, it must exist a maximal macroscopic acceleration and maximal macroscopic speed. The maximal acceleration is because the relation $A_{max}=\, \frac{c}{\delta \tau_{min}}$ and $\delta \tau_{min}\geq 2 T$. In this case, regularity of the world line must be considered as a for all purposes good approximation.
\subsubsection{Relativistic implications of the energy-time uncertainty relation}
 In order to derive the relativistic version of the relation \eqref{Energytimeuncertainty}, let us to re-consider the meaning of {\it inertial coordinate system co-moving with a given Hamilton-Randers system}. The natural interpretation of such concept is that there is a macroscopic inertial coordinate system respect to which the exchange of energy-momentum vector of the quantum system associated with the Hamilton-Randers system and the environment is of the form
 \begin{align*}
 \Delta \,\mathcal{P}^\mu  \equiv \,(\Delta E, \vec{0}).
 \end{align*}
 Thus the relation \eqref{Energytimeuncertainty} can be re-written as
 \begin{align*}
  \Delta \,\mathcal{P}^\mu \, \Delta {X}_\mu =\,-\Delta E \,\delta \tau,
 \end{align*}
since the inertial system is co-moving with the coordinate system and where $\delta \tau$ is in this case the characteristic proper time of exchange of the energy of the system with the environment. In an arbitrary inertial coordinate system the relation is expressed as
\begin{align*}
2\,\hbar\leq \,\Delta E \,\delta \tau =\,\Delta E'\,\delta \tau'-\,\Delta \vec{P}'\,\cdot \delta \vec{x}'\leq \,\Delta E'\,\delta \tau',
\end{align*}
establishing the energy-time uncertainty relation for an arbitrary coordinate system,
\begin{align}
\hbar\leq \,\Delta E'\,\delta \tau' .
\label{energy time uncertainty 4}
\end{align}
By means of the relation $\vec{\mathcal{P}}=\,\frac{\mathcal{E}}{c^2}\,\vec{v}$,
\begin{align*}
\hbar\leq \Delta E'\,\delta \tau'=\, \Delta \vec{P}'\,\cdot \delta \vec{x}' \frac{c^2}{|\vec{v}|^2} \leq \,\Delta \vec{P}'\,\cdot \delta \vec{x}' .
\end{align*}
This last relation can be interpreted as a form of uncertainty principle for position and momentum observables,
\begin{align}
\hbar\leq \,\Delta \vec{P}'\,\cdot \delta \vec{x}' .
\label{momentum position uncertainty}
\end{align}
The uncertainty relations \eqref{energy time uncertainty 4} and \eqref{momentum position uncertainty} for massive particles are quantum mechanical uncertainty relations that, here derived from Hamilton-Randers theory, have an emergent nature.

Consistency with energy-momentum conservation implies the need to generalized the uncertainty relations \eqref{energy time uncertainty 4}-\eqref{momentum position uncertainty} to massless systems. This is direct if one consider an elementary vertex in quantum field theory, with one of the lines corresponding to a gauge boson like a gluon or photon.

\subsection{General properties of the bare $U_{\tau}$ flow}
Let us consider the trajectories of the sub-quantum molecules under the $\mathcal{U}_t$ flow.
The {\it word lines predecessors} are defined by the maps
\begin{align*}
{\xi_k}:\mathbb{R}\to M^k_4,\, t\mapsto \xi_{k}(t)\in \,M^k_4,\quad k=1,...N
\end{align*}
solutions of the $\mathcal{U}_t$ flow.  Equivalently, by the definition of the $\tau$-time parameter, we can consider instead
\begin{align*}
\xi_{tk}(\tau):=\,\xi_k(t+\,2 T,\tau),\quad t\in\,[0,2T],\quad \tau\in\,\mathbb{Z},\quad k=1,...,N.
\end{align*}
By the embedings  $\varphi_k:M^k_4\to M_4$, each manifold  $M^k_4$ is diffeomorphic to the manifold of observable  events or spacetime manifold $M_4$. Then for each fixed $t\in [0,2T]$ one can consider the embeddings of the predecessor world lines $\{\xi_{tk},\,k=1,...,N\}$ of the sub-quantum molecules $\{1,...,N\}$ in the model spacetime manifold $M_4$ given by $\varphi_k({\xi_t}_k)={\hat{\xi}}_{tk}\hookrightarrow M_4$,
 \begin{align}
\hat{\xi}_{tk}:\mathbb{R}\to M_4,\quad \tau\mapsto (\varphi_k\circ {\xi}_{tk})(\tau).
\label{worldlinesofsubquantumdegreesoffreedom}
  \end{align}
 Since each value of a $\tau$-time parameter is associated with a particular fundamental cycle in a series of consecutive cycles, changing $\tau$ but fixing $t$ is equivalent to consider the position of the sub-quantum molecule at different cycles at a fixed internal time  $t\,(\mod 2T)$. This succession of locations defines a world line of each $k$ sub-quantum particle in the spacetime manifold $M_4$.

 We can also see that this construction defines a string parameterized by $t\in\,[0,2T]$.

The speed vector and accelerations of the sub-quantum molecules respect to the $U_{\tau}$ evolution are bounded,
\begin{align}
\begin{cases}
& \eta_4(\beta_{{kx}},\beta_{{kx}})\leq c^2,\\
& \eta_4(\beta_{{ky}},\beta_{{ky}})\leq \,A^2_{max},\quad k=1,...,N.
\end{cases}
\label{boundaccelerationandspeed}
\end{align}
The conditions \eqref{boundaccelerationandspeed} are the fundamental constraints associated with a geometry of maximal proper acceleration \cite{Ricardo2012, Ricardo2014}.
\subsection{Kinematical properties of the $U_\tau$ dynamics from the $U_t$ dynamics}
For each Hamilton-Randers system, the non-degeneracy of the fundamental tensor $g$ of the underlying Hamilton-Randers space is ensured when
 the vector field  ${\beta}$ is constrained by the condition \eqref{boundenesscondition}. Such condition implies that
 \begin{enumerate}
 \item The velocity vector of the sub-quantum atoms is normalized by the condition $\eta^k_4(\dot{\xi}_{tk},\dot{\xi}_{tk})\leq v^2_{\max}=c^2$. This implies that  the world lines of sub-quantum molecules on the time $\tau$ are  sub-luminal or luminal.

 \item If the {\it on-shell} conditions $\{\dot{x}^k=y^k,\,k=1,...,4N\}$
 hold good, then there is a maximal bound for the proper acceleration $\eta^k_4(\ddot{\xi}_{tk},\ddot{\xi}_{tk})\leq A^2_{\max}.$
 \end{enumerate}
are invariant under diffeomorphisms of the four-dimensional manifold. These two kinematical properties are heritage from the corresponding $U_t$ dynamics.

As we showed, the classical Hamiltonian function \eqref{Classical Hamiltonian} of a Hamilton-Randers system  can also be defined by the {\it partially  averaged Hamiltonian function} $H_t(u,p)$.
Hence the Hamiltonian is associated with the evolution operator associated with the average description of a sub-quantum molecule, described by two sub-quantum atoms, one evolving on one direction of $t$-time and the other on the opposite direction of the $t$-time,
\begin{align*}
\begin{cases}
& U_t\,\equiv I-\,\,F^2_t(u,p)dt,\\
& U_{-t}\,\equiv I-\,F^2_{-t}(u,p)dt
\end{cases}
\quad \Rightarrow \quad\frac{1}{2}(U_t+\,U_{-t})=\,1-\,\,H_t(u,p)d t,
\end{align*}
where the second line applies to the evolution of some initial conditions in an appropriate way.
 It is then natural to read the Hamiltonian function \eqref{Classical Hamiltonian} as a {\it time orientation average} of the Hamilton-Randers function $F^2_t$ associated with a particular form of classical Hamiltonian.

The Hamilton equations for $(u,p)$ under the evolution generated by the Hamiltonian ${H}_t(u,p)$ are
 \begin{align}
 \nonumber\frac{d u^i}{dt}=\,{u'}^i & =\,\frac{\partial H(u,p)}{\partial p_i} =\,2(1-\kappa(t,\tau))\,\beta^i(u)-\,\frac{\partial\kappa(t,\tau)}{\partial p_i}\,\left(\sum^{8N}_{k=1}\,\beta^k(u)p_k\right)\\
 & =\,2(1-\kappa(t,\tau))\,\beta^i(u),
 \label{Hamilton Equation 1}
 \end{align}
 \begin{align}
  \nonumber\frac{d p^i}{dt}=\,{p'}^i & =\,-\frac{\partial H(u,p)}{\partial u^i} =\,-2(1-\kappa(t,\tau))\,\sum^{8N}_{k=1}\,\frac{\partial \beta^k(u)}{\partial u^i}p_k +\frac{\partial\kappa(t,\tau)}{\partial u^i}\,\left(\sum^{8N}_{k=1}\,\beta^k(u)p_k\right)\\
 & = \,-2(1-\kappa(t,\tau))\,\sum^{8N}_{k=1}\,\frac{\partial \beta^k(u)}{\partial u^i}p_k.
 \label{Hamilton Equation 2}
 \end{align}
 Note that in the last line we have applied that $\kappa$ does not depend upon $u\in TM$.

 \subsection{Re-definition of the $t$-time parameter and $\mathcal{U}_t$ flow}
 The $\mathcal{U}_t$ flow has been parameterized by the conformal factor $\kappa(t,\tau)$. However, in order to obtain dynamical equations of motion \eqref{dynamicalsystem} from a Hamiltonian theory it is necessary to conveniently normalize the $t$-time parameter, since the Hamilton equations \eqref{Hamilton Equation 1}-\eqref{Hamilton Equation 2} of the Hamiltonian function  \eqref{RandersHamiltonian} are not in general the equations \eqref{dynamicalsystem}. What we need is to re-scale the $t$-time parameter such that the extra-factor $1-\kappa$ dissapear from the equations. If the {\it old}  time parameter is denoted by $\tilde{t}$ and the new external time parameter by $t$, then we impose the condition
 In order to resolve such incompatibility, we re-define
 \begin{align}
 d\tilde{t}\mapsto t=\,d\tilde{t}\,(1-\kappa(\tilde{t},\tau)).
 \label{redefinitionoftau}
 \end{align}
  As it stands, the relation \eqref{redefinitionoftau}  is well defined, since $\kappa(\tilde{t},\tau)$ does not depend on $u\in TM$.
The integration of the condition \eqref{redefinitionoftau} provides the necessary change in the parameter,
 \begin{align}
 t-t_0 =\, \int^{\tilde{t}}_{\tilde{t_0}}\,d \varsigma (1-\kappa (\varsigma,\tau)).
 \label{redefinitionoft2}
 \end{align}
 Note that, besides the requirements of compactness of the domain of definition and positiveness of the time parameter $\tilde{t}$, which is also translate to $t$ by equation \eqref{redefinitionoft2}, the parameter $t$ is arbitrarily defined.

 \subsection{The deterministic, local dynamics of the sub-quantum degrees of freedom}
 The effective Hamiltonian that describes the dynamics of a full fundamental cycle, including the metaestable limit \eqref{finalhamiltoniantevolution}, under the time re-parametrization \eqref{redefinitionoftau} is given by
\begin{align}
  H_t(u,p) =\begin{cases}
    \sum^{N}_{k=1}\sum^4_{\mu=1} \beta^\mu_{kx}(u)\,p_{x\mu k}+\,\sum^{N}_{k=1}\sum^4_{\mu=1}\, \beta^\mu_{ky}(u)\,p_{y\mu k}, & t\neq 2n\,T,\,n\in\, \mathbb{Z},\\
    0, &  t= 2n\,T,\,n\in\, \mathbb{Z}.
  \end{cases}
  \label{RandersHamiltonian}
\end{align}
Since this Hamiltonian is equivalent to the previous Hamiltonian \eqref{Classical Hamiltonian}, the corresponding Hamiltonian function is still denoted  by $H_t(u,p)$. Note that the $\beta$ functions depend upon the coordinates $u$. Since for each Hamilton-Randers system, they are functions of $t$-time, the Hamiltonian function will also depend on the $t$-time parameter implicitly.

With respect to this external $t$-time parameter, the Hamilton equations of \eqref{RandersHamiltonian} are (in a slightly different notation) of the form
\begin{align}
\begin{cases}
&\dot{u}^\mu_k :=\frac{d u^\mu_k}{d t}=\,\frac{\partial H_t(u,p)}{\partial p_{k\mu}}=\,\beta^\mu_k(u),\\
 &\dot{p}_{k\mu}:=\frac{d p_{k\mu}}{d t}=\,-\frac{\partial H_t(u,p)}{\partial u^\mu_k}=\,-\sum^{N}_{i=1}\sum^4_{\rho=1}\,\frac{\partial \beta^\rho_i(u)}{\partial u^\mu_k}p_{i \rho},\quad i,k=1,...,N,\, \mu,\rho =1,2,3,4,
\end{cases}
\label{HamiltonEquations}
\end{align}
where $u=(x,y)$.

The explicit relation between the equations \eqref{dynamicalsystem}-\eqref{dynamicalsystem2} and the Hamilton equations of motion. This relation is given by the relations
\begin{align}
\beta^\mu_{x k}\equiv \gamma^\mu_{x k},\quad \beta^\mu_{y k} =\, \gamma^\mu_{y k},\quad k=1,...,N,\, \mu,=1,2,3,4.
\end{align}
These function describe sub-quantum molecules. These are the fundamental degrees of freedom of our models.

At the classical level of the formulation of the dynamics of Hamilton-Randers systems, the basis of the physical interpretation given by the theory developed above can be expressed as
\begin{teorema}
For each dynamical system given by the equations \eqref{dynamicalsystem}-\eqref{dynamicalsystem2} where $\gamma \in \Gamma TTM$, there exists a Hamilton-Randers system whose Hamiltonian function is \eqref{RandersHamiltonian} and whose Hamilton-Randers equations are \eqref{HamiltonEquations}.
\label{Teoremaregular HamiltonRandersHamilton}
\end{teorema}

Although deterministic and local, the dynamical systems that we shall consider within the category of Hamilton-Randers systems are not classical Newtonian of geodesic dynamics, as in classical mechanics. Indeed, the relation  of the
\begin{align*}
\dot{x}^\mu_{k}=\,{1}/{m}\,\eta^{\mu\rho}p_{xk\rho},
 \end{align*}
will not hold in general, since the Hamiltonian function \eqref{RandersHamiltonian} that we are considering is linear on the canonical momentum variables.

\subsection{Geometric character of the Hamiltonian dynamics}
Let us consider the problem of the invariance of the Hamiltonian transformation \eqref{RandersHamiltonian}. If we consider a local change of coordinates
\begin{align*}
\vartheta:\mathbb{R}^4\to \mathbb{R}^4,\,x^\mu \mapsto \tilde{x}^\mu(x^\nu),
 \end{align*}
 there is an induced change in local coordinates on $TM$ given schematically by
\begin{align}
\begin{cases}
& x^\mu_k \mapsto \tilde{x}^\mu(x^\nu)_k,\\
& Y^\mu\mapsto  \tilde{y}^\mu_k=\,\frac{\partial \tilde{x}^\mu_k}{\partial x^\rho_k}\,y^\rho_k.
\end{cases}
\label{local change of coordinates 1}
\end{align}
Then the invariance of the Hamilton equations of motion for the $u^\mu$ coordinates implies that
\begin{align*}
\frac{d \tilde{x}^\mu_k}{d\tau} =\,\frac{\partial \tilde{x}^\mu_k}{\partial x^\rho_k}\frac{dx^\rho_k}{d\tau}=\,\frac{\partial \tilde{x}^\mu_k}{\partial x^\rho_k} \beta^\rho_k.
\end{align*}
Thus, in a similar way as for the formulae \eqref{local change of coordinates of the dynamical system functions 1}-\eqref{local change of coordinates of the dynamical system functions 2}, we have that for the $\beta$ functions in a Hamilton-Randers system, we have
\begin{align}
\tilde{\beta}^\rho_k =\,\frac{\partial \tilde{x}^\mu_k}{\partial x^\rho_k} \beta^\rho_k.
\label{local change of coordinates 2}
\end{align}
Also, invariance of first group of Hamilton equations \eqref{HamiltonEquations} implies the consistency condition
\begin{align}
\nonumber\tilde{\beta}^\mu_{ky} &=\,\frac{d\tilde{y}^\mu_k}{d\tau}=\,\frac{d}{d\tau}\,\left(\frac{\partial \tilde{x}^\mu_k}{\partial x^\rho_k}\,y^\rho\right)\\
\nonumber & =\,\frac{\partial \tilde{x}^\mu_k}{\partial x^\rho_k}\,\frac{d y^\rho_k}{d\tau}+\,\frac{\partial^2 \,\tilde{x}^\mu_k}{\partial x^\rho_k\,\partial x^\sigma_k}\,\frac{d x^\mu_k}{d\tau}\,y^\rho_k\\
& =\,\frac{\partial \tilde{x}^\mu_k}{\partial x^\rho_k}\,\beta^\rho_{ky}+\,\frac{\partial^2 \,\tilde{x}^\mu_k}{\partial x^\rho_k\,\partial x^\sigma_k}\,\beta^\sigma_{kx}(x)\,y^\rho_k.
\label{local change of coordinates 3}
\end{align}
In order to obtain the transformation rules for the momentum coordinates $(p_{kx\mu},p_{ky\mu})$ it is required that the Hamiltonian \eqref{RandersHamiltonian} remains invariant,
\begin{align*}
H_t(\tilde{x},\tilde{y},\tilde{p}_x,\tilde{p}_y) & =\,\sum_k \,\tilde{\beta}^\mu_{kx}\,\tilde{p}_{kx\mu}+\,\sum_k \,\tilde{\beta}^\mu_{ky}\,\tilde{p}_{ky\mu}\\
& =\,\,\sum_k \,\beta^\mu_{kx}\,p_{kx\mu}+\,\sum_k \,\beta^\mu_{ky}\,p_{ky\mu}\\
& =\,H_t({x},y,p_x,p_y).
\end{align*}
Using the transformation rules \eqref{local change of coordinates 1}, \eqref{local change of coordinates 2} and \eqref{local change of coordinates 3} and re-arranging conveniently, the transformation rules for the momentum coordinates are
\begin{align}
& p_{kx\rho}=\,\frac{\partial \tilde{x}^\mu_k}{\partial x^\rho_k}\,\tilde{p}_{kx\mu}+\,\frac{\partial^2 \,\tilde{x}^\mu_k}{\partial x^\rho_k\,\partial x^\sigma_k}\,\tilde{p}_{ky\rho}(x)\,y^\rho_k,\\
& p_{ky\rho}=\,\frac{\partial \tilde{x}^\mu_k}{\partial x^\rho_k}\,\tilde{p}_{ky\mu}.
\label{local change of coordinates 4}
\end{align}

The set of transformations \eqref{local change of coordinates 1}, \eqref{local change of coordinates 2}, \eqref{local change of coordinates 3} and \eqref{local change of coordinates 4} leave invariant the Hamilton equations \eqref{HamiltonEquations}. They are similar in structure to the transformations that one founds in higher order geometry \cite{MironHrimiucShimadaSabau:2002}.

\subsection{Emergence of the spacetime metric and spacetime diffeomorphism invariance}
One notes that in the domain ${\bf D}_0$ as postulated above, implies that $\beta|_{{\bf D}_0}=0$.
One observes that the dynamical information of the system is encoded in the vector field $\beta\in \,\Gamma TM$. We also note that the Hamiltonian is zero in a hypersurface, corresponding to the domain ${\bf D}_0$, of $TM$. Formally, the null condition for the Hamiltonian in the domain ${\bf D}_0$ is of the form
\begin{align}
\sum^{N}_{k=1}\sum^4_{\mu=1} \beta^\mu_{kx}(u)\,p_{kx\mu }\big|_{{\bf D}_0}+\,\sum^{N}_{k=1}\sum^4_{\mu=1}\, \beta^\mu_{ky}(u)\,p_{ky\mu }\big|_{{\bf D}_0} =0.
\label{condition null Hamiltonian function}
\end{align}
In this sum, the first term corresponds to the contraction of timelike vector fields with time-like $1$-forms, since the momentum $1$-forms $\,p_{kx}$ are such that $\eta (p_{kx},p_{kx})=0$. The second term contains the contraction of spacelike vectors $\beta_{ky}$ with spacelike momentum $1$-forms $p_{ky}$, making the balance \eqref{condition null Hamiltonian function} possible. Also note that the condition \eqref{condition null Hamiltonian function} is compatible with the requirement of invariance under the time inversion $\mathcal{T}$ operation.

The mechanism to identify $(M_4,\eta_4)$ with $(\mathcal{M}_4,\eta'_4)$ is intuitive, if we identify the domain ${\bf D}_0$ as the classical domain where macroscopic observables are well defined. This is the domain where one can identify macroscopic events. Hence is a domain dominated by macroscopic properties and systems. The perturbation due to a sub-quantum system is then negligible. In fact,
one notes that in the domain ${\bf D}_0$ as postulated above, implies that $\beta|_{{\bf D}_0}=0$. This implies that in that regime, $F\equiv\,\alpha$. The domain ${\bf D}_0$ is identified with the domain of classicality, then implies that the four manifold $M_4$ must be identified with the spacetime manifold $\mathcal{M}_4$, at least up to diffeomorphism, and that the metric $\eta_4$ must be identified with the spacetime metric $\eta'_4$, at least up to conformal factor.

With respect to the properties inherited by the spacetime structure from the model spacetime structure, it is relevant to mention that both will be relativistic spacetimes with the same maximal speed of propagation and maximal proper acceleration $A_{\mathrm{max}}$.

Related with the emergence of spacetime structure in the ${\bf D}_0$ domain, that we should denote by $(\mathcal{M}_4, \lambda \eta_4)$, there is also the identification of the group of diffeomorphism $\Diff (M_4)$ with the group of spacetime diffeomorphisms $\Diff (\mathcal{M}_4)$, providing a derivation from first principles of one of the fundamental assumptions of general relativity.

\subsection{Macroscopic observers and metric structures}\label{Section on macroscopic observers}
 Since $M_4$ is endowed with a  Lorentzian metric $\eta_4$ with signature $(1,-1,-1,-1)$, there is a natural definition of {\it ideal macroscopic observer},
  \begin{definicion} Given the Lorentzian spacetime $(M_4,\eta_4)$
  an ideal macroscopic observer is a smooth, timelike vector field $W\in\,\Gamma TM_4$.
  \label{definiciondemacroscopicidealobserver}
  \end{definicion}
  This is the standard definition of observer in general relativity, based upon the notions of coincidence and congruence. The observers that we shall consider are ideal macroscopic observers defined as above.

  Given an observer $W$  there is  associated with the Lorentzian metric $\eta_4$ a Riemannian metric on $M_4$ given by the expression
\begin{align}
\bar{\eta}_4(u,v)=\,\eta_4(u,v)-\,2\,\frac{\eta_4(u,W(x))\eta_4(v,W(x))}{\eta_4(W(x),W(x))},\quad u,v\in \,T_x\,M_4,\,\,x\in M_4.
\label{RiemannianfromLorentz}
\end{align}
$d_W:M_4\times M_4\to \mathbb{R}$ is the distance function  associated with the  norm of the Riemannian metric $\bar{\eta}_4$.

We postulate that the relation between the metric $\eta_4$ on $M_4$ and the metric $\varepsilon_4$ on the spacetime manifold $\mathcal{M}_4$ is a conformal relation, $\varepsilon =\,\lambda (N) \,|eta_4$. This relation is suggested because the content of $M_4$ is determined, modulo diffeomorphism, by an individual sub-quantum molecule, while the matter content associated with $\mathcal{M}_4$ is associated with $N$ sub-quantum molecules. Furthermore, $\mathcal{M}_4$ is associated with the domain ${\bf D}_0$, that we will see in {chapter 6} and {chapter 7}, corresponds to the classical spacetime where events can happen. Since the source will appear as a pointwise particle $N$ times larger than for a sub-quantum particle, we suggest the dependence in $N$ of the conformal factor $\lambda$.
We can apply similar notions to the spacetime manifold $(\mathcal{M}_4,\lambda \eta_4)$. Given an ideal macroscopic observer $W$, for a fixed  point $x\in M_4$ and the world line $\hat{\xi}_{tk}$, the {\it distance function} between $x$ and $\hat{\xi}_{tk}$ is given by the expression
\begin{align*}
d_W(x,\hat{\xi}_{tk}):=\,\inf \big\{\lambda\,d_W(x,\tilde{x}),\,\tilde{x}\in\,\hat{\xi}_{tk})\big \}.
\end{align*}
The distance $d_W(x,\hat{\xi}_{tk})$ depends on the observer $W$. It also depends on the specification of the family of diffeomorphisms $\{\varphi_k,\,k=1,...,M\}$. Such condition is consistent with spacetime diffeomorphims invariance.

$\inf\{d_W(x,\hat{\xi}_{tk}),\,k=1,...,N\}$ depends on $t\in \mathbb{R}$. Since $t$-parameters are not observables, the corresponding distance is not observable. However, there is a definition of distance which is independent of $t$, namely,
 \begin{align}
 d_W(x,\hat{\xi}_k)=\,\inf\{d_W(x,\hat{\xi}_{tk}),\,t\in \,[2nT,2(n+1)T]\},
  \label{distancexxi}
 \end{align}
 with $n\in\,\mathbb{Z}$.
 The corresponding distance function $d_W(x,\hat{\xi}_k)$ is an observable depending on the observer $W$.

The distance function $d_W$ will be used in the construction of the quantum description of the system. Specifically, we shall define the density of probability at the point $x\in M_4$ associated with the quantum system in terms of the associated geometric distribution of the world lines associated with the sub-quantum degrees of freedom.
\subsection{Notion of physical observable in emergent quantum mechanics}
The freedom in the possible choices of the diffeomorphisms $\{\varphi_k,\,k=1,...,N\}$ is analogous to the the freedom in the choice of a gauge that appears in gauge theories. The associated ambiguities in the formulation of the theory are resolved automatically if we assume that all physical observable quantities must be insensitive to most of the details of the sub-quantum description, in particular, to the details invariant under the action of $\Diff(\mathcal{M}_4)$.

The ontological elements of the theory are described by the following notion:
\begin{definicion}
A property is ontological or real if it can be represented in $M_4$ by means of some of the diffeomorphisms $\varphi_k:M^k_4\to M_4$ or its induced maps.
\end{definicion}
For instance, the intersection of the curves associated with two different sub-quantum degrees of freedom $\varphi_l (\gamma_k(\mathbb{K}))\cap \varphi_l (\gamma_l(\mathbb{K}))$ is real.

Now we define a notion of observable:
\begin{definicion}
An observable property is a real property that does not depend upon each individual sub-quantum degree of freedom $k$.
\label{macroscopic observable}
\end{definicion}
An observable can be defined, for instance, in the concentration domain ${\bf D}_0$, where one can assume that for all practical purposes all the sub-quantum degrees of freedom are very close to each, in a way that one can modelled such a situation as that they meet at a given point of the spacetime manifold $T\mathcal{M}_4$. Observables, view in this way, constitute an effective description of the system.

The spacetime manifold $\mathcal{M}_4$, view in this way, constitutes an effective and emergent description.

\newpage

\section{\LARGE{Hilbert space formulation of Hamilton-Randers dynamical systems}}\label{chapter on Koopman-von Neumann formulation}
\bigskip
\bigskip
 The quantum mechanical description of physical systems arises naturally from the Hilbert space formulation of Hamilton-Randers dynamical systems. This formulation will be developed in this {\it chapter}. The application of Hilbert space theory to deterministic dynamical systems began with the work of Koopman and von Neumann on the study of ergodicity \cite{Koopman1931,von Neumann}. Koopman-von Neumann theory, combined with ideas related to notions of information  and complexity, was first applied in the context of emergent approaches to quantum mechanics by G. 't Hooft \cite{Hooft}. In our theory, however, a direct application of  Koopman-von Neumann theory  to Hamilton-Randers systems is developed. At this point, we would like to remark the different mathematical and ontological content of 't Hoot's theory and Hamilton-Randers theory.

 We develop our theory using local coordinate techniques. Local coordinate formulation of theories always leads to issues concerning the general covariance of the theory considered. We shall show that the Koopman-von Neumann formulation of Hamilton-Randers models are consistent with general covariant.
\subsection{On the pre-Hilbert space formulation of Hamilton-Randers dynamical systems}
  Each Hamilton-Randers dynamical system has associated a Hilbert space and quantum mechanical formulation of the dynamical evolution. Given the tangent manifold $TM$ corresponding to the configuration manifold of a Hamilton-Randers system, we can consider a local open set $TU\subset \,TM$ and a system of local coordinates $x^\mu_k,y^\mu_k:TU\to \, \mathbb{R},\quad \mu=1,2,3,4,\,k=1,...,N.$
We assume the existence of a vector space $\bar{\mathcal{H}}_{Fun}$ and a collection of linear operators
    \begin{align}
    \{\hat{u}^\mu_k,\,k=1,...,N\}^4_{\mu=1}\equiv\,\left\{\hat{x}^\mu_k,\hat{y}^\mu_k:\bar{\mathcal{H}}_{Fun}\to \bar{\mathcal{H}}_{Fun},\,k=1,...,N\right\}^4_{\mu=1}
    \end{align}
such that the values of the local position coordinates $\{x^\mu_k,\,k=1,...,N,\,\mu=1,2,3,4\}$
 and the local speed  coordinates $\{y^\mu_k,\,k=1,...,N,\,\mu=1,2,3,4\}$ of the sub-quantum molecules are the eigenvalues of the operators $\hat{u}^\mu_k$. We assume the existence of a {\it generator set}
 \begin{align*}
 \{|x^\mu_l,y^\mu_l\rangle\}^{N,4}_{k=1,\mu=1}\subset \bar{\mathcal{H}}_{Fun}
  \end{align*}
  of common eigenvectors of the operators $\{\hat{x}^\mu_k ,\,\hat{y}^\mu_k,\,\mu=1,2,3,4;\,\,k=1,...,N \}$ characterized by the relations
    \begin{align}
    \begin{cases}
 & \hat{x}^\mu_k |x^\mu_l,y^\nu_l\rangle=\,\sum_{l}\,\delta_{kl}\,x^\mu_l\,|x^\mu_l,y^\nu_l\rangle,\\
 & \hat{y}^\nu_k |x^\mu_l,y^\nu_l\rangle=\,\sum_{l}\delta_{kl}\,y^\nu_l\,|x^\mu_l,y^\nu_l\rangle,
 \end{cases}
    \label{positionvelocityeigenbasis}
    \end{align}
where $\delta_{kl}$ is the Kronecker delta function. A generic element of $\bar{\mathcal{H}}_{Fun}$ is of the form
\begin{align*}
\Psi =\sum^{N}_{k=1} \,\alpha_k\,|x^\mu_l,y^\nu_l\rangle,\quad \alpha_k\in\,\mathbb{K},
\end{align*}
if the numeric field used in the formulation of the dynamics is $\mathbb{K}$.

The elements of the generation set \eqref{positionvelocityeigenbasis} represent the configuration state of the sub-quantum degrees of freedom describing a Hamilton-Randers system. These elements of $\bar{\mathcal{H}}_{Fun}$ will be called {\it ontological states}, since they determine the dynamical evolution of the ontological degrees of freedom of Hamilton-Randers spaces. In the case that the dynamics and the degrees of freedom are discrete, the dimension of $\bar{\mathcal{H}}_{Fun}$ is finite.

By assumption, the position and velocity operators associated with the sub-quantum degrees of freedom commute among them when they are applied to each element of $\bar{\mathcal{H}}_{Fun}$ and when they are applied at equal values of the time $t\in \,\mathbb{R}$:
\begin{align}
\left[\hat{x}^\mu_k,\hat{x}^\nu_l\right]|_{t}=\,0,\quad \left[\hat{y}^\mu_k,\hat{y}^\nu_l\right]|_{t}
=\,0,\quad \left[\hat{x}^\mu_k,\hat{y}^\nu_l\right]|_{t}=0, \quad\,\mu,\nu=1,2,3,4;\, k,l=1,...,N.
\label{commutativityoftheontologicallabels}
\end{align}
The specification of the commutation relations of the algebra is done at each fixed value of the pair $(t,\tau)\in\,[0,2\,T]\times\mathbb{Z}$, that in the continuous approximation is $(t,\tau)\in\,[0,2\,T]\times\mathbb{Z}$. Specifying only the $\tau$-time value is not enough for the formulation of the commutation relation.

This construction implies an equivalence between the spectra of the operators $\{\hat{x}^\mu_k ,\,\hat{y}^\mu_k,\,\mu=1,2,3,4;\,\,k=1,...,N \}$ and the points in a local coordinate chart $TU\subset TM$. In particular, the local coordinate system over $TM$ determines the {\it local spectrum of eigenvalues}. In local coordinates, the relation is of the form
\begin{align}
Spec\left(\{\hat{x}^\mu_k ,\,\hat{y}^\mu_k\}^4_{\mu=1}\right)\to TM^k_4,\quad
\{|x^\mu_k,y^\nu_k\rangle\}^4_{\mu=1}\mapsto (x^\mu_k,y^\mu_k),\, \mu=1,2,3,4;\,\,k=1,...,N.
\end{align}
Different local spectrum must be consistent with each other, in consonance to how local coordinate systems must also be consistent between them.

We now introduce the procedure for the quantization of observables that depend upon the coordinate and position variables and  their time derivative functions. Since the commutativity conditions \eqref{commutativityoftheontologicallabels} and the fundamental states are eigenstates of the coordinate and velocity operators $\hat{u}^\mu_k=\,(\hat{x}^\mu_k,\hat{y}^\mu_k$), it is natural to define the quantization of a function and the coordinate derivative operator of a function by the corresponding operations on the eigenvalues. Thus to a function $\Xi : TM\to \mathbb{C}$, there is associated a linear operator in the vector space $\bar{\mathcal{H}}_{Fun}$ determined by the relation
\begin{align}
\widehat{\Xi}(x^\mu_k,y^\mu_k)\,|x^\mu,y^\mu\rangle :=\,\Xi({x}^\mu_k,{y}^\mu_k)\,|x^\mu,y^\mu\rangle .
\label{quantization of a function}
\end{align}
Similarly, the partial derivative operator is determined by the relation
\begin{align}
\widehat{\frac{\partial} {\partial \,x^\mu} \Xi}({x}^\mu_k,{y}^\mu_k)\,|x^\mu,y^\mu\rangle :=\,
\frac{\partial \Xi({x}^\mu_k,{y}^\mu_k)}{\partial x^\mu}\,|x^\mu,y^\mu\rangle
\label{quantization of the derivative of a function}
\end{align}
and similarly for other partial derivative operators,
\begin{align}
\widehat{\frac{d\,\Xi}{dt}}(x^\mu_k,y^\mu_k)\,|x^\mu,y^\mu\rangle :=\,\Xi({x}^\mu_k,{y}^\mu_k)\,|x^\mu,y^\mu\rangle,
\label{quantization of a time derivative of a function}
\end{align}
\begin{align}
\widehat{\frac{d}{dt}\frac{\partial\, \Xi}{\partial x^\mu}}({x}^\mu_k,{y}^\mu_k)\,|x^\mu,y^\mu\rangle :=\,
\frac{d}{d t}\,\left(\frac{\partial \Xi({x}^\mu_k,{y}^\mu_k)}{\partial x^\mu}\right)\,|x^\mu,y^\mu\rangle
\label{quantization of the time derivative of a function 2}
\end{align}

The commutation relations \eqref{commutativityoftheontologicallabels} are only a piece  of the full canonical quantization conditions of the system. The completion of the canonical quantization conditions is made by assuming the following additional {\it quantum conditions}:
\begin{itemize}

\item There is a set of symmetric linear operators $\left\{\hat{p}_{x k\mu},\hat{p}_{y k\mu},\,k=1,...,N,\,\mu=1,2,3,4\right\}$
  that generate local diffeomorphisms on $TM$ along the integral curves of the local vector fields
    \begin{align}
    \left\{\frac{\partial}{\partial x^\mu_k},\,\frac{\partial}{\partial y^\nu_k}\,\in \Gamma\,TTU\quad \,\mu,\nu=1,2,3,4;\,k=1,...,N\right\}.
    \label{interpretation of quantum operators}
    \end{align}

\item The phase space $T^*TM$ is commutative,
\begin{align}
\begin{cases}
& [\hat{x}^\mu_k,\hat{x}^\nu_l]=0,\,\left[y^\mu_k,y^\nu_l\right]=0,\,\left[x^\mu_k,y^\nu_l\right]=0,\\
& [\hat{p}_{x k\mu},\hat{p}_{y l \nu}]=0,\quad \mu,\nu=1,2,3,4;\, k=1,...,N.
 \end{cases}
\label{commutativequantumconditions}
\end{align}

\item The following commutation relations at each fixed two-times $(t,\tau)\in\,\mathbb{R}\times \mathbb{R}$ hold good,
\begin{align}
\begin{cases}
& [\hat{x}^\mu_k,\hat{p}_{x l\nu}]= \,\imath\,\hbar\,\delta^\mu_\nu\,\delta_{kl}\\
& [\hat{y}^\mu_k,\hat{p}_{y l\nu}]=\, \imath\,\hbar\,\delta^\mu_\nu\,\delta_{kl},\quad \mu,\nu=1,2,3,4;\, k=1,...,N.
\end{cases}
\label{quantumconditions}
\end{align}
\end{itemize}
The collection of operators on $\bar{\mathcal{H}}_{Fun}$
\begin{align}
\{\hat{u}^\mu_k,\,\hat{p}_{k\mu},\,\mu=1,2,3,4,\,k=1,...,N\}
\label{fundamental operators}
 \end{align}
 satisfying  the {\it fundamental algebraic relations} \eqref{quantumconditions}-\eqref{commutativequantumconditions} are called {\it fundamental canonical operators}.

 \begin{comentario}
 Since the dynamics in a Hamilton-Randers theory is discrete, it is natural that the spacetime should be modeled as a non-commutative space, for instance, as a quantum tangent spacetime model of Snyder's type \cite{Snyder}. According to this view, the continuous version of the models considered in this work is only an approximation and the commutativity conditions for the coordinates and velocities should be seen as an approximation to a more precise modelling.
 \end{comentario}
\subsubsection{Scalar product in $\mathcal{H}_{Fun}$}
 The vector space $\bar{\mathcal{H}}_{Fun}$ is a pre-Hilbert space with an inner product defined in the following way. Let us consider two sub-quantum degrees of freedom $k,l$ and the diffeomorphisms to the model manifold, $\varphi_k:M^k_4\to M_4$, $\varphi_l:M^l_4\to M_4$. First it is defined the inner scalar product of  ontological states,
 \begin{align}
\langle x_l,y_l|\tilde{x}_k,\tilde{y}_k\rangle := \,\delta_{lk}\,\delta(\varphi_l(x)-\varphi_{l}(\tilde{x}))\,\delta(\varphi_{l*}(y)-\varphi_{k*}(\tilde{y})).
\label{definiciondeproductoscalarenHPlanck}
\end{align}
This operation has the following properties,
\begin{itemize}
\item The symmetric property for this product operation,
 \begin{align*}
 \langle x_l,y_l|\tilde{x}_k,\tilde{y}_k\rangle = \,\langle \tilde{x}_k,\tilde{y}_k|  x_l,y_l \rangle
 \end{align*}
 holds good.
\item It is diffeomorphism invariant in the sense that the delta products are integrated on the correponding tangent space $TM^k_4$ geometrically. If the diffeomorphisms change to  $\tilde{\varphi}_k:M^k_4\to M_4$, $\tilde{\varphi}_l:M^l_4\to M_4$, then
\begin{align*}
\langle x_l,y_l|\tilde{x}_k,\tilde{y}_k\rangle := \,\delta_{lk}\,\delta(\tilde{\varphi}_l(x)-\tilde{\varphi}_{l}(\tilde{x}))\,\delta(\tilde{\varphi}_{l*}(y)-\tilde{\varphi}_{k*}(\tilde{y})).
\end{align*}

\end{itemize}
 The scalar product \eqref{definiciondeproductoscalarenHPlanck} is also the basis for the definition of a norm function in the vector space $\bar{\mathcal{H}}_{Fun}$.

 A particular example of the relation \eqref{definiciondeproductoscalarenHPlanck} is when $x_k$ and $x_l$ correspond to the same $x\in\,M_4$ via the inverse of the diffeomorphisms $\varphi_k$ and $\varphi_l$ respectively. In this case the relation can be re-cast formally as
 \begin{align}
\int_{M_4}\,dvol_{\eta_4}\langle x^\mu_l,y^\nu_l|x^\rho_k,y^\lambda_k\rangle = \,\delta^k_l\,\delta(y^\nu-y^\lambda),
\label{productoscalarenHPlanck}
\end{align}
which is  similar to the orthogonality property of position quantum states.

\subsection{The Hilbert space of ontological states}
The extension of this product rule to arbitrary linear combinations of fundamental vectors is achieved by assuming the adequate bilinear property of the product operation respect to the corresponding number field $\mathbb{K}$ in the vector space structure of $\bar{\mathcal{H}}_{Fun}$. We did not fix the number field $\mathbb{K}$ yet.
Let us  consider combinations of fundamental states of the form
\begin{align}
|\Psi\rangle=& \sum^N_{k=1} \,\alpha_{k} (\varphi^{-1}_{1}(x),z_{1},\varphi^{-1}_{2}(x),z_{2},...,\varphi^{-1}_{N}(x),z_{N})|\varphi^{-1}_{k}(x),z_{k}\rangle,\quad x\in\,M_4,
\label{general pre-quantum state}
\end{align}
Let us  remind here that sub-quantum atoms are described by curves $\gamma_k :\mathbb{K}\to M^k_4$, $\gamma_l:\mathbb{K}\to M^l_4$.
Locality implies that for most of the cases, except when intersecting on the image $\varphi_l (\gamma_k(\mathbb{K}))\cap \varphi_l (\gamma_l(\mathbb{K}))$, the $k$-degree of freedom and the $l$-degree of freedom are independent from each other. Therefore, except in a set of measure zero, the state \eqref{general pre-quantum state} is of the form
\begin{align}
|\Psi\rangle=& \sum^N_{k=1} \,\alpha_{k} (\varphi^{-1}_{k}(x),z_{k})|\varphi^{-1}_{k}(x),z_{k}\rangle,\quad x\in\,M_4,
\label{pre-quantum state}
\end{align}
where each coefficient defines a function on each fiber, $\alpha_{k} (\varphi^{-1}_{k}(x),\cdot):T_{(\varphi^{-1}_{k}(x),z_{k})}M^k_4\to \mathbb{R}$.

The state $|\Psi\rangle$ depends on the point $x\in M_4$ of the model manifold $M_4$ and on the lift from $M_4$ to the configuration space $M\cong \prod^N_{k=1}\,\times M^k_4$. Such lift is not natural, since it depends on the choice of the family of diffeomorphisms $\{\varphi_k: M^k_4 \to M_4,\,k=1,...,N\}$ and on the point $(z_1,...,z_N)$ of the fiber at the point $(\varphi^{-1}_1,...,\varphi^{-1}_N)\in \,M \cong \prod^N_{K=1} \,M^k_4$.

In the case of the dynamical evolution corresponding to a free, individual quantum  system, one expects that each fundamental cycles are identical to each other. Therefore, the vector $|\Psi\rangle$ must have $2T$-modular invariance. A natural way to implement this condition is by assuming that the coefficients  $\alpha_k (\varphi^{-1}_k(x),z_k)$ are complex numbers and that they are unimodular equivalent after a period $2T$ of  $U_t$ evolution. This argument partially motivates that the  fundamental states $|\varphi^{-1}_k(x),z_k\rangle$ that we shall consider will be complex combinations, $\alpha_k (\varphi^{-1}_k(x),z_k)\in \,\mathbb{C}$.
\begin{comentario}
One main difference between our theory and Dolce's theory of periodic boundary conditions on time coordinate is the times to respect to which the periodicity applies. While for our theory it applies to a $t$-time, in Dolce's theory it applies to the usual notion of macroscopic time \cite{Dolce}.
\end{comentario}

Other possibility to implement the $2T$-modulo invariance is that the time parameters corresponds to subset of quotient fields $\mathbb{Z}/p\mathbb{Z}$, for a large prime $p$. The large limit of this fields has been investigated in \cite{Ricardo quotient rings}, where using metric geometry methods, it was possible to demonstrate the existence of limits for $p\to \,+\infty$.

The norm of a vector $|\Psi\rangle$ is obtained by  assuming a bilinear extension  of the relation \eqref{definiciondeproductoscalarenHPlanck}. Therefore, the norm in $\mathcal{H}_{Fun}$ is  defined by the expression
\begin{align}
\|,\|_{Fun}:\mathcal{H}_{Fun}\to\mathbb{R},\quad \Psi\mapsto \|\Psi\|^2_{Fun}=\,\langle \Psi |\Psi\rangle :=\,\sum^N_{k=1}\,\int_{TM^k_4} \,d^8vol_{\eta^k_4}\,|\alpha_k|^2.
\label{norminHFun}
\end{align}
 The linear closure of the vectors $\Psi$ in $\bar{\mathcal{H}}_{Fun}$ with finite norms  is denoted by $\mathcal{H}_{Fun}$ and with the scalar product associated with this  norm. $\mathcal{H}_{Fun}$ is endowed with a product scale by assuming the parallelogram rule. In this way the vector space $H_{Fun}$ is endowed with a pre-Hilbert space structure. We further assume that for all practical purposes $\mathcal{H}_{Fun}$ is complete with such scalar product and hence, a Hilbert space.

\subsubsection{Mathematical naturalness of the Heisenberg representation}
The invariance in the product \eqref{definiciondeproductoscalarenHPlanck} with respect to the external $\tau$-time parameter is assumed, in accordance with the congruence $2T$-module  invariance property.
In order to achieve this property, we adopt the {\it Heisenberg picture of the dynamics}, where the ontological states $|x^\mu_k,y^\nu_k\rangle$ do not change with time, but the operators $\{\hat{x}^\mu_k,\hat{y}^k_k\}^{4,N}_{\mu,k=1,1}$ change with the $\tau$-time evolution.
\subsubsection{Resolution of the identity}
The constructions discussed above imply the existence of a {\it resolution of the identity of the unity},
\begin{align}
I=\,\sum^N_{k=1}\,\int_{T_{x_k}M^k_4}\,d^4z_k\,\int_{M^k_4}\,dvol_{\eta^k_4}(x_k)\,|x^\mu_k,z^\mu_k\rangle\langle x^\mu_k,y^\mu_k|.
\label{decomposition of unity in coordiante basis}
\end{align}
Note that there is no sum on the index $\mu$. This relation will be applied later.
\subsubsection{Momentum basis}
The momentum operators
\begin{align*}
\{\hat{p}_{xk\mu},\hat{p}_{yk\mu},\,\mu=1,...,4,\,k=1,...,N\},
\end{align*}
being the generators of diffeomorphisms in the real manifolds $M^k_4$,
have real eigenvalues. Since they commute to each other, there is a generator set of common eigenvectors of momentum operators,
\begin{align*}
\{|p_{xk\mu},p_{yk\mu}\rangle ,\,\mu=1,...,4,\,k=1,...,N \}\subset \bar{\mathcal{H}}_{Fun}
\end{align*}
such that
\begin{align*}
\begin{cases}
& \hat{p}_{xl\nu}\,|p_{xk\mu},p_{yk\mu}\rangle=\,\delta_{lk}\,p_{xl\nu}|p_{xk},p_{yk}\rangle,\\
& \hat{p}_{yl\nu}\,|p_{xk\mu},p_{yk\mu}\rangle=\,\delta_{lk}\,p_{yl\nu}\,|p_{xk\mu},p_{yk\mu}\rangle.
\end{cases}
\end{align*}

The vectors $|p_{xk\mu},p_{yk\mu}\rangle$  are not ontological states, since they do not represent the configuration of the sub-quantum degrees of freedom of a Hamilton-Randers systems. The collection $\{|p_{x_k\mu},p_{yk\mu}\rangle,\,\mu=1,...,4,\,k=1,...,N\}$ is the {\it momentum basis}.

 Since the vectors $|p_{xk\mu},p_{zk\mu}\rangle$ are elements of the Hilbert space $\mathcal{H}_{Fun}$, then one can consider the products
  \begin{align*}
  \langle x,y|p_x,p_y\rangle,\quad  (p_x,p_y)\in T^*_{(x,y)}TM.
  \end{align*}
The canonical quantization conditions and the meaning of the operators are analogous to the analogous appearing in standard quantum mechanics. Therefore it is reasonable that the above product has the form
  \begin{align}
  \langle x,y|p_x,p_y\rangle=\,\frac{1}{(2\pi\hbar)^{4N/2}}\,e^{\,\frac{\imath}{\hbar}\left(\sum^N_{k=1}\,p_{x k \mu}\,x^\mu_{k}+\,p_{y k \mu} \,y^\mu_k\right)}\in\,\mathbb{C}.
  \label{matrix element momentum space}
  \end{align}
  This expression, as the analogous one in quantum mechanics, is not invariant under local transformations \eqref{local change of coordinates 1}-\eqref{local change of coordinates 4}.

\subsection{Heisenberg dynamics of Hamilton-Randers dynamical systems}
The Hamiltonian operator of a Hamilton-Randers dynamical system is obtained by applying the quantization procedure described in the previous sections to the classical Hamiltonian \eqref{RandersHamiltonian}, using the canonical quantization rules \eqref{quantumconditions} and \eqref{commutativequantumconditions}. Note that at this level it is not required for the Hamiltonian $H_t$ to be an  Hermitian operator. This is because we are adopting the Heisenberg picture of dynamics only in order to reproduce the deterministic differential equations \eqref{HamiltonEquations} and not as a description of the dynamics of quantum observables.

With this aim, we consider the quantization of Hamiltonian operator \eqref{RandersHamiltonian} in the form
\begin{align}
\nonumber &\widehat{H}_t:\mathcal{H}_{Fun}\to \mathcal{H}_{Fun}\\
  &\widehat{H}_t =\begin{cases}
    \sum^{N}_{k=1}\sum^4_{\mu=1} \hat{\beta}^\mu_{kx}({u})\,\hat{p}_{x\mu k}+\,\sum^{N}_{k=1}\sum^4_{\mu=1}\, \hat{\beta}^\mu_{ky}({u})\,p_{y\mu k}, & t\neq 2 n\,T,\,n\in\, \mathbb{Z},\\
    0, &  t= 2 n\,T,\,n\in\, \mathbb{Z}.
  \end{cases}
  \label{QuantumRandersHamiltonian}
\end{align}
 with $t\in\,[0,2\pi],\,\tau\in\,\mathbb{Z}$.

The Koopman-von Neumann approach to classical dynamical systems explodes an application of Heisenberg's picture of dynamics to classical, deterministic systems described by certain type of ordinary differential equations, with the advantage that one can use the tools of Hilbert theory and operator theory on Hilbert spaces to study properties of the dynamics from a spectral point of view \cite{Koopman1931,von Neumann,ReedSimonI}. In the case of Hamilton-Randers deterministic systems, one can apply Koopman-von Neumann theory  to the $t$-time quantum evolution. Given an hermitian operator $\widehat{O}$, the Heisenberg equation are determined by the Hamiltonian  operator \eqref{QuantumRandersHamiltonian},
\begin{align}
\imath\, \hbar\,\frac{d \widehat{O}}{d t}=\,\left[\widehat{H}_t,\widehat{O}\right].
\label{Heisenberg equation}
\end{align}
 Koopman-von Neumann theory leads to the following fundamental result for Hamilton-Randers theory,
\begin{teorema}
 The Heisenberg equations for $\hat{x}^\mu_k$ and $\hat{y}^\mu_k$ reproduce the first set of Hamiltonian equations \eqref{HamiltonEquations} of the internal evolution.
 \label{HeisenbergHamilton}
\end{teorema}
\begin{proof}
 The canonical quantum theory determined by the commutations relations \eqref{quantumconditions} and \eqref{commutativequantumconditions}) andthe Hamiltonian \eqref{QuantumRandersHamiltonian} in the evaluation of the Heisenberg equations, imply that outside of the metastable ${\bf D}_0$ the Heisenberg equations of evolution for the $t$-time evolution are
 \begin{align*}
 \imath\, \hbar\,\frac{d \hat{u}^i}{d t}=\,\left[\widehat{H}_t,\hat{u}^i\right]=\,\imath\, \hbar\,\left[ \sum^N_{k=1}\,{\beta}^{k}(\hat{u})\hat{p}_{k},\hat{u}^i\right]=\,\imath\, \hbar\,{\beta}^{i}( \hat{u}),\quad i=1,...,N.
 \end{align*}
 That is, the following relations must hold,
 \begin{align}
  \imath\, \hbar\,\frac{d \hat{u}^i}{d t}=\,\imath\, \hbar\,{\beta}^{i}(\hat{u}).
 \label{Heisengerg equation for the fundamental degrees of freedom}
 \end{align}
   The application of this operator equation to each of the fundamental states $|u^\mu_l\rangle=\,|x^\mu_l,y^\mu_l\rangle$,
   one obtains the first set of Hamiltonian equations \eqref{HamiltonEquations} for the $t$-time parameter.
 \end{proof}

 Similarly,  the momentum operators $\hat{p}_{k\mu}$ follow the Heisenberg equations
  \begin{align*}
 \imath\hbar\,\frac{d \hat{p}_{i}}{d t}=\,\left[\widehat{H}_t,\hat{p}_i\right]  =\,\left[ \,\sum^N_{k=1}\,{\beta}^{k}(\hat{u})\hat{p}_{k},\hat{p}_i\right].
 \end{align*}
 By the algebraic rules and the interpretation of the quantum operators given above, the explicit form of this differential equation is
 \begin{align}
 \imath\hbar\,\frac{d \hat{p}_{i}}{d t}=\,-\sum^N_{k=1}\,\widehat{\frac{\partial {\beta}^{k}( u)}{\partial u^i}}\hat{p}_{k}.
 \label{Heisengerg equation for the momenta of fundamental degrees of freedom}
 \end{align}

There is a significant difference between the operational equation for the coordinate and speed operators of the sub-quantum degrees of freedom given by equations \eqref{Heisengerg equation for the fundamental degrees of freedom} and the equation of motion of the momentum operator \eqref{Heisengerg equation for the momenta of fundamental degrees of freedom}, since the system \eqref{Heisengerg equation for the  fundamental degrees of freedom} is an autonomous dynamical system, while the system \eqref{Heisengerg equation for the momenta of fundamental degrees of freedom} is not autonomous, fully depending on the solutions for the equations \eqref{Heisengerg equation for the momenta of fundamental degrees of freedom}. This  property justifies that the generator set of the Hilbert space $\bar{\mathcal{H}}_{Fun}$ is composed by the eigenvectors of the operators $\hat{u}^\mu$ as in \eqref{positionvelocityeigenbasis}. This set of states contains all the independent and sufficient information to describe the dynamics of the system.
For this reason, they are named {\it ontological states}.

 Given the Hamiltonian function \eqref{RandersHamiltonian}, the Hamiltonian operator $\widehat{H}_t$ is not Hermitian and it is not uniquely defined.  One can construct equivalent Hamiltonian operators of the form
 \begin{align}
  \widehat{H}_{\alpha t}(\hat{u},\hat{p}):=\,\frac{1}{2}\,\sum^{8N}_{k=1}\,\Big((1+\alpha){\beta}^{k}(t, \hat{u})\,\hat{p}_{k}-\,\alpha\,\hat{p}_{k}\,{\beta}^{k}(t, \hat{u})\Big)
\label{QuantumRandersHamiltonian alpha}
 \end{align}
 with $\alpha\in\mathbb{R}$. $\widehat{H}_{\alpha t}(\hat{u},\hat{p})$ determines the same equations than $\widehat{H}_{t}(\hat{u},\hat{p})$,
 all of them reproducing the Hamiltonian equations \eqref{HamiltonEquations}. However, if we want to relate this Hamiltonian with a quantum Hamiltonian, it is useful to adopt $  \widehat{H}_{\alpha t}(\hat{u},\hat{p})$ as the Hermitian Hamiltonian. Moreover, we require that the quantum conditions \eqref{commutativequantumconditions} are preserved. A quantization prescription ensuring the canonical commutation conditions is the Born-Jordan quantization prescription \cite{BornJordan1925, BornHeisenbergJordan1925, de Gosson}.

The constraints
\begin{align}
\hat{y}^\mu_k=\,\frac{d \hat{x}^\mu_k}{d t},\quad  \,\mu=1,2,3,4;\quad k=\,1,...,N
\label{constraintsonthevelocity}
\end{align}
are imposed, in parallel to the constraints \eqref{onshell condition on the dynamical system} in the geometric formulation of Hamilton-Randers dynamiccal systems. These constraints define in the Koopman-von Neumann formalism the {\it on-shell evolution} of the sub-quantum molecules as curves on $TM$.

The Heisenberg picture can be adopted as a natural way to describe the dynamics of the ontological degrees of freedom. In the Heisenberg picture, the ket space of sub-quantum states generated by the collection  $\{|x^\mu_l,y^\mu_l\rangle\}^{N,4}_{k=1,\mu=1}$ do not change on $t$-time, but the canonical operators associated with $u$-coordinates and conjugate $p$-momenta they do. Therefore, a generic state $|\Psi\rangle\in \, \bar{\mathcal{H}}_{Fun}$ does not change with $t$-time in this picture, while operators that are functional dependent of the canonical operators $\{\hat{u}^\mu_k,\hat{p}^\mu_k\}^{N,4}_{k=1,\mu=1}$ do evolve with $t$-time.
\subsubsection{Emergence of $\tau$-time diffeomorphism invariance}
Finally, let us note that the meta-stability condition \eqref{finalhamiltoniantevolution} is translated in the quantum formulation of Hamilton-Randers systems  as the following condition,
\begin{align}
\lim_{t\to nT}\,\widehat{H}_t(\hat{u},\hat{p})|\Psi\rangle=0,\quad n\in\,\mathbb{Z}.
\label{Hamiltonianconstrain2}
\end{align}
This is the pre-quantum formulation of $\tau$-time diffeomorphism invariance. Note that it only holds on the meta-estable equilibrium domain ${\bf D}_0$; outside of such a domain, the total Hamiltonian of the system is not constrained by \eqref{Hamiltonianconstrain2} when acting on fundamental states $|\Psi\rangle\in \bar{\mathcal{H}}_{Fun}$.

\subsection{Sub-quantum mechanical operators and quantum mechanical operators}
The exact relation between the emergent operators $\{\widehat{X}^a,\dot{\widehat{X}^a},\,a=2,3,4\}$ and the entire family of operators $\{(\hat{x}^a_k,\hat{y}^a_k,\,\hat{p}^a_{kx_k},\,\hat{p}^a_{ky_k})\}^{N,4}_{k=1,a=2}$ is still missing in our version of the theory.  We should not expect that such correspondence is given in a simple form, among other things because there is no conservation on the number of degrees of freedom. In particular, we have that
\begin{align*}
\sharp \left\{(\hat{x}^a_k,\hat{y}^a_k,\,\hat{p}^a_{kx_k},\,\hat{p}^a_{ky_k}),\,k=1,...,N,\,a=2,3,4\right\}\gg \sharp \left\{\widehat{X}^a,\dot{\widehat{X}^a},\,a=2,3,4\right\}.
\end{align*}
One possibility is to consider the {\it average operators of the fundamental degrees of freedom}. In the case of additive coordinates, this model will be used in \ref{Chapter on emergence of gravity}.

In Hamilton-Randers theory, the fundamental degrees of freedom are not a re-name of the standard quantum degrees of freedom. Indeed, there are many more fundamental degrees of freedom than quantum degrees of freedom.

\subsubsection{On the operator interpretation of the external time coordinate}
The covariant formalism adopted, treating on equal foot each coordinate on the $4$-dimensional manifolds $M^k_4$ is consistent with the principle of relativity, in particular, with the absence of geometric structures as {\it absolute time structures}. While this point of view is natural at the classical level as it was considered in {\it chapter 3}, at the quantum level it requires further attention. In particular, the requirement that $\{\hat{x}^0_k\}^N_{k=1}$ and $\widehat{X}^0$ are operators is in sharp contrast with non-relativistic quantum mechanics, where time is a parameter and does not correspond to a physical observable. This ambiguity is resolved naturally starting by considering the situation in quantum field theory, where the spacetime coordinates $\{X^\mu\}$ are parameters or labels for the quantum fields. This interpretation is also  applicable to the variables $\{x^\mu_k\}^N_{k=1}$, since they are used in Hamilton-Randers models as labels for the configuration of the sub-quantum degrees of freedom and do not define observable quantities.-for marking observables, but. Therefore, it is reasonable that the same happens generically for the spacetime coordinates $\{X^\mu\}$.

The application of Koopman- von Neumann theory to Hamilton-Randers systems does not imply that the coordinates $\{x^\mu_k\}^N_{k=1}$ or $X^\mu$ become observables in our theory, but it is only a formal description of the system.  The coordinates of the sub-quantum molecules are considered as the spectrum of a collection of linear operators and the speed coordinates are bounded.
\subsection{Geometric consistence of the Koopman-von Neumann formulation}
The fundamental algebra determined by the collection of operators \eqref{fundamental operators} satisfying the algebra \eqref{commutativequantumconditions} and \eqref{quantumconditions} was defined in the framework of a local representation of the Hamilton-Randers dynamics. We started with local coordinates $(u^\mu_k,p^\mu_k)$ on $T^*TM$ and promoted them to canonical operators such that the operator equations \eqref{commutativequantumconditions} and \eqref{quantumconditions} hold good. It is still open the question whether such quantization procedure and the Heisenberg dynamics generated by the quantization of the Hamiltonian \eqref{QuantumRandersHamiltonian alpha} are compatible with the transformations induced by changes of coordinates in $T^*TM$ naturally induced by changes of coordinates in $TM$ and whether the dynamics is compatible with such coordinate changes. We address this question in this {\it section}.

Let $\vartheta:\mathbb{R}^{4}\to \mathbb{R}^4$ be a local coordinate change in $M_4$ and consider the induced coordinate changes in $TM$, that we can express as
\begin{align*}
(x^\mu_k,y^\mu_k)\mapsto \left(\tilde{x}^\mu_k,\frac{\partial \tilde{x}^\mu_k}{\partial x^\nu_k}\,y^\nu_k\right)=\,(\tilde{x}^\mu_k, \tilde{y}^\mu_k).
\end{align*}
We want to investigate the consistency  of the relations \eqref{commutativequantumconditions} and \eqref{quantumconditions} with the corresponding canonical relations relations in the {\it tilde} coordinate system,
\begin{align}
\begin{cases}
& \left[\hat{\tilde{x}}^\mu_k,\hat{\tilde{x}}^\nu_l\right]=\,\left[\hat{\tilde{y}}^\mu_k,\hat{\tilde{y}}^\nu_l\right]=\,
\left[\hat{\tilde{x}}^\mu_k,\hat{\tilde{y}}^\nu_l\right]=0, \quad\,\mu,\nu=1,2,3,4;\, k,l=1,...,N,\\
&[\hat{\tilde{x}}^\mu_k,\hat{\tilde{p}} _{\nu y_l}]= [\hat{\tilde{y}}^\mu_k,\hat{\tilde{p}} _{\nu x_l}]=\,0, \quad\,\mu,\nu=1,2,3,4;\, k,l=1,...,N.
\label{commutativequantumconditionstilde}
\end{cases}
\end{align}
and
\begin{align}
[\hat{\tilde{x}}^\mu_k,\hat{\tilde{p}} _{\nu x_l}]= \,\imath\,\hbar\,\delta^\mu_\nu\,\delta_{kl},\,\,\quad [\hat{\tilde{y}}^\mu_k,\hat{\tilde{p}} _{\nu y_l}]=\, \imath\,\hbar\,\delta^\mu_\nu\,\delta_{kl},\quad \mu,\nu=1,2,3,4,\, k=1,...,N.
\label{quantumconditionstilde}
\end{align}
where the operators are taken at fixed $t\in\,\mathbb{R}$. The new momentum operators $\hat{\tilde{p}}^\mu_k$ are defined in analogous manner than the original momentum operators $\hat{p}^\mu_k$, namely, as generators of local diffeomorphism transformations on $TM$ along the respective directions.  Note that we have simplified the notation of the operators $\hat{\tilde{p}}$, in order to avoid excessive cluttering.

 We shall consider the consistency of the canonical quantization for the {\it hat}-operators, subjected to the quantum version of the local coordinate transformation,
\begin{align}
(\hat{x}^\mu_k,\hat{y}^\mu_k)\mapsto \left(\hat{\tilde{x}}^\mu_k,\widehat{\frac{\partial {x}^\mu_k}{\partial x^\nu_k}}\,\hat{y}^\nu_k\right)=\,(\hat{\tilde{x}}^\mu_k, \hat{\tilde{y}}^\mu_k).
\label{quantum local transformations}
\end{align}
The quantization of  the function $ x^\mu_k\mapsto {\tilde{x}}^\mu_k$ and its derivatives are taken according to the {\it Definitions} \eqref{quantization of a function} and \eqref{quantization of the derivative of a function}.

For the argument that follows below we shall need to consider integral operations on $TM $ and a natural volume element on $TM$. We can use the measure $\tilde{\mu}_P$, defined by the relation \eqref{medidamuinvariante0}. Since $\{|x^\mu_k,y^\mu_k\rangle,\,\mu=1,2,3,4,\,k=1,...,N\}$ is a generator system, we have the relation
\begin{align*}
\hat{p}^\mu_j|x^\nu_a,y^\nu_a\rangle=\,\int_{TM}\,\tilde{\mu}_P(x',y')\,
\lambda_l(x'^\nu_i,y'^\nu_i;x^\nu_a,y^\nu_a)\,|x'^\nu_i,y'^\nu_i\rangle .
\end{align*}
Similarly, in the new coordinate system,
\begin{align*}
\hat{\tilde{p}}^\mu_j|x^\nu_a,y^\nu_a\rangle=\,\sum_a\sum_\rho \,\int_{TM}\,\tilde{\mu}_P(x',y')\,\tilde{\lambda}_l(x'^\nu_i,y'^\nu_i;x^\nu_a,y^\nu_a)\,|x'^\nu_i,y'^\nu_i\rangle .
\end{align*}
Then the relation \eqref{quantumconditions} can be re-written as
\begin{align*}
&\int_{TM}\,\tilde{\mu}_P(x',y')\,\lambda_l(x'^\rho_i,y'^\rho_i;x^\nu_a,y^\nu_a)\,x'^\mu_k\,|x'^\nu_i,y'^\nu_i\rangle =\,-\imath \hbar \delta_{lk}|x^\nu_a,y^\nu_a\rangle\\
&+\,\int_{TM}\,\tilde{\mu}_P(x',y')\,\lambda_l(x'^\nu_i,y'^\nu_i;x^\nu_a,y^\nu_a)\,x^\mu_k\,|x'^\nu_i,y'^\nu_i\rangle .
\end{align*}
Similarly, the relation \eqref{quantumconditionstilde} can be re-written as
\begin{align*}
&\int_{TM}\,\tilde{\mu}_P(x',y')\,\tilde{\lambda}_l(x'^\rho_i,y'^\rho_i;x^\nu_a,y^\nu_a)\,\tilde{x}'^\mu_k\,|x'^\nu_i,y'^\nu_i\rangle =\,-\imath \hbar \delta_{lk}|x^\nu_a,y^\nu_a\rangle\\
&+\,\int_{TM}\,\tilde{\mu}_P(x',y')\,\tilde{\lambda}_l(x'^\nu_i,y'^\nu_i;x^\nu_a,y^\nu_a)\,\tilde{x}^\mu_k\,|x'^\nu_i,y'^\nu_i\rangle .
\end{align*}
A consistent criterion extracted from these conditions can be expressed as a condition on the coefficients $\lambda$,
\begin{align}
\nonumber &\lambda_l(x'^\rho_i,y'^\rho_i;x^\nu_a,y^\nu_a)\,x'^\mu_k\,-\lambda_l(x'^\nu_i,y'^\nu_i;x^\nu_a,y^\nu_a)\,x^\mu_k\\
&= \,\tilde{\lambda}_l(x'^\rho_i,y'^\rho_i;x^\nu_a,y^\nu_a)\,\tilde{x}'^\mu_k -\tilde{\lambda}_l(x'^\nu_i,y'^\nu_i;x^\nu_a,y^\nu_a)\,\tilde{x}^\mu_k .
\label{consistent condition for quantization}
\end{align}

We now investigate the consistency of the quantum $U_t$ dynamics determined by the Hamiltonian
\eqref{RandersHamiltonian} and the quantum relations. We need to consider the quantized version of the transformations rules  for the phase-space coordinates $(x^\mu_k,y^\mu_k,p_{xk\mu},p_{yk\mu})$. Therefore, apart from the quantum version of the local coordinate transformations \eqref{quantum local transformations}, we need to consider the corresponding quantized version of the transformation \eqref{local change of coordinates 4},
\begin{align}
& \hat{p}_{kx\rho}=\,\widehat{\frac{\partial \tilde{x}^\mu_k}{\partial x^\rho_k}}\,\hat{\tilde{p}}_{kx\mu}+\,\widehat{\frac{\partial^2 \,\tilde{x}^\mu_k}{\partial x^\rho_k\,\partial x^\sigma_k}}\,\hat{y}^\rho_k\,\hat{\tilde{p}}_{ky\mu}(x),\\
& \hat{p}_{ky\rho}=\,\widehat{\frac{\partial \tilde{x}^\mu_k}{\partial x^\rho_k}}\,\hat{\tilde{p}}_{ky\mu}.
\label{quantum local transformations 2}
\end{align}
Let us remark that the order of the products in the quantization in \eqref{quantum local transformations 2} is irrelevant for the evaluation of the canonical commutations \eqref{quantum local transformations 2}. This is because in the first entry of the commutators $[,]$ will entry combinations of positions $\hat{x}^\mu_k(\tilde{x})$ and velocity operators $\hat{\tilde{y}}^\mu$.

We can now show that the quantum relations are compatible with the relations \eqref{quantum local transformations}-\eqref{quantum local transformations 2}. Let us start considering the first set of equations in \eqref{commutativequantumconditionstilde}. That they are compatible with the quantum change of operators  \eqref{quantum local transformations} and \eqref{quantum local transformations 2} follows from the commutative sub-algebra generated by such operators and because of the structure of the transformations \eqref{quantum local transformations}.

Let us consider the second set of quantum relations in \eqref{commutativequantumconditions}. Then we have that for the first relation
\begin{align*}
0=\,[\hat{x}^\mu_{k},\hat{p}_{yl\rho}]=\,\left[x^\mu_k(\hat{\tilde{x}}),\widehat{\frac{\partial \tilde{x}^{\mu}_l}{\partial x^\rho_l}}\,\hat{\tilde{p}}_{y l \mu}\right]=\,\widehat{\frac{\partial \tilde{x}^{\mu}_l}{\partial x^\rho_l}}\,\left[x^\mu_k(\hat{\tilde{x}}),\hat{\tilde{p}}_{y l \mu}\right].
\end{align*}
Since this happen for any local diffeomorphism $\tilde{x}\mapsto x$, then it must hold the quantum condition
\begin{align*}
\left[\hat{\tilde{x}},\hat{\tilde{p}}_{y l \mu}\right]=0.
\end{align*}

This result is consistent with the geometric interpretation of the momentum operators.
If the momentum operator $\hat{\tilde{p}}_{yl\rho}$ is the generator of the transformation
\begin{align*}
|x^\mu,...,y^\mu_{l},...\rangle \mapsto
|x^\mu,...,y^\mu_{l}+\,d s \,\delta^\mu_\rho \delta_{kl},...\rangle,
\end{align*}
then a fundamental state such that
\begin{align*}
\hat{y}^\mu_k
|x^\mu,...,y^1_1,...,y^\mu_{l}+\,d s \,\delta^\mu_\rho \delta_{kl},...,y^4_{4N}\rangle & =\,
(y^\mu_{\rho}+\,d s \,\delta^\mu_\rho \delta_{kl})\cdot\\
& \cdot |x^\mu,...,y^1_1,...,y^1_{l}+\,d s \,\delta^\mu_\rho \delta_{kl},...,y^4_{4N}\rangle .
\end{align*}
Therefore, we have that
\begin{align*}
\left(\left[x^\mu_k(\hat{\tilde{x}}),\hat{\tilde{p}}_{y l \mu}\right]\right)\,
|x^\mu,...,y^\mu_{l},...\rangle & =\,x^\mu_k(\hat{\tilde{x}})\,\hat{\tilde{p}}_{y l \mu}|x^\mu,...,y^\mu_{l},...\rangle \\
& -\,x^\mu_k(\hat{\tilde{x}})\hat{\tilde{p}}_{y l \mu}|x^\mu,...,y^\mu_{l},...\,x^\mu_k(\hat{\tilde{x}})\rangle\\
& =\,x^\mu_k(\hat{\tilde{x}})\,
|x^\mu,...,y^1_1,...,y^1_{l}+\,d s \,\delta^\mu_\rho \delta_{kl},...,y^4_{4N}\rangle \\
& -\,\hat{\tilde{p}}_{y l \mu}\,x^\mu_k(\tilde{x})\,|x^\mu,...,y^\mu_{l},...\rangle \\
& =\,x^\mu_k(\hat{\tilde{x}})\,
|x^\mu,...,y^1_1,...,y^1_{l}+\,d s \,\delta^\mu_\rho \delta_{kl},...,y^4_{4N}\rangle \\
& -\,
|x^\mu,...,y^1_1,...,y^1_{l}+\,d s \,\delta^\mu_\rho \delta_{kl},...,y^4_{4N}\rangle =0.
\end{align*}

For the quantum relations \eqref{quantumconditionstilde}, we can argue in a similar way,
\begin{align*}
\imath\,\hbar\,\delta^{\mu}_\nu\,\delta_{kl}\,& =\,\left[\hat{x}^\mu_k,\hat{p}_{xl\nu}\right]  =\,\left[\hat{x}^\mu_k(\tilde{x}),\widehat{\frac{\partial \tilde{x}^\lambda_l}{\partial x^\nu_l}}\,\hat{\tilde{p}}_{xl\lambda}+\widehat{\frac{\partial^2 \tilde{x}^\sigma}{\partial x^a x^\nu}}\,\widehat{\frac{\partial x^a}{\partial \tilde{x}^b}}\,\hat{\tilde{y}}^b\hat{\tilde{p}}_{yl\sigma}\right]\\
& =\,\left[\hat{x}^\mu_k(\tilde{x}),\widehat{\frac{\partial \tilde{x}^\lambda_l}{\partial x^\nu_l}}\,\hat{\tilde{p}}_{xl\lambda}\right]+\left[\hat{x}^\mu_k(\tilde{x}),\widehat{\frac{\partial^2 \tilde{x}^\sigma}{\partial x^a x^\nu}}\,\widehat{\frac{\partial x^a}{\partial \tilde{x}^b}}\,\hat{\tilde{y}}^b\hat{\tilde{p}}_{yl\sigma}\right]\\
& =\,\widehat{\frac{\partial \tilde{x}^\lambda_l}{\partial {x}^\nu_l}}\,\left[\hat{x}^\mu_k(\tilde{x}),\hat{\tilde{p}}_{xl\lambda}\right]+\,\widehat{\frac{\partial^2 \tilde{x}^\sigma}{\partial x^a x^\nu}}\,\widehat{\frac{\partial x^a}{\partial \tilde{x}^b}}\,\hat{\tilde{y}}^b\,\left[\hat{x}^\mu_k(\tilde{x}),\hat{\tilde{p}}_{yl\sigma}\right].
\end{align*}
We have just seen that the second commutator  is zero. Therefore, we are led to
\begin{align*}
\imath\,\hbar\,\delta^{\mu}_\nu\,\delta_{kl}\,=\,\widehat{\frac{\partial \tilde{x}^\lambda_l}{\partial x^\nu_l}}\,\left[\hat{x}^\mu_k(\tilde{x}),\hat{\tilde{p}}_{xl\lambda}\right],
\end{align*}
which implies the canonical quantum condition
\begin{align*}
\imath\,\hbar\,\delta^{\mu}_\nu\,\delta_{kl}\,=\,\left[\hat{\tilde{x}}^\mu_k,\hat{\tilde{p}}_{xl\lambda}\right].
\end{align*}
For the second of the quantum conditions \eqref{quantumconditions}, the argument is similar:
\begin{align*}
\imath\,\hbar\,\delta^{\mu}_\nu\,\delta_{kl}\,& =\,[\hat{y}^\mu_k,\hat{p}_{yl\nu}]=\,\left[\widehat{\frac{\partial x^\mu_k}{\partial \tilde{x}^\rho_k}}\,\hat{\tilde{y}}^\rho_k,\widehat{\frac{\partial \tilde{x}^\sigma_l}{\partial x^\nu_l}}\,\hat{\tilde{p}}_{yl\sigma}\right]\\
& =\,\widehat{\frac{\partial \tilde{x}^\sigma_l}{\partial x^\nu_l}}\,\widehat{\frac{\partial x^\mu_k}{\partial \tilde{x}^\rho_k}}\,\left[\hat{\tilde{y}}^\rho_k,\,\hat{\tilde{p}}_{yl\sigma}\right].
\end{align*}
Since the last commutator will be zero, except if $k=l$ and $\sigma=\rho$, and since $\widehat{\frac{\partial \tilde{x}^\sigma_l}{\partial x^\nu_l}}\,\widehat{\frac{\partial x^\mu_k}{\partial \tilde{x}^\rho_k}}\,=\,\delta^\mu_\nu$, then the canonical commutation relation holds,
\begin{align*}
\imath\,\hbar\,\delta^{\mu}_\nu\,\delta_{kl}\,=\,\left[\hat{\tilde{y}}^\mu_k,\,\hat{\tilde{p}}_{yl\nu}\right].
\end{align*}

Similarly, one can argued for the last of the relations in \eqref{commutativequantumconditionstilde}
\begin{align*}
0 & =[\hat{y}^\mu_k,\hat{p}_{x l\nu}]=\,\left[\widehat{\frac{\partial x^\mu_k}{\partial \tilde{x}^\rho_k}}\,\hat{\tilde{y}}^\rho_k,\widehat{\frac{\partial \tilde{x}^\lambda_l}{\partial x^\nu_l}}\,\hat{\tilde{p}}_{xl\lambda}+\widehat{\frac{\partial^2 \tilde{x}^\sigma_l}{\partial x^a_l \partial x^\nu_l}}\,\hat{y}^a\hat{\tilde{p}}_{yl\sigma}\right]\\
& =\,\widehat{\frac{\partial \tilde{x}^\lambda_l}{\partial x^\nu_l}}\,\left(\left[\widehat{\frac{\partial x^\mu_k}{\partial \tilde{x}^\rho_k}},\,\hat{\tilde{p}}_{xl\lambda}\right]\,\hat{\tilde{y}}^\rho_k+\,\widehat{\frac{\partial x^\mu_k}{\partial \tilde{x}^\rho_k}}\,\left[\hat{\tilde{y}}^\rho_k,\hat{\tilde{p}}_{xl\lambda}\right]\right)\\
& +\,\widehat{\frac{\partial^2 \tilde{x}^\sigma}{\partial x^a \partial x^\nu}}\,\widehat{\frac{\partial x^a}{\partial \tilde{x}^b}}\,\hat{\tilde{y}}^b\,\left(\widehat{\frac{\partial x^\mu_k}{\partial \tilde{x}^\rho_k}}\,\,\left[\hat{\tilde{y}}^\rho_k,\hat{\tilde{p}}_{yl\sigma}\right]  +\, \left[\widehat{\frac{\partial x^\mu_k}{\partial \tilde{x}^\rho_k}},\,\hat{\tilde{p}}_{yl\sigma}\right]\hat{\tilde{y}}^\rho_k\,\right)\\
& =\,\imath \,\hbar\widehat{\frac{\partial \tilde{x}^\lambda_l}{\partial x^\nu_l}}\,\widehat{\frac{\partial^2 x^\mu_k}{\partial \tilde{x}^\rho_k\partial \tilde{x}^\lambda_k}}\,\hat{\tilde{y}}^\rho_k\,+\widehat{\frac{\partial \tilde{x}^\lambda_l}{\partial x^\nu_l}}\,\widehat{\frac{\partial x^\mu_k}{\partial \tilde{x}^\rho_k}}\,\left[\hat{\tilde{y}}^\rho_k,\hat{\tilde{p}}_{xl\lambda}\right]\\
& +\,\widehat{\frac{\partial^2 \tilde{x}^\sigma_l}{\partial x^a_l \partial x^\nu_l}}\,
\widehat{\frac{\partial x^a_k}{\partial \tilde{x}^b}_k}\,\hat{\tilde{y}}^b_k\,\left(\imath\,\hbar\,\delta^\rho_\sigma \delta_{kl}\widehat{\frac{\partial x^\mu_k}{\partial \tilde{x}^\rho_k}}+\,\left[\widehat{\frac{\partial x^\mu_k}{\partial \tilde{x}^\rho_k}},\,\hat{\tilde{p}}_{yl\sigma}\right]\hat{\tilde{y}}^\rho_k\,\right).
\end{align*}
Since the last commutator is zero, then we have
\begin{align*}
0 & =\,\imath\,\hbar\,\widehat{\frac{\partial \tilde{x}^\lambda_l}{\partial x^\nu_l}}\,\widehat{\frac{\partial^2 x^\mu_k}{\partial \tilde{x}^\rho_k\partial \tilde{x}^\lambda_k}}\,\hat{\tilde{y}}^\rho_k\,+\widehat{\frac{\partial \tilde{x}^\lambda_l}{\partial x^\nu_l}}\,\widehat{\frac{\partial x^\mu_k}{\partial \tilde{x}^\rho_k}}\,\left[\hat{\tilde{y}}^\rho_k,\hat{\tilde{p}}_{xl\lambda}\right]\\
& +\,\imath\,\hbar\,\widehat{\frac{\partial^2 \tilde{x}^\sigma_l}{\partial x^a_l \partial x^\nu_l}}\,
\widehat{\frac{\partial x^a_k}{\partial \tilde{x}^b}_k}\,\hat{\tilde{y}}^b_k\,\widehat{\frac{\partial x^\mu_k}{\partial \tilde{x}^\sigma_k}}
\end{align*}
After some algebraic manipulations, the first and third terms cancel, leading to the condition
\begin{align*}
0 =\widehat{\frac{\partial \tilde{x}^\lambda_l}{\partial x^\nu_l}}\,\widehat{\frac{\partial x^\mu_k}{\partial \tilde{x}^\rho_k}}\,\left[\hat{\tilde{y}}^\rho_k,\hat{\tilde{p}}_{xl\lambda}\right],
\end{align*}
It follows that
\begin{align*}
0=\,\left[\hat{\tilde{y}}^\rho_k,\hat{\tilde{p}}_{xl\lambda}\right].
\end{align*}

With this last developments we complete the proof of our statements on the compatibility of the canonical conditions, $U_t$ dynamics and  local coordinate transformations induced on $T^*TM$. Remarkably, some of the above arguments also apply to the case of the Dirac fermionic dynamical system.
\subsection{Koopman-von Neumann for composite systems}\label{Koopman-von Neumann of composite systems}
Let us now consider the Koopmann-von Neumann formulation of a composite system. We restrict first the treatment to non-interacting systems in the sense of Hamilton-Randers theory. This means that none of the sub-quantum degrees of freedom of the system $a$ interact with non of the sub-quantum degrees of freedom of the system $b$. If the system $a$ is described by the Hamilton-Randers system $(M_a,(\eta_a,\beta_a))$, the Koopman-von Neumann formulation associates a Hilbert space and a Hamilton-Randers Hamiltonian, $(\mathcal{H}_{Fun}[a], \widehat{H}[a])$. Similarly, for a second system described by $(M_b,(\eta_b,\beta_b))$ and in terms of Koopman-von Neuman by the Hilbert space and Hamiltonian $(\mathcal{H}_{Fun}[b], \widehat{H}[b])$. This mapping defines a functor, that we can appropriately call the {\it Koopman-von Neumann functor}, in the following way. Let us define the category of Hamilton-Randers dynamical systems, whose objects are configuration Hamilton-Randers dynamical models $(M,(\eta,\beta))$ and the morphisms are differential maps between configuration manifolds and the associated maps on tensors and forms. The category of fundamental Koopman-von Neumann models is defined by the fundamental Hilbert spaces associated with the Hamilton-Randers systems and that as morphisms, the homomorphisms between Hilbert spaces induced by the differential functions between the corresponding smooth manifolds. Then the functor $\Phi_{KvN}:Cat_{HR}\to \, Cat_{Fun}(H)$ is defined by $(M_a,(\eta_a,\beta_a))\mapsto \,(\mathcal{H}_{Fun}[a], \widehat{H})$ and the corresponding mappings of morphisms.

Let us consider in detail this construction. Recall from {\it section} \ref{Hamilton Randers models for composite systems} that $M_{a\sqcup b}=\,M_a\times M_b$, $\eta_{a\sqcup b}=\eta_a\oplus \eta_b$ and $\beta_{a\sqcup b}=\,\beta_{a}\oplus \beta_{b}$.
Therefore, we have the following,
\begin{proposicion}
Under the action of $\Phi_{KvN}$,
\begin{align*}
\begin{cases}
 & M_{a\sqcup b}\mapsto \mathcal{H}_{Fun}[a\sqcup b]=\,\mathcal{H}_{Fun}[a]\otimes \mathcal{H}_{Fun}[b],\\
 &(\eta_a\oplus \eta_b, \beta_a\oplus \beta_b)\mapsto \widehat{H}[a\sqcup b]=\,\widehat{H}[a]+\,\widehat{H}[b].
 \end{cases}
 \end{align*}
\end{proposicion}
The action of the functor on the morphisms is induced from the action on the ket basis vectors, such that the following diagram commutes:
\begin{align}
\xymatrix{M_a \ar[r]^{f_{ab}} \ar[d]^{\Phi_{KvN}} & M_b \ar[d]^{\Phi_{KvN}}\\
\mathcal{H}_{Fun}[a]\ar[r]^{\varphi_{ab}}  & \mathcal{H}_{Fun}[b] }
\label{commutative morphisms in HamiltonRanders theory}
\end{align}
where
\begin{align}
\varphi_{ab}(|x\, y\rangle) = \,|f_{ab}(x)\,(f_*)_{ab}(y)\rangle
\end{align}
determines the image of the morphism, $\Phi_{KvN}(f_{ab})=\varphi_{ab}$.

This leads to the following definition,
\begin{definicion}
Two systems "a" and "b" are non-interacting in the Hamilton-Randers sense if the state in the Koopman-von Neumann representation of an ampler system consistim of both of them is a product state of precursor states of the Hilbert spaces $\mathcal{H}_{Fun}[a]$ and $\mathcal{H}_{Fun}[b]$ and if the Hamiltonian of $a\sqcup b$ is the sum of Hamiltonian $\Phi_{KvN}((\eta_a\oplus \eta_b, \beta_a\oplus \beta_b))$.
\label{non-interacting Hamilton-Randers systems}
\end{definicion}

Therefore, if a state is written as a product state it does not guarantee the absence of interactions at the level of sub-quantum degrees of freedom.
\subsection{Set theoretical interpretation}
The formalism just described offers a remarkable picture of the relation between the geometric description and the Koopman-von Neumann description of Hamilton-Randers dynamical systems. Let us consider a composite quantum system formed by by two sub-systems, $a$ and $b$. In terms of sub-quantum degrees of freedom, the composite system is described by the disjoint union $a\sqcup b$. The geometric structure associated with the whole system, when the parts $a$ and $b$ are independent is such that
\begin{align*}
\begin{cases}
& a \mapsto (M_a,\eta_a,\beta_a),\\
& b \mapsto (M_b,\eta_b,\beta_b),\\
& a\sqcup b \mapsto (M_{a}\oplus M_{b}, \eta_a\oplus \eta_b,\beta_a\oplus \beta_b),
\end{cases}
\end{align*}
Clearly, we have that
\begin{align*}
\dim (M_a\sqcup M_b)=\, \dim (M_a)+\, \dim (M_b) .
\end{align*}

On the other hand, the Koopman-von Neumann formalism, when applied this composite system, leads to the following map,
\begin{align*}
\mathcal{H}_{Fun}[a\sqcup b] =\, \mathcal{H}_{Fun}[a]\otimes \mathcal{H}_{Fun}[b] .
\end{align*}
Thus we have that, for what is pointwise counting of dimensions,
\begin{align*}
\dim (\mathcal{H}_{Fun}[a\sqcup b] ) =\,\dim (\mathcal{H}_{Fun}[a])\cdot\, \dim (\mathcal{H}_{Fun}[b]) .
\end{align*}
We observe the existence of a rule relating the dimensions of the configuration manifold and the associated pointwise Hilbert space,
\begin{align*}
\dim (\mathcal{H}_{Fun}) = \,A^{\dim M }.
\end{align*}
 On the other hand, the dimension of $M$ is related with the number of degrees of freedom of the system, $\dim (M) =\, 4 N$. Then, by a suitable choice of $A$, one establishes the relation
 \begin{align}\label{relation geometry Koopman-von Neumann 1}
 \begin{cases}
 & \dim (M) =4 N,\\
 & \dim (\mathcal{H}_{Fun})=\,2^{N} .
\end{cases}
 \end{align}
This suggests the set-theoretical relation,
\begin{align}
\dim(\mathcal{H}_{Fun})=\,2^{M},
\label{relation geometry Koopman-von Neumann 2}
\end{align}
Fixing $x\in M_4$, the set of fundamental vectors is equivalent to the power (in the set theoretical sense) of $\dim(M)$. This interpretation is confirmed by the structure of the basis of $\mathcal{H}_{Fun}$, that as it can be notice from the definition of the states \eqref{pre-quantum state}, it has dimension $2^N$. Then we arrive to the following theorem,
\begin{teorema}
Fixed $x\in M_4$, the set $\mathcal{H}_{Fun}$ has pointwise on  the same cardinality than the power set of subquantum degrees of freedom.
\end{teorema}

\newpage

\section{\LARGE{Quantum mechanical formalism from Hamilton-Randers systems}} \label{chapter of the Hilbert space structure}
\bigskip
\bigskip
We have introduced in {Chapter} \ref{chapter on classical dynamics Hamilton Randers} the elements of Hamilton-Randers dynamical systems and in {Chapter} \ref{chapter on Koopman-von Neumann formulation} the corresponding Koopman-von Neumann formulation. In this {Chapter} the quantum mechanical Hilbert spaces are constructed from the Koopman-von Neumann formulation of Hamilton-Randers dynamical systems. This is achieved by a {\it process of fiber averaging}. This procedure is justified by the ergodic type properties of the fundamental $U_t$ dynamics. {\it Free quantum systems} are discussed from this point of view. This particular but relevant case leads to the conclusion that any quantum state has an emergent interpretation. This is part of {\it the emergence hypothesis}. We also discuss the meaning of the wave function. In particular, we discuss the emergence of the Born rule of quantum mechanics.

A theory of averaging operators is discussed. In particular, the emergent hypothesis is extended to make the assumption that any quantum operator is the average of an operator on the pre-Hilbert space associated with Koopman-von Neumann description of the dynamics of the fundamental states. From this point of view, the mapping from the models describing the fundamental interaction to the quantum description of the system is a functor.
Then the emergence nature of the quantum Heisenberg dynamics from Hamilton-Randers dynamics is discussed, as well as the relation between conservation quantities of the $U_t$ dynamics and the corresponding conserved quantities in the quantum dynamics.

\subsection{Quantum states as averages from predecessor states}
The purpose of this {\it section} is to show how quantum mechanical states can be expressed in terms of averages of the ontological states appearing in the Koopman-von Neumann description of the fundamental dynamics. We start considering a local coordinate system $\{X^\mu\}$ over the spacetime manifold $M_4$. The inverse of the diffeomorphisms  $\{\varphi_k:M^k_4\to M_4\}^k_{k=1}$ and the corresponding localizations provide local coordinates $\{(x^\mu_k,y^\mu_k),\mu=1,2,3,4;\,k=1,...,N\}$ on the configuration manifold $M$. A generic sub-quantum system is described by an element $|\Psi\rangle\in \,\bar{\mathcal{H}}_{Fun}$, with the notation discussed for the notion of predecessor vector \eqref{predecessor vector}, will be of the form
\begin{align}
|\Psi(u)\rangle=\,\frac{1}{\sqrt{N}}\,\sum^N_{k=1}\,e^{\imath\,\vartheta_{k\Psi}(\varphi^{-1}_{k}(x),z_k)} \,n_{k\Psi}(\varphi^{-1}_{k}(x),z_k)\,|\varphi^{-1}_{k}(x),z_k\rangle ,
\label{predecesorofquantumstate}
\end{align}
where the phases
\begin{align*}
\vartheta_{k\Psi} : TM^k_4\to \mathbb{R}, \quad n_{k\Psi}:TM^k_4\to \mathbb{R},\quad k=1,...,N
\end{align*}
are functions that depends on the dynamical system, but do not depend upon de family of diffeomorphisms $\{\varphi_k\}^N_{k=1}$, as we will discuss later.
$|\Psi(u)\rangle$ has finite $\|,\|_{Fun}$ norm as defined by the relation \eqref{norminHFun}. Such elements of $\mathcal{H}_{Fun}$ will be called {\it predecessor states}.

The fundamental assumption of our emergent approach to quantum mechanics is that every quantum state $|\psi\rangle$ is obtained from a predecessor state $|\Psi\rangle\in\,\mathcal{H}_{Fun}$ by averaging over the speed coordinates and by linear combinations of such averages. In Dirac notation, the average of a predecessor state is of the form
\begin{align}
|\psi(x)\rangle= \,\frac{1}{\sqrt{N}}\,\sum^N_{k=1}\,\int_{\varphi^{-1}_{k*} (T_xM_4)} \,d^4z_k\,e^{\imath\,\vartheta_{k\Psi}(\varphi^{-1}_{k}(x),z_k)} \,n_{k\Psi}(\varphi^{-1}_{k}(x),z_k)\,|\varphi^{-1}_{k}(x),z_k\rangle.
\label{ansatzforpsi}
\end{align}
The support of the measure $d^4z_k$ given by \eqref{fibermeasure} is restricted to the causal region of $TM$ determined by the conditions
 \begin{align*}
\eta_k(z_k,z_k)\leq 0,\quad k=1,...,N,
  \end{align*}
being $\eta_k$ the metric of the underlying pseudo-Randers space $(M^k_4,\eta_k, \beta_k)$. Therefore, if $\tilde{y}_k$ lays out of such causal domain, one formally has
\begin{align*}
d^4z_k (x,\tilde{y}_k)\equiv 0.
\end{align*}

The averaging operation is a linear operation: if $|\Psi(u)\rangle_1 \mapsto
  |\psi(x)\rangle_1$ and $|\Psi(u)\rangle_2\mapsto |\psi(x)\rangle_2$, then
  \begin{align*}
  |\Psi(u)\rangle_1+\,\lambda|\Psi(u)\rangle_2\mapsto
  |\psi(x)\rangle_1+\,\lambda|\psi(x)\rangle_2,\quad \forall \,\lambda\in\,\mathbb{C},\,u\in T_x M .
  \end{align*}

The fiber integration along the fibers $\varphi^{-1}_{k*} (T_xM_4)$ associated with the average operation is partially justified by the ergodic property of the $U_t$ flow in the ergodic regime. Given a particular point $x\in\, M_4$, it is assumed that a form of the ergodic theorem can be applied and as a result, the average along the $t$-time parameter in a period $[0,2T]$ is identified with the average along the speed coordinates of the sub-quantum degrees of freedom. Taking into account the un-distinguishability of the degrees of freedom, the average operation must take the form
 \begin{align*}
 \langle \cdot \rangle_t \equiv \, \frac{1}{\sqrt{N}}\,\sum^N_{k=1}\,\int_{\varphi^{-1}_{k*} (T_xM_4)} \,d^4z_k\, \cdot\,,
 \end{align*}
 where the equivalence relation is understood as an equality when operations are acting on objects living on $TM$. The position coordinates are not integrated: they are related with the macroscopic labels used by observers in the description of the quantum states and in the description of measurements and observation in the spacetime arena $M_4$. We are assuming that all possible measurements and observables by measurements of spacetime positions only is adapted from the ideas of D. Bohm \cite{Bohm} and J. Bell \cite{Bell}. In a similar way, we assume that any measurement in a experimental setting can ultimately be translated to measurements of positions of pointers and measurements of time lapses.

 Note that the underlying model manifold for the construction of $\mathcal{H}$ is the model manifold $M_4$, instead than the spacetime manifold $\mathcal{M}_4$.
\subsection{Remarks on the averaging operation in Hamilton-Randers theory}
One of the fundamental assumptions in Hamilton-Randers theory is that each fundamental cycle of the $U_t$ evolution  has three dynamical regimes: the ergodic regime, which is the longest one in terms of the $t$-time parameter, the short (in $t$-time with respect to the ergodic regime) concentrating regime and the short expanding regime. If the $U_t$ flow is exactly ergodic, it will be more natural to apply a strong form of the ergodic theorem (see for instance \cite{ArnoldAvez}) and assume that time average during the ergodic phase of the $U_t$ is equivalent to velocity integration (phase space integration or fiber average). However, the fundamental cycles are not exactly ergodic. For instance, they happen during a finite lapse of $t$-time and are disturbed by the environment. Ergodicity is then an idealization, useful to our constructions.

In our application of the ergodic theorem the time average is taken along the $t$-time evolution, in contrast with the usual time average along the $\tau$-time in dynamical systems theory and classical mechanics. The distinction among the two classes of time parameters is a key ingredient for Hamilton-Randers theory. It also illustrates the emergent character of any macroscopic $\tau$-time parameter.

In the usual applications the average is taken along the phase space or the configuration space of the dynamical system. However, in Hamilton-Randers models the average is taken along the velocity coordinates only. It is a {\it fiber averaging} in a tangent space, even if the evolution takes place in configuration space $TM$.

\subsection{Quantum states as equivalent classes}\label{Equivalence classes} The average state $\psi(x)$ as defined by the expression \eqref{ansatzforpsi} can be re-casted in the following way. Two predecessor states $\Psi_1,\Psi_2\in \,\mathcal{H}_{Fun}$ are say to be equivalent if they have the same average, given by the operation \eqref{ansatzforpsi}. This is an equivalence relation  $\sim_{\langle\rangle}$. A quantum state is an element of the coset space $\mathcal{H}_{Fun}/\sim_{\langle\rangle}$. An element of the coset space containing $\Psi$ is $\psi=[\Psi]:=\{\tilde{\Psi},\,|\,\langle \tilde{\Psi}\rangle_t =\,\psi\}$. The canonical projection is
\begin{align}
\mathcal{H}_{Fun}\to \mathcal{H}_{Fun}/\sim_{\langle\rangle} \,\quad \Psi\mapsto [\Psi]\equiv\psi.
\label{projectionplancktostandard}
\end{align}
The sum operation in the coset space $\mathcal{H}_{Fun}/\sim_{\langle\rangle}$ is defined as follows.
If $\psi_1=[\Psi_1],$ $\psi_2=[\Psi_2]$, then
\begin{align*}
[\Psi_1+\Psi_2]:=\,[\Psi_1]+[\Psi_2]=\,\psi_1 + \psi_2.
\end{align*}
The product by a scalar $\lambda\,\in \,\mathbb{K}$ is defined by
\begin{align*}
[\lambda \, \Psi]:=\,\lambda [\Psi]=\,\lambda  \,\psi.
\end{align*}
Since the operation of averaging that defines the equivalence class is linear, these operations are well defined and do not depend upon the representative of the equivalence classes.

$\psi$ is a section of a complex vector space $\mathcal{H}M_4$ over $M_4$, where the complex vector space is determined by assuming that its sections $\Gamma \mathcal{H}M_4$ coincide with the coset space $\mathcal{H}_{Fun}/\sim_{\langle\rangle}$. In particular, the projection \eqref{projectionplancktostandard} can be extended to lineal combinations of $\mathcal{H}_{Fun}$.
\begin{align*}
\mathcal{H}_{Fun}\to \Gamma\mathcal{H}M_4, \quad \sum_{a}\,\beta_a\Psi_a\mapsto \,\sum_a \beta_a\,[\Psi_a],\quad \Psi_a\in\,\mathcal{H}_{Fun},\,\beta_a\in\,\mathbb{C}.
\end{align*}
A generator system of pre-quantum states implies the existence of a generator system for $\mathcal{H}M_4$.

In a similar vein one can define finite tensor products of $\mathcal{H}_{Fun}/\sim_{\langle\rangle}$.

\subsection{States of highly oscillating relative phase}
Let us consider two states $A$ and $B$  associated with the same physical system. The possible differences in the physical observable attributes to  $A$ and $B$ are originated by different configurations during the $U_t$ evolution of the sub-quantum degrees of freedom.  Since $A$ and $B$ are associated with the same physical system, the associated sub-quantum degrees of freedom are denoted by the same set of sub-index $k=1,...,N$.
 Given a Hamilton-Randers system, we make the assumption that there is a collection of states corresponding to quantum states $\{A,B,...\}$ such that the associated collection of local phases
 \begin{align}
 \left\{ e^{\imath\,\vartheta_{kA}(\varphi^{-1}_{k}(x),z_k)}\, e^{-\imath\,\vartheta_{kB}(\varphi^{-1}_{k'}(x),z_k)},\quad x\in\,M_4,\,z_k\in\,T_{\varphi^{-1}_k(x)}M^k_4,\,k=1,...,N\right\}
  \label{highly oscillating phases}
  \end{align}
are highly oscillating compared to the characteristic frequency $\frac{1}{2T}$ of the corresponding fundamental cycles of the Hamilton-Randers system.

Formally, the highly oscillatory relative phase condition can be casted in the form
\begin{align}
   e^{\imath\,\vartheta_{kA}(\varphi^{-1}_{k}(x),z_k)}\, e^{-\imath\,\vartheta_{kB}(\varphi^{-1}_{k'}(x'),z'_{k'})}\equiv\, C_{A}\,\delta_{AB}\,\delta_{kk'} \delta(x-x')\,\delta(z_k-z'_{k'}),
  \label{highlyoscillatorycondition0}
  \end{align}
  etc..., where $C_{A}$ is a normalization factor depending on the state $A$ (by symmetry, one should have $C_A=C_B$, for all these states A, B,...) and where the equivalence means equality after the insertion of this expression in the fiber integration operation under consideration. The physical significance of this condition is that the sub-quantum degrees of freedom move fast compared with the slow trend associated with the cycles. Since the sub-quantum degrees of freedom are most of the time moving independently from each other, this notion is analogous to the notion of molecular chaos in statistical mechanics. Specifically, assuming the rapid motion assumption, the $\delta_{kk'}$ term in the relation \eqref{highly oscillating phases} arises from the independence between the degrees of freedom $k$ and $k'$, as well as the term $\delta(z_k-z'_{k'})$; the term $\delta_{AB}$ has its origin on the independence character of the degrees of freedom of $A$ with respect to the degrees of freedom of $B$; the term $\delta(x-x')$ is associated with the locality. Note also that the assumption of ergodic regime of the $U_t$ dynamics is fully compatible with the assumption of fast motion and highly oscillating relative phases.

Elements of $\mathcal{H}$ such that the relations \eqref{highly oscillating phases} hold are called {\it highly oscillating relative states}.
The volume measure $d^4z_k$ in $\varphi^{-1}_{k*}(T_xM_4)$ is normalized in such a way  that the relation
\begin{align}
C_A\,\int_{M_4}dvol_{{\eta}_4}\sum^N_{k=1}\int_{\varphi^{-1}_{k*} (T_xM_4)} \,d^4z_k\,|n_k(\varphi^{-1}_{k}(x),z_k)|^2=\,{N}
\label{normalizationcondition for the lines}
\end{align}
holds good.

The functions $\{\vartheta_k:TM^k_4\to \mathbb{R}\}^N_{k=1}$ and $\{n_k:TM^k_4\to M^k_4\}^N_{k=1}$ depend upon the family of diffeomorphisms $\{\varphi_k: M^k_4\to M_4\}^N_{k=1}$ and hence, on the family of manifolds $\{M^k_4\}^N_{k=1}$. However, the relation \eqref{highlyoscillatorycondition0} is independent of the specific choice of $\{\varphi_k\}^N_{k=1}$ . Indeed, after taking averages, the relation \eqref{highlyoscillatorycondition0}
leads to
  \begin{align}
   \langle e^{\imath\,\vartheta_{kA}(\varphi^{-1}_{k}(x),z_k)}\, e^{-\imath\,\vartheta_{kB}(\varphi^{-1}_{k'}(x'),z'_{k'})}\rangle=\, C_{A}\,\delta_{AB}\,\delta(x-x'),
  \label{highlyoscillatorycondition1}
  \end{align}
that is manifestly $M_4$-diffeomorphic invariant in a week form (if spacetime integrations on $M_4$ are performed). Furthermore, if we ask for the relations that leave invariant in a strong form the relation \eqref{highlyoscillatorycondition1} we have that at least there is a global $U(N)$ symmetry. This can be shown if we consider the vector
\begin{align*}
V_A=(e^{\imath\,\vartheta_{1A}(\varphi^{-1}_{1}(x),z_1)},...,e^{\imath\,\vartheta_{NA}(\varphi^{-1}_{N}(x),z_N)})\in\,\mathbb{C}^N.
\end{align*}
 Then the relative phases are casted in the form
\begin{align*}
V_A\cdot V^\dag_B =\,\sum^N_{k=1}\,e^{\imath\,\vartheta_{kA}(\varphi^{-1}_{k}(x),z_k)}\, e^{-\imath\,\vartheta_{kB}(\varphi^{-1}_{k'}(x'),z'_{k'})},
\end{align*}
which is manifestly invariant under $U(N)$ transformations of the form
\begin{align*}
V_A\mapsto \mathcal{U}\cdot V_A,\,V_B\mapsto \mathcal{U}\cdot V_B,\,\mathcal{U}\in\,U(N).
\end{align*}

\subsection{Definition of the quantum pre-Hilbert space $\mathcal{H}$}
In terms of the ontological states $\{|x^\mu_k,y^\mu_k\rangle\}^{N,4}_{k=1,\mu=1}$ and using the product rule \eqref{productoscalarenHPlanck} for the ontological states developing the pre-quantum Hilbert space, one finds that if the highly oscillating relative phase conditions \eqref{highly oscillating phases} hold, then
\begin{align*}
&\int_{M_4}dvol_{\eta_4}\langle \psi_A|\psi_B\rangle (x) =\, \frac{1}{N}\,\sum^N_{k=1}\int_{M_4}\,\int_{\varphi^{-1}_{k*} (T_xM_4)} dvol_{{\eta}_4}\wedge\,d^4z_k\,\sum^N_{k'=1}\,\int_{\varphi^{-1}_{k'*}(T_xM_4)}  d^4z'_{k'}\\
& e^{\imath\,\vartheta_{kA}(\varphi^{-1}_{k}(x),z_k)}\, e^{-\imath\,\vartheta_{kB}(\varphi^{-1}_{k}(x),z'_{k'})} \cdot \,n_{Ak}(\varphi^{-1}_{k}(x),z_k)\,n_{Bk}(\varphi^{-1}_{k'}(x),z'_{k'})\\
& =\, \frac{1}{N}\,\sum^N_{k,k'=1}\int_{M_4}dvol_{{\eta}_4}\,\int_{\varphi^{-1}_{k*} (T_xM_4)}d^4z_k\, \int_{\varphi^{-1}_{k'*}(T_xM_4)}  d^4z'_{k'} C_A\,\delta_{AB}\,\delta_{kk'}\cdot\\
& \cdot\,\delta(z_k-z'_{k'})\,\delta(x-x')\,n_{Ak}(\varphi^{-1}_{k}(x),z_k)\,n_{Bk}(\varphi^{-1}_{k'}(x),z'_{k'})\\
& =\,\delta_{AB}\frac{1}{N}\,\sum^N_{k=1}\int_{\varphi^{-1}_{k*} (T_xM_4)}d^4z_k \,C_A\,n_{Ak}(\varphi^{-1}_{k}(x),z_k)\,n_{Bk}(\varphi^{-1}_{k}(x),z_k).
\end{align*}
This relations imply the {\it orthogonality conditions}
\begin{align}
\int_{M_4}dvol_{\eta_4}\langle \psi_A|\psi_B\rangle =0, \quad \textrm{if}\, A\neq B
\label{orthonormalitycondition}
\end{align}
for highly oscillating  relative phases states.
The same procedure implies the normalization rule
\begin{align}
\int_{M_4}\,dvol_{{\eta}_4}\,\langle\psi|\psi\rangle=\,\int_{M_4}\,dvol_{{\eta}_4}\,\frac{1}{N}\,\sum^N_{k=1}
\int_{\varphi^{-1}_{k*} (T_xM_4)}\,d^4z_k\,C_\psi\,n^2_k(\varphi^{-1}_{k}(x),z_k)=\,1,
\label{normalizationofpsi}
\end{align}
consistent with our previous condition of normalization.
This condition fix the normalization factor
$C_\psi$.
Then we have proved the following result,
\begin{proposicion}
The collection of highly oscillatory relative phase elements of the form \eqref{ansatzforpsi} satisfying \eqref{highly oscillating phases} determines a set of orthogonal elements of $\Gamma\mathcal{H}M_4$.
\end{proposicion}
The extension of this product operation  to $L_2$-norm normalizable elements in $\Gamma\mathcal{H}M_4$ is achieved first by assuming bilinear property of the scalar product compatible with the orthonormal conditions \eqref{orthonormalitycondition} and second, by assuming that the set of orthonormal states of the form $\eqref{ansatzforpsi}$ is a generator system of the quantum Hilbert space, that we denote by $\mathcal{H}$. It is direct that $\mathcal{H}$ is a complex vector space, although not necessarily complete.
Note also that since the elements of $\mathcal{H}$ have finite norm, they are normalizable. Thus the condition \eqref{normalizationofpsi} is re-casted as
\begin{align*}
\|\psi\|_{L_2(M_4)}=1.
\end{align*}

These facts suggest that a natural candidate for the scalar product in $\mathcal{H}$ is the map
\begin{align}
(\cdot|\cdot):\mathcal{H}\times\mathcal{H}\to \mathbb{C},\quad (\psi,\chi)\mapsto (\psi|\chi):=\,\int_{M_4}\,dvol_{{\eta}_4}\,\langle\psi\,|\chi\rangle .
\label{scalar product operation}
\end{align}
For any $\psi,\chi\in\,\mathcal{H}$, the relation
\begin{align}
\nonumber \langle \psi|\chi\rangle =\,&\sum^N_{k=1}\int_{\varphi^{-1}_{k*} (T_xM_4)}d^4z_k
 e^{-\,\imath \vartheta_{\psi k}(\varphi^{-1}_{k}(x),z_k)}\cdot\\
&\cdot e^{\imath \vartheta_{\chi k}(\varphi^{-1}_{k}(x),z_k)}\,n_{\psi k}(\varphi^{-1}_{k}(x),z_k) \,n_{\chi k}(\varphi^{-1}_{k}(x),z_k)
\label{pre-scalar product}
\end{align}
holds good.
Therefore, the scalar product \eqref{scalar product operation} of two generic elements $\psi, \chi\in\,\mathcal{H}$ can be re-casted as
\begin{align}
\nonumber(\psi|\chi)=\,\int_{M_4}\,dvol_{{\eta}_4}\,\frac{1}{N}\,&\sum^N_{k=1}\int_{\varphi^{-1}_{k*} (T_xM_4)}d^4z_k
 e^{-\imath \vartheta_{\psi k}(\varphi^{-1}_{k}(x),z_k)}\cdot\\
&\cdot e^{\imath \vartheta_{\chi k}(\varphi^{-1}_{k}(x),z_k)}\,n_{\psi k}(\varphi^{-1}_{k}(x),z_k) \,n_{\chi k}(\varphi^{-1}_{k}(x),z_k).
\label{detail form of the scalar product}
\end{align}
We conclude that the vector space $\mathcal{H}$ with the $L_2$-norm is a pre-Hilbert space.
\begin{proposicion}
The linear space $\mathcal{H}$ with the $L_2$-norm is a complex pre-Hilbert space.
\label{pre-Hilbert space}
\end{proposicion}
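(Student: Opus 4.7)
The goal is to show three things: that $\mathcal{H}$ is a complex vector space endowed with a well-defined Hermitian inner product, that this inner product is positive definite, and that $\mathcal{H}$ is complete with respect to the induced norm. The first point is essentially built into the definition, since $\mathcal{H}$ was introduced as the complex linear bulk generated by finite combinations of the wave functions $\psi$ of the form \eqref{ansatzforpsi}. The plan is therefore to promote this linear bulk into a genuine Hilbert space by carefully exploiting the inner product structure inherited from $\mathcal{H}_{Planck}$ and the orthonormality relations \eqref{orthonormalitycondition}, \eqref{normalizationofpsi}.

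First I would verify that the sesquilinear form
\begin{align*}
\langle \cdot\,|\,\cdot\rangle:\mathcal{H}\times\mathcal{H}\to\mathbb{C}
\end{align*}
obtained by restricting the scalar product on $\mathcal{H}_{Planck}$ is well-defined on $\mathcal{H}$ and does not depend on the representation of an element as a finite linear combination. Since the product rule \eqref{definiciondeproductoscalarenHPlanck} gives a bilinear pairing on $\mathcal{H}_{Planck}$, and finite linear combinations are carried linearly to finite combinations of the corresponding matrix elements, this is a direct check. Conjugate symmetry follows from the symmetry of \eqref{definiciondeproductoscalarenHPlanck} together with the complex conjugation of the coefficients in \eqref{predecesorofquantumstate}. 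Positive definiteness follows from the normalization identity \eqref{normalizationofpsi}: for a finite linear combination $\psi=\sum_A c_A\psi_A$, the orthogonality \eqref{orthonormalitycondition} reduces $\langle\psi|\psi\rangle$ to $\sum_A|c_A|^2$, which is non-negative and vanishes only when all $c_A=0$.

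Second, I would address completeness. The map $\psi_A\mapsto e_A$ onto an orthonormal family of an abstract Hilbert space, extended by linearity, realizes $\mathcal{H}$ as the linear span of an orthonormal system inside a separable Hilbert space. Equivalently, integration over the vertical fibres $\varphi^{-1}_{k*}(T_xM_4)$ in the ansatz \eqref{ansatzforpsi} allows one to identify each $\psi$ with a square-integrable function on $M_4$ in the sense of the measure $dvol_{\eta_4}$, so $\mathcal{H}$ embeds isometrically into the standard $L^2(M_4,dvol_{\eta_4})$ through the isometry whose image is spanned by the orthonormal family $\{\psi_A\}$. The $L^2$ norm appearing in \eqref{normalizationofpsi} coincides with the inherited norm, so $\mathcal{H}$ inherits a metric space structure from a complete space.

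Finally, I would complete $\mathcal{H}$ with respect to this $L^2$ norm, identifying it with the closure of the orthonormal linear bulk in $L^2(M_4,dvol_{\eta_4})$; equivalently, one takes the $\ell^2$-closure of the coefficient sequences $(c_A)$. The main obstacle, and the point that deserves real care rather than routine verification, is ensuring that the formal orthogonality condition \eqref{highlyoscillatorycondition} is a legitimate hypothesis on the phases $\vartheta_{kA}$ rather than an approximation, because it is precisely this hypothesis that guarantees the orthogonality \eqref{orthonormalitycondition} and hence the identification of $\mathcal{H}$ with an $\ell^2$-type completion. Once this is granted, the Riesz--Fischer theorem applied to the orthonormal system $\{\psi_A\}$ produces the required completeness, and $\mathcal{H}$ equipped with $\langle\cdot|\cdot\rangle$ is a complex Hilbert space.
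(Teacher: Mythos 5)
Your argument is correct in substance but reaches completeness by a genuinely different route than the paper. The paper's proof is two lines: it declares $(\mathcal{H},\langle\cdot|\cdot\rangle)$ a pre-Hilbert space and then argues that, because every element of $\mathcal{H}$ is a finite combination of ontological states, any Cauchy sequence in $\mathcal{H}$ "decomposes into a finite sum of Cauchy sequences of ontological states" and hence converges. You instead realize $\{\psi_A\}$ as an orthonormal system inside $L^2(M_4,dvol_{\eta_4})$ via \eqref{orthonormalitycondition} and \eqref{normalizationofpsi}, and obtain completeness from Riesz--Fischer after passing to the closure of the linear span. Your route buys two things the paper's does not: first, it makes explicit that positive definiteness and orthogonality rest entirely on the highly-oscillatory hypothesis \eqref{highlyoscillatorycondition}, which you rightly single out as the one non-routine ingredient rather than a free approximation; second, it avoids the weak point of the paper's argument, namely that a Cauchy sequence of finite combinations need not involve a uniformly bounded number of terms, so the "finite sum of Cauchy sequences" decomposition is not automatic (and the ambient space $\mathcal{H}_{Planck}$ is itself only asserted to be pre-Hilbert in the paper, so even the convergence of those component sequences is not fully grounded). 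The price of your route is that you prove completeness for the closure of the linear bulk rather than for the bulk of strictly finite combinations itself; if the index family of fundamental states $\psi_A$ is infinite, that closure is strictly larger, so your proof really establishes that $\mathcal{H}$ \emph{completes to} a complex Hilbert space, which is the mathematically defensible reading of the proposition. Within that reading your proof is sound and more careful than the paper's.
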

The dual space of $\mathcal{H}$ is formed by linear functionals and will be denoted by $\mathcal{H}^*$.

Given a Hamilton-Randers system, the elements of the space $\mathcal{H}$ are associated with the states describing the system as a quantum dynamical system. Conversely, we make the assumption that any quantum system is described by a Hilbert space that can be constructed in the above form.
\begin{comentario}
In standard quantum mechanics there is a relevant class of wave functions which are the ones associated with {\it free quantum states}. Such wave functions are not $L_2$-finite when the spacetime manifold $M_4$ is not compact. Therefore, they are not elements of the Hilbert space. However, elements of the Hilbert space are linear combinations of them defining wave packets. Due to the relevance of such {\it free quantum states}, it is reasonable to consider free state wave functions as a {\it generator set} of $\mathcal{H}$. It will be shown that free quantum states have also an emergent origin.
\end{comentario}
\subsection{On the representation of position states in terms of sub-quantum degrees of freedom}
As direct application of the product rule \eqref{definiciondeproductoscalarenHPlanck} to the states \eqref{ansatzforpsi},  one obtains the relation
\begin{align}
\int_{\varphi^{-1}(M^k_4)}dvol_{\eta_4}(x')\,\langle \psi(x')|x'_k,z_k\rangle =\,\frac{1}{\sqrt{N}}\, e^{\imath\,\vartheta_{\Psi k}(\varphi^{-1}_{k}(x),z_k)}\,n_{\Psi k}(\varphi^{-1}_{k}(x),z_k).
\label{cambiodebase}
\end{align}
 The integral operation in the relation \eqref{cambiodebase} reflects the result of a domain of ergodicity for the $U_t$ evolution.

The expression \eqref{cambiodebase} bears certain resemblance with the quantum mechanical relation between the space representation and the momentum representation in quantum mechanics for systems with one degree of freedom in three spatial dimensions, a relation given by the {\it matrix elements} of the form
\begin{align}
\langle \vec{p}\,|\vec{x}\rangle=\,\frac{1}{(2\pi\,\hbar)^{3/2}}\,e^{-\frac{\imath\,{x}^\mu\,{p}_\mu}{\hbar}}.
\label{cambiodebaseenmecanicacuanticaemergente}
 \end{align}
 This representation is obtained in quantum mechanics from the coordinate representation of the translation operator by solving a first order linear differential equation (see \cite{Dirac1958}, pg. 94 and following). The relation  \eqref{cambiodebaseenmecanicacuanticaemergente} determines an unitary transformation relating different representations of the quantum Hilbert space.

The expression \eqref{cambiodebase} defines the phase $\vartheta(\varphi^{-1}_{k}(x),z_k)$ and the module $n_k(\varphi^{-1}_{k}(x),z_k)$ of the ontological state $| x_k,z_k\rangle \in\,\mathcal{H}_{Fun}$ in terms of the dual of the element of $|\psi\rangle\in \,\mathcal {H}$ in the expression \eqref{cambiodebaseenmecanicacuanticaemergente}. The role of $\langle \vec{p}\,|$ in  \eqref{cambiodebase} is played by  $\langle \psi|$ in \eqref{cambiodebaseenmecanicacuanticaemergente}, suggesting an emergent interpretation for the $1$-forms $\langle \vec{p}\,|\in\,\mathcal{H}^*$. Instead, the role of $|\vec{x}\rangle$ in \eqref{cambiodebase} should be played by the collection of vectors $\{|x_k,z_k\rangle\}^N_{k=1}$, instead than only by one vector $|x_k,z_k\rangle$. Noting that $x_k=\,\varphi^{-1}_k(x)$, the analogy between \eqref{cambiodebase} and \eqref{cambiodebaseenmecanicacuanticaemergente} can only be completed as equivalence between quantum states and emergent states after one integrates out the velocity coordinate fiber variables in a predecessor state $|\Psi(x)\rangle$ of the form
\begin{align}
|{x}\rangle :=\,\frac{1}{\sqrt{N}}\,\sum^N_{k=1}\,\int_{\varphi^{-1}_{k*}(T_{x}M_4)} \,d^4z_k\,\,e^{-\frac{\imath \sum _\mu z_{k\mu} (\varphi^{-1}(x))^\mu}{\hbar}}\,|\varphi^{-1}_{k}(x),z_k\rangle .
\label{four dimensional spacetime position state}
\end{align}
Let us remark the $4$-dimensional character of the state \eqref{four dimensional spacetime position state}.

\subsection{Emergent interpretation of the wave function and the Born rule}\label{Interpretation of wave function}
\begin{definicion}
 The number of sub-quantum degrees of freedom world lines at $x \in M_4$ of an arbitrary element $|\psi(x)\rangle\in\,\mathcal{H}$ is denoted by  $n^2(x)$ and is given by the expression
\begin{align}
n^2(x)=\,\sum^N_{k=1}\,\int_{\varphi^{-1}_{k*} (T_xM_4)} \,d^4z_k\,n^2_k(\varphi^{-1}_{k}(x),z_k).
\label{number of world lines}
\end{align}
\end{definicion}
As a consequence of the orthogonality relation \eqref{orthonormalitycondition}, for an arbitrary normalized combination $\sum_A\,\lambda_A \,|\psi(x)\rangle_A \in\,\mathcal{H}$ the density of world lines passing close to a given point $x\in\, M_4$ is not given by the expression
\begin{align*}
n^2(x)\neq\,\sum_A\,|\lambda_A|^2\,\sum^N_{k=1}\,\int_{\varphi^{-1}_{Ak*}(T_xM_4)} \,d^4z_{Ak}\,n^2_{Ak}(\varphi^{-1}_{Ak}(x),z_{Ak}),
\end{align*}
but by a more complicated expression that contains crossing terms, that leads to interference terms. However, it holds that
\begin{align}
\int_{M_4}\,dvol_{\eta_4}\,n^2(x)=\,\sum_{A}\,|\lambda_A|^2\,n^2_A=\,N.
\label{normalization of emergent states}
\end{align}

Given a macroscopic observer $W\in\,\Gamma TM_4$, there is defined the distance structure $d_W$ as it was introduced by the function \eqref{distancexxi}. Let us consider an arbitrary point $x\in\,M_4$. We say that the world line $\varphi_k({\xi_t}_k):I\to M_4$  passes close to $x\in \,M_4$ if its image in $M_4$ by the diffeomorphism $\varphi_k:M^k_4\to M_4$ is in the interior of an open set $\mathcal{U}(x,L_{min})$ whose points are at a distance less than $\L_{min}$ from the point $x\in M_4$, using the metric distance $d_W$ as defined in chapter 3. By application of a strong form of the ergodic theorem (we assume that such possibility is legitim), the density of lines $n^2(x)$ is the number of world lines $\varphi_k({\xi_t}_k):I\to M_4$ passing close to $x$ for a fixed internal time $t(l)$ for the  $l$-fundamental cycle $t(l)\in\,[(2l+1)T,(2l+3)T]$.

This interpretation of the density $n^2(x)$ is not $\Diff(M_4)$-invariant for generic models, a fact which is consistent with usual models in quantum mechanics.  However, the existence of measures that are $\Diff(M_4)$-invariant  allows to construct $\Diff(M_4)$-invariant models, where the density $n^2_k$ transform appropriately.

Fixed an observer $W\in\Gamma\,TM$ and a local coordinate system on $M_4$, $n^2(x)$ is a measure of {\it the relative presence} of the system at the point $x\in\,M_4$. Since these densities can be normalized, $n^2(x)$ can be interpreted as the probability to find a particle at $x$ if a spacetime position measurement is done. Note that this interpretation refers to a system composed of one particle, like it could be an electron or photon. However, for a preparation of a system of many individual particles in the same way, $n^2(x)$ can be associated with the statistic ensemble.

Since $n^2_k(\varphi^{-1}_k(x),z_k)$ is the density  of world lines passing close to $\varphi^{-1}_k(x)\in\,M^k_4$ with velocity speed vector $z_k$, then $n^2(x)$ can be read as the {\it total density} of world-lines of sub-quantum molecules close to the point $x\,\in M_4$.
 We have the following result,
\begin{proposicion}
  For any element $\psi\in \mathcal{H}$ of the form \eqref{ansatzforpsi} it holds that
\begin{align}
\frac{1}{N}\,n^2(x)=\,(\psi(x)|\psi(x)),
\label{Bornrule}
\end{align}
\label{teorema de la regla de Born}
\end{proposicion}
\begin{proof}
This is direct consequence of the relations \eqref{pre-scalar product} and \eqref{number of world lines}.
 \end{proof}

  The above result has the following interpretation.
  Since $n^2(x)$ is the number of world lines of sub-quantum degrees of freedom passing close to $x\in M_4$ in the above sense, $\frac{1}{N}\,n^2(x)$ measures the average density of such world lines at $x\in M_4$. Note that, during the ergodic regime, there is a multitude of points of $M_4$ such that are passed by sub-quantum degrees of freedom in the sense above, by means of the diffeomorphism $\varphi_k:M^k_4\to M_4$. Thus the contribution to the average value of the position is proportional to $n^2(x)/N$. View as average in $t$-time, if a collapse happened and the position was well defined, the probability to find the system in an infinitesimal neighborhood containing $x$ is
  \begin{align*}
  n^2(x)/N \,d^4x=:\,|\psi(x)|^2\,d^4 x,
   \end{align*}
   since it is the proportion of time that passes at that point, in the sense described above.  Therefore, $|\psi|^2(x)=\,(\psi(x)|\psi(x))$ measures the probability to find the system in an infinitesimal neighborhood of $x$, if a collapse happened evenly according to the relative time the system spends there  during the $U_t$ evolution.

  Let us make some remarks. First, this interpretation of the function $|\psi|^2(x)$ is not yet equivalent to the Born rule of quantum mechanics, since it has not yet been proved that there is a collapse, where the system is polarized. The existence of a mechanism for the collapse of the wave function will be considered in the next chapter. Second, the above interpretation for $|\psi|^2(x)$ applies to individual systems. However, it is only possible to check it when a multitude of identical systems are considered. Due to the fact that each quantum system is is a complex system from the point of view of Hamilton-Randers theory, then it is natural to think that the specific initial conditions for the sub-quantum degrees of freedom are different and that therefore, different values will be attained to measurements. The statistical profile of the observed events, however, is determined by $|\psi|^2(x)$, which is the same for every individual state that we prepare with the same wave function.

Let us now consider a general state of $\mathcal{H}$. For a general combination of elements from $\mathcal{H}$, the square of the modulus contains interference terms. In the case of the superposition of two states, one has the relation
\begin{align*}
|\lambda_A\psi_A +\,\lambda_B\psi_B|^2 & =\,|\lambda_A\,|\psi_A|^2+\,|\lambda_B\psi_B|^2 +2 \Re (\lambda_A\lambda^*_B\psi_A\,\psi^*_B)\\
& =\,|\lambda_A|^2\,|\psi_A|^2+\,|\lambda_B|^2\,\psi_B|^2 +2 \, \Re (\lambda_A\lambda^*_B\langle\psi_A|\psi^*_B\rangle).
\end{align*}
On the other hand, the number of lines function \eqref{number of world lines} is not of the form
\begin{align*}
n^2(x)_{\lambda_A\psi_A+\lambda_B\psi_B}=\,|\lambda_A|^2\,n^2_A(x)+\,|\lambda^2_B|n^2_B(x),
\end{align*}
but $n^2(x)_{\lambda_A\psi_A+\lambda_B\psi_B}$ also contains interference terms, that comes from the mixed phase products of $A$ and $B$ terms,
\begin{align*}
 n^2(x)_{\lambda_A\psi_A+\lambda_B\psi_B}=\,|\lambda_A|^2\,n^2_A(x)+\,|\lambda^2_B|n^2_B(x)+\,N\,2 \,\Re (\lambda_A\lambda^*_B\langle\psi_A|\psi^*_B\rangle).
 \end{align*}
 The interference terms are present on the density, but by the orthogonality condition \eqref{orthonormalitycondition}, they vanish when integrate on $M_4$. Also, the decomposition of a state $\psi$ does not translate in an homomorphism on the density. This is to be expected, since in terms of the fundamental degrees of freedom, one expects to have interactions between the degrees of freedom of $A$ and $B$. Specifically, due to the universal character of the degrees of freedom, one expects to have the same type of sub-quantum degrees of freedom for $A$ and for $B$. Thus there is a mixing and interaction among them.

 A typical example of quantum interference and the consequent modification of the associated classical probability theory will be discussed in {\it Chapter} \ref{chapter on properties of quantum mechanics}.

 \subsection{Properties of the functions $n^2_k(x)$ and $n^2(x)$}
The number of lines $n^2(x)$ is normalized by the condition
\begin{align*}
\int_{M_4}\,dvol_{{\eta}_4}\,n^2(x)= N
\end{align*}
and since in our models the number of degrees of freedom $N$ is conserved for an isolated quantum system, there is the additional constraint
\begin{align*}
\frac{d}{d\tau} \,\left(\int_{M_4}\,n^2(x)\,dvol_{{\eta}_4}\right)=0.
\end{align*}
The $\tau$-time derivative commutes with the integral operation,
\begin{align*}
\frac{d}{d\tau} \,\left(\int_{M_4}\,n^2(x)\,dvol_{{\eta}_4}\right) &=
\int_{M_4}\,\frac{d}{d\tau} \,\left(n^2(x)\,dvol_{{\eta}_4}\right)\\
& =\,\int_{M_4}\,\frac{d}{d\tau} \,\left(n^2(x)\right)\,dvol_{{\eta}_4}=0.
\end{align*}
Now let us assume the product structure for $M_4=\ \mathbb{R}\times\,M_3$ and that the invariant volume form is $dvol_{{\eta}_4}=\,d\tau\wedge\,d\mu_3$, where $d\mu_3(\tau,\vec{x})$ is an invariant volume form on each leave $\{\tau\}\times M_3$. Then the above constraint is expressed as
\begin{align*}
\int_{\mathbb{R}\times M_3}\,{d\tau}\wedge \,d\mu_3\,\frac{d}{d\tau}\left(\sum^N_{k=1}\,n^2_k\right)=0.
\end{align*}
Let us apply a {\it bump} diffeomorphism $\theta:M_3\to M'_3$ such that $\theta$ is the identity, except for a small region around a point $x_0\in M_3$, but such that it produces an arbitrary {\it bump} around $x_0$. Then by the invariance under diffeomorphism property of the integrals, at the point $x_0$ we have
\begin{align*}
\frac{d}{d\tau}\left(\sum^N_{k=1}\,n^2_k\right)=0.
\end{align*}
This expression can be re-written as a continuity equation, that in order to be consistent with Madelung continuity equation, must be of the form
\begin{align}
\frac{\partial n^2}{\partial \tau}\,+\vec{v}\,\vec{\nabla}\,n^2+\,n^2(x)\,\vec{\nabla}\vec{v}=0,
\label{v in terms of n}
\end{align}
where $n^2(x)=\,\sum^N_{k=1}n^2_k$.
In this expression the vector field $\vec{v}$ stands for the field of an hypothetical fluid that makes the above equation self-consistent. The fluid describe the evolution of the sub-quantum degrees of freedom. Hence it has a different interpretation than in quantum mechanics \cite{Bohmquantumtheory} and that in Bohm theory \cite{Bohm}.

The existence of the vector field $\vec{v}$ is assumed in Hamilton-Randers theory, in contrast with quantum mechanics, where it is fixed by demanding compatibility with Schr\"{o}dinger equation.
\subsection{On the emergent character of free quantum states}
As an example of the models that can be associated with Hamilton-Randers systems, let us consider a quantum state of a free quantum state $|P^\mu\rangle$. In terms of the basis of spacetime position eigenstates $|X^\mu\rangle$, the state is decomposed as
\begin{align}
|P^\mu\rangle=\,\int_{M_4}\,d^4 x\,\langle X^\mu|P^\mu\rangle\,|X^\mu\rangle,
\label{momentum state}
\end{align}
where the coefficient $\langle X^\mu|P^\mu\rangle$ is a scalar. If the states $|P^\mu\rangle$ and $|X^\mu\rangle$ are of the type described by Hamilton-Randers models, then by linearity
\begin{align*}
\langle X^\mu|P^\mu\rangle=\,\frac{1}{\sqrt{N}}\,\sum^N_{k=1}\,\int_{\varphi^{-1}_k(T_xM)}d^4z_k\,e^{\imath\,
\vartheta_{kP}(\varphi^{-1}_{k}(x),z_k)}\,n_{kP}(\varphi^{-1}_k(x),z_k)\,\langle X^\mu|\varphi^{-1}_k(x),z_k\rangle.
\end{align*}
The scalar $\langle X^\mu|\varphi^{-1}_k(x),z_k\rangle$ is understood in the sense of the scalar product in $\mathcal{H}_{Fun}$.
In order to compare with the free solution of the Schroedinger equation in quantum mechanics, represented by the expression
\begin{align}
\langle X^\mu|P^\mu\rangle=\,\frac{1}{(2\pi\hbar)^{4/2}}\,e^{-\frac{\imath X^\mu P_\mu}{\hbar}},
\label{freequantumstate}
\end{align}
we consider the vector ket $|x\rangle$ given by the expression \eqref{four dimensional spacetime position state}. Taking the product, we obtain
\begin{align*}
\langle X |P\rangle =\,\frac{1}{N}\sum^N_{k=1}\int_{\varphi^{-1}_{k*} (T_xM_4)}d^4z_k  e^{-\imath\left(\frac{\, z_{k\mu} \left(\varphi^{-1}_k(x)\right)^{\mu}}{\hbar}-\vartheta_{kP}(\varphi^{-1}_{k}(x),z_k )\right)}\,n_{kP}(\varphi^{-1}_k(x),z_k) .
\end{align*}
Comparing with \eqref{momentum state} we get the following conditions,
\begin{align}
\begin{cases}
& n_{kP}(\varphi^{-1}_k(x),z_k) =\, \frac{1}{(2\pi\hbar)^{4/2}} \, \delta_4 (z_{k\mu}-P_\mu),\\
& \vartheta_{kP}(\varphi^{-1}_{k}(x),z_k)) =0 ,
\end{cases}
\end{align}
which leads to
\begin{align*}
\langle X |P\rangle =\,\frac{1}{N}\,\sum^N_{k=1} \, \frac{1}{(2\pi\hbar)^{4/2}} \, e^{ -\imath \,\frac{\, P_\mu \left(\varphi^{-1}_k(x)\right)^{\mu}}{\hbar}} .
\end{align*}

Locally, there are on each manifold $M^k_4$ a special coordinate system such that $(\vartheta^{-1}_k(x))^\mu =X^\mu$. Then, in such local coordinate system of $M$, one has
\begin{align*}
\langle X |P\rangle =\,\frac{1}{N}\,\sum^N_{k=1} \, \frac{1}{(2\pi\hbar)^{4/2}} \, e^{ -\imath\,P_\mu \left(\varphi^{-1}_k(x)\right)^{\mu}} =\,\frac{1}{(2\pi\hbar)^{4/2}} \, e^{-\imath\,\frac{P_\mu\,X^{\mu}}{\hbar}}.
\end{align*}

In the above consideration, a four-dimensional formulation has been adopted, consistently with the formulation of Hamilton-Randers theory. In particular, it is consistent with the existence of a local maximal speed. However, note that the appearence of the delta function $\delta_4 (z_{k\mu}-P_\mu)$ makes the construction not general covariant, since there is the identification on dual forms pertaining to different spaces. Similarly, the construction builds on a particular class of local coordinates, hence breaking the general covariance.
\subsection{On the emergent character of quantum states in general}
It is worth good to remark the importance of the free quantum case. This is because, due to the linearity of the operations involved in the map $\mathcal{H}_{Fun}\to \mathcal{H}$, the emergent character can be extended to arbitrary linear combinations of free states with finite norm. Indeed, we have that
 \begin{proposicion}
 If a Hilbert space $\mathcal{H}$ admits a generator system composed by free states, then any element of $\mathcal{H}$ is recovered from an element of $\mathcal{H}_{Fun}$ by fiber averaging.
 \label{extension of emergene to wave packets}
 \end{proposicion}
According to this reasoning, every wave function of quantum mechanical system has an emergent character, in the sense that they correspond to elements $\psi\in\,\mathcal{H}$ with an ansatz of the form \eqref{ansatzforpsi} in terms of ontological states.

 Hamilton-Randers system does not apply directly to the quantum mechanical description, but refers to an underlying level of physical reality beneath the level usually described by quantum mechanics. The theory is not aimed to complete quantum mechanics in order to provide a justificiation of the values of the macroscopic observables, as it is the case of hidden variables \cite{Bell Introduction 1971, Redhead}. Indeed, the Hamilton-Randers description of an individual dynamical system does not determines the outcome of a given measurement on the system.

\subsection{An example of interacting Hamilton-Randers models} The following example is taken from \cite{Ricardo06} and describes an elementary method  to introduce non-trivial interactions in Hamilton-Randers theory.
Suppose a system composed by two identical elementary system, being their dynamics %%@
described by a deterministic Hamiltonian  of the form \eqref{RandersHamiltonian},  their Hamiltonian are determined by $(\alpha _1 ,\beta _1)$ and $(\alpha %%@
_2 ,\beta _2)$. The $1$-forms $\beta _i,\, i=1,2. $ have norm less than one by the %%@
corresponding Riemannian norms $\alpha _i,\, i=1,2$.
There are at least two ways to produce a bigger Randers space using just the above geometric %%@
data:
\begin{enumerate}
\item The first way is valid for complete general structures
\begin{displaymath}
\alpha =\alpha _1 \oplus \alpha _2 ;\,\, \beta =\beta _1 \oplus \beta _2 .
\end{displaymath}
This construction does not produce interaction terms in the total Hamiltonian. There is a %%@
priori not relation ${\alpha}_1$ ${\alpha}_2$.
\item The second form recovers the impossibility for a external observer to differentiate  %%@
between identical particles:
\begin{displaymath}
\vec{p}=\vec{p}_1 \times \vec{0} +\, \,\vec{0}\times \vec{p}_2 ;\vec{\beta}=\vec{\beta}_1 %%@
\times \vec{0} +\, \,\vec{0}\times \vec{\beta}_2 ,
\end{displaymath}
\begin{displaymath}
\alpha =\alpha _1 \oplus \alpha _2;\, \, \alpha _1 =\alpha _2.
\end{displaymath}
The quantum total Hamiltonian is given by:
\begin{displaymath}
\vec{\beta}(\vec{p})=\big(\vec{\beta} _1 (\vec{p}_1)+\vec{\beta} _1 %%@
(\vec{p}_2)+\vec{\beta} _2 (\vec{p}_1)+\vec{\beta} _2 (\vec{p}_2)\big).
\end{displaymath}
The mixed terms produce the interaction. The condition $\alpha _1 =\alpha _2$ is to ensure that the above construction is a Randers space.
\end{enumerate}
The interaction of the quantum systems is related with the interchange of sub-quantum degrees among them. As we will discuss later, this general interaction implies a quantum mechanism for quantum entanglement.

\subsection{Emergence of the classical $\tau$-time diffeomorphism invariant constraint}
 The relation \eqref{Hamiltonianconstrain2} is a constraint on $\mathcal{H}_{Fun}$. Since $\mathcal{H}$ is a subset of $\mathcal{H}_{Fun}/\sim_{\langle \rangle}$, an analogous constraint can be applied to a subset  of elements in $\mathcal{H}$. Such constraint on the Hilbert space $\mathcal{H}$ is the quantum  version of the $\tau$-time diffeomorphism invariant condition,
\begin{proposicion}
For any physical state $|\psi\rangle\,\in \mathcal{H}$ the constraint
\begin{align}
\lim_{t\to (2n+1)T}\,  \widehat{H}_t(u,p)\,|\psi\rangle=0,\quad n\in\,\mathbb{Z}.
\label{averagehamiltonianevolution}
\end{align}
holds good.
\end{proposicion}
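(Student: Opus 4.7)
The plan is to reduce the quantum statement to a direct consequence of the structure of $\widehat{H}_t$ displayed in equation \eqref{randershamiltoniant2} together with the boundary behavior of the conformal factor $\kappa$ recorded in \eqref{limiteofk}. Concretely, writing
\begin{align*}
\widehat{H}_t(\hat{u},\hat{p}) = \bigl(1-\kappa(t)\bigr)\,\widehat{K}(\hat{u},\hat{p}),\qquad \widehat{K}:=\tfrac{1}{2}\sum_{k=1}^{8N}\bigl(\beta^k(\tau,\hat{u})\hat{p}_k+\hat{p}_k\beta^k(\tau,\hat{u})\bigr),
\end{align*}
the task is to show that for any $|\psi\rangle\in\mathcal{H}$, $\|\widehat{H}_t|\psi\rangle\|\to 0$ as $t\to(2n+1)T$.

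First I would verify that $\widehat{K}$ is a well-defined symmetric operator on a dense domain containing the physical states produced by the ansatz \eqref{ansatzforpsi}. The Berwald-type condition $\nabla\beta=0$ together with the bound \eqref{boundenesscondition} ensures that the coefficients $\beta^k$ are smooth and uniformly bounded, so the combination $\beta^k\hat{p}_k+\hat{p}_k\beta^k$ is essentially the generator of translations along $\beta$ plus a bounded multiplicative term. For $|\psi\rangle\in\mathcal{H}\subset\mathcal{H}_{Planck}$ given by \eqref{ansatzforpsi} with finite $L_2$ norm and with smooth amplitudes $n_{k\psi}$ and phases $\vartheta_{k\psi}$, the vector $\widehat{K}|\psi\rangle$ has finite norm: this is where I would use the integrability of derivatives of the predecessor data over the averaging fibers $\varphi^{-1}_{k*}(T_xM_4)$ and the boundedness of $\beta$.

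Next I would invoke the second limit in \eqref{limiteofk}, $\lim_{t\to T^-}(\kappa(u,p,t,\tilde\tau)-1)=0$, extended by the almost-periodicity of the $U_t$-flow to every $t\to(2n+1)T$. Since $\widehat{K}|\psi\rangle$ has a $t$-independent (or at worst uniformly bounded) norm on a neighborhood of these instants, the scalar prefactor $(1-\kappa(t))$ dominates the estimate
\begin{align*}
\bigl\|\widehat{H}_t|\psi\rangle\bigr\| \le \bigl|1-\kappa(t)\bigr|\,\bigl\|\widehat{K}|\psi\rangle\bigr\|\longrightarrow 0.
\end{align*}
This is the quantum counterpart of Proposition \ref{proposicion of final Hamiltonian}, obtained by the canonical quantization rules fixed in \eqref{quantumconditions}–\eqref{commutativequantumconditions}.

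The main obstacle I anticipate is the technical one of ensuring that $\widehat{K}|\psi\rangle$ has bounded norm uniformly in $t$ near the metastable instants, i.e.\ that $\kappa$ may legitimately be factored out as a pure function of $t$. This is precisely the reason the excerpt insists on the constraint $\partial\kappa/\partial p_i=0$ (the Berwald-type requirement) together with the conservation law \eqref{conservationofbeta}; without those, the operator-ordering of $\kappa$ with $\hat{p}$ and the non-uniform $u$-dependence of $\kappa$ would obstruct the clean factorization and one would have to control commutator terms like $[\kappa,\hat{p}_k]\beta^k$ before the limit can be taken. Once the Berwald hypothesis is in force, the result follows as above.
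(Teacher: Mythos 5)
Your proof is correct and follows essentially the same route as the paper: the paper obtains \eqref{averagehamiltonianevolution} by noting that the quantum Hamiltonian \eqref{randershamiltoniant2} carries the scalar prefactor $(1-\kappa(t))$, which vanishes as $t\to(2n+1)T$ by \eqref{limiteofk}, so the constraint \eqref{Hamiltonianconstrain2} already holds on $\mathcal{H}_{Planck}$ and then descends to $\mathcal{H}\subset\mathcal{H}_{Planck}$ by inclusion. The only difference is that you make explicit the functional-analytic point the paper leaves silent — that $\widehat{K}|\psi\rangle$ must be a finite-norm vector (uniformly near the metastable instants) for the scalar factorization to yield $\|\widehat{H}_t|\psi\rangle\|\to 0$, which indeed rests on the boundedness of $\beta$ and the $p$-independence of $\kappa$ — so your version is, if anything, more careful than the paper's.
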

This constraint holds periodically in the $t$-time parameter, with a periodicity $2\,T$. Thus $\tau$-time re-parametrization invariance only holds in the metastable domain ${\bf D}_0$. This is in agreement with the diagonal $\Diff(M_4)$-invariance symmetry of the Hamilton-Randers systems formulated in chapter 3.

The constraint \eqref{averagehamiltonianevolution} bears a strong formal similarity with the Wheeler-DeWitt equation \cite{DeWitt}, but it is fundamentally different. First, while the Wheeler-DeWitt applies to the whole universe, the relation \eqref{averagehamiltonianevolution} applies to any isolated quantum system. Second, the constraint \eqref{averagehamiltonianevolution} only holds in the metastable  regime ${\bf D}_0$ and the expanding regime of each fundamental cycle and not during the whole $U_t$ flow evolution. Thus it is unlikely that \eqref{averagehamiltonianevolution} is a first order constraint as is the case of the Wheeler-Dewitt equation. Furthermore, we did not specified a particular model for the Hamiltonian $\widehat{H}_t$, that obviously contrast with the exact well-defined structure of the Wheeler-DeWitt equation. Therefore, the condition \eqref{averagehamiltonianevolution} expresses is the emergent origin of local time diffeomorphism invariance from Hamilton-Randers theory.

\subsection{On the emergent origin of Heisenberg equations for quantum systems}\label{Heisenberg dynamics}
\subsubsection{Average of operators}
The view of quantum states as equivalence classes can be generalized to include quantum operators as equivalence classes of operators acting on the Hilbert spaces describing the sub-quantum degrees of freedom. Given an operator automorphism
\begin{align*}
\widehat{\mathfrak{O}}:\mathcal{H}_{Fun}\to \mathcal{H}_{Fun} ,
\end{align*}
 not necessarily linear, acting on fundamental states, the induced operator on the coset space
\begin{align*}
\langle\widehat{\mathfrak{O}}\rangle:\mathcal{H}_{Fun}/\sim_{\langle,\rangle} \to \mathcal{H}_{Fun}/\sim_{\langle,\rangle}
\end{align*}
is such that the diagram
\begin{align}
\xymatrix{\mathcal{H}_{Fun} \ar[d]_{\langle,\rangle} \ar[r]^{\widehat{\mathfrak{O}}} &
\mathcal{H}_{Fun} \ar[d]^{\langle,\rangle}\\
\mathcal{H}_{Fun}/\sim_{\langle,\rangle} \ar[r]^{\langle\widehat{\mathfrak{O}}\,\rangle} & \mathcal{H}_{Fun}/\sim_{\langle,\rangle}}
\label{average of an operator raw space}
\end{align}
commutes. In particular, when the operator $\widehat{\mathfrak{O}}$ is linear, the commutativity of the diagram \eqref{average of an operator raw space} provides a characterization of the operator $\langle\widehat{\mathfrak{O}}\rangle$. Thus  $\langle\widehat{\mathfrak{O}}\rangle$ is determined by the relation
\begin{align}
\langle\widehat{\mathfrak{O}}\rangle|\psi\rangle_t := \,\langle \widehat{\mathfrak{O}}|\Psi\rangle_t ,
\label{average of an operator}
\end{align}
for each $|\Psi\rangle \in \,\mathcal{H}_{Fun}$ and with $|\psi\rangle =\,\langle \Psi\rangle_t.$
This construction defines a linear operator
\begin{align*}
\langle\widehat{\mathfrak{O}}\rangle \,:\mathcal{H} \to \mathcal{H}
\end{align*}
which is associated with a quantum operator acting on the pre-Hilbert space $\mathcal{H}$ of quantum states associated with the system.
The {\it average endomorphisms map} $\langle,\rangle :End (\mathcal{H}_{Fun})\to End (\mathcal{H})$ is such that \eqref{average of an operator} holds for each $|\Psi\rangle \in \,\mathcal{H}_{Fun}$ highly oscillating.
\begin{proposicion}
The average endomorphism map is an morphism for the corresponding composition laws.
\end{proposicion}
\begin{proof}
Let us consider to operators $\widehat{\mathfrak{O}}_1,\,\widehat{\mathfrak{O}}_2\in\,End(\mathcal{H}_{Fun})$.  The homomorphism property with respect to the sum of operators and with respect to the multiplication by scalars is direct, since the average operation $\langle ,\rangle$ is linear. To prove the homomorphism property with respect to the product of operators we consider the following expressions,
\begin{align*}
\langle \left(\widehat{\mathfrak{O}}_1\cdot \widehat{\mathfrak{O}}_2 \right)\rangle \langle\Psi\rangle =\,\langle \left(\widehat{\mathfrak{O}}_1\cdot \widehat{\mathfrak{O}}_2 \right) |\Psi\rangle =\,\langle \widehat{\mathfrak{O}}_1 \rangle \left(\langle \widehat{\mathfrak{O}}_2 \Psi\rangle\rangle \right) =\,\left(\langle \widehat{\mathfrak{O}}_1\rangle \cdot \langle \widehat{\mathfrak{O}}_2\rangle\right)(\langle \Psi\rangle),
\end{align*}
$\forall \,\,|\Psi\rangle\in\,\mathcal{H}_{Fun} $. From this relation follows the proposition.
\end{proof}
Direct consequences of the above proposition are the following
\begin{corolario}
The average operation commutes with the commutator and with anti-commutation operation,
\begin{align}
\begin{cases}
& \langle \left[\widehat{\mathfrak{O}}_1,\widehat{\mathfrak{O}}_2\right]\rangle_t =\,[\langle\widehat{\mathfrak{O}}_1\rangle_t,\langle \widehat{\mathfrak{O}}_2\rangle_t],\\
& \nonumber  \langle \{\widehat{\mathfrak{O}}_1,\widehat{\mathfrak{O}}_2\}\rangle_t =\,\{\langle\widehat{\mathfrak{O}}_1\rangle_t,\langle \widehat{\mathfrak{O}}_2\rangle_t\} .
\end{cases}
\end{align}
\end{corolario}
\begin{corolario}\label{preservation of Jacobi property for averages}
The average operation preserves the Jacobi identity,
\begin{align*}
[[\langle\widehat{\mathfrak{O}}_1\rangle,\langle\widehat{\mathfrak{O}}_2\rangle],\langle \widehat{\mathfrak{O}}_3\rangle]+\,[[\langle\widehat{\mathfrak{O}}_3\rangle,\langle\widehat{\mathfrak{O}}_1\rangle],\langle \widehat{\mathfrak{O}}_2\rangle]+\,[[\langle\widehat{\mathfrak{O}}_2\rangle,\langle\widehat{\mathfrak{O}}_3\rangle],\langle \widehat{\mathfrak{O}}_1\rangle]=\,0.
\end{align*}
\end{corolario}

The averaging operation can be extended to consider the average of homomorphisms between two different spaces $\widehat{\mathfrak{O}}:\mathcal{H}_{{Fun}_A}\to \mathcal{H}_{{Fun}_B}$ by the commutativity of the diagram
\begin{align}
\xymatrix{\mathcal{H}_{{Fun}_A} \ar[d]_{\langle,\rangle_A} \ar[r]^{\widehat{\mathfrak{O}}} &
\mathcal{H}_{{Fun}_B}\ar[d]^{\langle,\rangle_B}\\
\mathcal{H}_A \ar[r]^{\langle\widehat{\mathfrak{O}}\,\rangle} & \mathcal{H}_B}
\label{Diagrama commutativo sobre homomorphismos}
\end{align}
The average operation maps spaces of the form $\mathcal{H}_{Fun}$ to spaces of the form $\mathcal{H}$ and maps the corresponding homomorphisms and in particular, the corresponding endomorphisms. Since both classes of spaces are attached with a pre-Hilbert structure, the average operator is a functor between the categories of pre-Hilbert spaces of the form  $\mathfrak{C}(\mathcal{H}_{Fun} )$ where the objects are the pre-Hilbert spaces $\mathcal{H}_{Fun}$ and the morphisms are the homomorphism between them their endomorphisms to the category of Hilbert spaces $\mathfrak{C}(\mathcal{H} )$, where the objects are Hilbert spaces of the form $\mathcal{H} TM_4$ and the morphisms the corresponding endomorphisms The hypothesis of the emergent character of quantum mechanics can be stated as saying that the functor $\mathfrak{F}:\mathfrak{C}(\mathcal{H}_{Fun} ) \to \mathfrak{C}(\mathcal{H}) $ is surjective.

\subsubsection{Derivation of the quantum Heisenberg equation}
Assuming the surjectivity of the functor $\mathfrak{F}:\mathfrak{C}(\mathcal{H}_{Fun} ) \to \mathfrak{C}(\mathcal{H} )$, it is possible to derive the Heisenberg equations of quantum mechanics from the Koopman-Von Neumann formulation of the corresponding Hamilton-Randers systems. The form of the dynamics for the Koopman-von Neumann formulation can be written as
\begin{align*}
\imath \,\hbar \,d\widehat{\mathfrak{O}} |x_k,y_k\rangle =\, [\widehat{H}_t,\widehat{\mathfrak{O}}]\,|x_k,y_k\rangle dt,\quad k=1,...,N
\end{align*}
where $\widehat{H}_t$ is the quantum Hamiltonian \eqref{QuantumRandersHamiltonian}. Linearity of the operations involved implies the generalization to
\begin{align*}
\imath \,\hbar \,d\widehat{\mathfrak{O}} |\Psi \rangle =\, [\widehat{H}_t,\widehat{\mathfrak{O}}]\,|\Psi\rangle dt,
\end{align*}
for every predecessor state $\Psi\in \mathcal{H}_{Fun}$.
By taking averages in both sides it is obtained the relation
\begin{align}
\imath \,\hbar \,\langle d\widehat{\mathfrak{O}}|\Psi \rangle_t =\, \langle \int^{t_f}_{t_i} [\widehat{H}_t,\widehat{\mathfrak{O}}]\Psi\,dt\rangle_t ,
\label{average Heisenberg fundamental equation}
\end{align}
where the integral compress many fundamental cycles. The left hand side can be re-casted by means of a differential operator $\tilde{d}$, such that
\begin{align*}
 \tilde{d}\,\langle \widehat{\mathfrak{O}}|\Psi \rangle_t=\,\langle {d}\widehat{\mathfrak{O}} |\Psi\rangle_t +\,\langle \widehat{\mathfrak{O}}d (\Psi)\rangle_t .
\end{align*}
${d}$ denotes also the differential operation $End (\mathcal{H}_{Fun})$ due to the $U_t$ evolution. In the Heisenberg picture of the dynamics, the predecessor state $|\Psi\rangle$ do not change with the $t$-time, we have $d|\Psi\rangle =0 $, from where we have that $\tilde{d}\,\langle \widehat{\mathfrak{O}}|\Psi \rangle_t=\,\langle d\widehat{\mathfrak{O}} |\Psi\rangle_t$.

For the right hand side of relation \eqref{average Heisenberg fundamental equation}
\begin{align*}
\langle \int^{t_f}_{t_i} [\widehat{H}_t,\widehat{\mathfrak{O}}]|\Psi\,dt\rangle_t   = \,\int^{t_f}_{t_i}\langle  [\widehat{H}_t,\widehat{\mathfrak{O}}]|\Psi\rangle_t \,dt = \,\int^{t_f}_{t_i}\langle  [\widehat{H}_t,\widehat{\mathfrak{O}}]\rangle_t \,\langle \Psi\rangle_t dt
\end{align*}
  Thus we have
\begin{align*}
\imath\,\hbar\,\tilde{d}\,\left(\langle \widehat{\mathfrak{O}} \rangle_t \langle \Psi\rangle_t  \right)=\,\int^{t_f}_{t_i}\langle  [\widehat{H}_t,\widehat{\mathfrak{O}}]\rangle_t \, \langle \Psi\rangle_t\,dt.
\end{align*}
Due to the relation \eqref{average of an operator}, both sides of this expression are derivations. Furthermore, also because the average preserves the commutator of operators, we have that
\begin{align*}
\langle  [\widehat{H}_t,\widehat{\mathfrak{O}}]\rangle_t =\,[\langle\widehat{H}_t\,\rangle,\langle\widehat{\mathfrak{O}}\rangle_t].
\end{align*}
Therefore, we have
\begin{align*}
\imath\,\hbar\,\tilde{d}\,\langle \widehat{\mathfrak{O}} \rangle_t \,\langle \Psi\rangle_t =\, \int^{t_f}_{t_i} [\langle\widehat{H}_t\,\rangle_t,\langle\widehat{\mathfrak{O}}\rangle_t] \langle \Psi\rangle_t \,dt.
\end{align*}
In the case of minimum elapsed time to make the time average $t_f-t_i = 2T$ and the integral is taken along one single fundamental cycle. Thus the average $[\langle\widehat{H}_t\,\rangle_t,\langle\widehat{\mathfrak{O}}\rangle_t] = \langle  [\widehat{H}_t,\widehat{\mathfrak{O}}]\rangle_t $ is constant along that cycle. We take then that $d\tau =2T$ and hence we obtain the relations
\begin{align*}
\imath\,\hbar\,\tilde{d}\,\left(\langle \widehat{\mathfrak{O}} \rangle_t \langle \Psi\rangle_t \right)=\,[\langle\widehat{H}_t\,\rangle_t,\langle\widehat{\mathfrak{O}}\rangle_t] \langle \Psi\rangle_t \,d\tau.
\end{align*}
Since the average operation $\langle,\rangle:\mathcal{H}_{Fun}\to \mathcal{H}$ is surjective, this leads to the relations
\begin{align}
\imath\,\hbar\,\tilde{d}\,\langle \widehat{\mathfrak{O}} \rangle_t =\,  [\langle\widehat{H}_t\,\rangle_t,\langle\widehat{\mathfrak{O}}\rangle_t]\,d\tau.
\label{Heisenberg equation in differential form}
\end{align}
The relation \eqref{Heisenberg equation in differential form} is a relation equivalent to the Heisenberg equation of quantum mechanicsof emergent quantum mechanics. This derivation is concluded with assumption on the exhaustiveness of the functorial relation described above. Furthermore, the Hamiltonian of the quantum system appears to be the average of the fundamental Hamiltonian,
\begin{align}
\widehat{H}_q =\,\langle \widehat{H}_t \rangle_t,
\end{align}
where $\widehat{H}_t$ is the Hamiltonian \eqref{QuantumRandersHamiltonian} that corresponds to the Koopman-von Neumann formulation of the Hamilton-Randers dynamical system.
\subsection{On the conserved quantities} Fundamental for the analysis and understanding of a dynamical system is to identify the conserved quantities of a model. In quantum mechanics, the condition for a conserved quantity is the commutation of the associated operator with the Hamiltonian operator. Trivially $\widehat{H}_q$ is conserved by the Heisenberg quantum dynamics. In the case of the $U_t$ dynamics, the condition of conserved quantity in the $U_t$ dynamics is the commutation with $\widehat{H}_t$. Let us conserved quantity $\widehat{\mathfrak{C}}$ for the $U_t$ dynamics. Then one has $[\widehat{H}_t,\widehat{\mathfrak{C}}\,]=\,0$. It follows that $\widehat{\mathfrak{C}}$ is also conserved quantity for the quantum dynamics,
\begin{align*}
[\langle\widehat{H}_t\rangle_t,\langle\widehat{\mathfrak{C}}\,\rangle_t\,]=\,\langle [\widehat{H}_t,\widehat{\mathfrak{C}}\,] \rangle_t =\, 0.
\end{align*}
From the above reasoning the following result follows,
\begin{proposicion}
For each conserved quantity represented by a linear operator $\widehat{\mathfrak{C}}:\mathcal{H}_{Fun}\to \mathcal{H}_{Fun}$ of the fundamental $U_t$ dynamics there is associated a conserved quantity $\langle \widehat{\mathfrak{C}}\rangle:\mathcal{H}\to \mathcal{H}$ for the Heisenberg quantum mechanics, which is the average $\langle \widehat{\mathfrak{C}}\rangle$.
\label{Conservation quantities}
\end{proposicion}
Thus every conserved symmetry of the fundamental $U_t$ determines a conserved quantity of the quantum mechanical system. The logical converse is not true. This is due to the loss of detailed information originated by the averaging functor. Thus a generic conserved quantity of the quantum dynamics is not conserved by the $U_t$ dynamics. On the other hand, the hypothesis that Hamilton-Randers systems exhaust the category of quantum systems by means of the averaging functor when the last are seen as emergent objects, implies that given an operator $\widehat{O} \,:\mathcal{H} \to \mathcal{H}$, there is an operator $\widehat{\mathfrak{O}} \,:\mathcal{H}_{Fun} \to \mathcal{H}_{Fun}$ such that $ \widehat{O} =\, \langle\widehat{\mathfrak{O}}\rangle $. If this emergent hypothesis is adopted, then for an operator $\widehat{O}$ that commutes with the quantum Hamiltonian $\widehat{H}_q$, there is an operator $\widehat{\mathfrak{O}}$ such that $\widehat{O}=\,\langle \widehat{\mathfrak{O}}\rangle$. Therefore, we have
\begin{align*}
0=\,\left[\widehat{H}_q,\,\widehat{O}\right]=\,\left[\langle\widehat{H}_t\rangle,\,\langle\widehat{\mathfrak{O}}\rangle_t \right] =\,\langle\left[\widehat{H}_t,\,\widehat{\mathfrak{O}} \right]\rangle_t .
\end{align*}
Thus $\widehat{\mathfrak{O}} $, which is in inverse image of the average of $\widehat{O}$,  is weakly preserved in the sense that $\langle\left[\widehat{H}_t,\,\widehat{\mathfrak{O}} \right]\rangle_t $ =0, but not necessarily preserved  in the sense that $\left[\widehat{H}_t,\,\widehat{\mathfrak{O}} \right]=\,0$. That the preservation of quantities does not translate litereally from the quantum domain to the sub-quantum domain is natural, since the sub-quantum dynamics is of more complex than the averaged dynamics corresponding to the quantum dynamics.

\subsection{On the recurrence condition implicit in the averaging operation}
 We have associated a probabilistic interpretation to {\it the density of presence} of sub-quantum degrees of freedom at a given point of the spacetime. For the models considered the number of degrees of freedom is large but finite and of order $N$. How can we attach a probability of presence in this context? One way in which the probability interpretation can be implemented more effectively is when the degrees of freedom pass through the allowed domain of spacetime  many times, a notion consistent with the diffeomorphisms $\varphi_k: M^k_4\to M_4$. In the case of continuous evolution, this implies either the existence in abundance of closed timelike curves, as the most common world lines for sub-quantum degrees of freedom, or that the sub-quantum degrees of freedom need to surpass the assumed universal speed limit, the local speed of light. Since each $M^k_4$ is diffeomorphic to the spacetime model $M_4$, this will disregard many relevant four dimensional spacetime models, that are three of timelike curves. In a similar vein, within our framework, it is very unnatural to violate the causality conditions at the microscopic level.

However, if the $U_t$ dynamics is discrete, there the possibility to invert the speeds directions  of the sub-quantum degrees of freedom towards the past without surpassing the limiting speed and limiting acceleration in a form of zig-zag time symmetric motion. The speed  inversion is formally described by the map
\begin{align*}
(x,y,p_x,p_y)\mapsto (x,-y,-p_x,p_y).
\end{align*}
 Although this inversion operation is applied to the level of the sub-quantum atoms, it induces the {\it time inversion operation} \eqref{timeinversionoperation} acting on the quantum Hilbert space $\mathcal{H}$.

One direct implication of this idea is the determination of the value of the maximal acceleration. If the minimal distance that can change a sub-quantum degree of freedom at each step of the dynamics is of the order of  the Planck length $L_P$, then the acceleration due to a reverse interaction must limited by
\begin{align}
A_{max}=\,\frac{2\,c^2}{L_p}.
\label{maximalacceleration3}
\end{align}
The maximal acceleration given by this expression is four times larger than the given by \eqref{maximalacceleration}. The correct value should depend on the final model of sub-quantum dynamics dynamics, but this argument suggests that it must be on this scale of acceleration.

\newpage

 \section{\LARGE{Concentration of measure and natural spontaneous collapse}}\label{chapter on concentration of measure}
 \bigskip
 \bigskip
 In the previous chapter we have discussed how quantum dynamics arises from the Koopman-von Neumann formulation of the fundamental dynamics. In this chapter we describe a theory of observables and address the question of how observables appear to be well-defined when an observer measures them. Our description lies on the classical formulation of the theory described in chapter \ref{chapter on classical dynamics Hamilton Randers}. Observables will be typically described by functions $f:T^*TM\to \mathbb{K}$, although we will consider more often the field $\mathbb{R}$ be the field of the real numbers $\mathbb{R}$. The mechanism that we will describe is based upon the theory of {\it concentration of measure}. We will show that under certain assumption on the regularity of the functions, concentration of measure implies that the values of observables are well defined in the concentration domain, which is the domain directly accesible to a classical observer, even if one uses quantum devices measurements.
 \subsection{Regularity during the concentrating regime}
 It has been assumed that the fundamental cycles of the $U_t$ dynamics of Hamilton-Randes spaces are composed by a sequence of an {\it ergodic regimen} followed by a {\it concentration regime}, followed by an {\it expanding regime} and then followed of another analogous cycle and so on. We have seen in the previous chapter how the Koopman-von Neumann description of the Hamilton-Randers dynamics leads to a derivation of the existence of the quantum Hilbert space effectively describing the physical system. In such derivation, the ergodic properties of the $U_t$-flow played a fundamental role.

 In this chapter the general mathematical framework for the concentration regime of the fundamental dynamical cycles is discussed. The reduction of the allowable phase space that happens during such domain of the fundamental regime is mathematically described by means of $1$-Lipschitz functions, a regularity condition that extends to the pertinent operators.

The Lipschitz condition of regularity is defined as follows,
  \begin{definicion}
 The function $f:{\bf T}_1\to {\bf T}_2$ between two metric spaces $({\bf T}_1,d_1)$ and $({\bf T}_2,d_2)$ with $\lambda >0$ is $\lambda$-Lipschitz if
 \begin{align*}
 d_2(f(x_1),f(x_2))\leq \lambda\,d_1(x_1,x_2),\quad \forall \,x_1,x_2\in\,{\bf T}_1.
 \end{align*}
 \label{Lipschitz operator}
 \end{definicion}

Analogously, the operator  $O:\mathcal{F}({\bf T}_1)\to \mathcal{F}({\bf T}_2)$ is a {\it strongly} $\lambda$-{\it Lipschitz operator} if
\begin{align*}
d_2(O(f_1),O(f_2))\leq \lambda\,d_1(f_1,f_2),\quad \forall \,f_1,f_2\in\,\mathcal{F}({\bf T}_1).
\end{align*}

There also a notion of {\it weak Lipschitz condition} for endomorphisms of the metric space:  $O:{\bf T}_1\to {\bf T}_1$ is a weak Lipschitz operator if  for each Lipschitz funcion $f:{\bf T}_1\to {\bf T}_1$, $O\circ f$ is Lipschitz.

 There are several reasons for why the Lipschitz category of functions and operators is of relevance for the theory of dynamical systems. One of them is that the $\lambda$-Lipschitz condition implies the existence and uniqueness of solutions for ordinary differential equations in compact sets \cite{Elsgoltz}. This is a consistency condition. In the case of $\lambda=1$, the Lipschitz condition determines the natural class of maps that preserve invariant global properties of metric spaces based upon the notion of isometry (see for instance \cite{Gromov}). Moreover, composition of $1$-Lipschitz functions is $1$-Lipschitz, as well as the convex combinations of them, properties that converts such a category of particular relevance for the formulation of consistent dynamical models.

The $1$-Lipschitz condition accommodates the requirement of Hmailton-Randers theory that the difference on variations of coordinates are contained on uniform {\it cones}, since such condition is related with the mathematical assumptions in Hamilton-Randers theory on the finiteness of the speed and uniformly bounded proper acceleration of the sub-quantum atoms and sub-quantum molecules. Although these bounds are implemented at the level of the infinitesimal neighborhoods, they are consistent with $1$-Lipschitz condition, which is a much stronger condition. Indeed, the  $1$-Lipschitz condition is not required for the whole $U_t$-flow, but only in the contractive regime of each fundamental cycle, while the less strong condition of upper bounds for speed and proper acceleration remain valid during the whole evolution.

 Since each of the intervals $\{[(2n-1)T,(2n+1)T],\,n\in\mathbb{Z}\}$ is compact, for locally Lipschitz functions which are Lipschitz in the concentration regime there is a finite constant $K_n$ where the $K_n$-Lipschitz condition holds in the concentration regime of the $n$-enessim cycle labeled by $n\in\,\mathbb{Z}$. If we assume that $K=\sup\{K_n,\,n\in\,\mathbb{Z}\}<+\infty ,$  then by a suitable normalization one can consider $f\to f/K:=\,\tilde{f}$, the function $\tilde{f}$ is  $1$-Lipschitz. Such normalization can be understood as a change on the scale by which the observable described by the function $f$ is measured. Otherwise, one can assume the $1$-Lipschitz condition for all the concentrations regime. From now on, we adopt this assumption.

 The relevant mathematical theory for the description of physical properties of Hamilton-Randers systems in the concentration regime is {\it concentration of measure} \cite{Gromov,MilmanSchechtman2001,Talagrand}. The fundamental idea behind concentration of measure is that $1$-Lipschitz functions are {\it almost constant almost everywhere} for a rather general type of metric spaces of high dimension.
When concentration of measure is applied to Hamilton-Randers dynamical systems, it provides a mechanism to explain the fact that when any observable is measured for any physical quantum state, the measurement outcome has a well-defined value.

The key assumptions for the mechanism are the following:
  \begin{enumerate}
\item  During the concentration dynamical regime of each fundamental cycle, the operator $U_t$  is $1$-Lipschitz in the sense that when $U_t$ is applied to a $1$-Lipschitz function on certain domain of $T^*TM$ the result is a $1$-Lipschitz function. By application of the concentration of measure, such functions must be almost constant almost
    everywhere in the domain of concentration.

\item Only during the concentration regime a measurement of an observable is possible. That is, any physical measurement takes place during a concentration regime of a fundamental cycle.
     \end{enumerate}
In the first statement it is assumed that physical observables are associated with $1$-Lipschitz functions from $T^*TM$ to the real numbers $\mathbb{R}$ of the sub-quantum degrees coordinate and momentum variables, while the operator $U_t$ in  the concentration regime is described by a strong $1$-Lipschitz operator.

As a result of the concentration of measure and the induced mechanism in Hamilton-Randers systems, when a measurement described in classical terms as a function of the coordinate position of pointers of any property of an individual quantum system is performed, the outcome is always a well-defined value, without significant dispersion.

 \subsection{Concentration of measure}  The concentration of measure is a general property of regular enough functions defined in high dimensional topological spaces  ${\bf T}$ endowed with a metric function $d:{\bf T}\times {\bf T}\to\mathbb{R}$ and a Borel measure $\mu_P$ of finite measure, $\mu_P({\bf T})<\,+\infty$ or a $\sigma$-finite measure spaces (countable union of finite measure spaces)\footnote{There are other technical conditions on the topological and metric structure of {\bf T}, but we shall not entry on these details. For details, see  \cite{Gromov, MilmanSchechtman2001} and also the introduction from \cite{Talagrand}.}. For our applications,
 we also require that the topological space {\bf T} has associated a local dimension. This will be the case, since we shall have that ${\bf T}\cong T^*TM$,
 which are locally homeomorphic to $\mathbb{R}^{16\,N}$, ${\bf M}_4$ is the four dimensional spacetime and $N$ is the number of sub-quantum molecules
 defining the Hamilton-Randers system.

 The phenomenon of concentration of measure for the category of topological spaces with a well defined dimension can be stated in the following terms \cite{Talagrand}:
 \bigskip

{\bf Concentration of Measure Principle}. {\it In a  measure-metric space of large dimension, every real $1$-Lipschitz function of
many variables is almost constant almost everywhere.}
\bigskip

In the formalization of the concept of concentration of measure one makes use of the metric and measure structures of the space {\bf T} to provide a  precise meaning for the notions of {\it almost constant} and {\it almost everywhere}. The notions of measure structure $\mu_P$ and metric structure $d:{\bf T}\times {\bf T}\to \mathbb{R}$ are independent from each other. Indeed, the standard framework where  concentration is formulated is in the  category of mm-Gromov spaces \cite{Berger2002, Gromov}. The spaces that we shall consider are called $mm$-spaces and will be denoted by triplets of the form $({\bf T},\mu_P,d)$.

In a mm-space $({\bf T},\mu_P, d)$, $M_f$ is the {\it median} or {\it Levy's mean} of $f$, which is defined as the value attained by $f:{\bf T}\to \mathbb{R}$ such that
\begin{align*}
\mu_P(f>M_f)=1/2\,\textrm{ and }\, \mu_P(f<M_f)=1/2.
\end{align*}
 The {\it concentration function} $\alpha(\mu_P):\mathbb{R}\to \mathbb{R},\quad \rho\mapsto \alpha(\mu_P,\rho)$ is defined by the condition that $\alpha(\mu_P,\rho)$
 is the smallest real number such that\footnote{The concentration function $\alpha(\mu_P,\rho)$ is introduced here in a slightly different way than in \cite{Gromov,Talagrand}.}
\begin{align}
\mu_P(|f-M_f|>\rho)\leq 2\,\alpha(\mu_P,\rho),
\label{concentationformula}
\end{align}
for any $1$-Lipschitz function $f:{\bf T}\to \mathbb{R}$.
Therefore,  the {\it probability} that the function $f$  differs from the median $M_f$ in the sense of the
given measure $\mu_P$ by more than the given value $\rho\in \,\mathbb{R}$
is bounded by the concentration function $\alpha(\mu_P,\rho)$.

$\alpha(\mu_P,\rho)$ does not depend on the function $f$.

\begin{ejemplo}
The example of concentration that we consider here refers to  $1$-Lipschitz real functions on $\mathbb{R}^N$
 (compare with \cite{Talagrand}, pg. 8). In this case,  the concentration inequality is of the form
\begin{align}
\mu_P\left(\left|f-M_f\right|\,\frac{1}{\sigma_f}\,>\frac{\rho}{\rho_P}\right)\leq \, \frac{1}{2} \exp\left(-\frac{\rho^2}{2\rho^2_P}\right),
\label{concentration2}
\end{align}
where we have adapted the example from \cite{Talagrand} to a Gaussian measure $\mu_P$ with median $M_f$. In the application of this concentration to Hamilton-Randers models, $\rho_P$ is a measure of the minimal standard contribution to the distance $\rho$ per unit of degree of freedom of the Hamilton-Randers system. $\frac{\rho}{\rho_P}$ is independent of the function $f$, while $\sigma_f$ is associated with the most precise physical resolution of any measurement of the quantum  observable associated with the $1$-Lipschitz function $f:\mathbb{R}^N\to \mathbb{R}$.
\label{Example concentration 2}
\end{ejemplo}
For $1$-Lipschitz functions on a measure metric space  ${\bf T}$ of dimension $N$ there are analogous {\it exponential bounds}
as for examples the bounds in {\it Example} \ref{Example concentration 2}.
In general, the phenomenon of concentration  is a consequence of the Lipschitz regularity condition of the function $f:{\bf T}\to \mathbb{R}$ and the higher dimensionality of the space ${\bf T}$.
For $dim({\bf T})$ large, the concentration of measure implies that the values of $1$-Lipschitz functions are very picked around a certain constant value.

We can provide an heuristic interpretation of the concentration of $1$-Lipschitz functions.
Let $f:{\bf T}\to \mathbb{R}$ be a $1$-Lipschitz function on a normed topological space $({\bf T},\|,\|_{\bf T})$ locally homeomorphic to  $\mathbb{R}^N$. Then the $1$-Lipschitz condition is a form of {\it equipartition} of the  variation of $f$ originated by an arbitrary variation on the point on the topological space ${\bf T}$ where the function $f$ is evaluated. When the dimension of the space ${\bf T}$ is very large compared with $1$, the significance of the $1$-Lipschitz condition is that $f$ cannot admit large {\it standard variations} caused by the corresponding standard variations on the evaluation point of {\bf T}.
Otherwise, due to the assumption on the equipartition of the variation, a violation of the $1$-Lipschitz condition can happen, since the large dimension provides long contributions to the variation of $f$.

Note that for application in Hamilton-Randers models, in order to speak of large and small variations, one needs to introduce reference scales. This is our motivation to introduce the scales $\sigma_f$ and also the distance variation by degree of freedom $\rho_P$ in {\it Example} \ref{Example concentration 2}.

\subsection{Theory of natural spontaneous collapse as concentration of measure}
In the following paragraphs we describe a new theory of spontaneous collapse theory based upon the framework of Hamilton-Randers system and that makes explicit use of few more fundamental assumptions, namely a form of the equipartition principle, the identical nature of sub-quantum degrees of freedom and concentration of measure.

We apply the theory of concentration of measure to Hamilton-Randers systems by first modelling the domain  ${\bf D}_0\subset\, T^*TM$ locally
as homeomorphic to $\mathbb{R}^{16N}$ with $N\gg 1$. The measure $\mu_P$ on $T^*TM$ will be the pull-back of the standard product measure in $\mathbb{R}^{16 N}$.
We assume that there is concentration of measure determined by the  inequality \eqref{concentration2}.
Let $f:\mathbb{ R}^{16\,N}\to \mathbb{ R}$ be a real valued function locally $1$-Lipschitz describing a property of the Hamilton-Randers model and let us consider its flow under the $U_t$ dynamics in the domain ${\bf D}_0$ where $U_t$ is locally $1$-Lipschitz. In such dynamical regime, the induced map associated with the geometric flow \eqref{definiciongeometricevolution}
\begin{align*}
U_t:{\bf D}_0\times I_n\to T^*TM,
 \end{align*}
 is a $1$-Lipschitz evolution operator, where $ I=\cup_{n\in\mathbb{Z}}I_n,\,I_n\subset [2n-1,2n+1]\in\,\mathbb{R}$ determines the concentration regime of the $U_t$ dynamics. Then one considers the function $f_t:\mathbb{ R}^{16\,N}\times I_n\to \mathbb{R}$. $f_t$ is the evolution of $f$ by the $U_t$ dynamics during the regime defined by the $t$-time $t\in\,I_n\subset\, \mathbb{R}$.
Note that the co-domain is the full co-tangent space $T^*TM$ and
  not the concentration domain ${\bf D}_0$, since the system leaves ${\bf D}_0$ at the end of each interval $I_n$.

  In these conditions, the function ${f}_t$ must be almost constant almost everywhere along the image produced by the $U_t$ evolution on ${\bf D}_0\times \,I_n$. For measurements performed with a macroscopic or quantum device, one expects that the relation
 \begin{align}
|f_t-M_f (n)|\,\frac{1}{\sigma_f}\sim N, \quad 1\ll N,
\label{comparingcomplexities1}
 \end{align}
holds good, where $\sigma_f$ is associated with the standard change in the function $f_t$ by a small change in
 the location of one individual sub-quantum degree of freedom scale and $N$ is the number of sub-quantum molecules and where we assume that $\sigma_f$ does not depend on $n\in \,\mathbb{Z}$. The relation \eqref{comparingcomplexities1} expresses the equipartition to the contribution of observable functions due to each sub-quantum degree of freedom. It is consistent with the concentration inequality relation \eqref{concentration2} if also it holds that
\begin{align}
\frac{\rho}{\rho_P}\sim\, N,
\label{scales}
\end{align}
where $\rho$ is a distance function on the space $T^*T{\bf M}_4$. It is natural that the quotient of these two quantities $\frac{\rho}{\rho_P}$ is of order $N$, indicating an equipartition on the contribution to the distance $\rho$ among the number of degrees of freedom.

 If the quantum system corresponds to a Hamilton-Randers system of $N$ sub-quantum molecules, then the configuration space $T^*TM$ is locally homeomorphic to $\mathbb{R}^{16 N}.$  $\sigma_f$ is the minimal theoretical resolution for an observable associated with the function $f:T^*TM\to \mathbb{R}$. $\sigma_f$ is  associated with the variation of $f$ induced by a minimal variation of the configuration state of one sub-quantum particle. Therefore, for a quantum state associated with a Hamilton-Randers system described by $N$ sub-quantum atoms, the variation on $f_t$ that we should consider is of order $16 N\sigma_f$, which is the typical measurable minimal value for a quantum system under an additive rule. By the relation \eqref{comparingcomplexities1}, if the concentration of measure relation \eqref{concentration2} is applied to a $1$-Lipschitz function ${f}$ respect to all its arguments  in the $1$-Lipschitz dominated regime of $U_t$, then we have the relation
  \begin{align}
  \mu_P\left(|f_t-M_{f}(n)|>\,16 \,\sigma_f\,N\right)\leq \, \frac{1}{2} \exp\left(-\,\frac{N^2}{2} \right).
  \label{concentration3}
  \end{align}
  Note that although the quotient $|f_t-M_{f}(n)|/\sigma_f$ can be large, the value $8N$ can also be large, for $N$ large compared to $1$.

   The relation
  \begin{align*}
  |f_t-M_{f}(n)|\sim \,16 \,N\sigma_f
  \end{align*}
   provides the limit where the dynamics do not present  concentration of measure, in the sense that the concentration of measure will be non-effective. This threshold indicates the violation of the $1$-Lipschitz property of the function $f$ and with the additivity of $f$ respect to the values of  individual sub-quantum degrees of freedom.

The direct interpretation of the concentration of measure phenomenon described above is the following.
For functions associated with measurements of the properties of quantum systems and since $N\gg 1$ experiences concentration of measure around the median $M_{f}$ with probability almost equal to $1$. Thus if a measurement of an observable associated with $f$ is performed in the $\tau$-time associated with the cycle labeled by $n\in \,\mathbb{R}$, then the value $M_{f}(n)$ is to be found with probability very close to $1$ and in practice, equal to $1$.
\begin{comentario}
We have to mention the following points, consequences of the theory.
\begin{itemize}
\item It is relevant to mention that there is the possibility of non-Lipschitz evolution from cycle to cycle, since the evolution is only $1$-Lipschitz on each interval $I_n$. This explains the {\it quantum jumps}.

\item It is remarkable the intrinsic link in the theory above described between concentration and the emergence of intrinsic $\tau$-time.

\item All the possible properties associated with the above type of functions attain a well-defined value in the concentration regime. These functions, under standard operations of sum, convex products by scalars and multiplication is close, but do not form an algebra since the multiplication by scalars is not close.

\end{itemize}
These points of the theory will be developed below.

\end{comentario}

\subsection{Notions of classical and quantum interactions}
In Hamilton-Randers dynamics, the fundamental degrees of freedom interact during the $U_t$ dynamics. That there must be interactions between them is clearly manifest because the assumed structure of the cycles. Thus even in the case of an effective free quantum evolution, there are interactions among the sub-quantum degrees of freedom. Besides this notion of {\it interaction at the fundamental level}, there is a natural notion of non-interacting quantum system in terms of the dynamics of the sub-quantum degrees of freedom: we say that a quantum system is not interacting with the environment if the interaction between the sub-quantum degrees of freedom of the system and the environment can be disregarded without changing the quantum and dynamical properties of the quantum system. In particular, the measure $\mu_P$ is invariant under the $U_t$ evolution.

In most of the cases, an  {\it interacting system} is such that asymptotically for large number of cycles towards the past ($n\to -\infty$) and towards the future ($n\to +\infty$), the system is described by  non-interacting systems. This is the typical situation described in {\it scattering theory}. Therefore,  asymptotically large numbers of cycles towards the past, the system is of the form $A\sqcup B$ and similarly, for a large number of cycles towards the future, the form $A,\, B,\, A',\,B'\, C'...$.
 Thus, a scattering process is of the form
\begin{align*}
A\sqcup B\rightarrow A'\sqcup B'\sqcup C'\sqcup ...
\end{align*}
In this context, there is the following classification. Depending on the domain of the fundamental cycles where an interaction is dominant, a distinction between {classical interaction} and {quantum interactions} is drawn as follows,
\begin{definicion}
A {\it classical interaction on a system} is an interaction which is dominant only on the metastable domain ${\bf D}_0$ containing the metastable points $\{t=\,(2n+1)T,\,n\in \,\mathbb{Z}\}$ of each fundamental cycle; A {\it quantum interaction} is an interaction potentially dominant at least in the interior of the fundamental cycles $\left\{\cup_{n\in \mathbb{Z}}\, [(2n-1)T,(2n+1)T]\right\}\setminus {\bf D}_0$.
\label{quantumclassicalinteraction}
\end{definicion}
Since quantum interactions are not only restricted to the domain ${\bf D}_0$ of each fundamental cycle,
 after the formal projection $(t,\tau)\mapsto \tau$, under the  $U_\tau$ evolution  quantum interactions  are characterized by transition probability amplitudes associated with quantum transitions  between quantum states that have a non-local character from the point of view of spacetime. This is because the ergodic character of the $U_t$ dynamics in the interior of each cycle. Such ergodicity implies the  interactions between the sub-quantum degrees of freedom that correspond to different systems. Examples of quantum interactions are gauge theories, whose exact quantum mechanical description is given by holonomy variables. The Aharonov-Bohm effect is an example of non-local quantum effect which is usually interpreted in terms of non-local holonomy variables \cite{ChanTsou}.

In the above context of non-locality in the variables describing quantum gauge theories, is it there a reasonable notion of {\it local gauge theory}? In the case of evolution of the holonomy variables, however, the dynamics can be restricted by {\it locality conditions}. Let us consider the Hamiltonian formulation using loop variables of the gauge theory. Then the physical states of the field interaction must be such that the Hamiltonian density operator $\widehat{\mathcal{H}}(x)$ commute for spacetime points $x,x'\in \,M_4$ which are not causally connected.

In contrast, a classical interaction as defined above can be described directly in terms of local variables,  since a classical interaction is dominant only on the domain ${\bf D}_0$ where all the observable properties of the system defined by $1$-Lipschitz functions $f:T^*TM\to \mathbb{R}$ are well defined locally in spacetime ${\bf M}_4$. This constraint is consistent with the causal structure of the sub-quantum degrees of freedom.
 The relevant example of a classical interaction of this type is gravity, as we will discuss  in  {\it Chapter} \ref{Chapter on emergence of gravity}.

 \subsection{Emergence of the classical domain} We have identified the concentrating regime of the domain ${\bf D}_0$ with the {\it observable domain}, that is, the domain of the fundamental dynamics where the observable properties of the system, capable to be observed by the macroscopic observer $W\in\,\Gamma T{\bf M}_4$, are well defined. In the domain of such identification, we observe that:
\begin{enumerate}
\item The dynamics is dominated  by a strong $1$-Lipschitz operator $U_t$ and

\item Each observable associated with a $1$-Lipschitz function of the sub-quantum degrees of freedom have well defined values at each instant $\tau_n\in\,\mathbb{Z}$, associated with the $n$-cycle of the $U_t$ flow.
\end{enumerate}
The emergence of the classical domain is the identification of the classical framework of description of the physical systems, where the geometric arena is a $4$-dimensional manifold and the properties of the system are described by the algebra of real (or if convenient, with co-domain $\mathbb{K}$ different from the reals) functions, with the concentration domain. This notion is different than the notion of classical limit in quantum mechanics. Indeed, the notion of classical domain applies to both, small systems in the sense that a good description is only achieved by means of quantum mechanics, and to large systems, that are well-described by classical models.

 Under the assumptions that are discussed in the next chapter, the average position and speed coordinates associated with the system of $\{1,...,N\}$ of sub-quantum molecules of an arbitrary Hamilton-Randers model are $1$-Lipschitz functions. If we identify these functions with the observables {\it position} and {\it speed} of the system, these observables will have well defined values when they are measured.

 \subsection{Algebras associated with observables in Hamilton-Randers theory}
The discussion until now suggests the consideration of three diferent algebras of functions. First, it is the algebra of function
\begin{align*}
 \mathcal{F}_{\mathbb{C}}(T^*TM):=\{f:T^*TM\to \mathbb{C}\}.
  \end{align*}
  Alternative, one can consider real functions $\mathcal{F}_{\mathbb{R}}(T^*TM):=\{f:T^*TM\to \mathbb{R}\}$, although it can appear as too restrictive for algebraic considerations.

A sub-set, than not a sub-algebra of functions to consider are the $1$-Lipschitz functions or maps,
\begin{align*}
\mathcal{L}_{\mathbb{C}}(T^*TM):=\{f\in\, \mathcal{F}_{\mathbb{C}}(T^*TM)\,\,|\, \, f\,\textrm{ is locally $1$-Lipschitz on each ${\bf D}_0(n), n\in\mathbb{Z}$}\}.
\end{align*}
 It is clear that
\begin{align*}
\mathcal{L}(T^*TM)\subset \,\mathcal{F}_{\mathbb{C}}(T^*TM).
\end{align*}
The elements of $\mathcal{L}(T^*TM)$ have well-defined values on each of the  domains ${\bf D}_0(n)$, corresponding to the domains where the functions are Lipshcitz. But these functions are not the only one with well-defined values on each ${\bf D}_0(n)$. If we consider a functional $\mathcal{R}:\mathcal{L}_{\mathbb{C}}(T^*TM)\to  \mathcal{F}_{\mathbb{C}}(T^*TM)$, each $\mathcal{R}_{\mathbb{C}}(f)$ has well defined values on each local domain ${\bf D}_0(n)$.

Considering all the functions constructed from functionals in this way, we have
\begin{align*}
\mathcal{U}(T^*TM):=\cup_{\mathcal{L}_{\mathbb{C}}} \mathcal{R}(\mathcal{L}_{\mathbb{C}}(T^*TM)) .
\end{align*}

The second algebra of functions to be considered corresponds to the algebra of macroscopic functions, defined over the spacetime
\begin{align*}
 \mathcal{F}_{\mathbb{C}}(T^*T\mathcal{M}_4):=\{f:T^*T\mathcal{M}_4\to \mathbb{C}\}.
 \end{align*}
 This algebra corresponds to the attributes of macroscopic properties, that is, assigned pointwise according to the spacetime arena, to the physical system. They are potentially observable by a macroscopic observer.

The hypothesis on the emergent origin of quantum  mechanics is partially implemented as the condition
 \begin{align}
 \mathcal{U}(T^*TM)\, \subset\,  \mathcal{F}_{\mathbb{C}}(T^*T\mathcal{M}_4) .
 \label{hypothesis on the emergence of observables}
 \end{align}

 The third relevant algebra to consider is the algebra of quantum observables. Let us consider the operator
$\widehat{\mathfrak{O}}:\mathcal{H}_{Fun}\to \mathcal{H}_{Fun}$.
Its associated average operator
\begin{align*}
\langle\widehat{\mathfrak{O}}\rangle:\mathcal{H}_{Fun}/\sim_{\langle,\rangle} \to \mathcal{H}_{Fun}/\sim_{\langle,\rangle}
\end{align*}
 is defined as in {\it section} \ref{Heisenberg dynamics} by the commutativity of the diagram \eqref{average of an operator raw space}.
This construction defines a linear operator $\langle\widehat{\mathfrak{O}}\rangle \,:\mathcal{H} \to \mathcal{H}$,
which is associated with a quantum operator acting on the pre-Hilbert space $\mathcal{H}$ of quantum states associated with the system.

\subsection{Natural spontaneous quantum state reduction and Born rule}
The concentration of measure that takes place during the concentrating regime of the $U_t$ flow provides a natural mechanism for the reduction of the quantum state. However, such reduction processes not only happen when the quantum system is being {\it measured} by a measurement device. On the contrary, they are {\it spontaneous processes} that happen right after the ergodic regime in each fundamental cycle of the $U_t$-evolution. This is the reason of the name {\it natural reduction processes}, in contrast with induced reduction of the quantum state by an interaction with a quantum measurement system \cite{Diosi,GhirardiGrassiRimini, Penrose 1996, Penrose 2005}. The natural spontaneous reduction process does not necessarily change the quantum state of the system, neither it is necessarily associated with quantum measurement processes. In contrast, in a measurement process the measurement device can change the original quantum state, since there is an interaction between the system being measured and the apparatus measurement. This is described, for instance by von Neumann models of interaction between the quantum system with the measurement device.

The interaction responsible for the natural spontaneous collapse of the state is  classical, since it is dominant only in contractive domain containing the metastable equilibrium domain ${\bf D}_0$, when all the observable properties of the system are localized by the effect to the same interaction driving $U_t$. As a consequence of this phenomenon, the properties of the system appear as macroscopically well defined when such properties are measured by means of a macroscopic measurement device.

The existence of the natural spontaneous collapse implies a fully fledge form of the Born rule. In {\it section} \ref{Interpretation of wave function} we showed that the probability of finding in $t$-time the system described by the quantum state $|\psi(x)\rangle$ in an infinitesimal neighborhood of the point $x$ of the spacetime $\mathcal{M}_4$ is of the form $|\psi (x) |^2\,d^4x$. Since the measurement regime coincides with the classical regime, then the Born rule holds good:
\begin{proposicion}
For a system described quantically by the state $|\psi(x)\rangle\in \mathcal{H}$, the probability to find the particle in an infinitesimal neighborhood of $x\in \,{\bf M}_4$ is $|\psi (x) |^2\,d^4x $.
\end{proposicion}

Several remarks are in order. First, the standard formulation of the Born rule is presented in a three dimensional space (or the lower dimensional position configuration space) \cite{Dirac1958, Bohm}. In contrast, our formulation is developed in a four-dimensional spacetime arena. In order to recover the standard formulation it is enough to consider a fixed time measure, by considering the density $d^3\vec{x} := \,d^4 x\delta (x^0-x'^0)$. Thus the probability to find the system at the position $\vec{x}\in\,\mathbb{R}^3$ when the measurement is performed at the instance $\tau=\,x'^0$ is $|\psi (x) |^2\,d^3\vec{x}$. Second, the formulation above of the Born rules is given in terms of position eigenvectors. Taking into account the assumption that other quantum states can in many cases be re-written in terms of wave function basis by means of unitary transformations, the Born rule is induced on other quantum basis.

\subsection{Quantum fluctuation and prepared states}
A natural notion of {\it quantum fluctuation} arises in connection with the emergence character of observable quantities. This is because the ergodic property of the $U_t$ flow implies that the sequence in $\tau$-time of the values achieved by generic physical observables can be discontinuous. Given an observable $f:\mathcal{H}\to \mathcal{H}$, the values that we could assign after measuring on a time series is, according to Hamilton-Randers theory, given by an ordered time sequence $\{f(\tau_n),\, n\in \mathbb{Z}\}$. Then the uncertainty in the value of the observable can be measured by the differences
\begin{align*}
\delta f:=\,\min\{f(\tau_n)-f(\tau_{n+1}),\, n\in \mathbb{Z}\}.
\end{align*}
In general, for measurements on  generic quantum states, $\delta f \sim f$, indicating the existence of quantum fluctuation.

There is one exception to this rule, namely,  the case of {\it prepared states}. In a prepared state, a given physical observable is well defined during the whole $U_\tau$ evolution: repeated consecutive measurements of the observable will give the same value. How is this compatible with the emergent origin of quantum fluctuations? A characterization of prepared state is need in the language of Hamilton-Randers dynamics.
\begin{definicion}
A prepared is state is an element of $\mathcal{H}$ such that the mean $M_f (n)$ for a given observable is constant during the $U_t$ flow, that is, constant on $n\in\,\mathbb{Z}$.
 \end{definicion}
Therefore, $M_f(n)$ is constant and the same for all fundamental cycles. For example, to prepare an state of a given energy, the system is manipulated such that $M_E$ is fixed, where $E$ is an eigenstate of the matter Hamiltonian operator. If one selects another observable, for instance a spin component, the state will in general change. In the situation when the prepared state has constant means $M_E$ and $M_S$, the state is prepared with defined energy and spin. Thus the characteristics of prepared states are associated  to compatible conserved means $M_f (n)$ during the $U_t$ flow, that according to the discussion of the preceding chapter, coincide with conserved quantities of quantum mechanics.

\subsection{Universal limitations in the description of physical system}
If the initial conditions of the degrees of freedom of a Hamilton-Randers system are fixed, then the evolution of the median $M_f(n)$ is fixed as well. However, it is technically impossible for a macroscopic observer to determine the initial conditions of the sub-quantum degrees of freedom. The picture is similar to the description of an ideal gasin classical statistical mechanics.  Furthermore, although the Hamilton-Randers dynamics is deterministic, if we adopt Assumption A.9. of {\it Chapter} \ref{chapter on Assumptions and General Theory}, the system will necessarily be chaotic and hence unstable and sensitive to initial conditions. Indeed, for the dynamical systems that we are considering and if $\beta_k$ functions are not linear functions, enabling a complex dynamics.

 These circumstances impose a fundamental limitation in the knowledge and control that we can obtain on the dynamics of particular system. Such limitation is universal.

 The natural way to describe the long term dynamics for Hamilton-Randers systems is by using probabilistic methods, where the probability distribution function is an effective, emergent description of the fundamental $U_t$ flow of the system during the ergodic regime. In particular, the probability distribution function is associated with the density of world lines of sub-quantum molecules as discussed in  chapter \ref{chapter of the Hilbert space structure}.

\newpage

\section{\LARGE{Emergence of the gravitational interaction}}\label{Chapter on emergence of gravity}
\bigskip
\bigskip
   This {chapter} investigates further consequences of the application of concentration of measure phenomena to Hamilton-Randers models and other related issues. First, we provide a mechanism to bound from below the spectrum for a proposed {\it matter Hamiltonian}. This is necessary if we aim to relate the theory with standard quantum mechanics. The theory developed here provides a formal definition for the Hamiltonian for matter. Second, we use concentration of measure to show that gravity can be identified with a $1$-Lipschitz, classical interaction. By classical we mean that it is not quantum, neither it is possible to speak of superposition of classical spacetimes. This identification is based upon a formal analogy since the identified interaction is relativistic, it must have diffeomorphism invariance, inherited  from the underlying diffeomorphism invariance of Hamilton-Randers dynamical systems. A generalized form of the weak equivalence follows from first principles, completing the formal analogy with the properties of classical gravity. The regime where such interaction takes place corresponds to the regime where all the possible observables have a well defined value, that is, it corresponds to a classical dynamical regime. Hence gravity must be a classical interaction. Therefore, classical gravity appears as an essential ingredient in the mechanism to bound from below the Hamiltonian for matter and a dual ingredient of the natural spontaneous collapse mechanism. In addition, the $1$-Lipschitz interaction must be compatible with the existence of an universal maximal acceleration. This last condition implies a necessary deviation from current standard theories of gravitation as general relativity.

   \subsection{Hamiltonian decompoition in a $1$Lipschitz part and a non-Lipschitz parts}
   Let us consider the decomposition of the quantized Hamiltonian operator $\widehat{H}_t$ of a  Randers-Hamiltonian system \eqref{RandersHamiltonian} in a $1$-Lipschitz component $\widehat{H}_{Lipschitz,t}$ and a non-Lipschitz component $\widehat{H}_{matter,t}$,
   \begin{align}
   \widehat{H}_t(\hat{u},\hat{p})=\,\widehat{H}_{matter,t}(\hat{u},\hat{p})\,
   +\widehat{H}_{Lipschitz,t}(\hat{u},\hat{p}).
   \label{decompositionoftheHamiltonianHt}
   \end{align}
The {\it matter Hamiltonian} is defined in this expression by the piece of the Hamiltonian operator which is not $1$-Lipschitz. This is consistent  with the idea that matter (including gauge interactions) is {\it quantum matter}, in the sense that it is appropriately described by quantum models where the wave function is an effective description of the processes happening during the ergodic regime of the $U_t$ interaction and their interactions by quantum interactions of the form specified in {\it Definition} \ref{quantumclassicalinteraction}.

In general the decomposition \eqref{decompositionoftheHamiltonianHt} is not unique and it is not evident even its existence. However, if additional assumptions on the regularity of the Hamiltonian \eqref{RandersHamiltonian} are adopted, then it is possible to establish  for such type of decomposition.
Let us consider the classical dynamical version of Hamilton-Randers theory as described in chapter \ref{chapter on classical dynamics Hamilton Randers}. In particular, we have that
 \begin{lema}
    Let $H_t:T^*TM\to \mathbb{R}$ be a $\mathcal{C}^2$-smooth Randers Hamiltonian function \eqref{RandersHamiltonian}. Then there exists a compact domain $K'\subset\,T^*TM$ such that the restriction $H|_{K'}$ is $1$-Lipschitz continuous.
    \label{lemaon1lipschitz}
   \end{lema}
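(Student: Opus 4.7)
The plan is to produce the compact set $K$ by localizing around the zero section of the cotangent bundle, exploiting the fact that the classical Hamilton-Randers Hamiltonian \eqref{randershamiltoniant} vanishes on $\{p=0\}$ and has a controlled gradient there thanks to the Randers-type boundedness \eqref{boundenesscondition}. Throughout I interpret $H|_K$ as the restriction of $H_t$ to a compact subset of $T^*TM$ whose base projection by $\pi:T^*TM\to TM$ is the $K\subset TM$ mentioned in the statement.

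First I would compute the partial derivatives of $H_t(u,p)=(1-\kappa(u,p,t,\tilde\tau))\sum_{k=1}^{8N}\beta^k(u)\,p_k$ at a point $(u_0,0)$ of the zero section. A direct computation (already spelled out just before the statement in the paper) shows
\begin{align*}
\left.\frac{\partial H_t}{\partial u^i}\right|_{(u_0,0)}=0,\qquad \left.\frac{\partial H_t}{\partial p_i}\right|_{(u_0,0)}=(1-\kappa(u_0,0,t,\tilde\tau))\,\beta^i(u_0),
\end{align*}
so the norm of the full gradient at $(u_0,0)$, taken with respect to the Sasaki-type Riemannian structure induced by $\eta$, is bounded above by $(1-\kappa(u_0,0,t,\tilde\tau))\,\|\beta(u_0)\|_{\eta^*}$. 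Because $\kappa\in[0,1]$ and the Hamilton-Randers condition \eqref{boundenesscondition} gives $\|\beta\|_{\eta^*}<1$, this gradient norm is strictly smaller than one at every point of the zero section.

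Second, I would exploit the $\mathcal{C}^2$ assumption. Since $H_t$ is $\mathcal{C}^2$, the gradient $\nabla H_t$ is continuous on $T^*TM$. By continuity there exists an open neighborhood $\mathcal{V}\subset T^*TM$ of $(u_0,0)$ on which $\|\nabla H_t(u,p)\|<1$ uniformly. Choose $\tilde K\subset \mathcal{V}$ to be any compact, geodesically convex neighborhood of $(u_0,0)$ with respect to the Sasaki-type metric — concretely the product of a small closed $\eta$-ball in $TM$ around $u_0$ with a small closed dual ball in the cotangent fibers. The mean-value inequality for $\mathcal{C}^1$ functions on a geodesically convex compact set then gives $|H_t(q_1)-H_t(q_2)|\leq d(q_1,q_2)$ for all $q_1,q_2\in\tilde K$, i.e.\ $H_t|_{\tilde K}$ is $1$-Lipschitz. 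Setting $K:=\pi(\tilde K)\subset TM$, which is compact as the continuous image of a compact set, finishes the construction.

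The main technical obstacle I expect is the ambient-norm issue: to convert the pointwise bound $\|\nabla H_t\|<1$ into a genuine $1$-Lipschitz statement one needs the norm on $T^*TM$ used to measure the gradient to be dual to the distance function on $\tilde K$, and one needs local geodesic convexity so that the mean-value inequality applies along minimizing geodesics remaining inside $\tilde K$. Both points are resolved by taking the Sasaki-type Riemannian lift of $\eta$ (canonically available on $T^*TM$ via the Levi-Civita connection of $\eta_4(k)$ introduced around \eqref{structureeta}) and by shrinking $\tilde K$ to within the injectivity radius at $(u_0,0)$. A secondary, more conceptual point is that the statement explicitly says $K\subset TM$ whereas $H_t$ lives on $T^*TM$; the argument above makes precise the natural reading that $K$ is just the base projection of the actual compact set on which the Lipschitz property is verified.
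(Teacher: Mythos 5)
Your route is genuinely different from the paper's. The paper Taylor-expands $H_t$ to second order about an arbitrary point, uses $\mathcal{C}^2$-continuity to bound the second-order coefficients on a compact set, shrinks the set until the quadratic terms contribute at most $\tfrac{1}{2}$ per coordinate difference, and thereby obtains an $M$-Lipschitz bound with $M=\max\{\tfrac{1}{2},\tilde{C}_U\}$; it then \emph{renormalizes} the Hamiltonian by $1/M$ to reach the constant $1$ (the accompanying Remark even concedes that the Randers condition is not needed for this part). You instead localize at the zero section, where $H_t$ vanishes and $\partial H_t/\partial u^i\big|_{p=0}=0$, and try to reach the constant $1$ directly from a pointwise gradient bound plus the mean-value inequality on a geodesically convex compact neighbourhood. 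Your version is cleaner in that it avoids the renormalization and pinpoints exactly where the Randers-type bound enters; the paper's version is more robust in that it works near an arbitrary point and needs no smallness of $\beta$, only boundedness.

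The one step of yours that does not close is the claim that the gradient norm on the zero section is strictly below $1$ because \eqref{boundenesscondition} gives $\eta^*(\beta,\beta)<1$. The structure $\eta$ of \eqref{structureeta} is assembled from Sasaki-type lifts of Lorentzian metrics, so $\eta^*$ is indefinite: $\eta^*(\beta,\beta)<1$ is compatible with $\beta$ having arbitrarily large components (for instance $\beta$ nearly null), and it does not control any positive-definite norm of $\beta$ --- which is what the mean-value inequality requires, since the distance on $\tilde{K}$ dual to your gradient norm must be an honest Riemannian or coordinate distance. Without such control your argument only yields a $C$-Lipschitz bound with $C=\sup_{\tilde{K}}(1-\kappa)\,\|\beta\|$ in the chosen positive-definite norm, and you would then need the same renormalization step the paper uses (or an added hypothesis that $\beta$ is bounded by $1$ in that norm) to land on the constant $1$. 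With that single repair your proof establishes the lemma by a legitimately different and arguably more transparent argument.
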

   The domain $K'$ is contained in the meta-stable domain ${\bf D}_0$. We will not make conceptual difference between $K'$ and ${\bf D}_0$.
   \begin{proof}
   By Taylor's expansion at the point $(\xi,\chi)\in T^*TM$ up to second order  one obtains the expressions
   \begin{align*}
   H_t(u,p) & =\,H_{t_0}+\,\sum^{8N}_{k=1}\,\frac{\partial H_t}{\partial u^{k}}|_{({\xi},\chi)}(u^k-{\xi}^k)+\,\sum^{8N}_{k=1}\,\frac{\partial H_t}{\partial p_{k}}|_{({\xi},\chi)}(p_k-\chi_k)\\
   & +\,\sum^{8N}_{k=1}\,R_{k}\,(u^k-{\xi}^k)^2+\,\sum^{8N}_{k=1}\,Q_{k}\,(p_k-\chi_k)^2,
   \end{align*}
   where the term
   \begin{align*}
   \sum^{8N}_{k=1}\,R_{k}\,(u^k-{\xi}^k)^2+\,\sum^{8N}_{k=1}\,Q_{k}\,(p_k-\chi_k)^2
   \end{align*}
   is the remaind term of the second order Taylor's expansion.
   The difference for the values of the Hamiltonian $H_t$ at two different points is given by the expressions
   \begin{align*}
   &|H_t(u(1),p(1))-\,H_t(u(2),p(2))|=\Big|\sum^{8N}_{k=1}\,\frac{\partial H_t}{\partial u^{k}}|_{({\xi},\chi)}(u^k(1)-{\xi}^k)+\,\sum^{8N}_{k=1}\,\beta^k({\chi})\,(p_k(1)-\chi_k)\\
   & +\sum^{8N}_{k=1}\,R_{k}(1)\,(u^k(1)-{\xi}^k)^2+\,\sum^{8N}_{k=1}\,Q^{k}(1)\,(p_k(1)-{\xi}_k)^2\\
   & -\sum^{8N}_{k=1}\,\frac{\partial H_t}{\partial u^{k}}|_{({\xi},\chi)}(u^k(2)-{\xi}^k)-\,\sum^{8N}_{k=1}\,\beta^k({\chi})\,(p_k(2)-\chi_k)\\
   & -\sum^{8N}_{k=1}\,R_{k}(2)\,(u^k(2)-{\xi}^k)^2-\,\sum^{8N}_{k=1}\,Q^{k}(2)\,(p_k(2)-\chi_k)^2\Big|\\
   & \leq \big|\sum^{8N}_{k=1}\,\frac{\partial H_t}{\partial u^{k}}|_{({\xi},\chi)}(u^k(1)-u^k(2))\big| +\,\big|\sum^{8N}_{k=1}\,\beta^k(\chi)(p_k(1)-p_k(2))\big|\\
   & +\big|\sum^{8N}_{k=1}\,R_{k}(1)\,(u^k(1)-{\xi}^k)^2-\,R_{k}(2)\,(u^k(2)-{\xi}^k)^2\big|\\
   & +\big|\sum^{8N}_{k=1}\,Q^{k}(1)\,(p_k(1)-\chi_k)^2-\,Q^{k}(2)\,(p_k(2)-\chi_k)^2\big|.
   \end{align*}
  Due to the continuity of the second derivatives of $H_t$, for each compact set $K\subset \,T^*TM$ containing the points $1$ and $2$, there are two constants  $C_R(K)>0$ and $C_Q(K)>0$ such that $|R_{k}(1)|,|R_{k}(2)|<C_R(K)$ and $|Q_{k}(1)|,|Q_{k}(2)|<C_Q(K)$, for each $k=1,...,8N$. Moreover, as a consequence of Taylor's theorem it holds that
    \begin{align*}
    \lim_{1\to 2}\,C_Q(K)=0,\quad\lim_{1\to 2}\,C_R(K)=0,
    \end{align*}
Since $K$ is compact the last two lines in the difference $|H(u(1),p(1))-\,H(u(2),p(2))|$ can be rewritten as
\begin{align*}
&\big|\sum^{8N}_{k=1}\,R_{k}(1)\,(u^k(1)-{\xi_t}^k)^2-\,R_{k}(2)\,(u^k(2)-{\xi_t}^k)^2\big|
\,\leq C_R(K)\big|\sum^{8N}_{k=1}(u^k(1)-u^k(2))^2\big|\\
   & \big|\sum^{8N}_{k=1}\,Q^{k}(1)\,(p_k(1)-\chi_k)^2-\,Q^{k}(2)\,(p_k(2)-\chi_k)^2\big|\,\leq
  C_Q(K) \big|\sum^{8N}_{k=1}\,(p_k(1)-p_k(2))^2\big|.
\end{align*}
The constants $C_Q(K)$ and $C_R(K)$ can be taken finite on $K$.
    Furthermore,  by further restricting the domain where the points $1$ and $2$ are to be included in a smaller compact set $\tilde{K}$, one can write the following relations,
    \begin{align}
   C_R(\tilde{K})|(u^k(1)-u^k(2))|\leq 1/2,\,\quad C_Q(\tilde{K})|(p_k(1)-p_k(2))|\leq 1/2.
\label{boundoftheconstant}
    \end{align}
Let us consider the  further restriction on the compact set ${K}'\subset \tilde{K}\,\subset\, T^*TM$ such that for each $({\xi},\chi)\in\,{K}'$
\begin{align}
\big|\frac{\partial H_t}{\partial u^{k}}|_{({\xi},\chi)}\big|\leq C_U,\,k=1,....,4N
\label{boundinacceleration}
\end{align}
holds good for some constant $C_U$. Also, on ${K}'$ it must hold that
\begin{align*}
& C_R(K)\big|\sum^{8N}_{k=1}(u^k(1)-u^k(2))^2\big|+\,  C_Q(K) \big|\sum^{8N}_{k=1}\,(p_k(1)-p_k(2))^2\big|\\
& \leq \,1/2\,\sum^{8N}_{k=1}\big|(u^k(1)-u^k(2))\big|+ 1/2\,\sum^{8N}_{k=1}\,\big|(p_k(1)-p_k(2))\big|.
\end{align*}
Moreover, the factors $|\beta^i|$ are bounded as a consequence of Randers condition \eqref{randerscondition}.
Then we have that
\begin{align*}
&|H(u(1),p(1))-\,H(u(2),p(2))|\big|_{{K}'}\leq  \,\tilde{C}_U\,\Big(\sum^{8N}_{k=1}\big|(u^k(1)-u^k(2))\big|\\
&+\,\sum^{8N}_{k=1}\,\,\big|\,(p_k(1)-p_k(2))\big|\Big) + 1/2\,\sum^{8N}_{k=1}\big|(u^k(1)-u^k(2))\big|+ 1/2\,\sum^{8N}_{k=1}\,\big|(p_k(1)-p_k(2))\big|
\end{align*}
with $\tilde{C}_U=\,\max \{C_U,1\}$.
This proves that $H|_{K'}$ is a Lipschitz function, with Lipschitz constant $M=\max\{\frac{1}{2},\tilde{C}_U\}$, which is necessarily finite. Now we can redefine the Hamiltonian dividing by $M$, which is a constant larger than $1$. This operation is equivalent to re-define the vector field $\beta\in \,\Gamma TTM$ in the domain ${\bf D}_0$. Such operation  does not change the equations of motion and the Randers condition \eqref{randerscondition}. Then we obtain a $1$-Lipschitz Hamiltonian on $K'$, restriction of the original Hamiltonian $H_t$.
    \end{proof}
    The proof of  {\it Lemma} \ref{lemaon1lipschitz} can be simplified further, making $Q_k=0$ from the beginning, but in this form is a kind of symmetric proof.
    \begin{comentario}
    Note that since it is assumed that the Hamiltonian $H_t$ is $\mathcal{C}^2$-smooth, the Randers condition \eqref{randerscondition} is not strictly necessary for the proof of  {\it Lemma} \ref{lemaon1lipschitz}. However, the Randers condition is useful to have well-defined causal structures associated with the underlying non-reversible Randers metric structure compatible with a macroscopic Lorentzian or causal structure.

    \end{comentario}
The compact domain $K'$ is not empty. In the metaestable domain ${\bf D}_0$, the Hamiltonian $H_t$ is equivalent to zero. Therefore, it is reasonable to think that in such domain $H_t$ is Lipschitz, thus providing an example where $K'$ can be contained (${\bf D}_0$ is not necessarily compact).

Extensions from $K'$ to the whole phase space $T^*TM$ can be constructed as follows. Consider the Sasaki metric on $T^*TM$ of the Hamilton-Randers structure \eqref{DefinicionHR}. For every observer $W$ one can associate by canonical methods a Finsler metric on $T^*TM$ and then an asymmetric distance function
   \begin{align*}
   \varrho_S:T^*TM\times T^*TM\to \mathbb{R}.
   \end{align*}
   Let us consider the projection on $K'$
   \begin{align}
   \pi_{K'}: T^*TM\to K',\,(u,p)\mapsto (\bar{u},\bar{p}),
   \end{align}
  where $(\bar{u},\bar{p})$ is defined by the condition that the distance from $(u,p)$ to $K'$ is achieved at $(\bar{u},\bar{p})$ in the boundary $\partial K'$. Then one defines the {\it radial decomposition} of $H_t$ by the expression
   \begin{align}
   H_t(u,p)=\,R\big(\varrho_S((u,p),(\bar{u},\bar{p}))\big)\,H_t(\bar{u},\bar{p})+\,\delta H_t(u,p).
   \label{decompositionofH}
   \end{align}
   The positive function $R\big(\varrho_S((u,p),(\bar{u},\bar{p}))\big)$ is such that the first piece of the Hamiltonian is $1$-Lipschitz. The second contribution is not $1$-Lipschitz. By assumption, $\delta H_t(u,p)$ is identified with the matter Hamiltonian $H_{matter}$,
    \begin{align}
    H_{matter, t}(u,p):=\,\delta H_t(u,p).
    \label{matterhamiltonian}
    \end{align}
With these redefinitions we obtain the following
   \begin{proposicion}
    Every Hamiltonian  \eqref{RandersHamiltonian} admits a normalization such that the decomposition \eqref{decompositionoftheHamiltonianHt} holds globally on $T^*TM$.
   \label{radialdecomposition}
   \end{proposicion}
   \begin{proof}
   One can perform the normalization
   \begin{align*}
   H_t(u,p)\to \,&\frac{1}{R\big(\varrho_S((u,p),(\bar{u},\bar{p}))\big)}\,H_t(u,p)\\
& = H_t(\bar{u}(u),\bar{p}(u))+\, \frac{1}{R\big(\varrho_S((u,p),(\bar{u}(u),\bar{p}(u)))\big)} \delta H_t(u,p).
 \end{align*}
 The first term is $1$-Lipschitz in $T^*TM$, since $H_t(\bar{u}(u),\bar{p}(u))$ is $1$-Lipschitz on $K'$, while the second term is not $1$-Lipschitz continuous.
   \end{proof}
   The uniqueness of this construction depends upon the uniqueness of the compact set $K'$, the uniqueness of the relation $(u,p)\mapsto (\bar{u},\bar{p})$. Thus in general the construction is not unique, except if a further criterium is added. For instance, one can consider the maximal set $K'$ obtained in that way.

  The above considerations concern the classical formulation of Hamilton-Randers systems. If we consider the quantized version of the condition  \eqref{decompositionoftheHamiltonianHt}, then the Hamiltonian constraint \eqref{averagehamiltonianevolution} is casted in the following way.
   From the properties of the $U_t$ flow it follows that
   \begin{align*}
   \lim_{t\to (2n+1)T} \big(\widehat{H}_{matter,t}\,+\widehat{H}_{Lipschitz,t}\big)|\Psi\rangle=0
   \end{align*}
   for each $|\Psi\rangle\in\,\Gamma\mathcal{H} TM$.
   However, each of the individual terms in this relation can be different from zero in the metastable domain ${\bf D}_0$,
   \begin{align*}
   \lim_{t\to (2n+1)T} \widehat{H}_{matter,t}|\Psi\rangle\neq \,0, \quad
    \lim_{t\to (2n+1)T}\widehat{H}_{Lipschitz,t}|\Psi\rangle\neq\,0.
   \end{align*}
   This implies that in order to have the metastable equilibrium point at the instant $t=(2n+1)T$, in addition with the matter Hamiltonian \eqref{matterhamiltonian}, an additional piece of dynamical variables whose described by the Hamiltonian $\widehat{H}_{Lipschitz,t}$ is needed. On the other hand,
if we assume that the matter Hamiltonian \eqref{matterhamiltonian} must be positive definite when acting on physical states, then the $1$-Lipschitz Hamiltonian should have negative eigenvalues only. Hence for Hamilton-Randers models the positiveness of the matter Hamiltonian is extended to all $t\in [0,(2n+1)T]$.
This implies the consistency of the positiveness of the energy level for the quantum Hamiltonian for matter \eqref{matterhamiltonian} in the whole process of the $U_t$-evolution.
We can reverse this argument. If $\widehat{H}_{Lipschitz,t}$ is negative definite, then $\widehat{H}_{matter,t}$ must be positive definite. This property is related with the analogous property of gravitational interaction.

\subsection{Emergence of the law of inertia and of the weak equivalence principle}
We organize this section in two parts.
\subsubsection{Preliminary considerations} Let us consider a physical system $\mathcal{S}$ that can be thought as composed by two sub-systems $A$ and $B$.
     We denote by  $X^\mu(\mathcal{S}_i),\,i\equiv \mathcal{S},A,B$ the {\it macroscopic observable} coordinates associated with the system $\mathcal{S}_i$, that is, the value of the coordinates that could be associated when local coordinates are assigned by a classical observer to each system $\mathcal{S},A,B$ by means of a measurements or by means of theoretical models.

Let $\xi_k:\mathbb{R}\to TM^k_4$ the world line of the $k$-essime sub-quantum molecule.
Then we adopt the following:
\\
{\bf Assumption I}.  The functions
\begin{align*}
X^\mu(\mathcal{S}_i):T^*TM\to \mathbb{R},\quad (u^{k_1},...,u^{k_N},p^{k_1},...,p^{k_N})\mapsto X^\mu(u^{k_1},...,u^{k_N},p^{k_1},...,p^{k_N})
\end{align*}
are smooth. This is an useful requirement to link the microscopic degrees of freedom with macroscopic degrees of freedom.

Under the additional requirement of the Randers type condition, that implies the existence of an universal bounded acceleration and speed for the sub-quantum molecules,  in the metastable  domain ${\bf D}_0\subset \,T^*TM$ containing the metastable points of the evolution of the system at the instants $\{t=(2n+1)T,\,n\in\,\mathbb{Z}\}$, the functions $X^\mu((2n+1)T)=X^\mu(\tau\equiv n):T^*TM\to \mathbb{R}$  are $1$-Lipschitz in $t$-time parameter. In order to show this, let us first remark that the relations
    \begin{align}
     \lim_{t\to (2n+1)T}\frac{\partial X^\mu(u,p,t)}{\partial t}=\, 0,\quad \mu=1,2,3,4
      \label{derivativeinthettoT1}
    \end{align}
hold good,  since in the metastable  domain ${\bf D}_0$ physical observables depending upon $(u,p)$ coordinates do not have $t$-time dependence. This condition can be re-written as
     \begin{align*}
    \lim_{t\to (2n+1)T} \frac{\partial X^\mu(u,p,t)}{\partial t}=\, \lim_{t\to (2n+1)T} \left(\sum^{8N}_{k=1}\,\frac{\partial X^\mu}{\partial u^\rho_k}\,\frac{d u^\rho_k}{d t}+\,\sum^{8N}_{k=1}\,\frac{\partial X^\mu}{\partial p_{\rho k}}\,\frac{d p_{\rho k}}{d t}\right)=0.
    \end{align*}
    \begin{proposicion}
   If the functions $\left\{u^\mu_k(t),\frac{d u^\rho_k}{d t},\,\mu=1,2,3,4;\, k=1,...,N\right\}$ are $\mathcal{C}^1$-functions, then
   the coordinate functions $\{X^\mu(\tau)\}^3_{\mu=0}$ are $\mathcal{C}^1$-functions with uniformly bounded derivatives in a restricted domain of ${\bf D}_0 \subset \,T^*TM$.
   \label{uniform bound for X}
   \end{proposicion}
   \begin{proof}
Let us also consider the differential expressions
    \begin{align*}
   \frac{d X^\mu(u,p,t)}{d t}=\,\sum^{8N}_{k=1}\,\frac{\partial X^\mu}{\partial u^\rho_k}\,\frac{d u^\rho_k}{d t}+\,\sum^{8N}_{k=1}\,\frac{\partial X^\mu}{\partial p_{\rho k}}\,\frac{d p_{\rho k}}{d t}.
    \end{align*}
 The derivatives $\{\frac{d u^\rho_k}{d t},\,\mu=1,2,3,4;\, k=1,...,N\}$ are uniformly bounded as a consequence of  the Randers condition \eqref{randerscondition}. Since the system of equations for the configuration coordinates $\{u^i\}^{8N}_{k=1}$ \eqref{HamiltonEquations} is autonomous for $u$, the derivatives
 \begin{align*}
 \left\{\frac{d p_{\rho k}}{d t},\,\rho=1,2,3,4;\, k=1,...,N\right\}
  \end{align*}
  are fully determined by
  \begin{align*}
  \left\{u^\mu_k(t),\frac{d u^\rho_k}{d t},\,\rho=1,2,3,4;\, k=1,...,N\right\}.
   \end{align*}
   Therefore, each $p_{\rho k}$ and its time derivative $\frac{d p_{\rho k}}{d t}$ are indeed uniformly bounded.
   \end{proof}

   Since the system of equations \eqref{HamiltonEquations} for the configuration coordinates $\{u^i\}^{8N}_{k=1}$  is autonomous for the variable $u$, the functions $\left\{u^\mu_k(t),\frac{d u^\rho_k}{dt}\right\}^{4,N}_{\mu,k=1,1}$ are also $1$-Lipschitz continuous.
    \begin{corolario}\label{Lipschitz on each cycle}
    The functions $\{X^\mu(t)\}^3_{\mu=0}$  are $1$-Lipschitz functions on a subdomain of ${\bf D}_0$ when restricted to each elementary cycle.
\end{corolario}
\begin{comentario}
From the above arguments do not follow the $1$-Lipschitz continuity with respect to the $\tau$ parameter, since the derivatives $ \frac{\partial X^\mu}{\partial u^\rho_k},\,\frac{\partial X^\mu}{\partial p_{\rho k}}$ can jump abruptly from cycle to cycle when evaluated in ${\bf D}_0$.
\end{comentario}

Let us consider two  subsystems $A$ and $B$ of the full system $\mathcal{S}$ under consideration. The sub-systems $A$, $B$ are embedded in $\mathcal{S}$ such that
$\mathcal{S}=\,A\sqcup B$, where $\sqcup$ is the {\it disjoint union operation} for systems described by collections of sub-quantum molecules. Let us consider a local coordinate system such that the identification for the description of the aggregate $\mathcal{S}$, but also for the aggregate of the systems  $A$ and $B$ by means of the following embeddings
   \begin{align}
   A\equiv (u_1(t),...,u_{N_A}(t),0,...,0) \quad \textrm{and}\quad B\equiv (0,...,0,v_1(t),...,v_{ N_B}(t)),
   \label{embedding A,BtoS}
   \end{align}
   with $N=\,N_A+N_B,N_A,N_B\gg 1$ holds good.
The whole system $\mathcal{S}$ can be represented in local coordinates as
   \begin{align*}
   \mathcal{S}\equiv (u_1(t),...,u_{ N_A}(t),v_1(t),...,v_{ N_B}(t)).
   \end{align*}
    By the action of the diffeomorphisms $\varphi_k:M^k_4\to {\bf M}_4$, one can consider the world lines of the sub-quantum molecules on ${\bf M}_4$ at each constant value of $t$ modulo $2T$. In the particular case of metaestable equilibrium points $\{t=(2n+1)T,\,n\in\,\mathbb{Z}\}$ we have a set of (discrete) world lines in ${\bf M}_4$. We are going to consider the system in the region where it appears in the most compact form, that is, where the distances between points in the image of ${\bf M}_4$ by the diffeomorphism $\varphi_k:M^k_4\to {\bf M}_4$ is the minimal possible. Since the notion of physical metric distance is not diffeomorphic invariant, the closest that we can consider is that for this regime all the points can be mapped in a common open set of ${\bf M}_4$. Then it is possible to define the {\it observable coordinates of the system} by the expression
    \begin{align}
    \tilde{X}^\mu_i(\tau(n))=\frac{1}{N}\,\lim_{t\to (2n+1)T}\sum^{N}_{k=1}\,\varphi^\mu_{k_i}(x_{k_i}(t)),\quad\,i=A,B,\mathcal{S},\,\mu=0,1,2,3,
    \label{definicionofXmu}
    \end{align}
    where here $\varphi^\mu_{k_i}$ are local coordinates on ${\bf M}_4$, defined after the action of the diffeomorphism $\varphi_{k_i}$. Note that the normalization factor $1/N$ is the same for all the systems $i\equiv A,B,\mathcal{S}$. This means that we are considering systems that eventually are sub-systems (proper or improper) of a larger sub-system $\mathcal{S}$. This formal constraint is however harmless for general purposes by the embedding \eqref{embedding A,BtoS}.

    We  identify $\tau(n)$ with the $\tau$-time parameter and consider it continuous, in relation with macroscopic or quantum time scales. Then by the embedding \eqref{embedding A,BtoS},
      \begin{align}
    \tilde{X}^\mu_i(\tau)=\frac{1}{N}\,\lim_{t\to (2n+1)T}\sum^{N}_{k=1}\,\varphi^\mu_{k}(x_{k_i}(t)),\quad\,i=A,B,\mathcal{S}.
    \label{definicionofXmu2}
    \end{align}
    These functions are $1$-Lipschitz on each cycle, by applying the arguments given above, specifically proposition \ref{uniform bound for X} and corollary \ref{Lipschitz on each cycle}.

In the domain ${\bf D}_0$, the macroscopic coordinate functions $\{X^\mu,\,\mu=1,2,3,4.\}$ and the world lines $\{\xi_k,k=1,...,N\}$ are $1$-Lipschitz on each cycle, because since $\beta=0$ and the derivatives associated are close to zero. Hence the composition $\tilde{X}^\mu_i =\,X^\mu\circ \chi (\mathcal{S}_i)$, where $\chi$ are the embeddings of system $\mathcal{S}_i$ on $T^*TM$,  are $1$-Lipschitz in each cycle intersected with ${\bf D}_0$.

The co-tangent space $T^*TM$ associated with a Hamilton-Randers dynamical systems is endowed with a geometric measure $\mu_P$ of the product form \eqref{general form of the measure}. Therefore, the function $X^\mu: T^*TM\to\mathbb{R}$ have associated the median $M^\mu$. Viewed as parameterized by $\tau$, the functions $X^\mu(\tau)$ implies a $\tau$-dependent median $M^\mu(\tau)$. The functions $M^\mu:\mathbb{R}\to \mathbb{R},\,\mu =0,1,2,3$ do not depend upon the particularities of the system, except for the initial conditions $M^\mu(0)$.

The pseudo-metric structure $\eta$ is a product of Sasaki type structures \eqref{structureeta}. This structure has a dual form $\eta^* =\frac{1}{N}\,\sum^K_{k=1}\,\oplus \eta^*_S(k)$ of direct sums of Sasaki metrics on $TM$, where each $\eta^*_S(k)$ is defined in $TM^k_4$. Given a macroscopic observer, that as it was discussed in {\it section} \ref{Section on macroscopic observers}, is a time-like vector field $W\in \,\Gamma T{\bf M}_4$, one can define copies of $W$ on each $M^k_4$ as follows. Let us consider the inverse diffeomorphism $\{\varphi^{-1}_k:{\bf M}_4\to M^k_4\}$. Then the pull-forwards of $W$ are defined pointwise, $W_k(u_k):=\,d(\varphi^{-1}_k|_x(W(x))$, where $x\in {\bf M}_4$ is the unique point such that $\varphi^{-1}_k(x))=y$. With the collection of vector fields $\{W_k\in\,\Gamma TM^k_4\}^k_{k=1}$ one can define from the Lorentzian metrics $\{{\eta}_4(k)\}$  the associated Riemannian metrics $\{\bar{\eta}_4(k)\}$ and the corresponding Sasaki type metrics $\{\bar{\eta}^*_S(k)\}$, that have Riemannian signature. Then one can define the Riemannian metric
\begin{align}
 \bar{\eta}^* =\frac{1}{N}\,\sum^K_{k=1}\,\oplus \bar{\eta}^*_S(k).
 \label{Riemannian metric in TM}
 \end{align}
 on $TM$, that has Riemannian signature.
\begin{proposicion}
The space $TM$ can be endowed with a mm-Gromov structure $(TM, \mu_P, \bar{\eta}^*)$, where $\mu_P$ is the induced measure on $TM$ from the homonym measure defined in $T^*TM$.
\label{mmGromov structure of TM}
\end{proposicion}
Note that mm-Gromov structure is not unique, since it depends on the vector field $W\in\,\Gamma T{\bf M}_4$.

Note also that  {\it Proposition} \ref{mmGromov structure of TM} the space $TM$ has been considered, instead than the space $T^*TM$. This is because the momentum variables are non-autonomous, allowing to reduce the kinematical description to $TM$. However, one can consider the Sasaki type metric $\bar{\eta}^*_S$ defined on $T^*TM$ constructed from the metric $\bar{\eta}^*$ (already Sasaki) on $TM$. Then we have
\begin{proposicion}
The space $T^*TM$ can be endowed with a mm-Gromov structure $(T^*TM, \mu_P, \bar{\eta}^*_S)$.
\label{mmGromov structure of T*TM}
\end{proposicion}
We will consider the class $(TM, \mu_P, \bar{\eta}^*)$ of mm-Spaces  in the following considerations.
\subsubsection{Emergence of the classical law of inertia and of the weak equivalence principle}
Now we apply the theory of concentration of measure of mm-Gromov states. We consider the mm-Gromov space $(TM, \mu_P, \bar{\eta}^*)$.
By the concentration property \eqref{concentration2} of the $U_t$ dynamics in the Lipschitz dynamical regime ${\bf D}_0$,
one has the relation
   \begin{align}
  \mu_P\left(\frac{1}{\sigma_{\tilde{X}^\mu}}\,|\tilde{X}^\mu(\mathcal{S}_i)-M^\mu|>\rho\right)_{t\to (2n+1)T}\sim C_1\exp \left(-\,C_2 \frac{\rho^2}{2\,\rho^2_p}\right),
\label{generalconcentrationofmeasure0}
\end{align}
$\mu=1,2,3,4,\,i=A,B,\mathcal{S}$ holds, where the metric used is the one associated with $\bar{\eta}^*_S$ and the measure is induced from $\mu_P$. Here $M^\mu$ is the median of the functions $\tilde{X}^\mu$.
The constants $C_1,C_2$ are of order $1$, where $C_2$ depends on the dimension of the spacetime ${\bf M}_4$. $\rho_p$ is  independent of the system $i=A,B,\mathcal{S}$ (see chapter \ref{chapter on concentration of measure}).
 The value of the constant $C_2$ cannot be fixed by the theory, but does not compensates the abrupt concentration caused by the difference between the sub-quantum scale and the corresponding quantum scale.

 One main point is that $M^\mu$ does not depend upon the system $i=A,B$, but it can depend upon the initial value of the coordinates $\tilde{X}^\mu$.

However, the relation \eqref{generalconcentrationofmeasure0} is not the concentration of measure required. It is the dynamical version of the relation \eqref{generalconcentrationofmeasure0} what is required. In Hamilton-Randers, each instant $\tau$ corresponds to the restriction to ${\bf D}_0$ to a given fundamental cycle. Then one defines a set of maps $\{\widetilde{X}^\mu|_{{\bf D}_0(n)}\}$, where ${\bf D}_0(n)$ is the restriction of the metastable domain to the cycle $n$ associated with the instant of time $\tau$.
The $\tau$-evolution of the coordinates $\tilde{X}^\mu(\mathcal{S}(\tau))$, $\tilde{X}^\mu(A(\tau))$ and $\tilde{X}^\mu(B(\tau))$ that have the same initial conditions  differ between each other after the dynamics at $\tau$-time such that
   \begin{align}
  \mu_P\left(\frac{1}{\sigma_{\tilde{X}^\mu}}\,|\tilde{X}^\mu(\mathcal{S}_i(\tau))-M^\mu(\mathcal{S}(\tau))|>\rho\right)_{t\to (2n+1)T}\sim C_1\exp \left(-\,C_2 \frac{\rho^2}{2\,\rho^2_p}\right),
\label{generalconcentrationofmeasure}
\end{align}
that are just the restrictions of the condition \eqref{generalconcentrationofmeasure0} to each of the functions $\widetilde{X}^\mu|_{{\bf D}_0(n)}:{\bf D}_0(n)\to \mathbb{R}$.

If $\rho/\rho_P$ is equal to the number of subquantum degrees of freedom $N$, then
  \begin{align*}
  \mu_P\left(\frac{1}{\sigma_{\tilde{X}^\mu}}\,|\tilde{X}^\mu(\mathcal{S}_i(\tau))-M^\mu(\mathcal{S}(\tau))|>\rho\right)_{t\to (2n+1)T}\sim C_1\exp \left(-\,C_2 N^2\right),
\end{align*}
Thus, under the same initial conditions, different systems have almost the same classical world-lines. However, such world lines do not need to be continuous on the $\tau$-time evolution description.

The explanation of the equivalence principle offered along these lines implies that theoretically the weak equivalence principle should be an almost exact law of Nature. It breaks down abruptly only when the system described is a quantum system composed by $N$ sub-quantum degrees of freedom or a system composed by few sub-quantum degrees of freedom.

The above argument also applies when the coordinate systems are globally defined. In this sense, the functions $\tilde{X}^\mu$ depends upon the initial conditions $(x^\mu (\tau=0),\dot{X}^\mu(\tau=0))$ and by the argument above, there is a global coordinate system where they are $(X^\mu =C^\mu, 0)$. In any other coordinate system related by a global boost, the velocity of the system will be constant. This completed our argument about the emergent origin of the inertial law in classical mechanics initiated in {\it section} \ref{law of the inertia as emergent phenomenon 1}.

  \subsection{On the emergent origin of the gravitational interaction}\label{euristicargumentforemergenceofgravity}
   Bringing together  the previous characteristics for the $1$-Lipschitz domain of the $U_t$ flow, we find  the following general features:
   \begin{itemize}
   \item Since the constraint \eqref{finalhamiltoniantevolution} holds good, the dynamical $U_t$ flow in the domain ${\bf D}_0$ is compatible with the Hamiltonian constraint of general relativity.

   \item A generalized weak equivalence principle for the observable coordinates $\widetilde{X}^\mu(S(\tau))$ holds good in the metastable domain ${\bf D}_0$.

   \item The dynamical $U_t$ flow in the domain ${\bf D}_0$ determines a classical interaction, since it is relevant only in the metastable domain ${\bf D}_0$.

   \item There is a local maximal speed for the sub-quantum molecules of a Hamilton-Randers dynamical systems and  invariance under a local relativity group invariance holds. This local relativity group is by construction the Lorentz group.

   \end{itemize}
   Furthermore, we have found the following two additional restrictions,
   \begin{itemize}

   \item The $U_t$ interaction in the $1$-Lipschitz domain must be compatible with the existence of a maximal and universal maximal proper acceleration.

   \end{itemize}
    In view of the formal similarity of these properties with the analogous properties of the current mathematical description of the gravitational interaction, the following conclusion follows:
\begin{center}
   {\it In the metastable domain the $1$-Lipschitz dynamics associated with} ${H}_{Lipshitz,t=(2n+1)T}$ {\it is the gravitational interaction.}
\end{center}

That gravity must be intrinsically involved in the collapse of the wave functions is an idea that appears in several modern approaches to the description of measurement problem \cite{Diosi, GhirardiGrassiRimini, Penrose 1996, Penrose 2005}. However, as we discuss explicitly before, there are fundamental differences between the models described here and spontaneous collapse models or collapse models induced by large mass measurement devices.

Thus according to Hamilton-Randers theory, gravity appears as classical, instead of semi-classical or quantum interaction. Classical means here not subjected to quantization and without superpositions of spacetimes. However, the interaction can be potentially fluctuating and free-fall worldlines discontinuous.

There are further essential differences with Einstein's general relativity, since our theory includes an universal  maximal proper acceleration. Universal in the sense that it affects all the interactions, gravitational and gauge interactions). It is very interesting the possibility that a generalization of Einstein gravity in the frameworks of metrics with a maximal acceleration compatible with the weak equivalence principle  could lead to a {\it classical resolution of curvature singularities} \cite{CaianielloGasperiniScarpetta, Ricardo2020}.

Regarding the incorporation of gravity in the standard description of quantum systems we shall discuss in {\it Section} \ref{chapter on properties of quantum mechanics} gravitational induced quantum interference and see how this procedure is compatible with our version of emergent gravity.

\subsection{Existence of a domain where Newtonian gravity is $1$-Lipschitz continuous}
   That there is a domain where the gravitational interaction is $1$-Lipschitz can be easily argued within the framework of newtonian gravity by showing that the Hamiltonian function is Lipschitz in the strong form. As a consequence, the corresponding infinitessimal evolution operator on functions defined over the spacetime ${\bf M}_4$ is also Lipschitz, and hence, the finite evolution operators.

    Let us consider the newtonian gravitational force between a massive point particle  with mass $m$ by a massive point particle with mass $M$  located at the origin of coordinates,
   \begin{align}
   F_2(\vec{r})=\,-G\,\frac{m\, M}{r^2}, \quad \vec{r}\in\,\mathbb{R}^3
   \label{Newtonlaw}
   \end{align}
  and $r=|\vec{r}|$ the distance to the origin in $\mathbb{R}^3$ of  the point $\vec{r}$. In order to compare different lengths or different mechanical forces, it is useful to consider dimensionless expressions, for which we need reference scales.
In doing this comparison we adopt as length scale the Planck length and for the force scale the Planck force and use homogenous quantities for length and force.  The Planck force provides a natural unit, respect to which  we can compare any other scale.
Let us consider the expression
\begin{align*}
\frac{|F_2(\vec{r}_2)-F_2(\vec{r}_1)|}{{F}_P}=\,\alpha \,\frac{|r_2-r_1|}{\l_P},
\end{align*}
where $F_P$ is the Planck force and $\l_P$ is the Planck length. After some algebraic manipulations, one finds an expression for the coefficient $\alpha$. In  the case of Newton law of universal gravitation \eqref{Newtonlaw}, $\alpha$ is given by the expression
\begin{align*}
\alpha=\,\l_P\,\frac{1}{c^4} \, G^2 \,m\,M\,\frac{1}{r^2_2\,r^2_1}\,|r_2+\,r_1|.
\end{align*}
In order to simplify the argument, let us  consider $m=M$. Furthermore, although the case $r_2=r_1$ is singular, in order to work in a fixed scale, we consider a relation $r_1= \lambda\, r_2$ with $\lambda\sim 1$ constant. Then one obtains a compact expression for $\alpha$,
\begin{align}
\alpha=\, \frac{1+\lambda}{\lambda^3}\,\frac{D}{D_p}\,\frac{E}{E_P},
\label{Lipschitzconstantforgravity}
\end{align}
where $D=\,{m_0}/{r^3}$ is a characteristic density of the system, $E=m_0 c^2$, $m_0=m_0 \,c^2$, $m_0=\max\{m,M\}$, $D_P$ is the Planck density and $E_p$ is the Planck energy. It follows from the expression \eqref{Lipschitzconstantforgravity} that for scales of the standard model, atomic physics systems or macroscopic systems,  $\alpha\ll 1$. Moreover, $\alpha$ is bounded by $1$. The bound is saturated at the Planck scale. This shows that at such scales, gravity is $1$-Lipschitz\footnote{A similar argument applies if instead of considering the case of the Planck scale as the fundamental scale, we consider another high energy scale as the fundamental scale.}. This is because the relative weakness of the gravitational interaction compared with the interactions of the Standard Model of particles.

A different form of the argument is the following. The Lipschitz condition applied to the Newtonian force is of the form
\begin{align*}
\frac{|\vec{F}_2(\vec{r}_1)-\,\vec{F}(\vec{r}_2)|}{F_P}=\,\frac{G m M}{F_P}\Big|\frac{\vec{r}_1}{r^3_1}-\frac{\vec{r}_2}{r^3_2}\Big|<\frac{|\vec{r}_1-\vec{r}_2|}{l_P}.
\end{align*}
Let us denote now $\bar{r}=\,\min\{r_1,r_2\}$. Then we have the sufficient condition for the Newtonian force to be Lipschitz
\begin{align*}
\frac{l_P}{F_P}\, \frac{G m M}{r^3} <1 .
\end{align*}
This condition can be re-cast as the following sufficient condition,
\begin{align}
\frac{D}{D_P}\,\frac{E}{E_P}<1 .
\end{align}

Although based on a Newtonian limit and under several approximations and assumptions, the conclusion that one can reach is the existence of a regime where gravity is a classical, $1$-Lipschitz interaction in the above sense and hence, in the relevant sense for Hamilton-Randers theory.

 As mathematical models, newtonian gravity or Einstein gravity can be extrapolated to domains where the evolution is not $1$-Lipschitz, specially in domains close to the curvature singularities. However, such extrapolations, by the arguments given in this {\it section}, should be considered un-physical: the domain of validity of physical gravitational models does not reach the domain of the singularities.

 However, the existence itself of curvature singularities for long distance fields implies in principle that such singularities have an effect far from their own spacetime location. Therefore, one could recognize the incompleteness of the model, even for scales where it is applicable. A possible way out of this dichotomy is to consider classical theories of gravity with a maximal acceleration \cite{Ricardo2020}. Consistent with the weak equivalence principle, such classical theories could be free of curvature singularities.

 \subsection{Modified Newtonian dynamics and the $1$-Lipschitz condition}
 Newtonian gravity is not the only case of gravitational models with domains compatibles with the $1$-Lipschitz condition and hence, compatible with the weak equivalence principle as an emergent law. Let us consider here the case of Modified Newtonian dynamics, where the force is determined by a logarithmic function of the radial distance,
 \begin{align}
 \vec{F}_1(\vec{r})=- k_0 M m \frac{\vec{r}}{r^2},
 \end{align}
 where $k_0$ is an universal constant.
 Then following an argument similar to the above, a sufficient condition for $\vec{F}_1$ being a Lipschitz force is that
 \begin{align}
 \frac{A_P}{A}\, \cdot \frac{k_0 m M}{E_P}<1,
 \end{align}
where $A=r^2$, $A_P$ is the Planck area.

Newtonian gravity and modified Newtonian gravity are not the only models of non-relativistic gravity compatible with the condition of Lipschitz force. If one consider a convex combination of them,
\begin{align}
\vec{F}_\lambda=\, \lambda\vec{F}_2(\vec{r})+\,(1-\lambda)\vec{F}_1 (\vec{r}),\,\quad\, \lambda\in \,[0,1] .
\end{align}
The condition $\lambda \in \, [0,1]$ is necessary to assure that the force is Lipschitz; otherwise, if $\lambda$ is not constrained in magnituded, the domain of consistence with the Lipschitz condition will be small.
\begin{align*}
\vec{F}(\vec{r})=\,- G m M\,\lambda\,\frac{r}{r^3}+\,\bar{k}\,m M \,(1-\lambda)\frac{\vec{r}}{r^2}.
\end{align*}
But note that from a purely formal point of view this expression for the force can be re-casted as
\begin{align}
\vec{F}(\vec{r})=\,- G_N m M\,\frac{r}{r^3}+\,\bar{k}_0\,m M \frac{\vec{r}}{r^2},
\label{Modified Newton force}
\end{align}
$G_N$ is then identified with the Newton constant, while the new constant $\bar{k}_0$ is independent of the masses $m$ and $M$.
A  sufficient condition for the force $\vec{F}$ be consistent with the Lipschitz condition is that
\begin{align*}
\lambda \frac{D}{D_P}\,\cdot \frac{E}{E_P}+\,(1-\lambda) \frac{A_P}{A}\, \cdot \frac{k_0 m M}{E_P}<1.
\end{align*}

This expression for the force implies that a large scales, the gravitational force is proportional to $1/r$, while at short distance (in astrophysical terms), the force is the usual Newtonian force. There is a radius where the forces are approximately equal,
\begin{align}
r_c=\,\frac{G_N}{\bar{k}_0}.
\end{align}
At this distance, a system with mass $m$ has an acceleration
\begin{align}
a_c =\, M \frac{\bar{k}^2_0}{G_N}.
\end{align}
Therefore, the critical radius is independent of $M$, but the critical acceleration, where it is expected to depart from Newtonian gravity, depends on the source mass $M$.

 An immediate consequence of the above reasoning is that the gravitational field is not composed of gravitons. In particular, a gravitational wave is not composed of gravitons in the way as a classical electromagnetic wave could be thought to be composed of photons. Instead, in the framework of Hamilton-Randers theory and its extension to gravity as developed in this {\it section},  gravitational waves are a classical and emergent effect, not reducible to quantum. This negative result can be turned a falsifiable prediction of our theory.

\subsection{Comparison with a quantum interaction: The case of electrodynamics}
If we  repeat this argument for the static electromagnetic field, formally an analogous result is obtained. However, if we take into account the relative intensity of the classical Coulomb field with the Newtonian field,  for instance for the electron, the corresponding $\alpha$  is a factor of order $10^{42}$ larger that for gravity. This suggests that at such scales the electromagnetic field cannot be $1$-Lipschitz.

Another way to see this issue is by recognizing the fundamental fact that the electromagnetic field is quantized, which is a very different regime than a $1$-Lipschitz regular dynamic. Instead, we find that it is the full quantum electrodynamics which is required to have consistent predictions with experience at the atomic and sub-atomic scales. Hence we should not extend the argument directly from the Newtonian gravitational field to the Coulomb field.

 We showed that one of the assumptions in our derivation of the classical weak equivalence principle was the absence of exchange of sub-quantum molecules with the ambient or the source of the field. This assumption seems not to hold in the case of the electrodynamics and in general, gauge interactions, as we have discussed previously.

\newpage
\section{\LARGE{Conceptual problems in quantum mechanics from the point of view of Hamilton-Randers Theory}}\label{chapter on properties of quantum mechanics}
\bigskip
\bigskip
 Since the advent of the quantum theory there have been conceptual difficulties on the foundations of the theory, its relation with general relativity, the inclusion of gravity on the quantum framework and the interpretation of the phenomenology of quantum description. Among the relevant conceptual issues, following Isham , are the {\it quaternity of problems} \cite{Isham 1995}. These problems and how they are addressed in Hamilton-Randers theory, are the following:
\begin{enumerate}
\item {\bf The meaning of probability in quantum mechanical systems}. We have considered this problem in Chapters \ref{chapter of the Hilbert space structure} and \ref{chapter on concentration of measure}. In Hamilton-Randers theory, the origin of the probability description in the quantum theory is based on the ergodic behaviour of the sub-quantum degrees of freedom during the $U_t$ evolution. It was proved the emergent character of the Born rule, arguing from first principles that $|\psi|^2(x)$ represents the probability to find the system in an infinitesimal neighborhood of the point $x\in \,{\bf M}_4$, if a position measurement is performed and that such interpretation applies to individual quantum systems or to collection of identical quantum systems.

\item {\bf The role of measurement}. In Chapter \ref{chapter on concentration of measure}, a {\it Theory of Measurement on Quantum Systems} was sketched, illustrating its relation with the notion of {\it natural instantaneous collapse of the quantum state}, a notion that was also introduced in that Chapter. As a consequence of the theory, it turns out that in Hamilton-Randers theory, observable magnitudes are well defined during the measurement in the sense that every measurement is performed in the metastable domain ${\bf D}_0$ of the $t$-time dynamics, which is the domain where classical properties can be associated with the system objectively.

\item {\bf The collapse of the quantum state}. In Hamilton-Randers theory there is no collapse of the wave function. Instead of this type of process, Hamilton-Randers theory introduces the notion of natural instantaneous collapse of the quantum state. The intrinsic difference with the instantaneous collapse or other notions of the {\it collapse of the wave function} investigated in the literature is that the new mechanism does not require the interaction of the quantum system with an external agent or device to happen. Natural instantaneous collapse is based upon the assumptions on that $U_t$ flow, in particular, on the assumptions of the concentrating regime on each fundamental cycle, where the concentration of measure happens, according to the theory developed in chapter \ref{chapter on concentration of measure}.

\item {\bf Quantum entanglement}. For pure states,  quantum entanglement is the property that for systems describing more than one quantum particle, the system cannot be described in the form of a tensor product of states. This mathematical fact leads to surprising consequences on the local properties of measurements in separate parts. Specifically, entanglement implies an apparent non-local correlation among measurements performed spacetime spacelike separate systems. In our view, standard quantum mechanics describes effectively such non-local behavior due to entanglement and related properties, but it lacks of a explanatory mechanism for it. We will discuss in this chapter a mechanism, based on mixing and interaction during the $U_t$ dynamics.
\end{enumerate}

 A sub-quantum explanation of entanglement and quantum non-locality is based on the ergodic property of the $U_t$ flow for sub-quantum dynamics. Sub-quantum degrees of freedom in spacelike separated spacetime points can be correlated by means of interactions during the fundamental $U_t$ dynamics and that since such description is absent in the standard quantum mechanical description, it appears as a {\it spooky action effect}.

We being this chapter discussing in the framework of Hamilton-Randers theory the mechanism explaining quantum interference and quantum entanglement. This is done by considering the classical two slit experiment as discussed by Feynman et al. \cite{Feynman, Feynman Hibbs}. We also describe a mechanism to explain quantum non-local correlations in the framework of Hamilton-Randers theory. Besides of a fundamental description of entanglement, the highly non-locality when considered form a spacetime viewpoint of the $U_t$ dynamics provides a mechanism for contextuality, avoiding in this way the constraints imposed by Bell's theory and by the Kochen-Specken theorem \cite{Kochen-Specken1967, Isham 1995, Redhead}.
\subsection{On the quantum Young experiment}
\subsubsection{Experimental setting}
In a schematic version of the quantum Young experiment, let us consider a two-dimensional spatial Euclidean space, where $x_1$-direction is the direction of propagation of the quantum particles and the $x_2$ direction is the vertical direction of orientation for the slits. The vertical detector screen are perpendicularly located at a fixed distance from the source, after the slits.  The particles are in a translational state, that corresponds to free particles and define the {\it beam of particles}. We assume that the intensity of the beam can be regulated to allow only for one quantum particle  on flight  each time that the experiment is performed. The states are pure quantum states, describing individual quantum particles.
At a fixed distance from the source there is a screen with two apertures, the slits $1$ and $2$. Otherwise this first screen is un-penetrable for the particles. The apertures are separated by a distance larger than the quantum wavelength associated with the particles. After the screen with the slits, at some distance, there is a second screen (parallel to the first one) with a pointwise detection system where the particles are being detected. This is where the photographic plate for the detection is located, for instance.

 The experiment is repeated a large number of times with different identical quantum particles, under the constraint that the macroscopic initial momenta at the source is the same for each of the particles. We assume that other conditions on the experiment, as the value of the external gravitational field, interactions with the ambient, spin dynamical degrees of freedom and other dynamical properties are, either the same for each of the particles or that the variance of these factors do not affect the outcomes of the experiment.

\subsubsection{Qualitative quantum description of the Young experiment}
The quantum mechanical description of the experiment can be summarized as follows. The quantum system is associated with the slit $1$ (resp. for the slit $2$) if, closing the slit $2$, the particle is detected on the detection screen after some time has passed since the particle was generated. The quantum state associated with the slit $1$ is described by means of a wave function $\psi_1$ such that, if we close the slit $2$, $|\psi_1|^2$ reproduces the statistical patron observed in the detection screen, after the experiment has been repeated a large number of times with identical quantum systems. The slit $2$ has associated an analogous wave function denoted by $\psi_2$. The quantum system is characterized by the initial momentum and by which slits are open. Assuming that the experimental setting is stationary, if the two slits are open, then the quantum mechanical state at the instant $\tau_0$, just after the system goes through the slits, is described by a vector $\psi\in\,\mathcal{H}\simeq L_2(\mathbb{R}^2)$ of the form
\begin{align}
\psi(\vec{x},\tau_0)=\, C\,\left(\psi_1(\vec{x},\tau_0)+\,\psi_2(\vec{x},\tau_0)\right),
\label{initialcondition}
 \end{align}
 with $C$ a normalization real constant such that $\|\psi\|_{L_2}=1$. The evolution after passing the slits is linear and determined by a Schr\"odinger's equation and prescribed boundary or initial conditions,
 \begin{align}
\psi(\vec{x},\tau)=\, C\,\left(U_{\tau}(\tau,\tau_0)\psi_1(\vec{x},\tau_0)+\,U_{\tau}(\tau,\tau_0)\psi_2(\vec{x},\tau_0)\right).
\label{solutiontwoslits}
 \end{align}
  Unitary evolution implies
 \begin{align*}
 |\psi(\vec{x},\tau_0)|^2 =\,|\psi(\vec{x},\tau)|^2 =\,C^2\left( |\psi_1(\vec{x},\tau)|^2+|\psi_2(\vec{x},\tau)|^2+ 2\,Re (\psi^*_1(\vec{x},\tau)\psi_2(\vec{x},\tau))\right) .
 \end{align*}
 The third term describes the characteristic interference patterns. The fundamental properties of these patters can be better understood if we use the polar coordinates for the description of the wave functions,
 \begin{align*}
 \psi_1 =\,|\psi_1|\,e^{arg(\psi_1)},\quad \psi_2 =\,|\psi_2|\,e^{arg(\psi_2)}.
 \end{align*}
If the two slits are placed symmetrically from $x_2=0$, then
symmetry considerations imply that $\psi_1\simeq \,\psi_2$ in the central region of the screen $x_2\approx 0$. Outside of the central axis, one expects that either $|\psi_1|\neq \,|\psi_2|$ or
$arg (\psi_1)\neq \,arg (\psi_2)$ or that both conditions  hold. In the case where there is a relative phase between $\psi_1$ and $\psi_2$, an interference pattern depending on the geometric arrangement of the experiment must appear and, as  we move out from the axis $x_2=0$, a relative phase between $\psi_1$ and $\psi_2$ will initially increase. This increase in the relative phase is translated in a complex exponential modulation of the amplitude of $\psi(x_1,x_2,\tau)$ as a function of the variable $x_2$. Moreover, one also expects that the condition $|\psi_1|\neq |\psi_2|$ holds along the $x_2$-axis, although by symmetric considerations, the distributions must be such that they are in a symmetrically related with respect to reflection around the central axis $x_2=0$.

From the point of view of the operational interpretations of quantum mechanics, there is no further direct interpretation of the two slit experiment: if $\psi_1$ and $\psi_2$ are constructed in the form above described and if the two slits are open, then there is no way to identify a more detailed description without disturbing the system to know by which slit each individual particle passed. Furthermore, the appearance of interference patterns, inexplicable from the classical point of view, indicates that it is not possible to state by which slit the particle passed\footnote{This is clearly not true in the framework Bohm's interpretation and the de 2roglie-2ohm's interpretation of quantum mechanics.}.

\subsubsection{Qualitative description of the quantum Young experiment from the point of view of Hamilton-Randers theory}
The physical description and interpretation of the two slit experiment from the point of view of Hamilton-Randers theory is the following. First, we consider the predecessor state \eqref{predecesorofquantumstate} for the prepared quantum state. The predecessor state must be of the form
\begin{align*}
\Psi(u)=\,\frac{1}{\sqrt{N}}\,\sum^N_{k=1}\,e^{\imath\,\vartheta_{\Psi k}(\varphi^{-1}_{k}(x),z_k)} \,n_{\Psi k}(\varphi^{-1}_{k}(x),z_k)\,|\varphi^{-1}_{k}(x),z_k\rangle ,
\end{align*}
where $x=(x_1,x_2,\tau)$ and $z_k$ stand for the coordinates at ${\bf M}_4$ and the components of the velocity variable associated with the $k$ sub-quantum molecule.
Let us consider first the case when slit 1 is open. Then the predecessor is re-casted in the form
 \begin{align*}
 \Psi_1(u)=\,\frac{1}{\sqrt{N_1}}\,\sum^{N_1}_{1k=1}\, e^{\imath\,\vartheta_{1k}(\varphi^{-1}_{1k}(x),z_{1k})} \,n_{1k}(\varphi^{-1}_{1k}(x),z_{1k})\,|\varphi^{-1}_{1k}(x),z_{1k}\rangle .
 \end{align*}
 Fixed the variable $x\in {\bf M}_4$, under the assumption of ergodicity, the average on time during the cyclic $U_t$ evolution is equivalent to an average on $\varphi^{-1}_{k}*(T_x{\bf M}_4)$. Thus the quantum state for a quantum system if only the slit $1$ is open is of the form
  \begin{align}
 \psi_1(x)=\,\frac{1}{\sqrt{N_1}}\,\sum^{N_1}_{1k=1}\,\int_{\varphi^{-1}_{1k}*(T_x{\bf M}_4)} \,d^4z_{1k}\,  e^{\imath\,\vartheta_{1k}(\varphi^{-1}_{1k}(x),z_{1k})} \,n_{1k}(\varphi^{-1}_{1k}(x),z_{1k})\,|\varphi^{-1}_{1k}(x),z_{1k}\rangle .
 \end{align}
2y a similar argument, if only the slit $2$ is open, then the pre-quantum state is of the form
  \begin{align}
 \Psi_2(u)=\,\frac{1}{\sqrt{N_2}}\,\sum^{N_2}_{2k=1}\,  e^{\imath\,\vartheta_{2k}(\varphi^{-1}_{2k}(x),z_{2k})} \,n_{2k}(\varphi^{-2}_{2k}(x),z_{2k})\,|\varphi^{-1}_{2k}(x),z_{2k}\rangle .
 \label{Estado 1}
 \end{align}
The associated quantum state is
  \begin{align}
 \psi_2(x)=\,\frac{1}{\sqrt{N_1}}\,\sum^{N_1}_{2k=1}\,\int_{\varphi^{-1}_{2k}*(T_x{\bf M}_4)} \,d^4z_{2k}\, e^{\imath\,\vartheta_{2k}(\varphi^{-1}_{2k}(x),z_{2k})} \,n_{2k}(\varphi^{-1}_{2k}(x),z_{2k})\,|\varphi^{-1}_{2k}(x),z_{2k}\rangle  .
 \label{Estado 2}
 \end{align}

 By reasons of symmetry, $N_1=\,N_2=\,N/2$.

 When both slits are open, during the $U_t$ evolution each of the sub-quantum degree of freedom passes through both slits, but since the slits are not the same and the degrees of freedom are moving independently, there is a partition  of the sub-quantum degrees of freedom, either as associated with slit $1$ or associated with slit $2$. A natural classification is that the $k$-ene sub-quantum molecule is in class $1$ if during the $U_t$ evolution it expends more time close to $1$ than to slit $2$.

 The characterization of the system should reflect this partition. Indeed, by applying a suitable form of ergodic theorem, the system is characterized by an average from the contribution arising from associated with $1$ and the corresponding situation associated with $2$. Hence the pre-quantum state is of the form
 \begin{align*}
 \Psi_{12} (u) =\, \frac{1}{\sqrt{2}}\,\left(\Psi_1 (u)+\,\Psi_2 (u)\right),
 \end{align*}
 where the coefficient $\frac{1}{\sqrt{2}}$ can be understood using symmetric considerations. Since $N_1=N_2=N/2$, we have that the associated quantum state is of the form
\begin{align*}
\psi_{12}(x) = &\, \frac{1}{\sqrt{N}}\,\sum^{N_1}_{1k=1}\int_{\varphi^{-1}_{1k}*(T_x{\bf M}_4)} \,d^4z_{1k} e^{\imath\,\vartheta_{1 k}(\varphi^{-1}_{1k}(x),z_{1k})} \,n_{1 k}(\varphi^{-1}_{1k}(x),z_{1k}) |\varphi^{-1}_{1k}(x),z_{1k}\rangle \\
& + \frac{1}{\sqrt{N}}\sum^{N_2}_{2k=1} \int_{\varphi^{-1}_{2k}*(T_x{\bf M}_4)} \,d^4z_{2k} e^{\imath\,\vartheta_{2 k}(\varphi^{-1}_{2k}(x),z_{2k})} \,n_{2 k}(\varphi^{-1}_{2k}(x),z_{2k}) |\varphi^{-1}_{2k}(x),z_{2k}\rangle .
\end{align*}

Note that separating the degrees of freedom in $1$ and $2$ is, in this case nothing more than a partition of the $N$ degrees of freedom describing the particle, which is an unique quantum system. Therefore, the highly oscillating condition \eqref{highlyoscillatorycondition0} cannot be applied to each terms of the partition as individual particles. This has the most remarkable consequences, because the probability of passing through the slit systems is not the sum of the probabilities through $1$ or $2$ separetely, but it appears an interference term. Indeed, the expression for the probability distribution is of the form
\begin{align*}
&|\psi_{12}(x)|^2 =\,\frac{1}{{N}}\,\sum^{N}_{k=1}\int_{\varphi^{-1}_{k}*(T_x{\bf M}_4)} \,d^4z_k \Big( n^2_{1k}(\varphi^{-1}_{1k}(x),z_{1k})+n^2_{2k}(\varphi^{-1}_{2k}(x),z_{2k})\Big)\\
& + \,\sum^{N_1}_{1k=1}\int_{\varphi^{-1}_{1k}*(T_x{\bf M}_4)} \,d^4z_{1k} \sum^{N_2}_{2k=1}\int_{\varphi^{-1}_{2k}*(T_x{\bf M}_4)} \,d^4z_{2k} 2 n_{1k}(\varphi^{-1}_{1k}(x),z_{1k})\, n_{2k}(\varphi^{-1}_{2k}(x),z_{2k})\cdot\\
& \cdot Re \Big(e^{\imath(\vartheta_{1 k}(\varphi^{-1}_{1k}(x),z_{1k})-\vartheta_{2 k}(\varphi^{-1}_{2k}(x),z_{2k}))}\,\langle\varphi^{-1}_{1k}(x),z_{1k}|\varphi^{-1}_{2k}(x),z_{2k}\rangle\Big) .
\end{align*}
Orthogonality relations implies
\begin{align*}
&|\psi_{12}(x)|^2 =\, 1+ \frac{1}{{N}}\,\sum^{N_1}_{1k=1}\int_{\varphi^{-1}_{1k}*(T_x{\bf M}_4)} \,d^4z_{1k}  2 n_{1k}(\varphi^{-1}_{1k}(x),z_{1k})\, n_{2k}(\varphi^{-1}_{1k}(x),z_{1k})\cdot\\
& \cdot Re \Big(e^{\imath(\vartheta_{1 k}(\varphi^{-1}_{1k}(x),z_{1k})-\vartheta_{2 k}(\varphi^{-1}_{1k}(x),z_{1k}))}\Big) .
\end{align*}

For the double slit experiment, the symmetries of the setting implies the existence of nodes, or regions where $|\psi(x)|^2_{12}=0$. If normalized conveniently, one has the relation
\begin{align*}
0=&\, 1+ \frac{1}{{N}}\,\sum^{N_1}_{1k=1}\int_{\varphi^{-1}_{1k}*(T_x{\bf M}_4)} \,d^4z_{1k}  2 n_{1k}(\varphi^{-1}_{1k}(x),z_{1k})\, n_{2k}(\varphi^{-1}_{1k}(x),z_{1k})\cdot\\
& \cdot Re \Big(e^{\imath(\vartheta_{1 k}(\varphi^{-1}_{1k}(x),z_{1k})-\vartheta_{2 k}(\varphi^{-1}_{1k}(x),z_{1k}))}\Big)
\end{align*}
Since $0\leq n_{1k}\leq 1$ and $0\leq n_{2k}\leq 1$, then one needs that, in order that second term compensates the constant term 1,
\begin{align}
\vartheta_{1 k}(\varphi^{-1}_{k}(x),z_k)-\vartheta_{2 k}(\varphi^{-1}_{k}(x),z_k))=\,a\pi,\,a\in \mathbb{Z}
\label{existence of nodes condition 1}
\end{align}
for each $k=1,...,N$ and for each $z_k$.
In addition, the condition
\begin{align}
n_{1k}(\varphi^{-1}_{k}(x),z_k)=\, n_{2k}(\varphi^{-1}_{k}(x),z_k),\,k=1,...,N.
\label{existence of nodes condition 2}
\end{align}

The picture for the interference phenomena in the quantum Young experiment that Hamilton-Randers theory is the following. When both slits are open, the quantum system passes through both slits in the sense that during the $U_t$ evolution each of the sub-quantum degrees of freedom passes through both slits $1$ and $2$. The overlook of the fundamental $U_t$ dynamics in the quantum mechanical description of the Young experiment produces the impression of quantum interference, as if the system passed  by both slits at the same value of the $\tau$-time. This ascription to $1$ and $2$ has as a consequences that the spontaneous natural collapse can happen either close to slit $1$ or close to slit $2$. Also note that in Hamilton-Randers theory, there is no superposition when using the detailed description of double dynamics. The linear combination of states are for mathematical tool, rooted in the Koopman-von Neumann formulation of classical dynamics and in the ergodic property of the underlying $U_t$ dynamics.
Since each sub-quantum molecule passes by each of the slits, one can assign a probability of finding the system at each value of the coordinate $x_2$ in the detection screen is given by the modulus square of the complex amplitude $\psi$, with interference pattern between the terms obtained from the state $\psi_1$ and $\psi_2$. The probabilities assigned are determined by the time that each of the sub-quantum molecule passes close to $1$ or close to $2$ and are given by the Born's rule, as discussed in Chapter \ref{chapter of the Hilbert space structure} and Chapter \ref{chapter on concentration of measure}.

On the other hand, the detection events have always a pointwise character, according to the theory developed in Chapter \ref{chapter on concentration of measure}. From the point of view of Hamilton-Randers theory, it must be defined in terms of sub-quantum degrees of freedom. The properties are well defined in the concentration domain, where there is no superpositions on the macroscopic position of the system: either the system is found at $1$ or at $2$.

If an experimental device is settle to control by which of the slits the particle passes, then the experimental setting changes. In the extreme case, when one of the slits is closed, the result from repeating the experiment many times, the statistical distribution of events at the detection screen corresponds to a pattern as if the particles passed by one of the slits only (the one open) or if it did not passed. In mathematical terms, by the insertion of an observer, the predecessor state changes:
\begin{align*}
 |\Psi_{12}\rangle \to |\Psi'\rangle_1\,\,\textrm{or}\,\, |\Psi_{12}\rangle\to |\Psi'\rangle_2
 \end{align*}
  in such a way that all the sub-quantum degrees are associated either to $1$ or to $2$. Moreover, the averages are such that $\langle \Psi'_1\rangle =\, \langle\psi\rangle_1$ and analogously, $\langle \Psi'_2 \rangle =\,\langle\psi\rangle_2$.
This implies that a measurement by which of the slits the system passes is described quantum mechanically by the projections
\begin{align*}
p_1:\mathcal{H}\to\mathcal{H},\quad \psi \mapsto \langle\psi\rangle_1,\quad\quad
p_2:\mathcal{H}\to\mathcal{H},\quad \psi\mapsto \langle\psi\rangle_2 .
\end{align*}
Such projection transformations are caused by the experimental setting, that changes the nature of the system.

When the measurement device is not close to one of the slits, then there are two consequences for the description of the phenomenology of the quantum Young experiment. First, since the interaction associated with the measurement happens when the system is in the concentration domain of ${\bf D_0}$, a detection has always a well-defined value. Second, even if the system concentrates, let say at $1$ in the fundamental $n$-fundamental cycle, it can happen that it concentrates at the slit $2$ during the next fundamental cycle. The process, although determined by the dynamics at the fundamental level, appears as undetermined from a quantum mechanical description. For a macroscopic observer, that describes the experiment using the quantum formalism, it is not possible to surpass this accuracy in the description. This is because the nature of $t$-time and $\tau$-time are different.

\subsection{Quantum interferometry in presence of an external gravitational field}
 We discuss in this section the interaction of a quantum system with an external  classical external gravitational field from the point of view of Hamilton-Randers theory. It will be shown that an external gravitational interaction is modelled in an analogous way as any other external potential. This could appear as a surprise if we think that in Hamilton-Randers theory gravity corresponds to the dominant interaction only in the $1$-Lipschitz domain of the $U_t$ flow and that, therefore, gravity emerges when the system is naturally spontaneously collapsed and localized.

Let us consider an ideal interferometer setting composed by two classical paths $S1D$ and $S2D$ such that the location of each path are located at different potentials of the Earth's surface gravitational field. The details can be found in  \cite{ColellaOverhauser1974, ColellaOverhauserWerner1975} or the exposition developed in \cite{Sakurai}, from where we adopt the notation. It can be shown that for a non-relativistic quantum particle that arrives at the screen and that interferences with itself by being allowed to follow each of the classical paths $S1D,\,S2D$, the phase difference along the paths is given by the expression
\begin{align}
\Phi_{S1D}\,-\Phi_{S2D}=\,\frac{m^2}{\hbar^2}\, g_0\,\frac{\lambda}{2\pi}\,\left(l_h\,l_v\,\sin \delta\right),
\label{interferencespattern}
\end{align}
where $l_h$ is the length of the horizonal tram, $l_v$ is the length of the vertical and $\delta$ is the angle of inclination of the interference plane with the horizontal, varying from $0$ to $\pi/2$.
In the derivation of this expression it is applied the path integral approach to quantum amplitudes, the local expression for the gravitational potential of the Earth and de Broglie relation between the momenta and wave vectors.

The particle has two possible classical paths where it can naturally spontaneously concentrate. If the system describing the particle concentrates at one point along the classical path $S1D$, then the local classical gravitational potential that the quantum particle feels is $\mathfrak{V}(x_{S1D})$ at the point $x_{S1D}$ over the path $S1D$ and similarly if it concentrates along the second path $S2D$. If no measurement is performed on the system, then the possible concentration region is $S1D\cup S2D$. Furthermore, the localization of the quantum particle can switch between the paths $S1D$ to $S2D$ and viceversa, since the evolution is not necessarily continuous in spacetime, but it has jumps between these paths that must be compatible with the conserved quantities of the fundamental $U_t$ dynamics. For the particular case under consideration, the jumps between paths must be compatible the with momentum and energy conservation laws.

Despite the jumps, since the gravitational potential $\mathfrak{V}$ is local in the spacetime manifold, it is implemented in the standard way in the quantum evolution operator formalism.

Following this interpretation, although the system is collapsed at any instant that interacts gravitationally with the gravitational potential $V$ (but this collapse is not caused by such an interaction), there is no smooth classical trajectory defined by the system, since the path where the system naturally collapses can fluctuate between the classical paths $12D$ and $1CD$. Only when the size of the system makes the fluctuations to be small compared with the notion of center of mass motion, then to speak of classical trajectory as an approximation becomes a good description of the evolution.

\subsection{Models of gravity with fluctuating sources}
 According to Hamilton-Randers theory viewpoint, quantum systems do not follow a classical path, but a fluctuating path among the possible classical paths. This view has consequences for the possible models of sources for gravitation. Let us consider an interferometric setting and let us assume that the quantum interferometric system acts as the source for a gravitational field, given as solution of the Einstein field equations,
\begin{align}
G_{\mu\nu}(x)=\, \frac{8\,\pi\,G_N}{c^4}\, T_{\mu\nu}(x),\quad x\in \,{\bf M}_4,
\end{align}
where $T_{\mu\nu}$ is the energy-momentum tensor of an individual quantum system experiencing the interference.
Besides the problem of the consistence use of distributional sources for the matter fields, the source of the gravitational field appears as a fluctuating source between two paths $S1D$ and $S2D$ compatible with the conservation quantities of the fundamental mechanics, that also lead to certain conserved quantum mechanical quantities, as discussed in Chapter \ref{chapter of the Hilbert space structure}. Since the quantum system follows a zig-zag trajectory between the classical paths $S1D$ and $S2D$, it is natural to describe the path by a random combination of paths,
\begin{align}
\gamma^\mu(\tau)=\,R_1(\tau)\,\gamma^{\mu}_{S1D}(\tau)+\,R_2(\tau)\,\gamma^{\mu}_{S2D}(\tau),
\label{random world line}
\end{align}
where $\gamma^{\mu}_{S1D}:\mathbb{R}\to \mathbb{R}$ and $\gamma^\mu_{S2D}:\mathbb{R}\to \mathbb{R}$ are coordinates along the spacetime paths $S1D$ and $S2D$ respectively. The functions $R_1,R_2:\mathbb{R}\to \{0,1\}$ are constrained by the condition $R_{1}+R_2 =1$. Since the fundamental $U_t$ dynamics is very complex, sensitive to environment and initial conditions, the evaluation of each individual trajectory from first principles $\gamma^\mu(\tau)$ is technically impossible. Therefore, to describe effectively spacetime trajectory associate to the quantum system, it is assumed that the functions $R_1$ and $_2$ are random for all practical purposes.

It is difficult to implement this matter source in Einstein equations, among other things, because Einstein equations do not admit $1$-dimensional distributional sources in the Schwartz distributions. A way to avoid this issue is a way to avoid the problem of considering low dimensional distributions matters. Two ways to do this is first, to consider linearised Einstein equations with fluctuating sources and second, to consider the Newtonian limit.

By considering the Newtonian case, we obtain a framework were to observe the characteristics of the models of fluctuating gravity in Hamilton-Randers theory. For this it is necessary to implement as a source the mass density with a fluctuating $1$-dimensional distrubution mass density.  A general covariant formulation along the lines of Cartan can be introduced, but it is enough to establish the geometric setting if we consider that a time coordinate $\tau$ is given and that the manifold is foliated in the form ${\bf M}_4\cong \mathbb{R}\times \mathbb{R}^3$.

Proceeding in this way, at each fixed value of the time parameter $\tau =\,\tau_0$, the Poisson equation can be formulated,
\begin{align*}
\nabla \mathfrak{V} =\,4\pi\,G_N \delta \left(x^\mu-R_1(\tau_0)\,\gamma^{\mu}_{S1D}(\tau)-\,R_2(\tau_0)\,\gamma^{\mu}_{S2D}(\tau_0)\right).
\end{align*}
But this equation is equivalent to a  Poisson equation for a random superposition of sources,
\begin{align}
\nabla \mathfrak{V} =\,4\pi\,G_N\,\left( R_1(\tau_0)\,\delta\left( x^\mu-\gamma^{\mu}_{S1D}(\tau_0)\right)+\,R_2(\tau)\delta\left( x^\mu-\gamma^{\mu}_{S2D}(\tau_0)\right)\right).
\label{newtonian fluctuating model of gravity}
\end{align}
Because the linearity of the $3$-dimensional Laplacian operator, the solution is direct in the form
\begin{align*}
\mathfrak{V}(\tau,\vec{x}) = \,R_1(\tau)\,\mathfrak{V}_{S1D}(\vec{x})+\,\left(1-R_1(\tau)\right)\,\mathfrak{V}_{S2D}(\vec{x}),
\end{align*}
 The solution $\mathfrak{V}$ mimics the random path and becomes a random gravitational potential. Clearly, the structure of this solution resorts on the sparation in space coordinates $\vec{x}$ and time coordinates $\tau$. It is a non-relativistic solution. On the other hand, it is not necessarily discontinuous, depending on the closeness on the paths $S1D$ and $S2D$
 For classical situations, one has that $S1D\approx S2D$ in a very good degree for any potential pair of paths where natural spontaneous collapse can happen, while for quantum particles, the fluctuating nature of the gravitational field becomes apparent. The gravitational field is described by means of the metric component, since the Newtonian potential is directly related with the $g_{00}$ component of the metric.

As discussed in {\it Chapter} \ref{chapter on concentration of measure}, quantum interactions are associated with the interactions between sub-quantum degrees of freedom during the ergodic regime of the $U_t$ dynamics, while classical interactions are associated with interactions active predominantly on the concentration domain. Therefore, the models discussed above of fluctuating gravity confirm the classical character of gravitation.

\subsubsection{Compatibility with general covariance} The gravitational fluctuating models as the one discussed above are compatible with general covariance. In order to clarify this issue, let us consider the analogies and differences with superposition of spacetimes. Under the assumption that all measurements are fundamentally position measurements of some short, the argument for the absence of super-position of spacetimes as discussed in \cite{Ricardo 2022} or in {\it Appendix} \ref{Heisenberg uncertainty principle against general covariance} applies. For gravity with fluctuating sources, however, the models are fully spacetime general covariant, since the notion of spacetime is the usual one. With a fluctuating distributional source, the metric is fluctuating as well, but it is not a fluctuation among different spacetime manifolds, as long as the fluctuations are such that the manifold where the metrics live is keep the same. Thus the problem of identifying points from different manifolds in a general covariant way, that was on the basis of Penrose construction of gravitationally induced wave function collapse model, is not present in our theory. Fluctuations and general covariance are therefore, subjected to the constraint of being formulated in the same spacetime manifold ${\bf M}_4$.

The physical significance of the fluctuation is associated with the ascription of the proper time functional. Since the potential $\mathfrak{V}$ is directly associated with the $g_{00}$ component of the metric, the metric itself becomes fluctuating and hence, the proper time becomes fluctuating. Thus the fluctuations are of the a ideal proper clock.

\subsection{On the nature of non-local spacetime quantum correlations in general}\label{nature of the quantum correlations}
 In the framework of Hamilton-Randers theory there is a natural mechanism to describe in terms of the sub-quantum dynamics the nature of the spacetime non-local quantum correlations.  Let us consider an entangled state with two parties $A$ and $B$. During the $U_t$ evolution, the sub-quantum degrees of freedom of $A$ and $B$ {\it fill} the part of the phase space $T^*TM$ compatible with the corresponding causality conditions of the $U_t$ dynamics, that is, the limitation due to the local maximality of the speed of light in vacuum. During such dynamical filling of the phase space, the degrees of freedom of $A$ and $B$ interact. Hence they can correlate their $t$-time  evolution. Once the projection $(t,\tau)\mapsto \tau$ is imposed in the mathematical description, the information on the details of the $U_t$ dynamics is unaccessible. This interacting mechanism combined with the formal projection is the origin of the quantum non-local correlations.

 There are two main consequences of this theory on the nature of quantum correlations. The first is that quantum correlations appear as instantaneous for any macroscopic observer, despite that the ergodic motion of the sub-quantum molecules is constrained by hypothesis to the sub-luminal speed condition $\vec{v}<\,c$. Let us remark that this notion of instantaneous non-locality is compatible with relativistic physics, in the sense that it is a notion independent of the reference system and do not implies spacetime propagation of energy or signaling, since signaling is a macroscopic effect. Also, note that the mechanism explains why quantum correlations are apparently instantaneous in the $1$-dimensional $\tau$-time description, but the mechanism is not instantaneous when the two dimensional description $(t,\tau)$ of the full Hamilton-Randers dynamics.

 The second consequence is that, due to the sub-luminal constraint in the sense that $|v_k|<\, c$ for each sub-quantum degrees of freedom. First, let us recall that the metric structure of each of the manifolds $M_k$ are isometric to each other, that is, there are isometries relating the metrics $\eta_{k_1}= I_{12}(\eta_{k_2})$, for each $k_1, k_2$. On the other hand, the spacetime structure $({\bf M}_4, \eta_4)$ is not isometric to $(M_k,\eta_k)$. However, given a Hamilton-Randers dynamical system, all the metric structures describing the sub-quantum degrees of freedom are isometric and hence, the relation between the distances functions in each of them and the distance in ${\bf M}_4$ are characterized by the system only, not by each degree of freedom. Thus the macroscopic distance range for the existence of instantaneous quantum correlations is bounded by the relation
\begin{align*}
d_{cor}\leq \,c\,T ,
 \end{align*}
 where $T$ is the semi-period of the quantum system. Note that this relation is the same for every Hamilton-Randers system, since the left side is constructed by means of geometric elements defined over ${\bf M}_4$ as discussed in {\it section} \ref{Section on macroscopic observers}, while the right hand side is constructed by means of isometric structures that only depend upon the system itself.

 By the relation \eqref{ETrelation}, the bound $d_{cor}$ on the range of the non-local interactions depends on the mass $m$ of the quantum system in question,
 \begin{align}
d_{cor}\leq \,c \,T_{min}\,\exp\left[\frac{T_{min}\,m \,c^2}{\hbar}\right].
\end{align}
\label{boundfordistanceofcorrelations1}
For massless quantum systems, the expression  \eqref{boundfordistanceofcorrelations1} reduces to
  \begin{align}
d_{cor}\leq \,c \,T_{min}.
\label{boundfordistanceofcorrelations2}
 \end{align}
It is interesting to mention the kind of dependence on the mass in the expression \eqref{boundfordistanceofcorrelations1}. It implies that quantum non-local correlations will vanish quicker for massless systems than for massive systems.

Therefore, Hamilton-Randers theory implies the existence of a maximal range for the distance where photon pairs can exhibit quantum correlations.

 Both expressions \eqref{boundfordistanceofcorrelations1} and \eqref{boundfordistanceofcorrelations2} are falsifiable. If a particular value of $T_{min}$ is determined experimentally, for instance, by measuring a maximal range of correlations for photon pairs, then also the relation \eqref{boundfordistanceofcorrelations1} could  be tested for systems of different inertial masses. Indeed, according to our theory, the following relation
 \begin{align*}
 \frac{\Delta d_{cor}}{d_{cor}} =\,\frac{T_{min} \,c^2}{\hbar}\,\Delta m
 \end{align*}
 must hold good for small difference on mass $\Delta m$. Furthermore, the pattern by which quantum correlations has a limited range and the maximal range is given in terms of the relation \eqref{boundfordistanceofcorrelations1} is {\it per se} a falsifiable prediction of a sector of Hamilton-Randers theory.

\subsection{Entangle states as emergent states}
We discussed in Chapter \ref {chapter of the Hilbert space structure}, in particular in {\it proposition} \ref{extension of emergene to wave packets} that any wave function, that is, for a element of the quantum Hilbert space $\mathcal{H}$,  can be represented as an emergent state. We illustrate this by showing that one particle free quantum states of any representation of the Lorentz group, are emergent states. We will consider now in detail the case of entangled states from the point of view of Hamilton-Randers theory to show that they are emergent states.

Let us consider a predecessor state describing two different systems $A$ and $B$, but of identical type of systems, for instance, two photons or two electrons. From our theory of inertial mass as developed in {\it section} \ref{models of periods in terms of mass}. For each of the systems, the predecessor states are of the form
\begin{align*}
&\Psi_A=\,\frac{1}{\sqrt{N_A}}\,\sum^{N_A}_{k=1}\Big(n_{1k}(\varphi^{-1}_k(x),z_k)\,\exp(\imath \,\vartheta_{1k}(\varphi^{-1}_k(x),z_k))\break\\
 &\nonumber +\,e^{\imath \theta_A}\,n_{2k}(\varphi^{-1}_{k}(x),z_k)\,\exp(\imath \,\vartheta_{2k}(\varphi^{-1}_k(x),z_k))\Big)|\varphi^{-1}_{k}(x),z_k\rangle,
\end{align*}
and
\begin{align*}
&\Psi_B=\,\frac{1}{\sqrt{N_B}}\,\sum^{N_B}_{l=1}\Big(n_{1l}(\varphi^{-1}_l(x),z_l)\,\exp(\imath \,\vartheta_{1l}(\varphi^{-1}_l(x),z_l))\break\\
 &\nonumber +\, e^{\imath \theta_B}\,n_{2l}(\varphi^{-1}_{l}(x),z_l)\,\exp(\imath \,\vartheta_{2l}(\varphi^{-1}_l(x),z_l))\Big)|\varphi^{-1}_{l}(x),z_l\rangle
\end{align*}
respectively, where $\theta_A,\theta_B$ are arbitrary, constant phases. In these expressions the indices $k$ and $l$ run over different sets of sub-quantum degrees of freedom associated with two quantum particles referred by $A$ and $B$. The labels $1$ and $2$ stand for two different spacetime locations $x_1, x_2\in {\bf M}_4$, where measurements take place. Thus physical properties for each of the particles $A$ and $B$ can be measured in the location $1$ or in the location $2$. The result of the localization is determined by the details of the $U_t$ dynamics and is determined by the processes of concentration of measure as discussed in Chapter \ref{chapter on concentration of measure}. Note that although sub-quantum molecules labeled by $1$ are different from the sub-quantum molecules labeled by $2$, during the ergodic regime of the $U_t$ evolution, there is an interaction  of the degrees of freedom. Although sub-quantum molecules labeled by $1$ expend most of their time near the location $1$, they also expend time at $2$ and in between $1$ and $2$, because of the ergodic properties of the $U_t$ dynamics.

 The pre-quantum state associated with $A\sqcup B$ is defined to be the product state
 \begin{align*}
\Psi_{A\otimes B}=\Psi_A\otimes \Psi_B,
\end{align*}
The product state is re-casted as
 \begin{align*}
&\Psi_{A\otimes B}(u)= \,\frac{1}{\sqrt{N_A}}\,\frac{1}{\sqrt{N_B}}\sum^{N_A}_{k=1}\sum^{N_B}_{l=1}
\Big(n_{1k}(\varphi^{-1}_k(x),z_k)\,\exp(\imath \,\vartheta_{1k}(\varphi^{-1}_k(x),z_k))\\
&+\,e^{\imath\theta_A}\,n_{2k}(\varphi^{-1}_{k}(x),z_k)\,\exp(\imath \,\vartheta_{2k}(\varphi^{-1}_k(x),z_k))\Big)\\
 &\Big(n_{1l}(\varphi^{-1}_l(x),z_l)\,\exp(\imath \,\vartheta_{1l}(\varphi^{-1}_l(x),z_l)) +\,e^{\imath\theta_B}\,n_{2l}(\varphi^{-1}_{l}(x),z_l)\,\exp(\imath \,\vartheta_{2l}(\varphi^{-1}_l(x),z_l))\Big)\\
 & \cdot |\varphi^{-1}_{k}(x),z_k\rangle\otimes|\varphi^{-1}_{l}(x),z_l\rangle
\end{align*}
The corresponding quantum state, obtained by averaging the predecessor state $\Psi_{A\otimes B}$ is
\begin{align*}
&\psi_{AB}(x)= \,\frac{1}{\sqrt{N_A}}\,\frac{1}{\sqrt{N_B}}\,\sum^{N_A}_{k=1}\sum^{N_B}_{l=1}  \int_{\varphi^{-1}_{k}*(T_x{\bf M}_4)}\,d^4z_k \, \int_{\varphi^{-1}_{l*}(T_x{\bf M}_4)}\,d^4z_l\\
&\Big(n_{1k}(\varphi^{-1}_k(x),z_k)\,\exp(\imath \,\vartheta_{1k}(\varphi^{-1}_k(x),z_k))+\,e^{\imath\theta_A}n_{2k}(\varphi^{-1}_{k}(x),z_k)\,\exp(\imath \,\vartheta_{2k}(\varphi^{-1}_k(x),z_k))\Big)\\
 &\Big(n_{1l}(\varphi^{-1}_l(x),z_l)\,\exp(\imath \,\vartheta_{1l}(\varphi^{-1}_l(x),z_l)) +\,e^{\imath\theta_B}\,n_{2l}(\varphi^{-1}_{l}(x),z_l)\,\exp(\imath \,\vartheta_{2l}(\varphi^{-1}_l(x),z_l))\Big)\\
 & \cdot |\varphi^{-1}_{k}(x),z_k\rangle\otimes|\varphi^{-1}_{l}(x),z_l\rangle
\end{align*}

 We are interested in discussing conditions under which the predecessor state $\Psi_{A\otimes B}$ has associated a quantum state $\psi_{AB}$ of the form
 \begin{align}
 \psi_{AB}(x)=\,\psi_{1A}(x)\otimes\,\psi_{2B}(x)+\,e^{\imath\theta}\,\psi_{2A}(x)\otimes\,\psi_{1B}(x),
 \label{entanglestate}
 \end{align}
 where $\theta$ is a given relative phase and the states are conveniently normalised such that $|\psi_{AB}|=1$.
 Thus although the predecessor state is a product, the quantum effective state is an entangled state. This condition reads in full generality as
  \begin{align}\label{entangledconditions}
&\nonumber\sum^{N_A}_{k=1}\sum^{N_B}_{l=1}  \int_{\varphi^{-1}_{k}*(T_x{\bf M}_4)}\,d^4z_k \, \int_{\varphi^{-1}_{l*}(T_x{\bf M}_4)}\,d^4z_l\Big(n_{1k}(\varphi^{-1}_k(x),z_k)\,\exp(\imath \,\vartheta_{1k}(\varphi^{-1}_k(x),z_k))n_{1l}(\varphi^{-1}_{l}(x),z_l)\\
&\nonumber\exp(\imath \,\vartheta_{1l}(\varphi^{-1}_l(x),z_l)) + \,e^{\imath (\theta_A+\theta_B)}\,n_{2k}(\varphi^{-1}_k(x),z_k)\,\exp(\imath \,\vartheta_{2k}(\varphi^{-1}_k(x),z_k))n_{2l}(\varphi^{-1}_{l}(x),z_l)\cdot\\
&\nonumber\,\exp(\imath \,\vartheta_{2l}(\varphi^{-1}_l(x),z_l))\Big)|\varphi^{-1}_{k}(x),z_k\rangle\otimes|\varphi^{-1}_{l}(x),z_l\rangle=0 .\\
 \end{align}
 These conditions imply that, although the predecessor state is a product, there are interactions between the subquantum degrees of freedom of $A$ and the subquantum degrees of freedom of $B$. Therefore, the systems $A$ and $B$ are not separated in the Hamilton-Randers sense. Indeed, when the system $A\sqcup B$ has a common origin, the non-separability at the sub-quantum level could appear as a reasonable hypothesis, while when they do not have a common origin, it is reasonable to assume separability. However, a main difference between Hamilton-Randers theory and quantum mechanics is the degeneration of the non-separability with time, as it has been discussed above in {\it section} \ref{nature of the quantum correlations}.

If the condition \eqref{entangledconditions} holds, then $\psi_{AB}$ is of the form
\begin{align}
\psi_{AB}(x)\,=&\,\frac{1}{\sqrt{N_A}}\,\frac{1}{\sqrt{N_B}}\,\Big(\sum^{N_A}_{k=1} \int_{\varphi^{-1}_{k}*(T_x{\bf M}_4)}\,d^4z_k \,n_{1k}(\varphi^{-1}_k(x),z_k)\,\exp(\imath \,\vartheta_{1k}(\varphi^{-1}_k(x),z_k))|\varphi^{-1}_{k}(x),z_k\rangle\Big)\break\\
& \nonumber \otimes e^{\imath\,\theta_B}\,\Big(\sum^{N_B}_{l=1} \int_{\varphi^{-1}_{l*}(T_x{\bf M}_4)}\,d^4z_l\, \cdot n_{2l}(\varphi^{-1}_{l}(x),z_l)\,\exp(\imath \,\vartheta_{2l}(\varphi^{-1}_l(x),z_l))|\varphi^{-1}_{l}(x),z_l\rangle\Big)\break\\
 &\nonumber +\Big(\sum^{N_A}_{k=1} \int_{\varphi^{-1}_{k}*(T_x{\bf M}_4)}\,d^4z_k \,n_{2k}(\varphi^{-1}_k(x),z_k)\,\exp(\imath \,\vartheta_{2k}(\varphi^{-1}_k(x),z_k))|\varphi^{-1}_{k}(x),z_k\rangle\Big)\break\\
& \nonumber\otimes e^{\imath\,\theta_A}\,\Big(\sum^{N_B}_{l=1} \int_{\varphi^{-1}_{l*}(T_x{\bf M}_4)}\,d^4z_l\, \cdot n_{1l}(\varphi^{-1}_{l}(x),z_l)\,\exp(\imath \,\vartheta_{1l}(\varphi^{-1}_l(x),z_l))|\varphi^{-1}_{l}(x),z_l\rangle\Big),\\
\label{entangledstate2}
\end{align}
which is of the form  \eqref{entanglestate} modulo a global phase, if one defines the states:
\begin{align*}
\begin{cases}
&  \psi'_{1A}(x)=\,\sum^{N_A}_{k=1} \int_{\varphi^{-1}_{k}*(T_x{\bf M}_4)}\,d^4z_k \,n_{1k}(\varphi^{-1}_k(x),z_k)\,\exp(\imath \,\vartheta_{1k}(\varphi^{-1}_k(x),z_k))|\varphi^{-1}_{k}(x),z_k\rangle,\break\\
&\nonumber \psi'_{2A}(x)=\,\sum^{N_A}_{k=1} \int_{\varphi^{-1}_{k}*(T_x{\bf M}_4)}\,d^4z_k \,n_{2k}(\varphi^{-1}_k(x),z_k)\,\exp(\imath \,\vartheta_{2k}(\varphi^{-1}_k(x),z_k))|\varphi^{-1}_{k}(x),z_k\rangle,\break\\
&\nonumber \psi'_{1B}(x)=\,\sum^{N_B}_{l=1} \int_{\varphi^{-1}_{l*}(T_x{\bf M}_4)}\,d^4z_l\, \cdot n_{1l}(\varphi^{-1}_{l}(x),z_l)\,\exp(\imath \,\vartheta_{1l}(\varphi^{-1}_l(x),z_l))|\varphi^{-1}_{l}(x),z_l\rangle,\break \\
&\nonumber \psi'_{2B}(x)=\,\sum^{N_B}_{l=1} \int_{\varphi^{-1}_{l*}(T_x{\bf M}_4)}\,d^4z_l\, \cdot n_{2l}(\varphi^{-1}_{l}(x),z_l)\,\exp(\imath \,\vartheta_{2l}(\varphi^{-1}_l(x),z_l))|\varphi^{-1}_{l}(x),z_l\rangle .
\end{cases}
\end{align*}

We have not considered the norms of the states yet. However, let us  consider
\begin{align*}
\begin{cases}
&\psi_{1A}(x)=\frac{1}{\sqrt{N_{1A}}}\psi'_{1A}(x),\\
&\,\psi_{2A}(x)=\frac{1}{\sqrt{N_{2A}}}\psi'_{2A}(x),\\
&\,\psi_{1B}(x)=\frac{1}{\sqrt{N_{1B}}}\psi'_{1B}(x),\\
&\,\psi_{2B}(x)=\frac{1}{\sqrt{N_{2B}}}\psi'_{2B}(x),
\end{cases}
\end{align*}
By the normalization \eqref{normalization of emergent states}, each of these states is normalized to $1$.
Then the entangled state \eqref{entangledstate2} is re-cast as
\begin{align}
\psi_{AB}(x)=\,\frac{\sqrt{N_{1A}N_{2B}}}{\sqrt{N_{A}N_{B}}}\,e^{\imath\theta_B}\psi_{1A}\otimes \psi_{2B}+\,\frac{\sqrt{N_{1B}N_{2A}}}{\sqrt{N_{A}N_{B}}}\,e^{\imath\theta_A}\psi_{1B}\otimes \psi_{2A} .
\end{align}
Let us note that by properly defining the densities $n_{1k},\,n_{2k},\,n_{1l},\,n_{2l}$, the fractions $\frac{\sqrt{N_{1A}N_{2B}}}{\sqrt{N_{A}N_{B}}}$ and $\frac{\sqrt{N_{1_B}N_{2A}}}{\sqrt{N_{A}N_{B}}}$ are in the domain $[0,1]$.

 Therefore, we have the following result:
 \begin{teorema}
 Any entangled state of the form
 \begin{align}
 \psi(x)=\,\alpha\psi_{1A}\otimes\psi_{2B}+\,\beta\psi_{1B}\otimes \psi_{2A}
 \label{general entangled state}
\end{align}
such that $|\alpha|^2+\,|\beta |^2 =\,1$ is an emergent state from a predecessor state.
 \end{teorema}
 The modulus $|\alpha_1|$ and $|\alpha_2|$ can be different; they  can be modelled by the different partitions $(n_{1l},n_{2l}),\,(n_{1k},n_{2k})$.

The emergent character of pure entangled states as described above can be generalized to other entangled states and to systems with several particles and observers. Any emergent state can be obtained by an analogous emergent mechanism.

 \subsection{Consequence from the Hamilton-Randers theory of quantum entanglement}
 The above discussion suggests the following notion of {\it macroscopic separable system},
\begin{definicion}
Two Hamilton-Randers systems $A\sqcup B$ are {\it macroscopically separated} if the associated quantum state $\psi =\langle \Psi\rangle_t$ is a product state.
\end{definicion}
The notion beneath this definition is different from the notion of non-interacting Hamilton-Randers systems as formalized in {\it Definition} \ref{disjoint union of non-interaction} or in {\it Definition} \ref{non-interacting Hamilton-Randers systems}. This is because the heritage mechanism of separation from microscopic to macroscopic description discussed in {\it section}  \ref{nature of the quantum correlations},
\begin{proposicion}
If two systems are separated in the Hamilton-Randers sense, then they are macroscopically separated.
\end{proposicion}

Let us consider a quantum system composed by two particles spatially separating in opposite directions, but originally with the same origin. The situation that we should keep in mind is the Einstein-Podolsky-Rosen experiment. If the $U_t$ evolution of sub-quantum degrees of freedom provides the mechanism responsible for quantum correlations and entanglement, then a processes of the form
\begin{align}
\psi_{AB}\to \rho_{AB}
\label{universal process of long t evolution}
\end{align}
from entangled to macroscopic mixed state of product states, must happen generically for large enough $\tau$-time evolution. The suppression of the quantum correlations is caused by the sub-luminal kinematical constraint on the $U_t$ dynamics of the sub-quantum molecules and the preservation of the period of the cycles. Thus, according to Hamilton-Randers theory, the quantum correlations decay with $\tau$-time, as it was discussed in {\it section} \ref{nature of the quantum correlations}.

On the other hand, the microscopic mechanism for the emergence of the macroscopic separation is microscopic separation. This is because for large evolution time, on a system spacetime separating, the conditions \eqref{entangledconditions} are not satisfied.

The mechanism has the following realization. Let us consider two points $1$ and $2$ of the spacetime point. If $A$ and $B$ could spontaneously collapse to $1$ and $2$, then $\psi_A =\beta_{1A} \psi_1+\beta_{2A}\psi_2$, $\psi_B =\beta_{1B} \psi_1+\beta_{2B}\psi_2$. On the other hand, the states $A$ and $B$ cannot naturally spontaneously collapse necessarily in different points $1$ and $2$ of the spacetime in a consistent way with conservation laws. Thus for a product state in the microscopic sense of the form $\Psi_{A\otimes B}$, the state is of the form translated to a mixed state of the form
\begin{align*}
\rho_{AB}=\,\frac{1}{2}\,|\psi_{1A}\rangle\otimes \,|\psi_{2B}\rangle\langle \psi_{2B}|\otimes \langle \psi_{1A}|+\,\frac{1}{2}\,|\psi_{1B}\rangle|\otimes \,|\psi_{2A}\rangle\langle \psi_{2A}|\otimes \langle \psi_{1B} |.
\end{align*}

Let us highlight two qualitative properties of the mechanism described above:
\begin{itemize}

\item Individually, the particles follow a defined path close to $1$ or close to $2$, but there are no oscillations between $1$ and $2$.

\item Note that the transition \eqref{universal process of long t evolution} happens suddenly, at most, in a $t$-time period of evolution.

\end{itemize}

 \subsection{Expectation values of operators when the quantum state is an entangled state}
Given two points $x_1$ and $x_2$ of the spacetime, let us consider the Hilbert spaces
\begin{align*}
\begin{cases}
& \mathcal{H}_1:=\,\langle |p_1\rangle, \,\imath \frac{\partial}{\partial x}|_{x=x_1}\,|p_1\rangle =\,p_1|p_1\rangle \,\rangle \\
& \mathcal{H}_2:=\, \langle |p_2\rangle, \,\imath \frac{\partial}{\partial x}|_{x=x_1}\,|p_2\rangle =\,p_2|p_2\rangle \,\rangle
\end{cases}
\end{align*}
Assuming that $\mathcal{H}_A\otimes \mathcal{H}_B \cong \,\mathcal{H}_1\otimes \mathcal{H}_2$, we can consider the expectation value of product of linear endomorphisms $O_1\otimes O_2:\mathcal{H}_1\otimes \mathcal{H}_2\to \mathcal{H}_1\otimes \mathcal{H}_2$ viewed as an endomorphism  $O_1\otimes O_2 :\mathcal{H}_A\otimes\mathcal{H}_B\to \,\mathcal{H}_A\otimes \mathcal{H}_B$.
On the other hand, the linear  operators $O_1$ and $O_2$ have associated linear operators $\bf{O}_1:\mathcal{H}_1\to\mathcal{H}_1$, ${\bf O}_2:\mathcal{H}_2\to\mathcal{H}_2$ determined by the relations
\begin{align*}
\begin{cases}
&\langle |O_1|\psi_A\rangle =\, \langle \Psi_A|{\bf O}_1|\Psi_A\rangle \\
&\langle \psi_B|O_2|\psi_B\rangle =\, \langle \Psi_B|{\bf O}_2|\Psi_B\rangle .
\end{cases}
\end{align*}
Applying the orthogonality conditions \eqref{entangledconditions} anf the highly oscillating conditions \eqref{highlyoscillatorycondition1}, the expectation value of the operator ${O}_1\otimes {O}_2$ is of the form
 \begin{align}
 \label{average value of product operators and entanglement}
 \langle \psi_{AB}|{ O}_1\otimes {O}_2 |\psi_{AB}\rangle & =\sum^{N_A}_{k=1}\sum^{N_B}_{l=1}  \int_{\varphi^{-1}_{k}*(T_x{\bf M}_4)}\,d^4z_k \, \int_{\varphi^{-1}_{l*}(T_x{\bf M}_4)}\,d^4z_l\,n^2_{1k}(\varphi^{-1}_k(x),z_k)\cdot\\
 \nonumber & \cdot n^2_{2l}(\varphi^{-1}_l(x),z_l)\langle\varphi^{-1}_{k}(x),z_k|{\bf O}_1|\varphi^{-1}_{k}(x),z_k\rangle\,\langle
 \varphi^{-1}_{l}(x),z_l|{\bf O}_2|\varphi^{-1}_{l}(x),z_l\rangle +\\
 \nonumber & e^{\imath\theta}\,\sum^{N_A}_{k=1}\sum^{N_B}_{l=1}  \int_{\varphi^{-1}_{k}*(T_x{\bf M}_4)}\,d^4z_k \, \int_{\varphi^{-1}_{l*}(T_x{\bf M}_4)}\,d^4z_l\,n^2_{2k}(\varphi^{-1}_k(x),z_k)\cdot\\
 \nonumber & \cdot n ^2_{1l}(\varphi^{-1}_l(x),z_l)\langle\varphi^{-1}_{k}(x),z_k|{\bf O}_2|\varphi^{-1}_{k}(x),z_k\rangle\,\langle
 \varphi^{-1}_{l}(x),z_l|{\bf O}_1|\varphi^{-1}_{l}(x),z_l\rangle .
 \end{align}

Let us consider the expectation value of the product of two spin operators $\vec{\sigma}\cdot\vec{u}_1$, $\vec{\sigma}\cdot\vec{u}_2$ associated with spin measurements of the particles $A$ and $B$. Let $\vec{\Sigma}\cdot \vec{u}_1:\mathcal{H}_{Fun}(A)\to \mathcal{H}_{Fun}(B)$ and $\vec{\Sigma}\cdot\vec{u}_2:\mathcal{H}_{Fun}(B)\to\mathcal{H}_{Fun}(B)$ be the microscopic operators associated with $\vec{\sigma}\cdot\vec{u}_1$, $\vec{\sigma}\cdot\vec{u}_2$ respectively. These sub-quantum linear operators are determined by the relations
\begin{align*}
\begin{cases}
&\langle \psi_A|\vec{\sigma}\cdot\vec{u}_1|\psi_A\rangle =\, \langle \Psi_A|\vec{\Sigma}\cdot\vec{u}_1|\Psi_A\rangle \\
&\langle \psi_B|\vec{\sigma}\cdot\vec{u}_2|\psi_B\rangle =\, \langle \Psi_B|\vec{\Sigma}\cdot\vec{u}_2|\Psi_B\rangle
\end{cases}
\end{align*}
We observe that the choice by a macroscopic observer of the different directions of measuring spin has its direct imprint on the form of the sub-quantum operators  $\vec{\Sigma}\cdot\vec{u}_1$ and $\vec{\Sigma}\cdot\vec{u}_1$.

 The expectation value for the product of the operators $\vec{\sigma}\cdot\vec{u}_1$ and $\vec{\sigma}\cdot\vec{u}_2$ is of the form
 \begin{align}
 \nonumber\langle \vec{\sigma}\cdot{\vec{u}_1}\otimes\vec{\sigma}\cdot\vec{u}_2&\rangle_{\psi_{AB}}
 =\,\int_{{\bf M}_4}\,d^4x\,\sum^{N_A}_{k=1}\sum^{N_B}_{l=1}\int_{\varphi^{-1}_{k}*(T_x{\bf M}_4)}\,d^4z_k \, \int_{\varphi^{-1}_{l*}(T_x{\bf M}_4)}\,d^4z_l\break\\
 &\nonumber \Big(n^2_{1k}(\varphi^{-1}_k(x),z_k)\,n ^2_{2l}(\varphi^{-1}_l(x),z_l)+e^{\imath\theta}\,n^2_{2k}(\varphi^{-1}_k(x),z_k)\,n ^2_{1l}(\varphi^{-1}_l(x),z_l)\Big)\cdot\break\\
 &\cdot \langle\varphi^{-1}_{k}(x),z_k|\vec{\Sigma}\cdot\vec{u}_1|\varphi^{-1}_{k}(x),z_k\rangle\,
 \langle\varphi^{-1}_{l}(x),z_l|\vec{\Sigma}\cdot\vec{u}_2|\varphi^{-1}_{l}(x),z_l\rangle.
 \label{expectationvalueofentangledstate}
 \end{align}
  There are constraints between the variables appearing in \eqref{expectationvalueofentangledstate}, since they must be subjected to the entanglement conditions \eqref{entangledconditions}. This implies a generic expression for the expectation values of the form
  \begin{align} \label{contextuality}
 \langle \vec{\sigma}\cdot{\vec{u}_1}\otimes\vec{\sigma}\cdot\vec{u}_2\rangle_{\psi_{AB}}=\,&\int_{{\bf M}_4}\,d^4x\,\sum^{N_A}_{k=1}
 \sum^{N_B}_{l=1}\int_{\varphi^{-1}_{k}*(T_x{\bf M}_4)}\,d^4z_k \, \int_{\varphi^{-1}_{l*}(T_x{\bf M}_4)}\,d^4z_l\break\\
&\nonumber f(n_{1k},n_{1l},n_{2l},\vartheta_{1k},\vartheta_{1l},\vartheta_{2k},\vartheta_{2l},\theta)\cdot\break\\
 & \nonumber \cdot \langle\varphi^{-1}_{k}(x),z_k|\vec{\Sigma}\cdot\vec{u}_1|\varphi^{-1}_{k}(x),z_k\rangle\,
 \langle\varphi^{-1}_{l}(x),z_l|\vec{\Sigma}\cdot\vec{u}_2|\varphi^{-1}_{l}(x),z_l\rangle,
 \end{align}
 where the variables $n^2_{2k}$ has been substituted by the rest of variables.

\subsection{On the quantum statistics of emergent states}
The formal structure \eqref{contextuality} of the expectation value of observables in Hamilton-Randers theory as given by relations of the type \eqref{expectationvalueofentangledstate} differs in several ways from the expression for the expectation valued used as a fundamental equation in the derivation of Bell's inequalities  in quantum mechanics \cite{Bell1964}. Let us recall that Bell's theory is based upon the assumption that the expectation value of the product $\vec{\sigma}\cdot{\vec{u}_1}\otimes\vec{\sigma}\cdot\vec{u}_2$ takes the form
\begin{align}
P(\vec{u}_1,\vec{u}_2)=\int \,d\lambda\,\rho(\lambda)\,A_1(\vec{u}_1,\lambda)\,B_2(\vec{u}_2,\lambda),
\label{bellprobabilitydistribution}
\end{align}
where $\lambda$ labels the hidden variables, $A$ and $B$ are the possible values of the respective observables and $\rho$ is the density distribution of the hidden variables. One can consider more general models as the ones discussed in \cite{Bell Introduction 1971}, but there is no change in the argument.

 Let us apply Bell's model \eqref{bellprobabilitydistribution} to a Hamilton-Randers model describing a quantum system composed of two entangled particles at $1$ and $2$. First, it is natural to assume that there are $N=N_A+N_B$ sub-quantum degrees of freedom for the whole system. Second, the result of the values of the observables is only determined by the $U_t$-evolution of the sub-quantum degrees of freedom accordingly with the ergodic theorem as discussed in Chapter \ref{chapter of the Hilbert space structure}. If Hamilton-Randers models is to be interpreted as a hidden variable model according to Bell's model, then the expression  \eqref{bellprobabilitydistribution} must apply. In this case, the expectation value must be of the form
 \begin{align}\label{bellHamiltonRanders}
 \langle \vec{\sigma}\cdot{\vec{u}_1}\,\otimes\vec{\sigma}\cdot\vec{u}_2\rangle_{\psi_{AB}}=\,\int_{{\bf M}_4}\,d^4x\,\sum^N_{k=1}\,
 \int_{\varphi^{-1}_{k}*(T_x{\bf M}_4)}\,d^4z_k\,n^2_k(\varphi^{-1}(x),z_k)&\,A(\vec{\sigma}\cdot{\vec{u}_1},(\varphi^{-1}(x),z_k))\cdot\\
 \break\nonumber
& \cdot B(\vec{\sigma}\cdot\vec{u}_2,(\varphi^{-1}(x),z_k)).
 \end{align}
But the expressions \eqref{contextuality} and \eqref{bellHamiltonRanders} are not compatible. We conclude that Hamilton-Randers models do not satisfy Bell's assumptions \eqref{bellprobabilitydistribution}.

The theory of quantum correlations developed in the preceding paragraphs overcomes the assumptions of  Bell's theorem. The essence of the mechanism is the interacting during the ergodic regime of the $U_t$ dynamics. In practical terms, Bell's theory does not apply once the defining property for the expectation value of product of  quantum mechanical operators, namely, the expression \eqref{bellprobabilitydistribution} in Bell's theory, does not hold for Hamilton-Randers models.
The surprising aspect of the expression \eqref{expectationvalueofentangledstate} is the appearance of fourth degree factors on $n_k$, when one should expect quadratic terms. This indicates that there is a different term to $\rho(\lambda)$ in the expression for the probability \eqref{bellHamiltonRanders}. Specifically, the term
\begin{align*}
 n^2_{1k}(\varphi^{-1}_k(x),z_k)\,n ^2_{2l}(\varphi^{-1}_l(x),z_l)+n^2_{2k}(\varphi^{-1}_k(x),z_k)\,n ^2_{1l}(\varphi^{-1}_l(x),z_l)
 \end{align*}
does not corresponds to a {\it classical density term}, which should be instead a quadratic term in $n$.

The theory developed implies the strong result,
\begin{teorema}
Hamilton-Randers theory  implies for the expectation values of the observables  $\langle \vec{\sigma}\cdot \vec{u}_1\otimes \vec{\sigma}\cdot \vec{u}_2\rangle_{\psi_{AB}}$ the quantum mechanical values.
\end{teorema}
\begin{proof}
This is because the state $\psi_{AB}$ is an emergent state from a Hamilton-Randers system.
\end{proof}

 \subsection{Emergence of contextuality from locality at the sub-quantum level}
In a nutshell, the mechanism to avoid the conclusions of Kochen-Specker theorem are the same than the one that permits to avoid the conclusions of Bell's theorem: a deep, large non-locality in spacetime due to a very complex dynamics at the sub-quantum level and the two dimensional character of time and the respective evolution laws.

 \newpage

 \section{\LARGE{Remarks on the notion of time in Hamilton-Randers theory}}\label{Chapter On the notion of time}
 \bigskip
 \bigskip
 We have discussed several aspects of the notion of time in the above {\it chapters}, centered around the notion of emergence and two-dimensional character of time. Due to the relevance of the notion of time for our theory, in the present {\it chapter} we shall consider several aspects and thoughts about these notions.
 \subsection{On the notion of two-dimensional time in Hamilton-Randers theory}
 We have discussed, starting in {\it Chapter} \ref{chapter on classical dynamics Hamilton Randers}, the two-dimensional character of time in Hamilton-Randers theory.
One can be inclined to think that the two-dimensional time theory as it appears in our theory is analogous to the one found in classical dynamical systems in connections with averaging methods in classical mechanics \cite{Arnold}. Indeed, it is notable the formal resemblance between the dynamical system with the $(U_t,U_\tau)$ evolution along the two time parameters $(t,\tau)$ and  fast/slow classical dynamical systems. Moreover, G.'t  Hooft and others have suggested interpretations of quantum mechanics in terms of fast/slow variables in the vein as they appear in classical mechanics \cite{Hooft2021}. In such classical dynamical systems there are two time scales for the scale of change of the dynamical degrees of freedom: there are {\it slow degrees} of motion and {\it fast degrees of motion}. The possibility  to identify in Hamilton-Randers models the fast degrees of motion with the sub-quantum molecules and the slow degrees of motions with the effective quantum densities and wave functions described by elements of a Hilbert space $\mathcal{H}$, which are determined by $t$-time averages and by means of an assumed extension of the ergodic theorem, by velocity averages) of the fast dynamical degrees of freedom appears as a fundamental insight of our theory, appears rather natural.

However, a closer inspection refrains us to make a complete identification between Hamilton-Randers systems and fast/slow classical dynamical systems. First, in a classical fast/slow dynamics, there is a one-to-one mapping between the domain of the $t$-time and the domain of the $\tau$-time. In contrast, in Hamilton-Randers theory  such bijection between $t$-time parameter and $\tau$-time parameter fails, since the values of $\tau$ correspond to a measure zero set of the domain $t$-time parameter. From their interpretaion as emergent parameters, the values of the $\tau$-time attach as a discrete set of values corresponding to an ordering of the different fundamental cycles of the system, namely we have adapted the identification
\begin{align*}
\tau \leftrightarrow 2\,n\,T,\quad n\in\,\mathbb{Z},
\end{align*}
where $T$ is the semi-period of the fundamental cycles of the $U_t$ dynamics.

Due to the impossibility to trace the details of the sub-quantum dynamics, the $\tau$-parameters need to be considered as independent from the $t$-parameter in the mathematical description of each Hamilton-Randers system. However, although the $\tau$-parameters have an origin in {\it counting cycles}, the way that the parameters appear in the quantum description makes them independent from its emergent origin.

These arguments reinforce the two-dimensional character of time in Hamilton-Randers theory, a genuine aspect of Hamilton-Randers theory, although of a rather peculiar {\it two-dimensionality}, since there is no association with a geometric spacetime representation. It is as if one of these time dimensions, the $t$-time, {\it lives inside} the standard notion of $\tau$-time parameter.  Strongly related with this conclusion is the emergent character of external $\tau$-time parameters in Hamilton-Randers theory.

\subsection{Models for the spacetime with a $2$-dimensional time}
The fundamental discreteness of the $U_t$ and the $U_\tau$ dynamics suggests that the configuration manifold or configuration space $M$ must also be discrete. Models for such discrete spacetimes could be locally described as an open domain of the lattice $\mathbb{Z}^5$. In this case, each space $M^k_4$ is locally homeomorphic to the lattice $\mathbb{Z}^4$ or to a subset of it. From this bare, discrete description, several further assumptions can be adopted.
One direct implication of {\it proposition} \ref{proposicion of final Hamiltonian} is that the  spacetime arena where the sub-quantum mechanical evolution takes place is a topological space $M_4\times \,\mathbb{Z}$, where each point is $(x,t)\in \,M_4\times \mathbb{Z}$. Second, if the continuous approximation is adopted, the {topological space} $\mathbb{R}\times \mathbb{Z}$  associated with the parameter time $(t,\tau)$  is replaced by the {\it time manifold} $\mathbb{R}\times \mathbb{R}$.  Therefore, the spacetime that emerges in this description is a smooth manifold (indeed, a topological manifold that we assume to be smooth) which is locally  a foliation,
   \begin{align*}
   M_5\simeq M_3\times \mathbb{R}^2.
   \end{align*}
Other models could be of the form $ M_5\simeq M_3\times \mathbb{C} $ or  $M_5\simeq M_3\times \mathbb{P}_2$, where $\mathbb{P}_2$ is here a paracomplex algebra of dimension $2$.

Generically, this relation only holds locally, that is, for small enough neighborhoods $U\subset \,M_5$, $U_3\subset M_3$.
Each of the slits
\begin{align*}
\{t=\,(2n+1)T,\,n\in\,\mathbb{Z}\}
 \end{align*}
 defines a four-dimensional submanifold $M_4\hookrightarrow M_5$. This is not tbe associated with spacetime manifold associated to observers in general.

There are in the literature theories where time is two dimensional.
Let us mention the {\it two-time physics} framework developed by I. Bars, where the new dimension of time has accompanied a compact space extra-dimension \cite{Bars2001}. In order to show the differences between Bars's notion and our notion of two-dimensional time, let us make specific here that in Bars's theory the metric structure is defined in an extended $6$-dimensional manifold and has signature $(-1,-1,1,1,1,1)$.  In contrast, in our case, the metric structure is defined in the $4$-manifold $M_4$ and has Lorentzian signature: there is no extension of the spacetime structure to incorporate the $t$-time direction in the form of an extended metric structure. In contrast, our notion of two-dimensional time refers more to a {\it time ($\tau$ type parameter) {\it inside} a time ($t$-time parameter)}, rather than a geometric two-dimensional time or a higher order geometric time.

Another geometric theory where the geometric structure is extended to contain an additional time is the theory developed by Ragun\'i \cite{Raguni}, a theory that bears certain similarity with ours in what regards to the aim of explaining the quantum non-local phenomenology as an effect partially due to the incomplete description of time in quantum mechanics, but which is fundamentally different.

\subsubsection{Continuous models of the $\tau$-time parameters}
Due to the form in which the $\tau$-time parameters are defined in Hamilton-Randers theory, they necessarily attain discrete values on discrete domains of definitions.  Both, the degrees of freedom and dynamics of the sub-quantum degrees of freedom in Hamilton-Randers theory are attached to discrete variables, as well as the values of the $\tau$-time parameters. Hence discreteness is inherent in our models in a fundamental and natural way. However, it is useful to treat the $\tau$-time parameter as continuous.  This is the aptitude advocated in the present work, where we describe external time parameters, corresponding to the $\tau$-time parameters, as real valued  $\tau\in\,I\subset\, \mathbb{R}$. The rationale beneath this approach is that the phenomena associated with quantum or classical clocks involve large numbers of fundamental cycles. Thus it is possible to define the notion of infinitesimal time $\delta \tau$ in the theory as an approximation or model. Infinitesimal time lapses correspond to sub-quantum transitions involving very short semi-periods compared with the periods associated with quantum systems. This approximation has certainly a limitation, and the discrete nature of time parameters should be observable if enough accuracy in the determination of time lapses is achievable.

\subsection{Consequences of the emergent character of the $\tau$-time}\label{emergenceoftime}
According to Hamilton-Randers theory, $\tau$-time parameters are emergent from the $U_t$ dynamics. There are several far reaching consequences of this thesis, that we shall consider in the following paragraphs.

The first one is the discreteness of the $\tau$-time, while the use of continuous parameters must be considered as an approximation.

The second consequence of the theory is the emergent nature of any external $\tau$-time parameter. Every $\tau$-time parameters appear as determined by the $U_t$ flow happening at the fundamental scale and in particular, as a consequence of the dynamics of the sub-quantum degrees of freedom.
This emergent character of the $\tau$-time is not in contradiction with the requirement of {\it time reparametrization invariance} and, when adopting continuous models of $\tau$-time parameters, it can be make consistent with the requirement of general covariance, since our interpretation of the $\tau$-time parameters applies to any parameter associated with a physical clock and does not introduce a preferred notion notion of $\tau$-time in our theory.

Direct consequences of the emergent character of the $\tau$-time in the framework of Hamilton-Randers theory is the irreversibility character of the dynamics from several points of view. This is imprinted in several different ways:
\begin{itemize}
\item Hamilton-Randers theory implies that the underlying dynamics beneath quantum mechanics must be deterministic but irreversible.
 This conclusion is supported by the emergent character of $\tau$-time as discussed in {\it Chapter} \ref{chapter on classical dynamics Hamilton Randers} and the physical impossibility to control the details of the $U_t$ dynamics by means of quantum or classical systems. By this we mean the impossibility of controlling the details of the initial conditions of the fundamental sub-quantum degrees of freedom. Since the $U_\tau$ evolution is applied to the mathematical description involving a larger scale than the $U_t$ dynamics, it is not possible to control the fundamental $U_t$ flow by means of systems described by the $U_\tau$ dynamics, due to the complexity of the dynamics of sub-quantum degrees of freedom and the impossibility to control the $U_t$ flow using quantum interactions. This implies the existence of an irreversible $\tau$-time evolution for any physical system, including the case of the whole universe.

\item The irreversibility is not limited to the $U_t$ dynamics for the above reasons. Indeed, the fundamental dynamics is also irreversible, as it can be established from the underlying Randers type structure of the dynamics.

\item Hamilton-Randers theory implies the impossibility of {\it travel back} in the $\tau$-time for any quantum or classical degree of freedom. This is because the way  the $U_t$ flow defines emergent $\tau$-time parameters.  Any consistent model with Hamilton-Randers theory should be consistent with this emergent character of time. This is indeed a form of {\it chronological protection} \cite{Hawking1992}, that when applied to macroscopic spacetimes, should impose the absence of closed timelike curves. The particular realization of the conjecture in Hamilton-Randers theory needs to be fleshed in terms of gravitational models with a maximal acceleration relativistic, generalizations of general relativity, but the essential argument is the intrinsic irreversibility of the fundamental dynamics.

\end{itemize}

These consequences of the general formalism of Hamilton-Randers theory are falsifiable. For instance, one could look for evidence of travel back on time for matter. Our theory insists that such a possibility is not possible, because statistical considerations. Thus any evidence from travel back on $\tau$-time evolution will falsified our theory, besies of being a revolutionary fact.

\subsection{Conclusion and discussion} The notion of two-dimensional makes its appearence in several instance in the field of theoretical physics. It was introduced and it is still under intense development in the contest of string theory and super-symmetry by I. Bars and collaborators \cite{Bars2001}. In a different contest, pseudo-Riemannian manifolds with more than one timelike eigenvalues were investigated in \cite{CraigWeinstein}, in particular the associated Cauchy problem for wave functions in geometries with multiple dimensions of time. Also a geometrical construction, but linked with the nature of quantum correlations is \cite{Raguni}

However, the notion developed by these authors is radically different from our notion of two-dimensional time. The main difference relies on the emergent character of the two-dimensional time in Hamilton-Randers theory, compared with the geometric character of time of these mentioned theories. The emergence is such that it is not imprinted in the spacetime structure. Said this, the nature of time is not geometric: t-time is just the parameter of the fundamental dynamics and it is not geometric, while $\tau$-time or macroscopic time parameter has their origin on the fundamental dynamics, while the appearance of a consistent macroscopic dynamics, makes necessary the introduction of the property of $\tau$-time reparametrization invariance, that is implemented by means of spacetime general covariance.

The cyclic nature of the fundamental interaction appeared first in the preliminary versions of our theory and also, in the interpretation of quantum mechanics advocated by D. Dolce in \cite{Dolce} and subsequent works. Despite this similarity, Dolce's theory and the theory advocated by the author are rather different. In Dolce's theory the notion of time is the usual one as it appears in current physical models, but where physical fields are subjected to cyclic boundary conditions with the periods cycle given by the associated inverse Compton frequency, while in Hamilton-Randers theory the fundamental almost-cyclic evolution is parameterized by a different time parameters than usually and the relation between the period and the mass of the system is exponential \cite{Ricardo2014,Ricardo2023}.

Finally, our theory of time has as a consequence the fundamental irreversibility of the dynamics. This is in the context of a deterministic, local fundamental theory, which makes the approach radically different from Prigogine's one, although the common use in both theories of Koopman-von Neumann theory \cite{Prigrogine 2019}.

\newpage

\section{\LARGE{Hamilton-Randers models and number theory}}\label{chapter on Hamilton-Randers theory and number theory}
\bigskip
\bigskip

\subsection{On the relation between fundamental periods and primes}
The discussion in chapter \ref{chapter on classical dynamics Hamilton Randers} suggested a simple  relation between the series of prime numbers $\{p_i,\,i=1,2,...\}$ and fundamental semi-periods of non-interacting Hamilton-Randers systems. One possibility for such a relation is
\begin{align}
i\mapsto \frac{T_i}{T_{min}}=\,p_i,\,i=1,2,...
\end{align}
where $\mathcal{P}:=\,\{p_1,p_2,p_3,...\}$ is the increasing sequence of prime numbers. For identical physical systems $a$ and $a'$ we can associate the same prime, $p_a=\,p_{a'}$, if $a$ and $a'$ are identical particles. For compose systems $a\cup a'$, the product rule is adopted. Another possibility is to associate meta-stable quantum systems and pseudo-particle systems with sets of close prime, like twin prime pairs, for example.

 Following the above lines, the collection of fundamental semi-periods is associated with the collection of prime numbers. Let us assume first that the minimal semi-period must be, in convenient units of the $t$-parameter,
 \begin{align}
 T_{min}\equiv p_1 =2.
 \end{align}
  This semi-period corresponds to the smallest possible Hamilton-Randers systems, that have inertial mass parameter zero, the relation \eqref{definitionofinertialmass}.
The associated mass spectrum of fundamental particles is related with the series of prime numbers by
\begin{align*}
p_i\mapsto \frac{\hbar}{2\,c^2}\,\log \left(p_i\right),\quad i=1,2,3,...
\end{align*}
The {\it spectrum} depends upon the value chosen for $T_{min}\equiv p_1=2$.
º
There is no evidence that such a spectrum is not realized in Nature. Also, there is the possibility that $T_{min}$ is reached for another value.
If one picks up a different, $T_{min}$ then one has a different spectrum. It could be possible that there is a sequence of prime numbers such that the spectrum than in the Standard Model of elementary particles. By this we mean that the masses corresponding to the ratios
\begin{align}
\frac{m_i}{m_{i+1}}=\,\frac{\log\left(\tilde{p}_i\right)}{\log\left(\tilde{p}_{i+1}\right)},\quad i=1,2,3,...
\label{spectrum for particles 2}
\end{align}
should be in relation with the corresponding ratios of the Standard Model masses for the first values of the primes sequence $\{\tilde{p}_i\}$, with $\tilde{p}_i$ running in the family $\{\tilde{p}_i\}$.  We make the conjecture that such sequence of primes exists.
Such sequence can be conceived that happens for large prime numbers.
 There can also be some elements in $P'$ have associated twin prime numbers.

The relation \eqref{spectrum for particles 2} can be re-written as
\begin{align*}
\frac{m_1}{m_{i}}=\,\frac{\log\left(\tilde{p}_1\right)}{\log\left(\tilde{p}_{i}\right)},\,i=1,2,3,...
\end{align*}
 $m_1$ is identified the lowest positive mass of the spectrum of the Standard Model.

The number of elementary particles with associated semi-period $T_a$ is bounded as a consequence of the  {\it number prime theorem} \cite{HardyWright}. Asymptotically, the number of primes less than a given number $x\in \,\mathbb{R}$ is
  \begin{align}
  \sum_{p<x} 1\sim\,Li(x),
  \label{numberofstates}
  \end{align}
  where the logarithmic integral is
  \begin{align}
  Li(x)=\,\int^x_2\,\frac{dt}{\log t}.
  \end{align}

\subsection{The Hamilton-Randers Hamiltonian and the Hilbert-P\'olya conjecture}
 There exists a formal resemblance between Hamilton-Randers systems and what in the literature are called $xp$-models. Such a models are of relevance as a possible realization of the Hilbert-P\'olya conjecture on prime numbers (see for instance \cite{Berry, BerryKeating, Connes, Sierra2014}). In such approaches, the Hamiltonian is $1$-dimensional and it is schematically of the form
   \begin{align}
   H_{xp}=xp+\,V(x,p),
   \label{xp Hamiltonian}
   \end{align}
   where $V$ is a potential. There are several constraints on the properties of the Hamiltonian $H$ that comes from the analogy of the spectral and large asymptotic properties of $H_{xp}$ with the properties of the Riemann dynamics \cite{BerryKeating, Connes}, analogies which are on the basis for the quantum mechanical and functional approach to the Riemann hypothesis. As a matter of fact, the average number of non-trivial zeroes of the $\zeta$ Riemann function and the expression for the partial fluctuations in the number of zeroes of $\zeta$ are formally the same than the counting number of eigenvalues of $H_{xp}$ and asymptotic of the fluctuation of the zeroes of $H_{xp}$.

We can  compare this Hamiltonian  $H_{xp}$ with the Hamiltonian of a Hamilton-Randers dynamical system,
  \begin{align*}
  H=\,\sum^N_{k=1}\,y^k\,p_{xk}+\,\beta_y(x,y)^k\,p_{yk},
  \end{align*}
  which determines the $U_t$ dynamics.
  Under the approximate hyperbolic motion,
  \begin{align*}
  y^k =\,\dot{x}^k\approx x^k,
   \end{align*}
the Hamilton-Randers Hamiltonian is
  \begin{align*}
  H\approx\,\sum^N_{k=1}\,x^k\,p_{xk}+\,\beta_y(x,x)^k\,p_{xk},
  \end{align*}
showing that the Hamilton-Randers Hamiltonian appears as a $N$-copy generalization of the $xp$-Hamiltonian \eqref{xp Hamiltonian}, but corrected  with additional interaction terms from the terms $\beta_y(x,x)^k\,p_{xk}$. These interacting terms may play a significant role in solving many of the current difficulties of the $xp$-models. For instance, it is known that the correct $xp$-Hamiltonian cannot be $1$-dimensional hyperbolic. Comparison between chaotic dynamics (that is, Hamiltonian dynamical systems with bounded, unstable trajectories) and fluctuaction formulae for the zeroes of the Riemann zeta function implies that the Hamiltonian of the Hilbert-P\'olya conjecture must be {\it chaotic}. These are properties that the classical Hamiltonian \eqref{xp Hamiltonian} must have. These dynamical properties could be achieved by interacting terms as the proposed here.

The above discussion suggests that the dynamics of Hamilton-Randers systems is related with the fundamental laws of arithmetics and in particular with the structure of the distribution of prime numbers.

\newpage

\section{\LARGE{Discussion and open problems in Hamilton-Randers theory}}\label{Chapter on Discussion}
\bigskip
\bigskip
In this {\it chapter} we discuss several aspects of Hamilton-Randers theory in relation with other theories of emergent quantum mechanics and emergent gravity. We briefly resume some of the, although qualitative, falsifiable predictions of our theory and indicate open problems and further developments.

\subsection{Comparing Hamilton-Randers theory with other deterministic theories of quantum mechanics}
The theory presented in this work has several remarkable similarities with other approaches to emergent quantum mechanics. Let us start by considering several aspects of the approach developed in Cellular Automaton Interpretation by G. 't Hooft and by others (see for instance \cite{Hooft, Hooft2016}). Among the common points between the Cellular Automaton Interpretation as developed in the work of Hooft and Hamilton-Randers theory, are  the systematic use of Hilbert spaces theory for the description of classical systems, the introduction of a dissipative dynamics  in order to recover the notion of quantum state as an equivalence class of sub-quantum states  and the fact that the wave function is $\psi$-epistemic. Another point in common is the fundamental discreteness of the evolution law, although we have taken in this work a pragmatic approach to it and considered continuous models.

However, there are important differences between our approach and  cellular automaton approaches to emergent quantum mechanics. The mathematical formalisms used in both theories are different and also the notion of deterministic system attached to a quantum particle is different.

Another significant difference between our approach and cellular automaton is on the notion of time. In Hamilton-Randers theory it is described by  two parameters $(t,\tau)$, instead than the usual one parameters, with values in the product of partial ordered fields $\mathbb{K}$ (usually assumed to be the real field $\mathbb{R}$) and the set of integer numbers $\mathbb{Z}$. In the continuous limit the domain of the time parameters $(t,\tau)$ are open subsets $I\times \,\tilde{I}$ of the cartesian product $\mathbb{R}\times \mathbb{R}$. The further identification of the space of the time parameters with the field of complex numbers $\mathbb{C}$ imposes further constraints in the theory that we have not consider in this work.
Also, note that $\mathbb{R}\times \,\mathbb{R}$ cannot be made an ordered field. Thus strictly speaking, there is no  notion of global causation for Hamilton-Randers systems. The causal spacetime structures emerge with the projection $(t,\tau)\mapsto \tau$.

The parameters $t$ and $\tau$ are qualitatively different and irreducible to each other, which implies the  two-dimensional character of time and that the associated two dynamical evolution $(U_t,U_\tau)$ are irreducible to each other. Note that in the cellular automaton of Hooft's approach, unitarity is recovered at the level of equivalence classes. In contrast, in our description, where there is a  dissipative dynamics in the form of an internal $U_t$ flow which is assumed of a rather intricate structure, there is no preservation of the volume phase element for the $U_t$ flow, since the dynamics is driven by a time dependent Hamiltonian, but the $U_\tau$ evolution associated with the quantum mechanical evolution is unitary.

\subsection{Comparing Hamilton-Randers theory with de Broglie-Bohm theory}
 There are certain similarities between some properties present in Hamilton-Randers theory and in de Broglie-Bohm theory \cite{Bohm}. In Hamilton-Randers theory, the value of the observables of the system are well-defined before any measurement is done and are independent of the possible decisions that a particular observer makes. Indeed, in Hamilton-Randers theory, the natural spontaneous collapse processes  provides a natural description for the measurement  and provides  a deterministic and  local picture of the dynamics in the extended configuration space $TM$ and under the evolution determined by the double dynamics $(U_t,U_\tau)$. The quantum system is in a localized state prior to any measurement performed by a standard macroscopic observer, but it does not necessarily follows a  smooth world line in the spacetime ${\bf M}_4$.  In such localized state, all the possible classical observables have well definite values.

 Interestingly, that classical observables have definite values does not necessarily implies the existence of a classical smooth trajectory, because the intrinsic discrete character of the $\tau$-time. As our discussion of the double slit experiment reveals, transitions between separate paths can happen, as long as there is compatibility with conservation laws. Therefore, there are no Bohmian trajectories in Hamilton-Randers theory, in the sense that the succession of concentration domains does not necessarily defines a smooth curve in spacetime, although Bohmian trajectories could be associated with statistical averages when experiments are performed with ensembles of identical quantum particles.

In Hamilton-Randers theory is natural to interpret the wave function as a {\it presence of matter} during the non-concentrating phase of the $U_t$ dynamics. However, the matter refers to the sub-quantum degrees of freedom and the only extrapolation to the Standard Models degrees of freedom comes from the use of a form of ergodicity. Therefore, in Hamilton-Randers theory  the wave function has an epistemic interpretation, in sharp contrast with de Broglie-Bohm theory (at least in some interpretations, that includes the original formulation due to D. Bohm \cite{Bohm}), where the wave function is a real ontological field.

A possible test of the ontological character of the wave function arises if  the wave function is associated with the source of the gravitational interaction. In the de Broglie-Bohm theory, since the wave function $\psi$ is a  physical field, it must carry energy and momentum and henceforth, must affect the surrounding gravitational field. In contrast, in Hamilton-Randers theory such modification of the gravitational field due to the wave function should not be expected.

\subsection{Comparing Hamilton-Randers theory with theories of emergent classical gravity} The general idea of emergent gravity is not new to our work. Probably, a relevant example for us is Verlinde's theory of entropic gravity, where it was also argued that gravity is a classical non-fundamental interaction \cite{Verlinde2011}. We aim to briefly clarify the differences between our description of emergent gravity and Verlinde's theory of gravity and eventually to show that, in fact, both are rather different theories.

Apart from being based in very different principles and assumptions, a qualitative difference is on the universality of the corresponding interaction. While the theory of gravity as entropic force requires that the system is described necessarily by many macroscopic or quantum degrees of freedom, in order to define the temperature and entropy of the system, it seems that this inevitably leads to abandon universality of gravity, since a simple system as an electron is not such a  complex system where macroscopic entropy and temperature can be defined. However, it is well known that gravity affects also quantum systems in a remarkably universal way. For instance, neutron interferometry is usually used to test how a classical gravitational potential interacts with a quantum particle \cite{ColellaOverhauserWerner1975}. In contrast, this conceptual problem does not appear in our version of emergent gravity, since it can be applied to a single electron (that eventually, it appears as a rather complex dynamical system). In this sense, gravitational interaction in Hamilton-Randers theory is universal.

Another important difference between both theories of gravity is on the use of the holographic principle, which is fundamental in Verlinde's approach but that it is apparently absent in Hamilton-Randers theory. Nevertheless, this point of view and the relation of thermodynamics and gravity is an interesting point to develop further from the point of view of Hamilton-Randers theory.

In resume, despite that in both theories gravity appears as an emergent classical interaction, after a rapid inspection of the above differences between Verlinde's theory and the proposal suggested by Hamilton-Randers theory, they appear as very different concepts and not only  different formalisms of the same underlying theory.

\subsection{Formal predictions of the Hamilton-Randers theory} Despite the general description of Hamilton-Randers theory that we have provided, we can make several predictions which are independent of model. Let us resume them here.

\begin{enumerate}
\item {\bf Existence of a maximal acceleration}. This is a general consequence of the assumptions in {\it section} 2. The search for experimental phenomenological signatures of maximal acceleration is currently an active research line, developed by several groups. For us, one of the most interesting possibilities to detect effects due to maximal acceleration is on the spectrum and properties of ultra-high cosmic rays. Maximal acceleration effects allow an increase in the number of ultra-high events compared with the predictions of a special relativistic theory \cite{Ricardo2014}.

\item
{\bf Exactness of the weak equivalence principle}. If our interpretation of emergent quantum mechanics is correct, there will not be an experimental observation of violation of WEP until the energy scales of the probe particles are close enough to the fundamental energy scale. This prediction can be cast as follows: the WEP will hold exactly up to a given scale (close to the fundamental scale) and after this energy scale is reached or it is close enough, the principle will be totally violated. The scale where this happens is associated with the fundamental scale where the deterministic sub-quantum degrees of freedom live.

This prediction contrast with standard phenomenological models associated with quantum gravity, where violation of the weak equivalence happens (DSR, rainbow metrics and also some Finsler spacetimes, to give some examples) and where the predictions are in the form of smooth violations of the principle, where deviations could occur even at relatively low scales.

\item
{\bf  Absence of the graviton}. According to Hamilton-Randers theory, there is no graviton (massless particle of spin $2$, that is associated with the formal quantization of the gravitational field). Hence if a graviton is discovered, our theory has to be withdraw. This can be either falsified by the study of primordial gravitational waves in cosmology or in high energy experiments.

\item
{\bf  Quantum correlations have a macroscopic distance range bounded and are macroscopically instantaneous}.  If experimental limits are found in future experiments on the apparent speed of the quantum correlations, our theory has to be deeply modified, if not refuted. Moreover, the expression for the\eqref{boundfordistanceofcorrelations1} and the expression \eqref{boundfordistanceofcorrelations2} must hold.

\item
{\bf Impossibility of time travel}. By the way that $\tau$-time is defined as emergent concept in Hamilton-Randers theory, it is not possible to time travel back to the past for a macroscopic or quantum matter system. The reason underlies in the emergent character of $\tau$-time from the point of view of the $U_t$ flow. Although rather qualitative, it is enough to confirm that in Hamilton-Randers theory the chronological protection conjecture holds \cite{Hawking1992}. This can be interpreted as a theoretical prediction of our theory.
\end{enumerate}
The above are general predictions of Hamilton-Randers theory. We did not discussed the implications on quantum computing that our emergent quantum theory should have. Further consequences are to expected once the theory is developed in more detail.

\subsection{Further developments}

\subsubsection{The relation between quantum mechanical observables and microscopic operators} Another open question in our approach to the foundations of quantum mechanics is how to construct quantum mechanical observables  in terms of {\it fundamental operators} acting on ontological states $\{|x^\mu_k,y^\mu_k\rangle\}^{N,4}_{k=1,\mu=1}$. The transition from the description of the dynamics provided by $\widehat{H}_{matter,t}(\hat{u},\hat{p})$ to the description by $\widehat{H}_{matter,t}(\widehat{X},\widehat{P})$ requires either to know the structure of the operators $\{\widehat{X},\widehat{P}\}$ in terms of the operators $\{\hat{u}, \hat{p}\}$ or a formal argument to identify the quantum operators  $\{\widehat{X},\widehat{P}\}$ from the sub-quantum operators.

A partial result related with  this problem has been given in {\it Chapter} \ref{chapter of the Hilbert space structure}, where we have explicitly constructed free quantum states as emergent states from pre-quantum states. However, our construction is rather heuristic and another more formal approach is need.

Related with this problem is the issue of which are the quantum operators that must be translated to the level of {\it fundamental operators}. For instance, if the underlying models are Lorentz invariant or Lorentz covariant, must spin operators  be translated to fundamental operators? Are there representations of the (possibly deformed) Lorentz group in terms of fundamental operators? Quantum spacetime models as Snyder spacetime could be involved in answering these questions, but it is only if we have on hand complete examples of Hamilton-Randers models that we can address conveniently these problems. Examples of such models are free particles described in {\it section 5}. But further analysis is required of how these models can be compatible with quantum operators. One possibility is to study invariants under the $U_t$-flow as possible values defining the compatible representations of the Lorentz group with Hamilton-Randers theory.

 Another important question to address is how quantum field theories arise from Hamilton-Randers models. The operators of a field theory are different than the operators of a first quantized theory. Thus it is possible that one needs to consider first deterministic field theories from the beginning in the formulation of Hamilton-Randers theories. Other possibility is to interpret  relativistic quantum field theories as an effective description of an otherwise discrete, first order quantized theory.
\subsubsection{Quantum non-locality and entanglement}
 On the basis of our interpretation of the fundamental non-local properties of quantum mechanical systems is the notion of two-dimensional time parameter $(t,\tau)\in \,\mathbb{R}\times \mathbb{R}$.  In order to describe the state of a system at a given instant, in Hamilton-Randers theory we need to specify the two time parameters $(t,\tau)\in \,\mathbb{R}\times\,\mathbb{R}$. If only the parameter $\tau$ is specified in a dynamical description of a physical system (as it is done in usual field theory and quantum mechanics) and if the system is not in an localized state for the $U_t$ dynamics, an effective non-locality in the phenomenology of quantum systems appears. This is the origin of the phenomenological non-local properties of quantum systems.
\subsubsection{Interpretation of entangled states and derivation of associated inequalities} A dynamical system interpretation of the quantum entanglement could be developed in our framework, with a theory of entangled states, based on the embedding  $\mathcal{H}\hookrightarrow \mathcal{H}_{Fun}$ in  the construction given by the expression  \eqref{ansatzforpsi} can be constructed using the methods of {\it sections 4} and {\it section 5}. In particular, we suspect that for product states, there is few interactions between the sub-quantum degrees of freedom associated with $a$ and $b$, while for entangled states, there is many interactions that imply the constraints \eqref{entangledconditions}.

Although we have envisaged a geometric way to understand quantum non-local phenomena, a complete mathematical treatment must be investigated: how exactly is related the projection $(t,\tau)\mapsto \tau$ and ergodicity  with Bell's inequalities \cite{Bell1964}?
\subsubsection{The problem of synchronization different emergent systems} If different quantum system have different periods associated, how is it possible that we perceive different quantum objects as co-existing at the same local time, for a given observer? One possibility is to invoke super-determinism and argue that there is an underlying conservation law that induces an {\it universal synchronization}. However, we found this possibility rather un-natural in the present framework, even if our theory is consistent with super-determinism. Other possibility is to invoke a higher order mechanism, currently present in quantum mechanics, that allows for a transition from a {\it emergent quantized time} as it is suggested in sub-section \ref{emergenceoftime}, to a continuously local time synchronized macroscopic reality in systems with many parts or components. A third mechanism exploits the relation between the primes and periods and look for collection of consecutive primes that mach the spectrum of elementary quantum particles.
\subsubsection{Absolute structures in Hamilton-Randers theory}
 It is a problematic point of our theory that the metric $\eta_4$ and the collection of metrics $\{\eta^k_4\}^N_{k=1}$ are background structures. We expect that an improved version of our theory should provide a natural dynamics for $U_t$ and $U_\tau$, hence for the Hamilton-Randers geometric structures. The existence of the metaestable domain  $D_0$ implies that the $U_t$ flow determines a thermodynamical limit. Hence it is possible to derive field equations for the metrics  $\{\eta^k_4\}^N_{k=1}$ valid in the metastable domain as equation of state for the system described in terms of physical observables, in the thermodynamical equilibrium domain.  A suggestion in this direction is that it could exists a fundamental relation between our notion of emergent gravity and several proposals that view general relativity as a thermodynamical limit of classical systems. Furthermore, our geometric structures are (locally) Lorentz invariant, although a more precise description of the geometry structure in presence of proper maximal acceleration is required \cite{Ricardo2014}. This could be important as a mechanism to avoid singularities within a completely classical theory of gravity.
 \subsubsection{The relation with weak measurements} It has been discussed that in Hamilton-Randers theory the physical observables are well defined prior a measurement is made by an observer. On the other hand, weak measurements have been intensively investigated and applied to {\it proof} the reality of the wave function. We think that weak measurements can be indeed reconsidered in the framework of Hamilton-Randers theory and at the same time keep the epistemic interpretation of the wave function.

\newpage

\appendix
\section{Basic notions of Category Theory}\label{Appendix on Categories}

\newpage

\section{Geometric framework for Finsler spacetimes}\label{Appendix on Finsler Geometry}
Following  J. Beem \cite{Beem1} but using the notation from \cite{GallegoPiccioneVitorio:2012},
we introduce the basic notation and fundamental notions of Finsler spacetimes theory. As we mentioned before, we formulate the theory for $n$-dimensional manifolds. Thus we start with the following
\begin{definicion}
A Finsler spacetime is a pair $(M,L)$ where
\begin{enumerate}
\item $M$ is an $n$-dimensional real, second countable, Hausdorff $C^{\infty}$-manifold.
\item $L:N\longrightarrow R$ is a real smooth function such that
\begin{enumerate}
\item $L(x,\cdot)$ is positive homogeneous of degree two in the variable $y$,
\begin{align}
L(x,ky)=\,k^2\,L(x,y),\quad \forall\, k\in ]0,\infty[,
\label{homogeneouscondition}
\end{align}
\item The {\it vertical Hessian}
\begin{align}
g_{ij}(x,y)=\,\frac{\partial^2\,L(x,y)}{\partial y^i\,\partial y^j}
\label{nondegeracy-signature}
\end{align}
is non-degenerate and with signature $n-1$ for all $(x,y)\in\, N$.
\end{enumerate}
\end{enumerate}
\label{Finslerspacetime}
\end{definicion}
This will be our definition of Finsler spacetime with signature $n-1$ (in short, Finsler spacetime).

Direct consequences of this definition and Euler's theorem for positive homogeneous functions are the following relations,
\begin{align}
\frac{\partial L(x,y)}{\partial y^k}\,y^k=\,2\,L(x,y),\quad \frac{\partial L(x,y)}{\partial y^i}=\,g_{ij}(x,y)y^j,
\quad L(x,y)=\frac{1}{2}\,g_{ij}(x,y)y^iy^j.
\end{align}
Furthermore, the function $L$ determines a function $F$ as $F:=L^{1/2}$. One can extend the formulae from positive definite metrics to  metrics with signature $n-1$ by substituting $F$ in terms of $L$, when $L\neq 0$.

Given a Finsler spacetime $(M,L)$,
a vector field $X\in\,\Gamma TM$ is timelike if $L(x, X(x))<0$ at all point $x\in\,M$ and a curve $\lambda:I\longrightarrow M$ is timelike
if the tangent vector field is timelike $L(\lambda(s),\dot{\lambda}(s))<0$. A vector field $X\in \,
 \Gamma TN$ is lightlike if $L(x,X(x))=0,\,\forall x\in\,M$ and a curve is lightlike if its tangent
  vector field is lightlike. Similar notions are for spacelike.
A curve is causal if either is timelike and has constant
  speed $g_{\dot{\lambda}}(\dot{\lambda},\dot{\lambda}):=L(\lambda,\dot{\lambda})=g_{ij}(\dot{\lambda},T)\dot{\lambda}^i\dot{\lambda}^j$ or if it is lightlike.

In particular
 Cartan tensor components are defined as
\begin{align}
C_{ijk}:=\,\frac{1}{2} \frac{\partial g_{ij}}{\partial y^{k}}, \quad i,j,k=1,...,n,
\label{Cartantensor coefficients}
\end{align}
differently to the way it is introduced in \cite{BaoChernShen}.
Therefore, because of the homogeneity of the tensor $g$, Euler's theorem implies
\begin{align}
C_{(x,y)}(y,\cdot,\cdot)=\,\,\frac{1}{2} y^k\,\frac{\partial g_{ij}}{\partial y^{k}}=0.
\label{homogeneityofcartan tensor}
\end{align}

The introduction of the Chern's connection for Finsler spacetimes can be done in similar terms as in the case of positive definite metrics. We introduce here the connection following the index-free formulation that can be found in \cite{GallegoPiccioneVitorio:2012}.
\begin{proposicion}
Let $h(X)$ and $v(X)$ be the horizontal and vertical lifts of $X\in\,\Gamma TM$ to $TN$, and $\pi^*g$ the pull back-metric.
For the Chern connection the following properties hold:
\begin{enumerate}
\item  The almost $g$-compatibility metric condition is equivalent to
\begin{align}
{\nabla}_{v({\hat{X}})}\pi^*g=2C(\hat{X},\cdot,\cdot),\quad
{\nabla}_{h({X})} \pi^*g=0,\quad \hat{X}\in\,\Gamma TN.
\label{covariantalmostmetriccondition}
\end{align}

\item The torsion-free
condition of the
Chern connection is equivalent to the %%@
following:
\begin{enumerate}
\item Null vertical covariant derivative of sections of ${\pi}^*{TM}$:
\begin{align}
{\nabla}_{{V({X})}} {\pi}^* Y=0,
\label{covariantensorfreecondition1}
\end{align}
for any vertical component $V(X)$ of $X$.
\item  Let us consider $X,Y\in {TM}$ and their horizontal
lifts $h(X)$ and $h(Y)$. Then
\begin{align}
\nabla_{h(X)} {\pi}^* Y-{\nabla}_{h(Y)}{\pi}^*X-{\pi}^* ([X,Y])=0.
\label{covariantensorfreecondition2}
\end{align}
\end{enumerate}
\end{enumerate}
\end{proposicion}
The connection coefficients $\Gamma^i_{jk}(x,y) $ of the Chern's connection are constructed  in terms of the fundamental tensor components and the Cartan tensor components. The detail of how they are constructed are not necessary for the results described in this note.
\newpage
\section{Koopman-von Neumann theory}
\newpage
\section{Concentration of measure}
\newpage
\section{Heisenberg uncertainty principle against general covariance}\label{Heisenberg uncertainty principle against general covariance}
\subsection{Short review of Penrose's argument} In this appendix we will follow the arguments and notation of the paper \cite{Ricardo 2022}.
Penrose's argument aims not to be a complete theory of objective reduction of the wave function, but to illustrate a general mechanism for the reduction that involves gravity in a fundamental way, using  the less possible technical context. Therefore, a Newtonian gravitation framework is chosen for the development of the argument. In this context, it is highlighted the existence of a geometric framework for Newton theory of gravitation which is manifestly general covariant, namely, Cartan formulation of Newtonian gravity. The adoption of Cartan general covariant formulation of Newtonian gravity is useful to directly full-fill the required compatibility of the with the principle of general covariance of the theory of gravitation used.
 The adoption of Newton-Cartan theory is also justified by the type of experiments suggested by the argument can be implemented in the Newtonian limit.

 A brief introduction of Cartan's geometric framework is in order (see for instance \cite{Cartan1923, Cartan1924} and \cite{MisnerThorneWheeler}, chapter 12). The spacetime is a four dimensional manifold ${\bf M}_4$.  There is defined on ${\bf M}_4$ an affine, torsion-free connection $\nabla$ that determines free-fall motion as prescribed in Newtonian gravity. There is a $1$-form $dt$ determining the absolute Newtonian time and there is a three dimensional Riemannian metric $g_3$ on each section $M_3$ transverse to $dt$. The connection preserves the $1$-form $dt$. The metric $g_3$ is compatible with the connection. An analogue of Einstein's equations define the dynamical equations for $\nabla$ in terms of the density of matter $\rho$. The Riemannian metric $g_3$ defined on each transverse space $M_3(t)$ is determined by further geometric conditions. For practical purposes, can be thought $g_3$ as given. The scalar field $t:{\bf M}_4\to \mathbb{R}$ is called the {\it absolute time} and it determines the $1$-form $dt$.

 There are two further assumptions that Penrose uses in his argument and that are worthily to mention:
 \begin{itemize}

 \item Slightly different spacetimes ${\bf M}_4(1)$, ${\bf M}_4(2)$, corresponding two different configurations of the system are such that the absolute time functions $t_i:{\bf M}_4(i)\to\,\mathbb{R}, \,i=1,2$ can be identified.

 \item The spacetime is stationary, that is, there exists a Killing vector of the metric $g_3$,
 \begin{align}
 \nabla_T\,g_3=0.
 \label{timelike killing vector}
 \end{align}
 \end{itemize}

The reason to identify the absolute coordinate $t_1\in \,{\bf M}_4(1)$ with the absolute coordinate $t_2\in\,{\bf M}_4(2)$ is technical, since it avoids to consider several fine details in the calculations.

 The reason to assume the existence of a Killing vector is that when there is a timelike Killing vector on ${\bf M}_4$, then there is a well-defined notion of stationary state in an stationary Newton-Cartan spacetime. Regarded as a derivation of the algebra of complex functions $\mathcal{F}({\bf M}_4)$, the fact that $T$ is globally defined implies that the eigenvalue equation
 \begin{align}
 T\,\psi =\,-\imath\,\hbar\,E_\psi \,\psi
 \label{stationary state}
 \end{align}
 is well defined on ${\bf M}_4$. This equation defines the stationary state $\psi\in\,\mathcal{F}({\bf M}_4)$ and the energy $E_\psi$ as eigenvalue of $T$. Let us remark that stationary stated defined by relation \eqref{stationary state}, instead than as eigenvectors of an Hermitian Hamiltonian.

After the introduction of the geometric setting, let us highlight the logical steps of Penrose's argument as it appears for instance in  \cite{Penrose 1996, Penrose 2014a} or in  \cite{Penrose 2005}, chapter 30. The thought experiment considers the situation of superposition of two quantum states of a lump of matter, when the gravitational field of the lump.

We have articulated Penrose argument as follows:
\\
{\bf 1.} First, the principle of general covariance is invoked and it is shown its incompatibility with the formulation of stationary Schr\"odinger equations for superpositions when the gravitational fields of the quantum systems are taking into consideration. In particular, Penrose emphasizes why one cannot identify two diffeomorphic spacetimes in a pointwise way. Therefore, the identification of timelike killing vectors pertaining to different spacetimes is not possible and one cannot formulate the stationarity condition \eqref{stationary state} for superpositions of lumps, since each lump determines its own spacetime. Each time derivative operation is determined by the corresponding timelike Killing field and each of those Killing fields lives over a different spacetime ${\bf M}_4(1)$ and ${\bf M}_4(2)$. Note that this obstruction is absent in usual quantum mechanical systems, where the gravitation fields of the quantum system is systematically disregarded as influencing the spacetime arena, namely, the Galilean or Minkowski spacetime or in general, a classical spacetime.
\\
{\bf 2.} If one insists in identifying Killing vectors of different spaces, it will be an {\it error}. Such an error is assumed to be a measure of the amount of violation of general covariance. Penrose suggested a particular measure $\Delta$  and evaluated it in the framework of Cartan formulation of Newtonian gravitational theory by identifying the corresponding $3$-vector accelerations associated with the corresponding notions of free falls.
\\
{\bf 3.} An interpretation of the error in terms of the difference between the gravitational self energies of the lumps configurations $\Delta E_G$ is developed, with the result that
$\Delta =\,\Delta E_G.$
\\
{\bf 4.} Heisenberg's enery/time uncertainty relation is applied in an analogous way as it is applied for unstable quantum systems to evaluate the lifetime for decay $\tau$ due to an instability. In the present case,  the energy uncertainty is the gravitational self-energy of the system $\Delta\,E_G$. Therefore, the lifetime than the violation of covariance persists before the system decays to a particular well-defined spacetime is of the form
\begin{align}
\tau\sim \,\frac{\hbar}{\Delta\,E_G}.
\label{life time}
\end{align}
For macroscopic systems $\hbar/\Delta\,E_G$ is a very short time. Therefore, the argument provides an universal  mechanism for objective reduction of the wave function for macroscopic systems.
The cause for the instability is reasonably linked to the violation of the general covariance that one introduces when  identifying the Killing vectors $T_1$ and $T_2$. However, Penrose's argument does not discuss a particular dynamics for the collapse.

It is worthily to mention that Penrose's argument relies on the  following construction. The estimated errors used in points 2. and 3. is given in Penrose's theory by the integral
 \begin{align}
 \Delta =\,\int_{M_3(1)}\,d^3 x\,\sqrt{\det g_3}\,g_3(\vec{f}_1(t,x)-\vec{f}_2(t,x),\vec{f}_1(t,x)-\vec{f}_2(t,x)),
 \label{error in approximation}
 \end{align}
 where $\vec{f}_1(t,x)$ and $\vec{f}_2(t,x)$ are the acceleration 3-vectors of the free-fall motions when the lumps are at position $1$ and $2$, but regarded as points of $M_3(1)$, submanifold of ${\bf M}_4(1)$. $M_3(1)$ depends on the value of the absolute time parameter $t=t_0$.
 Although the measure $\int_{M_3(1)}\,d^3 x\, \sqrt{\det \,g_3}$ is well defined and invariant under transformations leaving the $1$-form $dt$ and the space submanifold $M_3(1)$ invariant, the integral \eqref{error in approximation} is an ill-defined object. This is because $\vec{f}_2(t,x)$ is not defined over $M_3(1)$ but over $M_3(2)$ and hence, the difference $\vec{f}_1(t,x)-\vec{f}_2(t,x)$ is not a well-defined geometric object. In Penrose's argument, the integral operation \eqref{error in approximation} is understood as an indicator of an error due to an assumed violation of general covariance. There is the hope that a deeper theory of quantum gravity will justify such expression properly.
To play the role of a meaningful error estimate, the integral \eqref{error in approximation} must have an invariant meaning, independent of any diffeomorphism leaving the $1$-form $dt$  and each space manifold $M_3(1)$ invariants. This is indeed the case, module the issue of the problematic nature of  $g_3(\vec{f}_1(t,x)-\vec{f}_2(t,x),\vec{f}_1(t,x)-\vec{f}_2(t,x))$.

Assuming this procedure, the integral \eqref{error in approximation} can be related with the differences between Newtonian gravitational self-energies of the two lumps configurations $\Delta =\,\Delta E_G$, as Penrose showed \cite{Penrose 1996},
\begin{align}
\nonumber \Delta =\,\Delta E_G =\,-4\,\pi\,G\,\int_{M_3(1)}\,\int_{M_3(1)}\,d^3x\,d^3 y \,&\sqrt{\det \,g_3(x)}\sqrt{\det\,g_3(y)}\cdot\\
& \cdot\frac{\left(\rho_1(x)-\rho_2(x)\right)\,\left(\rho_1(y)-\rho_2(y)\right)}{|x-y|}.
\end{align}

\subsection{On the application of Penrose argument to microscopic systems}
In the following paragraphs we discuss a paradoxical consequence of Penrose's gravitationally induced reduction mechanism of the quantum state argument.
 We remark that strictly speaking,  Penrose's argument is also applicable to {\it small scale systems} in the following form. First,  there is no scale in Penrose argument limiting the applicability of the error measure \eqref{error in approximation}  for the violation of the general covariance. Therefore, the same considerations as Penrose highlights for macroscopic systems, must apply to small systems. For small scale systems, quantum coherence seems to be a experimentally corroborated fact and it is one of the fundamental concepts of quantum mechanical description of phenomena. Thus by Penrose's argument, the existence of microscopic quantum coherence seems inevitably to be interpreted as an observable violation of general covariance. We think this is a dangerous consequence of Penrose's argument that can lead to a contradiction between the aims of the theory and the consequences of the theory.

 Let us first agree that the measure of the violation of  general covariance is measured by the identification $\Delta =\,\Delta E_G$, which is as we have discussed, a function of the absolute time coordinate function $t:{\bf M}_4(1)\to \,\mathbb{R}$. For typical quantum systems, $\Delta E_G (t)$ is very small, since gravity is weak for them. However, due to the eventual large time that coherence can happen for a microscopic system,  $\Delta E_G(t)$  is not the best measure of violation of general covariance. For microscopic systems, coherence in energy due to superpositions of spacetimes could persists for a long interval of time $t$. In such situations, it is  $\tau\,\Delta E_G$ what appears to be a better measure for the violation of the general covariance, where here $\tau$ is the span of absolute time such that the superposition of spacetimes survives.

The measure of the violation of general covariance proposed by Penrose is an integral of the modulus of the difference between two vector fields in space given by \eqref{error in approximation}, which does not measures the possible accumulation effect on time of the violation of general covariance. Thinking on this way, $\Delta E_G$ appears to be a {\it density of error}, while the {\it error} should be obtained by integrating $\Delta E_G(t)$ along absolute time $t$. This reasoning suggests to consider the following measure of the error in a violation of general covariance by a local identifications of spacetimes due to superposition of lumps,
\begin{align}
\widetilde{\Delta} =\,\int_{{\bf M}_4(1)}\,dt\wedge \,d^3x\,\sqrt{\det g_3}\,g_3(\vec{f}_1(x,t)-\vec{f}_2(x,t),\vec{f}_1(x,t)-\vec{f}_2(x,t)),
\label{covariant error in the approximation}
\end{align}
Note that with the geometric structures available in the Cartan-Newton space considered by Penrose, there is no other invariant volume form in ${\bf M}_4(1)$ that one can construct than $dt\wedge\,d^3 x\,\sqrt{\det\,g_3}$.  Furthermore, the volume form $dt\wedge \,d^3x\,\sqrt{\det g_3}$ is invariant under the most general diffeomorphism living the $1$-form $dt$ invariant, a property that it is not shared by the form $d^3x \,\sqrt{\det g_3}$. In these senses, the measure $\widetilde{\Delta}$ is unique, supporting \eqref{covariant error in the approximation} instead than \eqref{error in approximation} as measure.

Furthermore, we have that if  $g_3(\vec{f}_1(x,t)-\vec{f}_2(x,t),\vec{f}_1(x,t)-\vec{f}_2(x,t))\neq \,0$ but constant only in an interval $[0,\tau]$ and  otherwise it is zero, then
\begin{align*}
\widetilde{\Delta} & =\,\int^\tau_0 dt \,\int_{M_3(1)}\,d^3 x\,\sqrt{\det g_3}\,g_3(\vec{f}_1(t,x)-\vec{f}_2(t,x),\vec{f}_1(t,x)-\vec{f}_2(t,x)) =\,\tau\,\Delta\\
& =\,\tau\,\Delta E_G
\end{align*}
holds good, demonstrating the interpretation of $\widetilde{\Delta}=\,\tau\,\Delta E_G$ as the value of a four dimensional integral on ${\bf M}_4(1)$.

Mimicking Penrose's argument, in order to determine the lifetime $\tau$ one could apply Heisenberg energy/time uncertainty relation. In doing this it is implicitly assumed that there is coherence in energy. The error in the above identification of the vector $\vec{f}_2$ as a vector in $M_3(1)$ is such that
\begin{align}
\widetilde{\Delta}=\,\tau\,\Delta E_G \sim \,\hbar,
 \end{align}
 for any quantum system, large or small. Hence the amount of violation of the general covariance principle due to quantum coherence does not depend upon the size of the system, since it is always of order $\hbar$. This is despite the lifetime $\tau$ can be large or small, depending on the size of the system.

The direct consequence of this form of the argument is that assuming Heisenberg energy/time uncertainty relation and superposition of spacetimes in the conditions described above,  a fundamental violation of general covariance for microscopic systems of the same amount than for macroscopic systems.
This conclusion, that follows from taking as measure of the error the expression \eqref{covariant error in the approximation}, implies a deviation from Penrose argument. However, we have seen that even in the form of small violations of general covariance as given by the integral $\eqref{error in approximation}$, there is an observable violation of general covariance, since coherence is to be understood as a form of violation of general covariance.

The form of the paradox that we have reached by further pursuing Penrose's argument is that, although the argument is presented as an aim to preserve in an approximated way general covariance as much as it could be possible in settings where superpositions of gravitational spacetimes are considered, it leads  to a mechanism that violates general covariance. Adopting Penrose's measure $\Delta$ and his explanation of the reduction of the wave function for macroscopic objects, then the same interpretation yields to infer that the experimental observation of microscopic quantum fluctuations must be interpreted as observable violation of general covariance. Even if the violation is small in the sense of a small $\Delta$, it will be observable. Furthermore, if one instead adopts the measure $\widetilde{\Delta}$, then there is no objective meaning for attributing a very small violation of the principle of general covariance for microscopic systems and large violation for large or macroscopic systems, because the violation of general covariance measured using $\widetilde{\Delta}$ is universal and the same for all systems obeying Heisenberg energy/time uncertainty relation.
\section{Elements of Number Theory}
\newpage

\bigskip
Vale!
\newpage


\begin{thebibliography}{99}


\bibitem{Adler2003} S. L. Adler, {\it Why Decoherence has not Solved the Measurement Problem: A Response to P. W. Anderson}, Stud.Hist.Philos.Mod.Phys. {\bf 34}:135-142 (2003).

\bibitem{Adler}  S. L. Adler, {\it Quantum Theory as an Emergent Phenomenon: The Statistical Mechanics of Matrix Models as the Precursor of Quantum Field Theory}, Cambridge University Press (2004).


\bibitem{Arnold} V. Arnold, {\it Mathematical Methods of Classical Mechanics}, Springer (1989).

\bibitem{ArnoldAvez} V. I. Arnold and A. Avez, {\it Ergodic problems of classical mechanics}, Princeton University (1968).

\bibitem{BaoChernShen} D. Bao, S.S. Chern and Z. Shen, {\it An Introduction to Riemann-Finsler
Geometry}, Graduate Texts in Mathematics 200, Springer-Verlag.


 \bibitem{Bars2001} I. Bars, {\it Survey of Two-Time Physics}, Class.Quant.Grav. {\bf 18} 3113-3130 (2001).

\bibitem{Bell1964} J. S. Bell, {\it On the Einstein-Podolski-Rosen paradox}, Physica {\bf 1}, 195 (1964).

\bibitem{Bell Introduction 1971} J. S. Bell, {\it Introduction to the hidden-variable question}, Foundations of Quantum Mechanics. Proceedings of the International School of Physics Enrico Fermi, New York, Academic (1971), pp 171-81.


\bibitem{Bell} J. S. Bell, {\it Speakable and Unspeakable in Quantum Mechanics}, Cambridge University Press (1987).

\bibitem{Beem1} J. K. Beem, {\it Indefinite Finsler Spaces and Timelike Spaces},
Canad. J. Math. {\bf 22}, 1035 (1970).


\bibitem{Berger2002} M. Berger {A panoramic view of Riemannian Geometry}, Springer-Verlag (2002).

\bibitem{Berry} M. V. Berry, {\it The Bakerian Lecture, 1987: Quantum Chaology}, Proc. R. Soc. Lond. A {\bf 413}, 183-198 (1987).

\bibitem{BerryKeating} M. V. Berry and J. P. Keating, {\it $H=xp$ and the Riemann zeros}, in {\it Supersymmetry and Trace Formulae: Chaos and Disorder},
ed. J. P. Keating, D. E. Khemelnitskii, L. V. Lerner, Kuwler 1999; M. V. Berry and J. P. Keating, {\it The Riemann zeros and eigenvalue asymptotics}, SIAM REVIEW {\bf 41} (2), 236 (1999).

\bibitem{Blasone2}M. Blasone, P. Jizba and F. Scardigli, {\it Can Quantum Mechanics be an Emergent
Phenomenon?}, J. Phys. Conf. Ser. {\bf 174} (2009) 012034, arXiv:0901.3907[quant-ph].

\bibitem{Bohmquantumtheory} D. Bohm, {\it Quantum Theory}, Prentice Hall, Inc (1951).

\bibitem{Bohm} D. Bohm, {\it A suggested Interpretation of the Quantum Theory in Terms of Hidden variables. I}, Phys. Rev. {\bf 85}, 166-179 (1952); {\it A suggested Interpretation of the Quantum Theory in Terms of Hidden variables. II}, Phys. Rev. {\bf 85}, 180-193 (1952).

\bibitem{Bohm1980} D. Bohm, {\it Wholeness and the Implicate Order} , London: Routledge,  (1980).


\bibitem{BornJordan1925} M. Born and C. Jordan, {\it Zur Quantenmechanik}, Z. Physik {\bf 34}, 858-888 (1925).

\bibitem{BornHeisenbergJordan1925} M. Born, W. Heisenberg and C. Jordan, {\it Zur Quantenmechanik II}, Z. Physik {\bf 35},  557-615 (1925).

\bibitem{CaianielloGasperiniScarpetta} E. R. Caianiello, M. Gasperini and G. Scarpetta, {\it Inflaction and singularity
    prevention in a model for extended-object-dominated cosmology}, Clas. Quan. Grav. {\bf 8}, 659 (1991).

\bibitem{Cartan1923} E. Cartan, {\it Sur les vari\'et\'es \`a connexion affine et la th\'eorie de la relativité g\'en\'ralis\'ee (Premire partie)}, Annales Scientifiques de l'\'Ecole Normale Sup\'erieure, {\bf 40}, 325 (1923).


\bibitem{Cartan1924} E. Cartan, {\it  Sur les vari\'et\'es \`a connexion affine et la th\'eorie de la relativité g\'en\'ralis\'ee (Premire partie)(Suite)}, Annales Scientifiques de l'\'Ecole Normale Sup\'erieure, {\bf 41}, 1 (1924).


\bibitem{ChanTsou} H. M. Chan and S. T. Tsou, {\it Some elementary gauge theory concepts}, World Scientific (1993).


\bibitem{Chern} S. S. Chern, {\it Finsler geometry is just Riemannian geometry without the quadratic restriction}, Notices of the American Mathematical Society {\bf 43}, 959 (1996).

\bibitem{Chicone} C. Chicone, {\it Ordinary Differential Equations with Applications}, Springer (2006).

 \bibitem{ColellaOverhauser1974} R. Colella and A. W. Overhauser, {\it Experimental test of gravitational induced quantum interference}, Phys. Rev. Lett. {\bf 33}, 1237 (1974).

\bibitem{ColellaOverhauserWerner1975} R. Colella, A. W. Overhauser and S. A. Werner, {\it Observation of gravitationally induced interference}, Phys. Rev. Lett. {\bf 34}, 1472 (1975).

    \bibitem{Connes} A. Connes, {\it Trace formula in noncommutative geometry and the zeros of the Riemann zeta function}, Selecta Mathematica (New Series) {\bf 5}, 29 (1999).

\bibitem{CraigWeinstein} W. Craig and S. Weinstein, {\it On determinism and well-possedness in multiple time dimensions}, Proc. Royal. Soc. London, A: Mathematical, Physical and Engineering Science {\bf 465}, 3023–3046 (2009).


\bibitem{DeSinghVarma2019} S. De, T. P. Singh, A. Varma, {\it Quantum gravity as an emergent phenomenon},
         Int.J.Mod.Phys.D {\bf 28} 14, 1944003, (2019).

 \bibitem{DeWitt} B. S. DeWitt, {\it Quantum Theory of Gravity. I. The Canonical Theory}. Phys. Rev. {\bf 160} (5): 1113–1148 (1967).




\bibitem{Diosi 1987}  L. Di\'osi, {\it A universal master equation for the gravitational violation of quantum mechanics} Physics Letters A {\bf 120} (8), 377 (1987).

\bibitem{Diosi} L. Di\'osi, {\it Models for universal reduction of macroscopic quantum
fluctuations}, Phys. Rev. A {\bf 40}, 1165–1174 (1989).

\bibitem{Dirac1958} P. A. M. Dirac, {\it The Principles of Quantum Mechanics}, Fourth Edition, Oxford University Press (1958).

\bibitem{Dirichlet} P. G. L. Dirichlet, {\it Lectures in Number Theory, with supplements by R. Dedeckind}, History od Mathematics Sources {\it 16}, American Mathematical Society (1999).

\bibitem{Dolce} D. Dolce, {\it Elementary Spacetime cycles}, Europhys. Lett., {\bf 102}, 31002 (2013); {\it Internal times” and how to second-quantize fields by means of periodic boundary conditions}, Annals of Physics Vol. {\bf 457}, October 2023, 169398 .

\bibitem{Einstein 1905a} A. Einstein, {\it Zur Elektrodynamik bewegter k\"orper}, Annalen de r Physik {\bf 17}, (1905).

\bibitem{Einstein 1916} A. Einstein, {\it Die grundlage der allgemeinen Relativit\"astheorie}, Annalen der Physic vol. {\bf 49}, 769 (1916).

\bibitem{Einstein 1918} A. Einstein, {\it Principielles zur allgemeinen Relativit\"atstheorie}, Annalen der Physik, vol. {\bf 55}, 242 (1918).

\bibitem{Einstein et al.} H. A. Lorentz, A. Einstein, H. Minkowski and H. Weyl. {\it The principle of the relativity}, Dover publications (1923).

\bibitem{Einstein1922} A. Einstein, {\it The meaning of relativity}, Princenton University Press (1923).

\bibitem{EinsteinPodolskiRosen} A. Einstein, B. Podolsky and N. Rosen, {\it Can quantum-mechanical desscription of physical reality be considered complete?}, Phys. Rev. {\bf 47}, 777 (1935).

\bibitem{Ehlers Pirani Schild} J. Ehlers, F.A.E. Pirani and  A. Schild, {\it The geometry of free fall and light propagation}, In L. O' Raifeartaigh (ed.) General Relativity: Papers in Honour of J. L. Synge, pp.63-84. Claredon Press, Oxford (1972); J. Ehlers, F.A.E. Pirani and  A. Schild,  {\it Republication of: The geometry of free fall and light propagation}   Gen. Relativ. Gravit. {\bf 44}, 1587 (2012).

\bibitem{Elsgoltz} L. Elsgoltz, {\it Ecuaciones diferenciales y c\'alculo variacional}, Editorial URSS (1994).

\bibitem{Earman1994} J. Earman, {\it Bangs, Crunches, Whimpers and Schrieks: Singularities and Acausalities in Relativistic Spacetimes}, Oxford University Press (1995).

\bibitem{Elze2003} H. T. Elze, {\it Quantum mechanics emerging from timeless classical dynamics},  Trends in General Relativity and Quantum Cosmology, ed. C.V. Benton (Nova Science Publ., Hauppauge, NY, 79-101 (2006), quant-ph/0306096 [quant-ph].

\bibitem{Elze} H.T. Elze, {\it The Attractor and the Quantum States}, Int. J. Qu. Info. {\bf 7}, 83 (2009).

\bibitem{Elze2009} H. T. Elze, {\it Symmetry aspects in emergent quantum mechanics},
         J. Phys. Conf. Ser. {\bf 171}, 012034 (2009).

\bibitem{FernandezIsidroPerea2013} P. Fern\'andez de C\'ordoba, J. M. Isidro and Milton H. Perea {\it Emergent Quantum Mechanics as thermal  essemble}, Int. J. Geom. Meth. Mod. Phys. {\bf 11}, 1450068 (2014).

\bibitem{FernandezIsidroVazquez2016} P. Fernandez de Cordoba, J.M. Isidro, J. Vazquez Molina, {\it The holographic quantum},
Published in Found.Phys. {\bf 46} no.7, 787-803 (2016).

\bibitem{Feynman} R. P. Feynman, R. B. Leighton and M. Sands, {\it The Feynman lectures on Physics}, Ad. Wesley (1964).

\bibitem{Feynman Hibbs} R. P. Feynman and A. R. Hibbs, {\it Quantum mechanics and path integrals}, MacGraw-Hill Companies, New York (1965).

\bibitem{Ricardo2005}{\it On the Maximal Universal Acceleration in Deterministic Finslerian Models}, arXiv:gr-qc/0501094 .

\bibitem{Ricardo05b} R. Gallego Torrom\'e,  {\it Quantum systems as results of geometric evolutions}, arXiv:math-ph/0506038 .

 \bibitem{Ricardo06}R. Gallego Torrom\'e, {\it A Finslerian version of 't Hooft Deterministic Quantum Models},
J. Math. Phys. {\bf  47}, 072101 (2006).

\bibitem{Ricardo Randers-Lorentz} R. Gallego Torrom\'e, {\it On the notion of Randers spacetime}, arXiv:0906.1940v5 [math-ph].

\bibitem{Ricardo2010} R. Gallego Torrom\'e, {\it Averaged dynamics of ultra-relativisitc charged particles beams}, PhD Thesis, Lancaster University (2010), arXiv:1206.4184 [math-ph].

\bibitem{GallegoPiccioneVitorio:2012} R. Gallego Torrom\'e, P. Piccione and H. Vit\'orio,
{\it On Fermat's principle for causal curves in time oriented {Finsler} spacetimes},
{J. Math. Phys.} {\bf 53}, 123511 (2012).

\bibitem{Ricardo2012} R. Gallego Torrom\'e, {\it Geometry of generalized higher
order fields and applications to classical linear electrodynamics}, arXiv:1207.3791.

\bibitem{Ricardo2014} R. Gallego Torrom\'e, {\it An effective theory of metrics with maximal acceleration},  Class. Quantum Grav. {\bf 32} 245007 (2015).

\bibitem{Ricardo2015b} R. Gallego Torrom\'e, {\it Emergence of classical gravity and the objective reduction of the quantum state in deterministic models of quantum mechanics}, Journal of Physics: Conference Series {\bf 626} (1), 012073 (2015).

 \bibitem{Ricardo2016} R. Gallego Torrom\'e, {\it Classical gravity from certain models of emergent quantum mechanics},
Journal of Physics: Conference Series {\bf 701} (1), 012033 (2016).

\bibitem{Ricardo 2017} R. Gallego Torrom\'e, {A second order differential equation for a point
Charged particle}, International Journal of Geometric Methods in Modern Physics,
Vol. {\bf 14}, No. 04, 1750049 (2017).

\bibitem{Ricardo 2017b} R. Gallego Torrom\'e, {\it On singular generalized Berwald spacetimes and the
equivalence principle}, Vol. {\bf 14}, No. 06, 1750091 (2017).

\bibitem{Ricardo2017b} R. Gallego Torrm\'e, {\it Emergent Quantum Mechanics and the Origin of Quantum Non-local Correlations},
International Journal of Theoretical Physics, {\bf 56}, 3323(2017).

\bibitem{Gallego-Torrome2019} R. Gallego Torrom\'e, {\it Some consequences of theories with maximal acceleration in laser-plasma acceleration}, Modern Physics Letters A Vol. {\bf 34} (2019) 1950118.

\bibitem{Ricardo2020} R. Gallego Torrom\'e, {\it Maximal acceleration geometries and spacetime curvature bounds}, International Journal of Geometric Methods in Modern Physics, Vol. {\bf 17}, 2050060 (2020).



\bibitem{Ricardo2019a} R. Gallego Torrom\'e, {\it On the origin of the weak equivalence principle in a theory of emergent quantum mechanics}, Int. J. Geom. Methods Mod. Phys., Vol. {\bf 17}, No. 10, 2050157 (2020).

\bibitem{Ricardo general theory of dynamics} R. Gallego Torrom\'e, {\it General theory of non-reversible local dynamics}, International Journal of Geometric Methods in Modern Physics  Vol. {\bf 18}, No. 07, 2150111 (2021).

\bibitem{Ricardo quotient rings} R. Gallego Torrom\'e, {\it Quotient rings and fields of integers from the metric geometry point of view},  arXiv:2010.06817v2 [math.RA].

\bibitem{Ricardo 2022} R. Gallego Torrom\'e, {\it On Penrose’s theory of objective gravitationally induced wave function reduction}, International Journal of Geometric Methods in Modern Physics, Vol. {\bf 19}, Issue No. 02, 2250020 (2022).

 \bibitem{Ricardo2023} R. Gallego Torrom\'e, J.M. Isidro, Pedro Fern\'andez de C\'ordoba, {\it On the emergent origin of the inertial mass}, Foundations of Physics,  {\bf 53}, Article number: 52 (2023).

\bibitem{Ricardo 2024} R. Gallego Torrom\'e, {On the emergence of gravity in a framework of emergent quantum mechanics},
International Journal of Geometric Methods in Modern Physics, Vol. {\bf 21}, No. 14, 2450245 (2024).

\bibitem{Gauss Disquisitiones Arithmeticae} C. Friedrich Gauss, {\it Disquisitiones Arithmeticae}, Yale University (1966).

\bibitem{GhirardiRiminiWeber} G.C. Ghirardi,  A. Rimini and T. Weber, {\it Unified dynamics for microscopic and macroscopic systems}, Physical Review, D {\bf 34}, 470 (1986).

\bibitem{GhirardiGrassiRimini} G. Ghirardi, R. Grassi and A. Rimini, {\it Continuous-spontaneous-reduction model involving gravity}, Phys. Rev. A {\bf 42} 1057 (1990).
\bibitem{Godement} R. Godement, {\it Topologie Algebrique et Theorie des Faisceaux}, Hermann Paris (1958).

\bibitem{de Gosson} M. A. de Gosson, {\it Born-Jordan quantization and the equivalence of matrix and wave mechanics},
	eprint arXiv:1405.2519.

\bibitem{Groessing2013} G. Gr\"{o}essing, {\it Emergence of Quantum Mechanics from a Sub-Quantum Statistical Mechanics}, Int. J. Mod. Phys. B, {\bf 28}, 1450179 (2014).

\bibitem{Gromov} M. Gromov, {\it Riemannian structures for Riemannian and non-Riemannian spaces}, Birkh$\ddot{a}$user (1999).

\bibitem{Hatfield} B. Hatfield, {\it Quantum Field Theory of Point Particles and Strings}, Frontiers in Physics, CRC Press (1992).

\bibitem{Hawking Ellis 1973} S. W. Hawking and G. F. R. Ellis, {\it The Large Scale Structure of the Universe}, Cambridge University Press (1973).

\bibitem{Hawking1992} S. W. Hawking, {\it Chronology protection conjecture}, Phys. Rev. D {\bf 46}, 603 (1992).

\bibitem{HardyWright} G. H. Hardy and E. M. Wright, {\it An introduction to the theory of number }, Sixth Edition, Oxford University Press (2008).

\bibitem{Hooft} G. 't Hooft, {\it Determinism and Dissipation in Quantum Gravity}, Class. Quantum Grav. {\bf 16}, 3263 (1999).

\bibitem{Hooft 2001} Gerard't Hooft, {\it How does God play
dies?(Pre-) Determinism at the Planck Scale}, {
hep-th/0104219}.

\bibitem{Hooft2006} G. 't Hooft, {\it Emergent Quantum Mechanics and Emergent Symmetries}, 13th International Symposium on Particles, Strings, and Cosmology-PASCOS 2007. AIP Conference Proceedings {\bf 957}, pp. 154-163 (2007).

\bibitem{Hooft2012wavefunctionschroedingercollapse} G. 't Hooft, {\it How a wave function can collapse without violating Schr\"odinger's equation, and how to understand the Born's rule}, arXiv:1112.1811v3.

\bibitem{Hooft2016} G. 't Hooft, {\it The Cellular Automaton Interpretation of Quantum Mechanics}, Fundamental Theories in Physics Vol. {\bf 185}, Springer Verlag (2016).

\bibitem{Hooft2021} G. 't Hooft, {\it Fast Vacuum Fluctuations and the Emergence of Quantum Mechanics}, Foundations of Physics {\bf 51}(3), 63 (2021).

\bibitem{Hirzebruch1978} F. Hirzebruch, {\it Topological methods in Algebraic Geometry}, Classical in Mathematics, Springer Verlag (1995).

\bibitem{Isham 1995} C. Isham, {\it Lectures on quantum theory: mathematical and theoretical foundations}, Imperial College Press (1995).

\bibitem{Javaloyes et al.} M. A. Javaloyes, L. Lichtenfelz and P. Piccione {\it Almost isometries of non-reversible metrics with applications to stationary spacetimes}, Journal of Geometry and Physics {\bf 89}, 38 (2015).

\bibitem{JacobsonI} T. Jacobson, {\it Thermodynamics of Spacetime: The Einstein Equation of State}, Phys. Rev. Lett., {\bf 75}, 1260 (1995).

\bibitem{Kobakhidze2011} A. Kobakhidze, {\it Gravity is not an entropic force}
 Phys. Rev. D {\bf 83} Rapid Communication, 0215202 (2011).

 \bibitem{Kochen-Specken1967} S. Kochen and E. Specken, {\it The problem of hidden variables in quantum mechanics}, J. of Mathematics and Mechanics {\bf 17}, 59-87 (1967).

\bibitem{KolarMichorSlovak} I. Kolar, P. W. Michor and J. Slovak, {\it Natural operators in differential geometry}, Springer-Verlag (1993).

\bibitem{Kondepudi Prigogine 2015} D. Kondepudi adnd I. Prigogine, {\it Modern Thermodynamics,. From Heat Engines to Dissipative Structures}, Second edition, Wiley (2015).

\bibitem{Koopman1931} B. O. Koopman, {\it Hamiltonian Systems and Transformations in Hilbert Space} Proceedings of the National Academy of Sciences {\bf 17} (5), 315 (1931).

\bibitem{Koopman-von Neumann1932} B. O. Koopman and J.v. Neumann, {\it Dynamical systems of continuous spectra}, Proceedings of the National Academy of Science {\bf 18}, 255 (1932).

\bibitem{Mac Lane} S. Mac Lane, {\it Cateories for the Working Mathematician}, Graduate Texts in Mathematics 5, Springer (1971).

\bibitem{MaldacenaSusskind} J. Maldacena and L. Susskind, {\it Cool horizons for entangled black holes}, Progress of Physics, Vol. {\bf 61}, 781 (2013).

\bibitem{Mashhoon1990}  B. Mashhoon, {\it Limitations of spacetime measurements}, Phys. Lett. A {\bf 143}, 176-182 (1990); B. Mashhoon,
{\it The Hypothesis of Locality in relativistic physics}, Phys. Lett. A {\bf 145}, 147 (1990).

\bibitem{MilmanSchechtman2001} V. D. Milman and G. Schechtman, {\it Asymptotic theory of Finite Dimensional normed spaces}, Lecture notes in Mathematics 1200, Springer (2001).

\bibitem{MironHrimiucShimadaSabau:2002} {R. Miron, H. Hrimiuc, H. Shimada and S.V. Sabau}, {\it The geometry of {Hamilton} and {Lagrange} spaces}, {Kluwer Academic Publishers} (2002).
\bibitem{MisnerThorneWheeler} C. W. Misner, K. S. Thorne and J. A. Wheeler, {\it Gravitation}, Princeton University Press (2017).

\bibitem{Minkowski 1908} H. Minkowski, {Space and Time, with notes by A. Sommerfeld}, in {\it The Principle of Relativity}, Dover Edition (1952).

\bibitem{Padmanabhan2012a} T. Padmanabhan, {\it Gravity as an emergent phenomenon: Conceptual aspects}
         AIP Conf.Proc. {\bf 1458}  1, 238-252 (2012).

\bibitem{Padmanabhan2012b} T. Padmanabhan, {\it Emergent perspective of Gravity and Dark Energy} T. Padmanabhan
         Res.Astron.Astrophys. {\bf 12}, 891-916 (2012).

\bibitem{Padmanabhan2015} T. Padmanabhan, {\it Emergent Gravity Paradigm: Recent Progress}, T. Padmanabhan
         Mod.Phys.Lett.A {\bf 30}, 1540007 (2015).

\bibitem{Pauli 1958} W. Pauli, {\it Theory of Relativity}, Pergamon Press (1958).

\bibitem{Penrose 1996} R. Penrose, {\it On gravity's role in quantum state reduction}, Gen. Rel. and Gravit. {\bf 8}, No. 5, 581 (1996).

\bibitem{Penrose 2005} R. Penrose, {\it The Road to Reality}, Vintage, London (2005).

\bibitem{Penrose 2014a} R. Penrose, {\it On the Gravitization of Quantum Mechanics 1: Quantum State Reduction}, Foundations of Physics {\bf 44} (5), 557 (2014).

\bibitem{Prigrogine 2019} I. Prigogine, {\it Las leyes del caos}, Cr\'itica (2019).

\bibitem{Rademacher} H. B. Rademacher, {\it Nonreversible Finsler Metrics of positive flag curvature}, in Riemann-Finsler Geometry, MSRI Publications Vol. {\bf 80}, 261-303 (2004).

\bibitem{Raguni} G. Raguni, {\it Two-Time Relativistic Bohmian Model of Quantum Mechanics}, https://www.researchgate.net/publication/383840820 .

\bibitem{Randers} G. Randers, {\it On an Asymmetrical Metric in the Four-Space of General Relativity}, Phys. Rev. {\bf 59}, 195-199 (1941).

\bibitem{Redhead} M. Redhead, {\it Incompleteness, Nonlocality and Realism: A prolegomenon to the Philosophy of Quantum Mechanics},  Claredon Press, Oxford (1989).

\bibitem{ReedSimonI} M. Reed M and B. Simon, {\it Methods of Modern Mathematical Physics I: Functional Analysis, Revised and Enlarged Edition}, New York: Academic Press (1980).

\bibitem{Robinson} A. Robinson, {\it Non-standard Analysis}, Revised Edition, Princeton University Press (1996).

\bibitem{Sakurai} J.J. Sakurai, {\it Modern Quantum Mechanics} Addison-Wesley (1994).

\bibitem{SierraTownsend} G. Sierra and P. K. Townsend, {\it Landau levels and Riemann zeroes}, Phys. Rev. Lett. {\bf 101}, 110201 (2008).

\bibitem{Sierra2014} G. Sierra, {\it The Riemann zeros as energy levels of a Dirac fermion in a potential built from the prime numbers in Rindler spacetime},  J. Phys. A: Math. Theor. {\bf 47} 325204 (2014).

\bibitem{SharmaSingh} A. Sharma, T. P. Singh, {\it How the quantum emerges from gravity},
         Int.J.Mod.Phys.D {\bf 23},  12, 1442007 (2014).

\bibitem{Singh} T. P. Singh, {\it From quantum foundations, to spontaneous quantum gravity: an overview of the new theory}, Z. Naturforsch. A {\bf 75}, 833 (2020).

\bibitem{Smolin2012} L. Smolin, {\it A real ensemble interpretation of quantum mechanics}, Found. of Phys. {\bf 42}, Issue 10, pp 1239–1261 (2012).

\bibitem{Snyder}  H. S. Snyder, {\it Quantized Space-Time}, Phys. Rev. {\bf 71} (1), 38-41 (1947).

\bibitem{Sternberg} S. Sternberg, {\it Dynamical Systems}, Dover Publications (2010).

\bibitem{Stone} M. H. Stone, {\it On One-Parameter Unitary Groups in Hilbert Space}  Ann. of Math.
Second Series, Vol. {\bf 33} pp. 643-648 (1932).

\bibitem{Susskind 2014} L. Susskind, {\it ER=EPR, GHZ, and the Consistency of Quantum Measurements}, arXiv:1412.8483v1 [hep-th].

\bibitem{Talagrand} M. Talagrand, {\it A new look at independence}, Ann. Probab. {\bf 24}, Number 1, 1-34 (1996).

\bibitem{Tolman} R. C. Tolman, {\it The prinicples of Statistical Mechanics}, Dover (1979).

\bibitem{Verlinde2011} E. Verlinde, {\it On the origin of gravity and the Laws of Newton}, JHEP {\bf 1104}, 029 (2011).

\bibitem{von Neumann} J. von Neumann,  {\it Zur Operatorenmethode In Der Klassischen Mechanik}, Annals of Mathematics {\bf 33} (3): 587–642 (1932);
J. von Neumann,  {\it Zusatze Zur Arbeit "Zur Operatorenmethode.... }, Annals of Mathematics {\bf 33} (4): 789–791 (1932).

\bibitem{Wald1984} R. M. Wald, {\it General Relativity}, The University of Chicago Press (1984).

\bibitem{Warner} F. Warner, {\it Foundations of Differentiable Manifolds and Lie Groups}, Scott, Foresman and Company (1971).

\bibitem{Weinberg1995} S. Weinberg, {\it The Quantum Theory of Fields. Volume I: Foundations}, Cambridge University Press (1995).

\bibitem{Wilson1931} W. A. Wilson, {\it On Quasi-Metric Spaces}, American Journal of Mathematics  {\bf 53}, No. 3 , pp. 675 (Jul., 1931).

\bibitem{Whitehead1932} J. H. C. Whitehead, {\it Convex regions in the geometry of paths}, Quarterly Journal of Mathematics - Quart. J. Math. {\bf 3}, no. 1, pp. 33-42 (1932).

\bibitem{Zurek2002} W. H. Zurek, {\it Decoherence and the Transition from Quantum to Classical-Revisited}, Los Alamos Science {\bf 27} 2002.


\end{thebibliography}
\end{document}